
\documentclass[aps,prb,notitlepage,nofootinbib,floatfix]{revtex4-2}
\usepackage{graphicx}
\usepackage[dvipsnames]{xcolor}
\usepackage{amsfonts}
\usepackage{amssymb}
\usepackage{amsmath}
\usepackage{mathtools}
\usepackage{afterpage}
\usepackage[mathscr]{euscript}
\usepackage{braket}
\usepackage{subfigure}
\usepackage{multirow}
\usepackage{float}
\usepackage{qcircuit}
\usepackage{caption}
\captionsetup[figure]{justification=raggedright,singlelinecheck=false} 
\usepackage{enumitem}
\usepackage{color}   
\usepackage{hyperref}
\hypersetup{
    colorlinks=true, 
    linktoc=all,     
    linkcolor=blue,  
}

\usepackage[utf8]{inputenc} 
\usepackage{yfonts}
\usepackage{dsfont}
\usepackage{cancel}
\usepackage[scr=boondox]{mathalpha}

\usepackage[pdf]{pstricks}
\usepackage{leftidx}
\usepackage{pst-all}
\usepackage{booktabs}

\usepackage{slashed}
\usepackage{stmaryrd}

\usepackage[bottom]{footmisc} 

\usepackage{amsthm,mathtools,amsmath}
\usepackage{tikz}

\newcommand{\Z}{\mathbb{Z}}
\newcommand{\R}{\mathbb{R}}

\newcommand{\C}{\mathbb{C}}
\newcommand{\CP}{\mathbb{CP}}

\newcommand{\bP}{{\bf P}}

\newcommand{\tr}{\mathrm{tr}}
\newcommand{\mathbbm}[1]{\text{\usefont{U}{bbm}{m}{n}#1}} 

\usepackage{environ}
\NewEnviron{eqs}{%
\begin{equation}\begin{split}
    \BODY
\end{split}\end{equation}
}

\usepackage{hyperref}
\hypersetup{
    colorlinks=true,
    linkcolor=blue,
    filecolor=magenta,
    urlcolor=cyan,
}

\newtheorem{thm}{Theorem}
\newtheorem{lem}[thm]{Lemma}
\newtheorem{prop}[thm]{Proposition}
\newtheorem{cor}[thm]{Corollary}

\newcommand\restr[2]{{
  \left.\kern-\nulldelimiterspace 
  #1 
  \vphantom{\big|} 
  \right|_{#2} 
  }}


\newcommand{\floor}[1]{\left \lfloor #1 \right \rfloor}


\usepackage{MnSymbol}
 
\usepackage{scalerel,amssymb}

\def\gl{\mathfrak{gl}}

\begin{document}

\title{2D Fermions and Statistical Mechanics: Critical Dimers and Dirac Fermions in a background gauge field}

\author{Sri Tata$^1$}
\affiliation{$^1$Yale University, Department of Mathematics}

\begin{abstract}
In the limit of the lattice spacing going to zero, we consider the dimer model on isoradial graphs in the presence of singular $U(N,\C)$ gauge fields flat away from a set of punctures. 
We consider the cluster expansion of this twisted dimer partition function show it matches an analogous cluster expansion of the 2D Dirac partition function in the presence of this gauge field. The latter is often referred to as a tau function. This reproduces and generalizes various computations of Dubédat~\cite{Dubedat2018doubleDimersConformalLoopEnsemblesIsomonodromicDeformations}.
In particular, both sides' cluster expansion are matched up term-by-term and each term is shown to equal a sum of a particular holomorphic integral and its conjugate.
On the dimer side, we evaluate the terms in the expansion using various exact lattice-level identities of discrete exponential functions and the inverse Kasteleyn matrix. 
On the fermion side, the cluster expansion leads us to two novel series expansions of tau functions, one involving the Fuschian representation and one involving the monodromy representation.
\end{abstract}

\maketitle

\twocolumngrid
\tableofcontents
\onecolumngrid

\section{Introduction}

The dimer model is a statistical model that counts perfect matchings on a planar graph. The Dirac fermion in two dimensions is a simple yet important quantum field theory.
Both of these models occupy special places in the study of two-dimensional physics, particularly in the setting of critical phenomena. 
In this paper, we show that in a wide sense these theories are the same.

On one hand, the dimer model on the square grid was among the first models for which conformal invariance was rigorously demonstrated~\cite{kenyon2000confDomino,kenyon2000dominoGFF}. Moreover, exact `discrete holomorphic' structures studying dimers on `critical' graphs~\cite{kenyon2002critical} have been essential in establishing conformal invariance in more general settings, such as in the Ising model~\cite{smirnov2007Ising,chelkak2011discrete}.

On the other hand, the 2D Dirac fermion is also an important example of a conformal field theory. Moreover, it is a `building block' of the large class of \textit{rational conformal field theories} (e.g. the Ising model, Potts model,...) in the sense that its current algebra generates the operator algebras of these theories (c.f. \cite{bigYellowCFT1997}).

These two theories are closely related. First, we have that the dimer model's partition function is the determinant of a so-called \textit{Kasteleyn matrix} $\overline{\partial}$. This suggestive notation is because the Kasteleyn matrix can be thought of as a discretization of the holomorphic derivative (see Sec.~\ref{sec:dimers_isoradial}). One way to phrase the dimer partition function is thus to use the language of Grassmann calculus (see e.g.~\cite{bigYellowCFT1997} for a review on Grassmann calculus)
\begin{equation*}
    Z_{\mathrm{dimer}} = \int d\overline{\psi} d\psi e^{ - \int \overline{\psi} \, \overline{\partial} \, \psi } = \det( \overline{\partial} ).
\end{equation*}
Immediately, this shares an aesthetic similarity
\footnote{Similar observations were made in~\cite{dijkgraaf2009}.} 
with the partition function of the the two dimensional fermion,
\begin{equation*}
    Z_{\mathrm{Dirac, 2D}} = \int d\overline{\psi} d\psi e^{ - \int \overline{\psi} \, \slashed{\partial} \, \psi } = \det( \slashed{\partial} ),
\end{equation*}
especially since the two-dimensional Dirac operator $\slashed{\partial}$ is comprised of a holomorphic and antiholomorphic derivative.

This similarity is also quantitative. An early indication around 2000 was the result~\cite{kenyon2000dominoGFF} that studied fluctuations in the `height function', an integer-valued function on faces of the graph that's dual to a dimer matching (see Sec.~\ref{sec:height_functions}). These fluctuations were shown (in the sense of a weak limit) to identify the height function to the Gaussian Free Field, which is the free boson theory in 2D with action $S_{\mathrm{boson}}(\phi) = -{\pi} \int (\partial \phi)^2 $. This duality is reminiscent of the continuous version of the boson-fermion duality. 

On higher genus surfaces, there are multiple $\overline{\partial}$ operators identified with spin structures~\cite{cimasoniReshetikhin2007spin} and higher-genus bosonization formulas~\cite{alvarezGaume1987bosonHigherGenus} apply to compute partition functions, although conformal invariance on the lattice is harder to formulate and establish on higher-genus surfaces.

Later in 2011, Dubédat~\cite{Dubedat2011} considered the stronger forms of height fluctuations and shows that they agree with the Gaussian Free Field / free boson expectations. He coupled the dimer model to a background $U(1)$-connection, often taken to be singular, and computed the partition function with respect to the connection. A key tool he uses is a special discrete holomorphic function with monodromy.
In~\cite{Dubedat2018doubleDimersConformalLoopEnsemblesIsomonodromicDeformations}, he extends this analysis to coupling to certain (singular) $SL(2,\C)$ connections.
In both of these analyses, he finds that the dimer partition function with respect to this connection matches with a $\tau$ function, which is a mathematical construction of a 2D Dirac determinant in the presence of a singular connection (see the Appendix~\ref{app:diracAndTau}). This can be rephrased as follows.
Given a background lattice connection $U$ corresponding to a continuum gauge field $A$, these analyses demonstrate asymptotic equalities between the (normalized) partition functions
\begin{equation*}
    Z_{\mathrm{dimer}}(U) = \det( \overline{\partial}(U) )
    \quad \text{ and } \quad
    Z_{\mathrm{Dirac, 2D}}(A) = \det( \slashed{\partial} - \slashed{A} )
\end{equation*}
where $\overline{\partial}(U)$ is a twisted Kasteleyn operator (see Sec.~\ref{sec:graph_connection}) and $\slashed{\partial} - \slashed{A}$ is the twisted Dirac operator (see Appendix~\ref{app:diracAndTau}). 

More generally, graph connections have applications to studying the scaling limit of the dimer model. In~\cite{kenyon2014doubledimer}, Kenyon used $SL(2,\C)$ connections to prove that loops in the double-dimer model (formed by superimposing two independent dimer matchings) are distributed in a conformally invariant way, which was conjectured~\cite{sheffield2006CLE,kenyonWilson2011} to be related to $CLE(4)$, a certain `conformal loop ensemble' of loop-soups in the plane with a conformally invariant distribution.
Dubédat's work~\cite{Dubedat2018doubleDimersConformalLoopEnsemblesIsomonodromicDeformations} about $SL(2,\C)$-connections established a quantitative link between the twisted dimer partition function and certain $CLE(4)$ observables.
More recently in~\cite{dksWebs2022}, $SL(N,\C)$-twisted determinants were related to weighted sums over so-called `webs', weighted by traces of the web.

In this paper, we establish another quantitative link between the partition function of `critical' dimer models in the presence of background singular $U(N)$ connections that are flat away from a finite number of `punctures'. In particular, given a $U(N)$ lattice gauge field $U$ and corresponding continuum field $\slashed{A}$ that are flat away from a set of punctures, and a `scale' $R$ corresponding to an inverse `mesh size' of the lattice, we will have that the normalized partition functions are related as:

\begin{equation*}
    \boxed{
        \frac{\det( \overline{\partial}(U) )}{\det( \overline{\partial} )} 
        \xrightarrow{R \to \infty, \text{moments}}
        \frac{\det( \slashed{\partial} - \slashed{A} )}{\det( \slashed{\partial} )}
        \times
        e^{i \Phi_{\text{n.u.}}} .
    }
\end{equation*}

Above, `$\xrightarrow{R \to \infty}$' means up to $O(1/R)$ factors. 
We explain the meaning of $\xrightarrow{\text{moments}}$ below.
And, $e^{i \Phi_{\text{n.u.}}}$ is a \textit{non-universal} phase factor that depends on various microscopic data (see Sec.~\ref{sec:non_universality_im_part}). And, the ratio $\frac{\det( \slashed{\partial} - \slashed{A} )}{\det( \slashed{\partial} )}$ is a `regulated' fermion determinant that diverges logarithmically as $\log(R)$, but that we can relate to the standard $\tau$ function (see Sec.~\ref{app:diracAndTau}). 

This extends and complements the works of Dubédat. His strategy to equate the dimer partition functions to tau functions is variational, i.e. by displacing the punctures and finding matching differential equations for the tau function and for (logarithms of) the partition functions. 
In particular, this leads to various undetermined constants in the equations, related to the initial data of the differential equation.

In our strategy, we consider the \textit{cluster expansion} of the dimer partition function. 
We are able to explicitly relate the series in the cluster expansion to a series of holomorphic integrals; the heart of these calculations is various exact lattice-level identities of discrete exponentials on isoradial graphs (see Sec.~\ref{sec:exactExpr_zipperTraces}).
In certain cases, this expansion can be resummed and lead to determinations of the previously undetermined constants (see Sec.~\ref{sec:asympt_I2n_singleZip}). 
A quantum field theory-like calculation (see Appendix~\ref{app:tau_zipper}) confirms that the dimer cluster expansion matches the free fermion expansion after a certain resummation of the Dirac partition function. On the Dirac side, certain integrals in the cluster expansion are divergent. This is not an issue for the dimer model, since the lattice scale naturally regulates the integrals to produce finite numbers that diverge with the lattice scale (see Sec.~\ref{sec:nonSeparatedZippers}).

One disadvantage of our method is that while we can rigorously compute the terms in the cluster expansion, this method alone cannot yet rigorously show that their sum converges, whereas in the variational analysis of Dubédat this convergence is manifest. Combining his results with ours do however lead to rigorous results about the correlations by showing moments with respect to the connection converge. As an example, Dubédat shows that (let $-\pi<\alpha<\pi$) height correlations of the form 
$\braket{e^{ i \alpha ( h(F) - h(F') ) }}$ converge as
\begin{equation*}
    \log |\braket{e^{ i \alpha ( h(F) - h(F') ) }}|
    \xrightarrow{|F-F'| \to \infty} 
    -\frac{1}{2\pi^2}\alpha^2 \log(|F-F'|)
    +
    C(\alpha)
\end{equation*}
for some constant $C(\alpha)$, which agrees with classical Coulomb gas heuristics~\cite{nienhuis1984coulomb}. For us, we can rigorously compute all moments with respect to $\alpha$ allowing us to determine $C(\alpha)$
(see Sec.~\ref{sec:singleZipper}). Our methods alone can't demonstrate that the sum of moments converges, due to the presence of subleading corrections, but our moment calculations rigorously fix $C(\alpha)$. Similarly, we can rigorously demonstrate moment generating functions for twisted Kasteleyn determinants corresponding to other singular gauge fields (see Sec.~\ref{sec:multipleZipper_calculations}). 

The organization of the paper is as follows. 
In Sec.~\ref{sec:review}, we review the preliminaries of isoradial graphs, the dimer model, and graph connections. 
In Sec.~\ref{sec:singleZipper}, we compute the partition function with respect to connections flat away from two punctures.
In Sec.~\ref{sec:multipleZipper_calculations}, we compute the partition function with respect to multiple punctures.
The details of the specific calculations, the expectations, and the results are highlighted in the beginning of each of those sections. 
In Sec.~\ref{sec:discussion}, we conclude with discussion and further questions.

Various technical calculations are left in the appendix.
However, of independent interest in the Appendix~\ref{app:diracAndTau}, we give a (non-rigorous) overview of computing the partition function of Dirac fermions with respect to background connections in two dimensions. There, we employ the series expansions of the Dirac fermion in the background connections, demonstrating (after a `counterterm correction') the gauge invariance of the series. 
Representing the gauge field in two different ways leads to two different expansions of the determinant, one related to the Fuschian representation of the connection and one related to its monodromy representation. The first can be shown to satisfy the defining equations of the tau function and the second is closely related to the dimer model partition function. We haven't seen either of these expansions in the literature.

\section{Preliminaries} \label{sec:review}

In Sec.~\ref{sec:dimers_isoradial}, we discuss the basic definitions of isoradial graphs, $\overline{\partial}$ operator, and its inverse following~\cite{kenyon2002critical} (see also ~\cite{mercatDiscreteIsing}). 
Then, we discuss its relation to the dimer model. See~\cite{kenyonIntroDimer} for an introduction to the dimer model.
In Sec.~\ref{sec:graph_connection}, we discuss graph connections and their implementation in twisting the dimer model. In Sec.~\ref{sec:height_functions}, we discuss the definition of height functions associated to dimer matchings and the application of graph connections to the study of height functions. 

\subsection{Dimer Model on Isoradial Graphs, $\overline{\partial}$, and $\frac{1}{\overline{\partial}}$} \label{sec:dimers_isoradial}

An \textit{isoradial embedding} of a graph is a planar embedding of the graph for which every edge is drawn as a straight line and every face is circumscribed in a circle of radius 1; a graph is \textit{isoradial} if it has an isoradial embedding.
Equivalently, one can think of an isoradial embedding as related to an underlying \textit{rhombus graph} as follows. The rhombus graph is a planar graph whose edges are drawn as straight lines and whose faces are all rhombi of side-lengths 1. 
Then the isoradial graph can be formed by drawing edges between opposite faces of the rhombi.
Note that a rhombus graph is bipartite since all faces have an even number of sides, so this procedure will give rise to two distinct graphs that are dual to each other and that are both isoradial. 

\begin{figure}[p!]
\begin{subfigure}
  \centering
  \includegraphics[width=\linewidth]{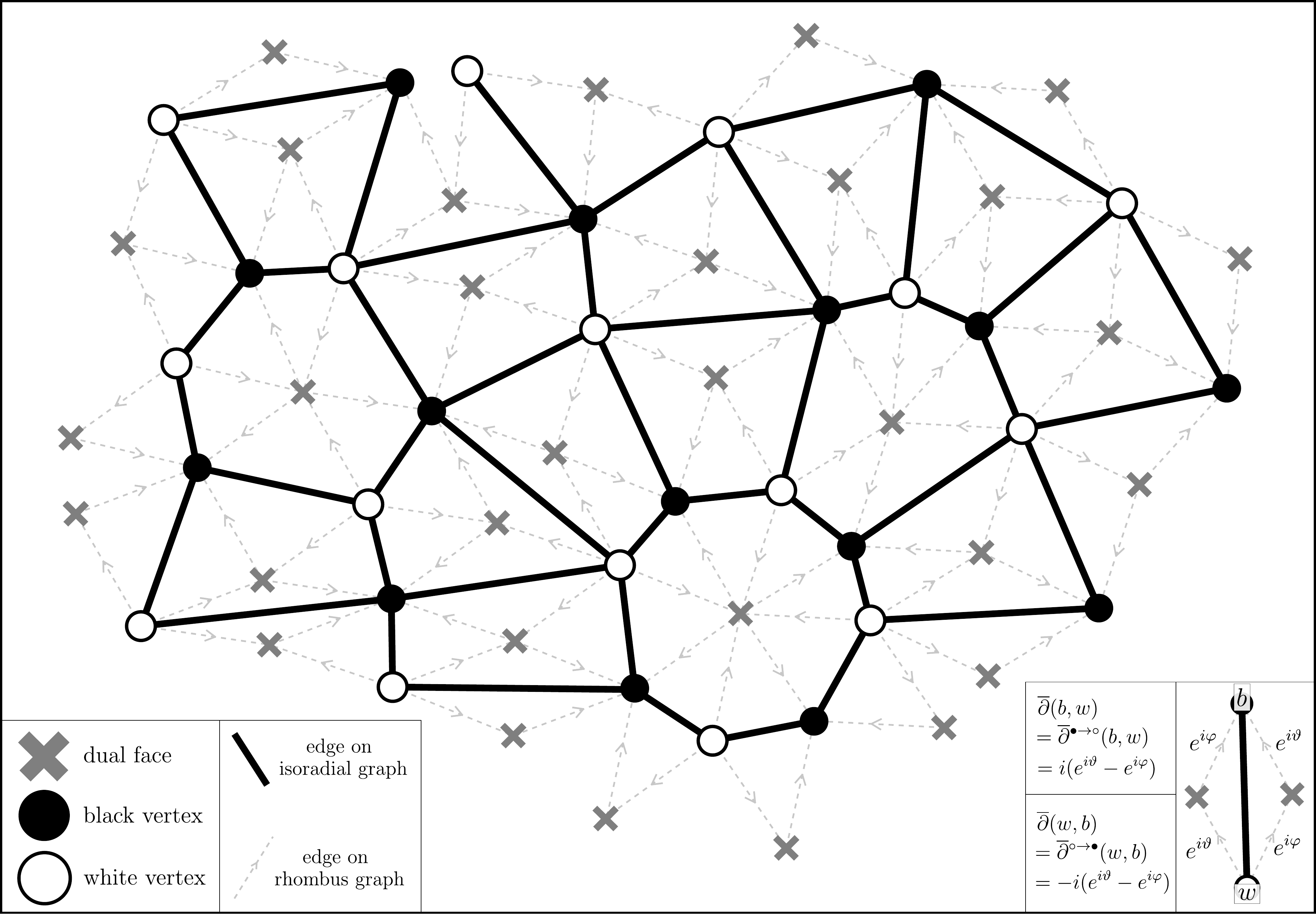}
  \caption{
      Example portion of a bipartite isoradial graph.
      (Bottom-Left) Legend labeling vertices corresponding to dual faces and black/white vertices, as well as the edges for the isoradial graph and oriented edges of the rhombus graph.
      (Bottom-Right) Depiction of the Kasteleyn operator $\overline{\partial}$ for the different directions $\overline{\partial}^{\bullet \to \circ}(b,w)$ and $\overline{\partial}^{\circ \to \bullet}(w,b)$.
  }
  \label{fig:isoradial_graph}
\end{subfigure}
\begin{subfigure}
  \centering
  \includegraphics[width=\linewidth]{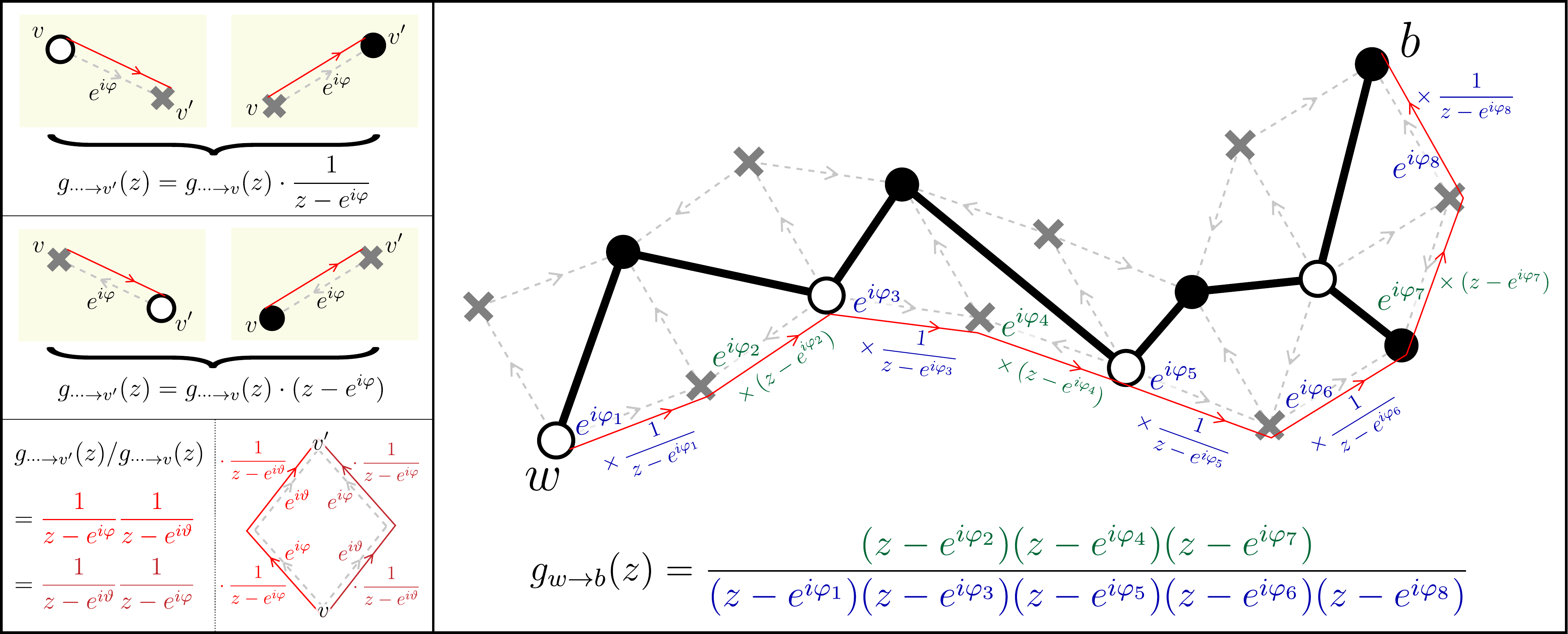}
  \caption{
    The discrete exponential function.
    (Left) Rules for relating $g_{\cdots \to v'}$ and $g_{\cdots \to v}$ for $v,v'$ adjacent on the rhombus graph. Consistency of the products around a rhombus implies each $g_{x \to y}(z)$ is well-defined and independent of the path $x \to y$.
    (Right) Example of discrete exponential between a white and black vertex.
  }
  \label{fig:discrete_exponential}
\end{subfigure}
\end{figure}

In this paper, we will consider exclusively cases where the isoradial graph itself is bipartite with a decomposition into black and white vertices. Given a bipartite isoradial graph, one can give an orientation to each edge of the rhombus graph by having them point \textit{away} from white vertices and \textit{towards} black ones. As such, every edge of the rhombus graph depicts a vector which is associated to a unit complex number. See Fig.~\ref{fig:isoradial_graph} for a depiction.

Now, we can suggestively define the $\overline{\partial}$ operator on the graph as follows. Given a black vertex $b$ connected by an edge to a white vertex $w$, the above-stated orientations of the rhombus graph edges define rhombus angles $e^{i \vartheta}$,$e^{i \varphi}$ (say $\vartheta - \pi < \varphi < \vartheta$) coming out from $w$ on the right,left side of $w$ respectively. Then, we define
\begin{equation} \label{eq:def_kasteleyn_mat}
    \overline{\partial}^{\bullet \to \circ}(b,w) 
    = 
    -\overline{\partial}^{\circ \to \bullet}(w,b)
    = 
    i (e^{i\vartheta} - e^{i\varphi}) 
    =
    - e^{i(\vartheta+\varphi)/2}  \nu(b,w)
    \quad,\quad
    \text{ where } \nu(b,w) =  2 \sin\left(\frac{\vartheta - \varphi}{2}\right) > 0.
\end{equation}
$\nu(b,w)$ is a positive \textit{weight function} on edges.
Note $\overline{\partial}^{\bullet \to \circ}(b,w)$ acts from black-to-white vertices and $\overline{\partial}^{\circ \to \bullet}(w,b) = -\overline{\partial}^{\bullet \to \circ}(b,w)$ acts from white-to-black. 
Then, we define 
\begin{equation*}
\begin{split}
    \overline{\partial}^{\bullet \to \circ}(v,v') 
    = \overline{\partial}^{\circ \to \bullet}(v,v') 
    = 0 
    \quad &, \quad
    \text{ for all vertices } v,v' \text{ not neighboring}
\end{split}
\end{equation*}
and define the full matrix
\begin{equation}
    \overline{\partial} = \overline{\partial}^{\bullet \to \circ} + \overline{\partial}^{\circ \to \bullet}.
\end{equation}
The matrix $\overline{\partial}$ can be thought of as a discretization of the antiholomorphic derivative $\partial_{\overline{z}} = \frac{1}{2}(\partial_x + i \partial_y)$. For example, the square-lattice $(\sqrt{2} \Z)^2$ is an isoradial graph, and given a function $f(v)$ on vertices of $(\sqrt{2} \Z)^2$, we have
\begin{equation*}
    (\overline{\partial} f)(v) = \sqrt{2}(f(v^{\rightarrow}) - f(v^{\leftarrow})) + i \sqrt{2}(f(v^{\uparrow}) - f(v^{\downarrow}))
\end{equation*}
where $v^{\rightarrow},v^{\leftarrow},v^{\uparrow},v^{\downarrow}$ are the vertices to the right,left,above,below $v$ on the grid. As such, a function $f$ on the vertices is said to be \textit{discrete holomorphic} or \textit{discrete analytic} if $\overline{\partial} f = 0$. For most of this paper, we will only be concerned with the half $\overline{\partial}^{\bullet \to \circ}$ of $\overline{\partial}$.

For any \textit{finite} isoradial graph, it turns out that $\overline{\partial}^{\bullet \to \circ}$ is a \textit{Kasteleyn matrix}, whose determinant gives a positive-weighted sum over \textit{dimer configurations} (i.e. perfect matchings) of the graph. In particular, we'd have 
\begin{equation} \label{eq:det_of_kasteleyn}
    \det(\overline{\partial}^{\bullet \to \circ})
    =
    e^{i \Phi_0}
    \,\,
    \sum_{\substack{\text{perfect matchings} \\ \mathcal{M} \text{ of graph}}}
    \,\,\,
    \prod_{\substack{\text{edges } {\bf e}=(b,w) \\ \text{part of } \mathcal{M}}}
    \,\,
    \nu(b,w)
\end{equation}
where $e^{i \Phi_0}$ is a unit complex number that doesn't change even if we modify the definition of the weight function $\nu(b,w)$ in Eq.~\eqref{eq:def_kasteleyn_mat}. For example, instead choosing $\nu(b,w) = 1$ would instead equate $\det(\overline{\partial}^{\bullet \to \circ})$ to the number of dimer matchings, an example of the celebrated \textit{Kasteleyn's Theorem}~\cite{kasteleyn1961}. For our purposes, we instead use the so-called \textit{critical weights} as in Eq.~\eqref{eq:def_kasteleyn_mat}.

For most of the remainder of this paper, we consider \textit{infinte} isoradial graphs. In this case, the determinant $\det(\overline{\partial}^{\bullet \to \circ})$ is infinite, although \textit{ratios} of related determinants used to characterize certain dimer correlation functions will be well-defined. On an infinte graph, there is a special class of discrete holomorphic functions that we'll refer to as \textit{discrete exponentials} or \textit{lattice exponentials}. Given two vertices $v,w$ on the rhombus graph and a path $v \to w$ on the rhombus graph, the discrete exponential is a rational function $g_{v \to w}(z)$ defined recursively. First, we'd have $g_{v \to v}(z) = 1$. Then, if $v,v'$ share an edge on the rhombus graph with the edge oriented $v \to v'$ with $v'-v = e^{i \varphi}$, we'll have $g_{\cdots \to v'}(z) = g_{\cdots \to v}(z) \cdot \frac{1}{z-e^{i \varphi}}$. The rhombus graph structure means these functions don't depend on the path $v \to w$. See Fig.~\ref{fig:discrete_exponential} for depictions and an example. 
For $v$ any vertex, $b$ any black vertex and $z$ any complex number avoiding the poles of $g_{v \to \cdots}$, it is simple to check that $\sum_{w \text{ white}}\overline{\partial}(b,w) g_{v \to w}(z) = 0$ so that $g_{v \to \boldsymbol{\cdot} }(z)$ are all discrete holomorphic. Note that the definitions also make sense for $v,w$ on the rhombus graph corresponding to faces - such functions are important in this paper's analysis.
The functions $g_{v \to w}(z)$ are referred to as exponentials because of their asymptotic behavior for $|z| \ll 1$ and $|z| \gg 1$, which are reviewed in Appendix~\ref{app:asympt_gFunc}. 

The \textit{inverse Kasteleyn matrix} $\frac{1}{\overline{\partial}}$ comprised of $\frac{1}{\overline{\partial}^{\bullet \to \circ}}$ and $\frac{1}{\overline{\partial}^{\circ \to \bullet}}$ would satisfy
\begin{equation} 
    \sum_{w} \overline{\partial}^{\bullet \to \circ}(b,w) \frac{1}{\overline{\partial}^{\bullet \to \circ}}(w,b') = \delta_{b,b'}
    \quad,\quad
    \sum_{b} \frac{1}{\overline{\partial}^{\bullet \to \circ}}(w,b) \overline{\partial}^{\bullet \to \circ}(b,w') = \delta_{w,w'}
    \quad,\quad
    \frac{1}{\overline{\partial}^{\bullet \to \circ}}(w,b) = -\frac{1}{\overline{\partial}^{\circ \to \bullet}}(b,w) \xrightarrow{b \to \infty} 0.
\end{equation}
In fact, an exact formula~\cite{kenyon2002critical} exists
\begin{equation} \label{eq:invKast_contour_formula}
    \frac{1}{\overline{\partial}^{\bullet \to \circ}}(w,b) = -\frac{1}{\overline{\partial}^{\circ \to \bullet}}(b,w) = \frac{1}{2 \pi} \int_{0}^{-e^{i \phi} \infty} dz \, g_{w \to b}(z) 
    \quad,\quad
    \text{ where } |\phi - \arg(b-w)| < \pi.
\end{equation}
In the above, the contour $0 \to -e^{i \phi} \infty$ is taken to be a straight-line ray starting going $0 \to \infty$ going out in the $-e^{i \phi}$ direction. The exact choice of contour doesn't depend on $\phi$, because $g_{w \to b}(z) \xrightarrow{b \to \infty} \frac{1}{z^2}$ and the general fact \cite{kenyon2002critical,BMS05} that the poles of $g_{w \to b}(z)$ are in the set of unit complex numbers $\{ +e^{i\phi} \,\,|\,\, \phi \in \R \,\text{ and }\, |\phi - \arg(b-w)| < \pi \}$.

\subsection{Edge-Probabilities and Correlation Kernels}
Given a finite isoradial graph and a collection of edges $\textbf{coll} = \{{\bf e}_1, \cdots, {\bf e}_n\} = \{(b_1,w_1), \cdots, (b_n,w_n)\}$, we may want to ask what the probability that all edges in $\textbf{coll}$ are contained in a given dimer matching. This probability is given by (see e.g.~\cite{kenyon1997localstat})
\begin{equation} \label{eq:edge_correlations}
    \text{Prob}(\text{all edges in }\textbf{coll}\text{ are in a matching})
    =
    \left( \prod_{i=1}^{n} \overline{\partial}(b_i,w_i) \right)
    \cdot 
    \det\left(\restr{\overline{\partial}^{-1}}{\{w_1,\cdots,w_n\},\{b_1,\cdots,b_n\}}\right),
\end{equation}
i.e. is the product of Kasteleyn edge weights times a corresponding minor of the inverse Kasteleyn matrix. 

Above, the only way that the above probability could be nonzero is if all $\{b_i\}$ and $\{w_i\}$ are distinct because all vertices can only appear in one edge at a time. In this paper, it's important to consider collections $\textbf{coll}$ of edges with repeated $\{b_i\}$ or $\{w_i\}$.
The \textit{restriction} of $\overline{\partial}^{\bullet \to \circ}_{\textbf{coll}}$ to the edges in $\textbf{coll}$ is imporant, defined as
\begin{equation}
    \overline{\partial}^{\bullet \to \circ}_{\textbf{coll}}(b,w) 
    = 
    \begin{cases}
        \overline{\partial}^{\bullet \to \circ}(b,w)\quad 
        &\text{ if } (b,w) \in \textbf{coll},
        \\
        0
        &\text{ otherwise.}
    \end{cases}
\end{equation}

Also, matrices of the form $\frac{\overline{\partial}^{\bullet \to \circ}_{\textbf{coll}}}{\overline{\partial}^{\bullet \to \circ}} := \overline{\partial}^{\bullet \to \circ}_{\textbf{coll}} \frac{1}{\overline{\partial}^{\bullet \to \circ}}$ are important in our analysis. Indeed, we can see it as a certain \textit{correlation kernel} as follows. 
Consider the principal minors of such matrix. 
For a collection $\tilde{B} = \{\tilde{b}_{1} \cdots \tilde{b}_{k}\}$ of black vertices of the graph, we can express the corresponding principal minor as:
\begin{equation}
\begin{split}
    \det\left(
        \restr{\frac{\overline{\partial}^{\bullet \to \circ}_{\textbf{coll}}}{\overline{\partial}^{\bullet \to \circ}}
        }{\tilde{B},\tilde{B}}
    \right)
    &=
    \sum_{\substack{
        \tilde{W}:=\{\tilde{w}_1,\cdots,\tilde{w}_k\} 
        \\ 
        \subset \text{ white vertices}
    }}
    \det\left(
        \restr{\overline{\partial}^{\bullet \to \circ}_{\textbf{coll}}}{\tilde{B},\tilde{W}}
    \right)
    \cdot
    \det\left(
        \restr{\frac{1}{\overline{\partial}^{\bullet \to \circ}}}{\tilde{W},\tilde{B}}
    \right)
    \\
    &=    
    \sum_{\substack{
        \tilde{W}:=\{\tilde{w}_1,\cdots,\tilde{w}_k\} 
        \\ 
        \subset \text{ white vertices}
    }}
    \sum_{\sigma \in \mathrm{Sym}(n)} (-1)^{\sigma}
    \left( \prod_{i=1}^{n} 
        \overline{\partial}^{\bullet \to \circ}_{\textbf{coll}}(\tilde{b}_i,\tilde{w}_{\sigma(i)})
    \right)
    \cdot 
    \det\left(\restr{\overline{\partial}^{-1}}{\{\tilde{w}_1,\cdots,\tilde{w}_n\},\{\tilde{b}_1,\cdots,\tilde{b}_n\}}\right)
    \\
    &=    
    \sum_{\substack{
        \tilde{W}:=\{\tilde{w}_1,\cdots,\tilde{w}_k\} 
        \\ 
        \subset \text{ white vertices}
    }}
    \sum_{\sigma \in \mathrm{Sym}(n)}
    \left( \prod_{i=1}^{n} 
        \overline{\partial}^{\bullet \to \circ}_{\textbf{coll}}(\tilde{b}_i,\tilde{w}_{\sigma(i)})
    \right)
    \cdot 
    \det\left(\restr{\overline{\partial}^{-1}}{\{\tilde{w}_{\sigma(1)},\cdots,\tilde{w}_{\sigma(n)}\},\{\tilde{b}_1,\cdots,\tilde{b}_n\}}\right)
    \\
    &= 
    \sum_{\substack{
        \tilde{W}:=\{\tilde{w}_1,\cdots,\tilde{w}_k\} 
        \\ 
        \text{ordered collection}
        \\
        \text{of white vertices}
    }}
    \text{Prob}(
        \text{all edges }  
        (\tilde{b}_1,\tilde{w}_1) \cdots (\tilde{b}_n,\tilde{w}_n)
        \text{ are in } \textbf{coll} \text{ and in a matching}    
    )
    \\
    &=
    \text{Prob}(
        \tilde{b}_1,\cdots,\tilde{b}_n
        \text{ are all matched in edges in } \textbf{coll}
    ).
\end{split}
\end{equation}
The first line is the Cauchy-Binet identity. The second line is expanding the first determinant. The third line absorbs $(-1)^\sigma$ into rearranging the rows. The fourth line uses Eq.~\eqref{eq:edge_correlations} and absorbing the two summations into a single sum.
So, all principal minors lie between $0$ and $1$ since they have a probabilistic interpretation. 
Consider the modified characteristic polynomial,
\begin{equation}
\begin{split}
    P(t) = 
    \det\left(
        \mathbbm{1}
        + 
        (t-1)
        \frac{\overline{\partial}^{\bullet \to \circ}_{\textbf{coll}}}{\overline{\partial}^{\bullet \to \circ}}
    \right)
    =&
    \frac{\det\left(\overline{\partial}^{\bullet \to \circ} + (t-1) \overline{\partial}^{\bullet \to \circ}_{\textbf{coll}}\right)}
    {\det\left(\overline{\partial}^{\bullet \to \circ}\right)}
    =
    \sum_{k=1}^{|\textbf{coll}|} t^k P_k,
    \\
    \text{where } P_k = \text{Prob}\Big(\textit{exactly } k \text{ distinct  vertices  of }& \{b_1,\cdots,b_{|\textbf{coll}|}\} \text{ are matched in edges of } \textbf{coll}\Big).
\end{split}
\end{equation}
This is because $\overline{\partial}^{\bullet \to \circ} + (t-1) \overline{\partial}^{\bullet \to \circ}_{\textbf{coll}}$ is exactly the matrix $\overline{\partial}^{\bullet \to \circ}$ except the edge weights get modified as $\nu(b,w) \to t \cdot \nu(b,w)$ for $(b,w) \in \textbf{coll}$ and remain unchanged for other edges, so the coefficient of $t^k$ in 
${\det\left(\overline{\partial}^{\bullet \to \circ} + (t-1) \overline{\partial}^{\bullet \to \circ}_{\textbf{coll}}\right)}$ 
is proportional to the number of matching containing exactly $k$ edges of $\textbf{coll}$.

Since the $P_k$ are all non-negative, we have the following lemma.
\begin{lem} \label{lem:kernel_evals}
    All real eigenvalues of $\frac{\overline{\partial}^{\bullet \to \circ}_{\textbf{coll}}}{\overline{\partial}^{\bullet \to \circ}}$ lie between $[0,1]$. And all complex eigenvalues come in conjugate pairs.
\end{lem}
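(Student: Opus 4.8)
The plan is to read off the eigenvalue constraints directly from the factorization of the modified characteristic polynomial $P(t)$ over the spectrum of $M := \frac{\overline{\partial}^{\bullet \to \circ}_{\textbf{coll}}}{\overline{\partial}^{\bullet \to \circ}}$, exploiting the fact—already established—that the coefficients $P_k$ of $P(t)$ are genuine probabilities. First I would observe that $M$ has finite rank (at most $|\textbf{coll}|$), since $\overline{\partial}^{\bullet \to \circ}_{\textbf{coll}}$ is supported on the finitely many edges of $\textbf{coll}$; this legitimizes treating $P(t) = \det(\mathbbm{1} + (t-1)M)$ as an honest polynomial even on the infinite graph, and lets me factor it over the nonzero eigenvalues $\lambda_1,\dots,\lambda_r$ of $M$ (counted with algebraic multiplicity) as $P(t) = \prod_{i=1}^r \bigl(1 + (t-1)\lambda_i\bigr)$, the zero eigenvalues contributing only trivial factors. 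The roots of $P$ are then exactly $t_i = \phi(\lambda_i)$, where $\phi(\lambda) := 1 - \tfrac{1}{\lambda}$ is a Möbius transformation.

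For the bound on \emph{real} eigenvalues, I would use that $P(t) = \sum_k t^k P_k$ has non-negative coefficients and that $P(1) = \det(\mathbbm{1}) = 1$, so $P \not\equiv 0$. Hence for any $t > 0$ each term $t^k P_k$ is non-negative and they cannot all vanish, giving $P(t) > 0$; thus $P$ has \emph{no} positive real root, and every real root satisfies $t_i \le 0$. Translating through $\phi$: a real eigenvalue $\lambda > 0$ forces $1 - \tfrac{1}{\lambda} \le 0$, i.e.\ $\lambda \le 1$, whereas a real $\lambda < 0$ would yield $t_i = 1 - \tfrac{1}{\lambda} > 1 > 0$, a forbidden positive root. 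Together with the trivial eigenvalue $\lambda = 0$, this pins all real eigenvalues into $[0,1]$.

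For the conjugate-pair statement I would argue as follows. Since the $P_k$ are real, the multiset of roots of $P$ is closed under complex conjugation. Because $\phi$ commutes with conjugation, $\overline{\phi(\lambda)} = \phi(\overline{\lambda})$, and because $\phi$ is injective, closure of the root multiset under conjugation transfers, via the bijection $\lambda \mapsto \phi(\lambda)$, to closure of the nonzero-eigenvalue multiset under $\lambda \mapsto \overline{\lambda}$. Hence every non-real eigenvalue of $M$ appears together with its conjugate, with equal multiplicity.

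The hard part is conceptual rather than computational: $M$ is neither Hermitian nor real—the entries of $\overline{\partial}^{\bullet \to \circ}$ are genuinely complex—so no spectral theorem applies directly, and the entire argument must instead be routed through the reality and non-negativity of the coefficients $P_k$ and the compatibility of the eigenvalue-to-root map $\phi$ with conjugation. The one technical point worth verifying carefully is the finite-rank reduction on the infinite graph, which is what guarantees that $P(t)$ is a true polynomial and that the product over eigenvalues is finite; once that is in hand, both claims follow immediately from the probabilistic structure of $P(t)$.
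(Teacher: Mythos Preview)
Your proof is correct and follows essentially the same route as the paper: both arguments exploit the non-negativity and reality of the coefficients $P_k$ to constrain the roots of $P(t)$, then translate those constraints back to the nonzero eigenvalues of $M$ via the M\"obius correspondence $\lambda \leftrightarrow t = 1 - 1/\lambda$. Your version is more explicit---verifying the finite-rank reduction, noting $P(1)=1$ so $P\not\equiv 0$, and separately disposing of the case $\lambda<0$---but the core idea is identical to the paper's.
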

\begin{proof}
    Nonzero eigenvalues $\lambda_0$ correspond to roots $t_0$ with $P(t_0) = 0$ as $\lambda_0 = \frac{1}{1-t_0}$. Since $P(t)$ has all non-negative and some positive coefficients, real $t_0$ must have $t_0 \le 0$ and complex $t_0$ must have a conjugate pair. These statements imply the lemma.
\end{proof}

We showed above lemma was for finite graphs and finite $\textbf{coll}$, the same statement holds for infinite graphs and finite $\textbf{coll}$. In particular, since the analogous $P(t)$ for infinite graphs will be a limit of polynomials of finite graphs, $P(t)$ would have non-negative coefficients that add up to 1 again.

\subsection{Graph connections and twisting $\overline{\partial}$} 
\label{sec:graph_connection}
For a group $G$, a \textit{$G$-connection} (or \textit{$G$- gauge field} or \textit{$G$- local system}) on a graph is a function 
\begin{equation}
    U : \{ \text{directed edges of graph} \} \to G 
    \quad,\quad 
    \text{so that } 
    U(b \to w) = U(w \to b)^{-1}
    \text{ for all edges } {\bf e} = (b \leftrightarrow w).
\end{equation}
Two connections $U$, $\tilde{U}$ are said to be \textit{gauge equivalent} if there exists a function 
\begin{equation}
    {\bf g} : \{ \text{vertices of graph} \} \to G
    \quad,\quad 
    \text{so that } 
    \tilde{U}(b \to w) = {\bf g}(b) U(b \to w) {\bf g}(w)^{-1}
    \text{ for all edges } {\bf e} = (b \leftrightarrow w).
\end{equation}
Such a function ${\bf g}$ is called a \textit{gauge transformation}. Any gauge transformation (say, on a finite graph) can be written as a sequence of \textit{elementary gauge transformations} for which ${\bf g}(v) = g$ for a single vertex $v$ while $g(v')=1$ for all other vertices $v'$. 

In this paper, we will be concerning ourselves with the matrix-groups $G \subset GL(N,\C)$ that act on $\C^N$. 
In this case, a coordinate-free description of a $G$-connection can be given by assigning vector spaces isomorphic to $\C^N$ to each vertex and viewing the function $U$ as a set of isomorphisms between adjacent vertices' vector spaces with respect to some basis. Then, a gauge transformation corresponds to a change-of-basis at each vertex. 

Note that for any loop of vertices $\ell = v_1 \to v_2 \to \cdots v_k \to v_1$, the product $M(\ell) = U(v_1 \to v_2) \cdot U(v_2 \to v_3) \cdot \cdots U(v_k \to v_1)$ known as the \textit{monodromy matrix} of the loop changes by conjugation $M(\ell) \mapsto \tilde{M}(\ell) = {\bf g}(v_1) M(\ell) {\bf g}(v_1)^{-1}$. In general, two connections $U, U'$ are gauge equivalent iff for a base-point $v$ and any loop $\ell$ starting at $v$, the monodromy matrices $M(\ell),M'(\ell)$ coming from $U,U'$ are related as $M'(\ell) = g M(\ell) g^{-1}$ for some $g \in G$ independent of $\ell$. The field $U$ is \textit{flat} around a face if the monodromy around it is the identity matrix. For the infinite graphs of interest to us, we will consider the cases where the monodromy is flat except only nontrivial around a finite number of faces, referred to as \textit{punctures}.

Now, given a $GL(N,\C)$- gauge field $U$ and our Kasteleyn Matrix $\overline{\partial}$, we can form a \textit{twisted} Kasteleyn matrix $\overline{\partial}(U)$ that is $\C$-valued matrix with indices in $\{1,\cdots,N\} \times \{\text{vertices}\}$ or equivalently as a $GL(N,\C)$-valued matrix with indices in $\{\text{vertices}\}$ follows. The matrix $\overline{\partial}(U)$ is comprised of the parts $\overline{\partial}^{\bullet \to \circ}(U)$ and $\overline{\partial}^{\circ \to \bullet}(U)$ and can be written
\begin{equation}
\begin{split}
    \overline{\partial}^{\bullet \to \circ}(U) (b^{(i)},w^{(j)}) = \overline{\partial}^{\bullet \to \circ}(b,w) \cdot U(b \to w)_{i,j}
    \quad&\leftrightarrow\quad
    \overline{\partial}^{\bullet \to \circ}(U) (b,w) = \overline{\partial}^{\bullet \to \circ}(b,w) \cdot U(b \to w) 
    ,\quad \text{and}
    \\
    \overline{\partial}^{\circ \to \bullet}(U) (w^{(i)},b^{(j)}) = \overline{\partial}^{\circ \to \bullet}(w,b) \cdot U(w \to b)_{i,j}
    \quad&\leftrightarrow\quad
    \overline{\partial}^{\circ \to \bullet}(U) (w,b) = \overline{\partial}^{\circ \to \bullet}(w,b) \cdot U(w \to b).
\end{split}
\end{equation}
In the above, the $n$ vertices $v^{(1)}, \cdots, v^{(N)}$ correspond to the $N$ `copies' of the vertex $v$. 

\begin{figure}[h!]
  \centering
  \includegraphics[width=\linewidth]{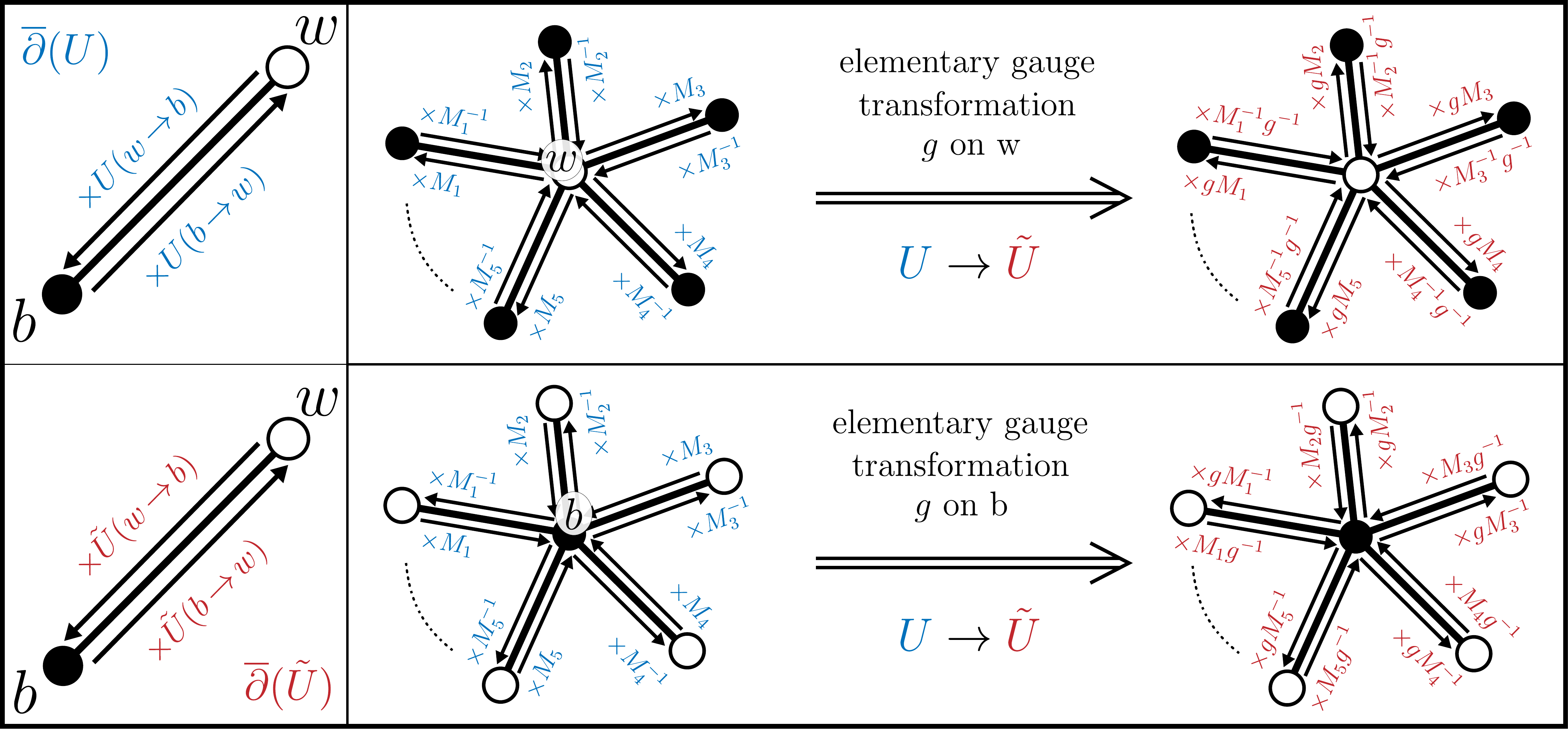}
  \caption{
    Effects of elementary gauge transformations on the twisted Kasteleyn matrix.
  }
  \label{fig:gauge_transformation}
\end{figure}

From the latter perspective, one can phrase a gauge transformation as a conjugation
\begin{equation}
    \overline{\partial}(\tilde{U})
    =
    \mathrm{diag}({\bf g}) \,
    \overline{\partial}(U) \,
    \mathrm{diag}({\bf g})^{-1}
    \quad,\quad
    \text{where }
    \mathrm{diag}({\bf g})(v,v') = {\bf g}(v) \delta_{v,v'}.
\end{equation}
As depicted in Fig.~\ref{fig:gauge_transformation}, an elementary gauge transformation $g$ on a vertex $v$ acts on the pieces $\overline{\partial}^{\bullet \to \circ}(U)$ and $\overline{\partial}^{\circ \to \bullet}(U)$ as
\begin{equation*}
\begin{split}
    &\overline{\partial}^{\bullet \to \circ}(\tilde{U})(b,w) 
    = 
    \overline{\partial}^{\bullet \to \circ}(U)(b,w) \cdot g^{-1}
    \quad,\quad
    \overline{\partial}^{\circ \to \bullet}(\tilde{U})(w,b) 
    = 
    g \cdot \overline{\partial}^{\circ \to \bullet}(U)(w,b)
    \quad,\quad
    \text{if } v = w
    \\
    &\overline{\partial}^{\bullet \to \circ}(\tilde{U})(b,w) 
    = 
    g \cdot \overline{\partial}^{\bullet \to \circ}(U)(b,w)
    \quad,\quad
    \overline{\partial}^{\circ \to \bullet}(\tilde{U})(w,b) 
    = 
    \overline{\partial}^{\circ \to \bullet}(U)(w,b) \cdot g^{-1}
    \quad,\quad
    \text{if } v = b
\end{split}
\end{equation*}
Note that this implies $\det(\overline{\partial})$ is \textit{gauge-invariant}, i.e. doesn't change upon a gauge transformation. However, the individual components do indeed change as
\begin{equation} \label{eq:gauge_covariance_kast_det}
    \frac
    {\det(\overline{\partial}^{\bullet \to \circ}(\tilde{U}))}
    {\det(\overline{\partial}^{\bullet \to \circ}(U))}
    =
    \begin{cases}
        \det(g)      &\text{ if } v \text{ black} \\
        \det(g)^{-1} &\text{ if } v \text{ white} 
    \end{cases}
    \quad,\quad
    \frac
    {\det(\overline{\partial}^{\circ \to \bullet}(\tilde{U}))}
    {\det(\overline{\partial}^{\circ \to \bullet}(U))}
    \cdot
    \begin{cases}
        \det(g)^{-1} &\text{ if } v \text{ black} \\
        \det(g)      &\text{ if } v \text{ white} 
    \end{cases}.
\end{equation}
So although $\det(\overline{\partial}^{\bullet \to \circ}(U))$ isn't gauge-invariant, if we restrict to groups $G \subset SL(N,\C)$, then we'll have $\det(g) = 1$ so that $\det(\overline{\partial}^{\bullet \to \circ}(U))$ would indeed be invariant. 
Also, note that if we restrict all edge matrices $U$ and gauge transformations to live in $U(N)$, the \textit{absolute-value} $\left|\det(\overline{\partial}^{\bullet \to \circ}(U))\right|$ will be gauge invariant.

In our setup with infinite graphs, we will only consider connections that are nontrivial in a finite region so all gauge transformations would indeed be connected by elementary gauge transformations. In particular, this means that the monodromy `around infinity' will be trivial.

\subsection{Height Functions of Dimer Matchings} \label{sec:height_functions}

Now we discuss a prototypical application of graph connections which is the study of height functions on bipartite planar graphs, for which we only need $G$ to be an abelian group. Given a \textit{reference} dimer matching, one can assign to dimer matching a \textit{height function} on the faces formed as follows. Orient the reference matching from white-to-black and orient the dimer matching from black-to-white. 
Then the combination of the matchings form a set of closed, directed loops on the graph. The loops define the contour lines of a height field $h$, where the value of $h$ at a specific face is fixed, say, to be zero. In particular, dualizing the directed edges clockwise (and ignoring doubled edges) defines a divergence-free flow on the dual graph that defines a function after fixing the value at a face. See Fig.~\ref{fig:height_function} for a depiction.  

\begin{figure}[h!]
  \centering
  \includegraphics[width=\linewidth]{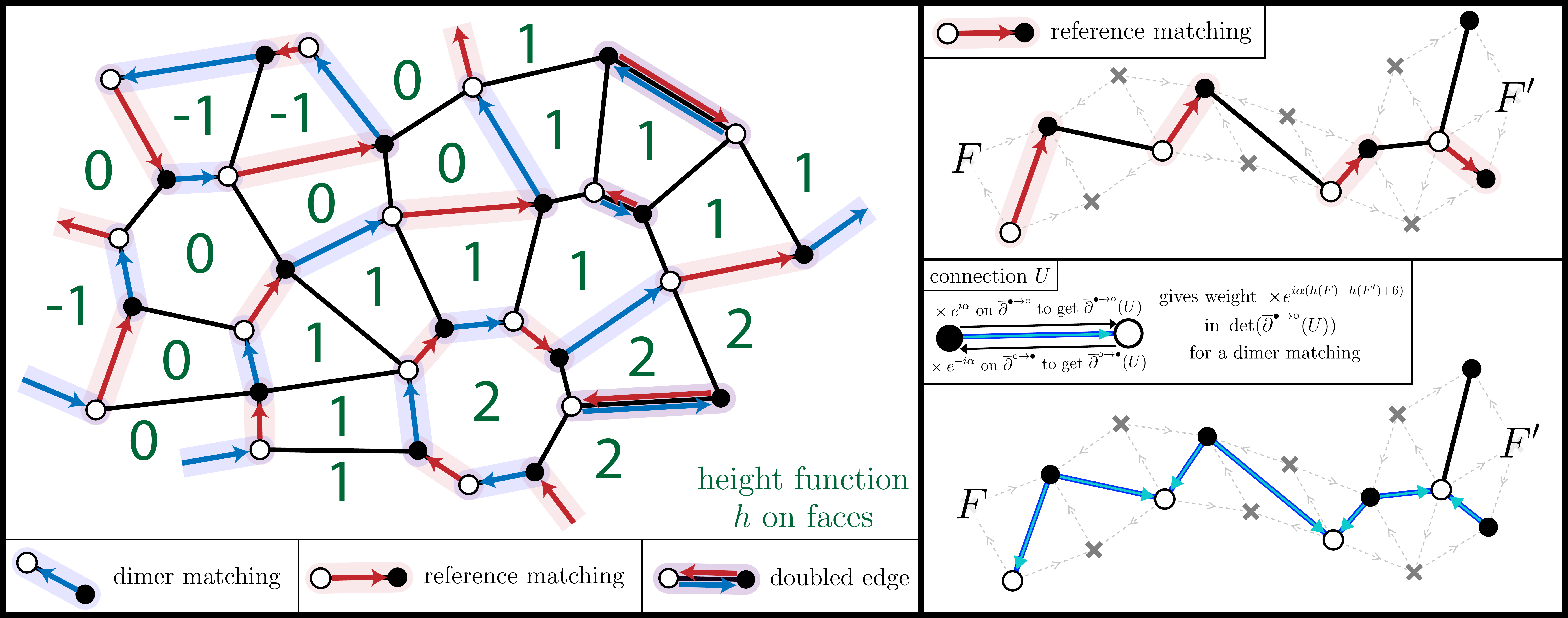}
  \caption{
    (Left) Given a reference matching and a dimer matching, we can construct a height function $h$ on the dual graph. 
    (Right) Adding a zipper connection of strength $e^{i\alpha}$ between two faces $F,F'$ modifies Eq.~\eqref{eq:det_of_kasteleyn} to count matchings with an additional weight of $e^{i\alpha(h(F)-h(F')-H_0)}$. In this case $H_0=6$.
  }
  \label{fig:height_function}
\end{figure}

As such, one may ask how to study correlation functions of differences in the height field $\{h(F) - h(F') | F,F' \in \{\text{faces}\}\}$.
One way to do this is to add a $\C^{\times}$-valued connection on the graph. For example, consider adding a connection $U$ of whose only monodromies are $e^{-i \alpha}$ counterclockwise around $F$ and $e^{i \alpha}$ counterclockwise around $F'$ and localize them on judiciously chosen edges so that either $U(b \to w) = e^{\pm i \alpha}$ or $U(b \to w) = 1$ for each edge and so that the dual of these edges form a path between $F \to F'$. Such a connection is referred to as a \textit{zipper}.
See Fig.~\ref{fig:height_function} for a depiction.
Then we can compute $\det(\overline{\partial}^{\bullet \to \circ}(U))$ using the formula Eq.~\eqref{eq:det_of_kasteleyn}, since one can think of the choice of connection modifying the weight function $\nu$ as adding a weight $\times \exp{(i\alpha(h(F)-h(F')-H_0))}$ to each dimer matching. Here, $H_0$ is some constant that depends on the specific choice of zipper. As such, if one can compute $\det(\overline{\partial}^{\bullet \to \circ}(U))$, taking moments with respect to $\alpha$ can give expectation values $\braket{(h(F)-h(F')-H_0)^k}$. Similarly, other height correlations involving faces $F_1,F_2,\cdots$ can be studied by adding multiple zippers of strengths $e^{i\alpha_1},e^{i\alpha_2},\cdots$ between various faces.

Note that the height function depends sensitively on the choice of the reference matching
\footnote{Note that there are more invariant definitions of height functions (see e.g.~\cite{KPWtreesAndMatchings1999}).}
and that different gauge-equivalent connections will change the value of $H_0$. This isn't surprising, since $e^{i\alpha}$ corresponds to a gauge field that's \textit{not} in $SL(1,\C) = \{1\}$, so gauge transformations of $U$ should be expected to change $\det( \overline{\partial}^{\bullet \to \circ} )(U)$. As such, studying height function correlations directly comes with the extra baggage of these microscopic choices.
However, restricting to to $U(1)$ connections and $U(1)$ weights leaves the absolute value $\left|\det(\overline{\partial}^{\bullet \to \circ}(U))\right|$ invariant, and the baggage of microscopic choices leads only to a non-universal \textit{phase} of the determinant. 

Similarly, we will choose to focus on computing determinants of $U(N)$-connections, which are invariant up to a non-universal phase.

\section{Two Punctures} \label{sec:singleZipper}

The first computation we do involves punctures at two far-away faces $F,F'$ on the isoradial graph and a connection $U$ flat away from these punctures. We compute the ratio of determinants
$\frac{\det \overline{\partial}^{\bullet \to \circ}(U)}{\det \overline{\partial}^{\bullet \to \circ}}$. 
Formally, both of these quantities are infinite and must be defined by taking the limit  of larger and larger finite graphs that encompass the gauge field configuration (which we stipulated were only nontrivial in a a finite region). Then after that, we'd need to take the limit of $|F - F'|$ becoming large. From our perspective, we won't need to deal with the technicality of taking larger and larger graphs, and we only worry about the latter limit. Furthermore, we will need to restrict our attention to isoradial graphs satisfying certain \textit{regularity conditions} as outlined in Appendix~\ref{app:asympt_gFunc}.

Since there are only two punctures and the monodromy at infinity is trivial, we have that the monodromies around the two punctures will be inverses of each other $M,M^{-1}$. By diagonalization, we can equate this determinant to a product of the determinants of $\C^\times$-valued connections, where the eigenvalues are in $\C^{\times}$. Even though these aren't gauge invariant individually, their products will be.
\footnote{
    Although, for eigenvalues that are unit complex numbers, the changes in partition function Eq.~\eqref{eq:gauge_covariance_kast_det} are only unit complex numbers if we restrict to unitary gauge transformations. With these restrictions in allowed gauge fields $U$, the absolute values 
    $\left|\frac{\det \overline{\partial}^{\bullet \to \circ}(U)}{\det \overline{\partial}^{\bullet \to \circ}}\right|$
    are well-defined.
    \label{foot:unitaryGaugeTransf}
} 
Say that we put a weight $e^{i \alpha}$ along a zipper connecting $F \to F'$ to form a twisted matrix $\overline{\partial}^{\bullet \to \circ}(e^{i\alpha})$. 
The ratio 
$\frac{\det \overline{\partial}^{\bullet \to \circ}(e^{i\alpha})}{\det \overline{\partial}^{\bullet \to \circ}}$ 
is related to expectation values $\braket{e^{i\alpha(h(F) - h(F'))}}$ for a suitable choice of height function (see Sec.~\ref{sec:height_functions}).
The expected convergence of height fluctuations to a Gaussian process matches
the expectation of convergences to the free fermion determinant of an abelian connection that's singular and flat away from the punctures. 
The discussion surrounding Eq.~\eqref{eq:abelian_singular_dirac_det} (see also Sec.~\ref{app:dirac_det_abelian}) gives the heuristics for such a determinant.
In particular, since there's a monodromy of $e^{i \alpha}$ around one puncture and $e^{-i\alpha}$ around the other, we would expect that the continuum limit satisfies
\begin{equation*}
    \log \frac{\det \overline{\partial}^{\bullet \to \circ}(e^{i\alpha})}{\det \overline{\partial}^{\bullet \to \circ}}
    \xrightarrow{|F' - F| \to \infty}
    -\frac{1}{2\pi^2} \alpha^2 \log(|F'-F|)
\end{equation*}
However, a subtlety is that the twisted Kasteleyn determinant must necessarily be periodic in $\alpha$ whereas $\alpha^2$ is \textit{not} periodic in $\alpha$. In~\cite{Dubedat2011}, Dubédat establishes that for $-\pi < \alpha < \pi$, this ratio does indeed converge as 
\begin{equation} \label{eq:singleZipper_result_intro}
    \log \left| \frac{\det \overline{\partial}^{\bullet \to \circ}(e^{i\alpha})}{\det \overline{\partial}^{\bullet \to \circ}} \right|
    \xrightarrow{|F' - F| \to \infty}
    -\frac{1}{2\pi^2} \alpha^2 \log(|F'-F|) + C(\alpha) + o(1)
\end{equation}
for some undetermined constant $C(\alpha)$ and that this convergence is uniform on compact subintervals of $(-\pi,\pi)$. This means that for generic \textit{unitary} $M \in U(N)$ whose eigenvalues are unit complex numbers $e^{i \alpha_1},\cdots,e^{i\alpha_N}$ with $-\pi < \alpha < \pi$, we'll get
\begin{equation} \label{eq:singleZipper_result_intro_2}
\begin{split}
    \log 
    \frac{\det \overline{\partial}^{\bullet \to \circ}(U)}{\det \overline{\partial}^{\bullet \to \circ}}
    \xrightarrow{|F' - F| \to \infty}
    &-\frac{1}{2\pi^2} (\alpha_1^2 + \cdots + \alpha_N^2) \log(|F'-F|) + C(\alpha_1) + \cdots + C(\alpha_N) + o(1)
    \\
    &+ \text{`imaginary part'}.
\end{split}
\end{equation}
In the above, there will generically be an \text{`imaginary part'} that is \textit{not universal} in the sense that it will depend on microscopic details of the graph and zipper (see Secs.~\ref{sec:real_im_parts_expansion},\ref{sec:non_universality_im_part}). 

In our analysis, we'll be able to explicitly identify the constant $C(\alpha)$ (see Sec.~\ref{sec:asympt_I2n_singleZip}) as 
\begin{equation*}
    C(\alpha) 
    = 
    -\frac{\alpha^2}{2\pi^2} ( 1  + \gamma_{\mathrm{Euler}} )
    -
    4 \sum_{n=2}^{\infty} \frac{1}{2n} \left(\frac{\alpha}{2\pi} \right)^{2n} \zeta(2n-1)
\end{equation*}
where $\zeta$ is the Riemann-zeta function. In particular, we are rigorously able to establish a generating function for the coefficients of $C(\alpha)$, where the coefficients have subleading corrections as $|F-F'| = D \to \infty$. 
Our analysis by itself does not establish that the series for $C(\alpha)$ including the subleading corrections actually converges. 
But, Dubédat's convergence result coupled with our calculation of moments means that $C(\alpha)$ is rigorously determined. 

Additionally, even though Dubédat's convergence results and our analysis focus on the eigenvalues $e^{i\alpha}$ of $M$ being unit complex numbers, we can analytically continue the moment generating function to complex $\alpha$ to get general $\C^{\times}$ eigenvalues, although we don't know how to analyze whether that sum converges. We expect that in general, this will lead to non-universal answers. In particular, the $\text{`imaginary part'}$ of the above answer will end up corresponding to odd powers of $\alpha$, which we expect not to be universal (see Sec.~\ref{sec:non_universality_im_part}). Analytically continuing $U(1)$ to $\C^\times$ will mix up the real and imaginary parts of the above expansion and thus mix up the universal and non-universal quantities.

In Sec.~\ref{sec:straightLine_zippers}, we establish a convenient microscopic choice of the zipper $F \to F'$. 
In Sec.~\ref{sec:clusterExp_singleZip}, we discuss a particular cluster expansion of the twisted determinant that is useful for this problem.
In Sec.~\ref{sec:exactExpr_zipperTraces}, we discuss exact formulas of the terms in the cluster expansion and intermediate exact formulas used throughout the paper. 
In Sec.~\ref{sec:asympt_I2n_singleZip}, we give asymptotic expressions of the terms in expansion.

\subsection{Straight-line Zippers} \label{sec:straightLine_zippers}

We'll be given a bipartite isoradial graph. Let $F,F'$ be vertices corresponding to faces of the graph, and overload $F,F'$ to also refer to the complex coordinate of the corresponding vertex on the rhombus graph. Define an angle $\theta_0$ so that $F' - F = D e^{i \theta_0}$ for some $D > 0$. Then, we have in general~\cite{BMS05} that there exists a path
\begin{equation}
    F =: v_0 \to v_1 \to v_2 \to v_3 \to \cdots \to v_{2 P - 3} \to v_{2 P - 2} \to v_{2 P - 1} \to v_{2 P} := F'
\end{equation}
of vertices $v_\cdots$ in the rhombus graph such that each
\begin{equation} \label{eq:boundedRhombusAngles}
    v_{j+1} - v_j = e^{i \phi_{j \to j+1}},
    \quad \text{where } 
    \theta_0 - \frac{\pi}{2} < \phi_{j \to j+1} < \theta_0 + \frac{\pi}{2}.
\end{equation}
Since the rhombus graph is bipartite (with black or white vertices of the isoradial graph being neighbors with the faces), each $v_{2 j} =: F_j$ defines a face $F_j$, so that $F_0 = F$ and $F_P=F'$. And, each $v_{2 j - 1}$ will be a black or white vertex, which we'll call either $w_{j}$ or $b_j$. See Fig.~\ref{fig:pathOnRhombusGraph}. Furthermore, since the path is finite we can constrain all the angles to lie in $(\theta_\text{min}(F_0 \to F_P),\theta_\text{max}(F_0 \to F_P))$ where $\theta_0 - \frac{\pi}{2} < \theta_\text{min} < \theta_\text{max} < \theta_0 + \frac{\pi}{2}$. We refer to these kinds of paths as \textit{straight-line paths} on the rhombus graph.

\begin{figure}[p!]
\begin{subfigure}
  \centering
  \includegraphics[width=\linewidth]{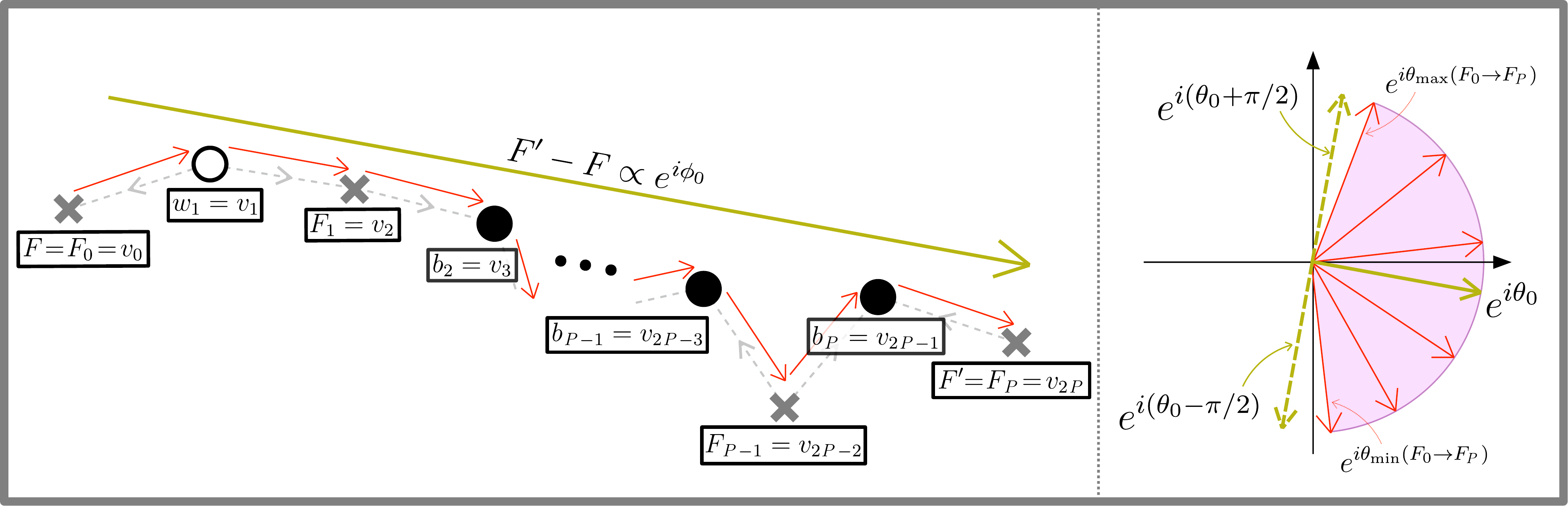}
  \caption{There exists a path between $F = F_0 = v_0 \to F' = F_P = v_{2P}$ where the angles of the path (red arrows) are constrained to be in the range $(\theta_0 - \frac{\pi}{2}, \theta_0 + \frac{\pi}{2})$ where $F' - F \propto e^{i \theta_0}$, as in the right side of the figure. By finiteness, one can constrain the angles to lie in $(\theta_\text{min}(F_0 \to F_P),\theta_\text{max}(F_0 \to F_P))$ where $\theta_0 - \frac{\pi}{2} < \theta_\text{min}(F_0 \to F_P) < \theta_\text{max}(F_0 \to F_P) < \theta_0 + \frac{\pi}{2}$.}
  \label{fig:pathOnRhombusGraph}
\end{subfigure}
\begin{subfigure}
  \centering
  \includegraphics[width=\linewidth]{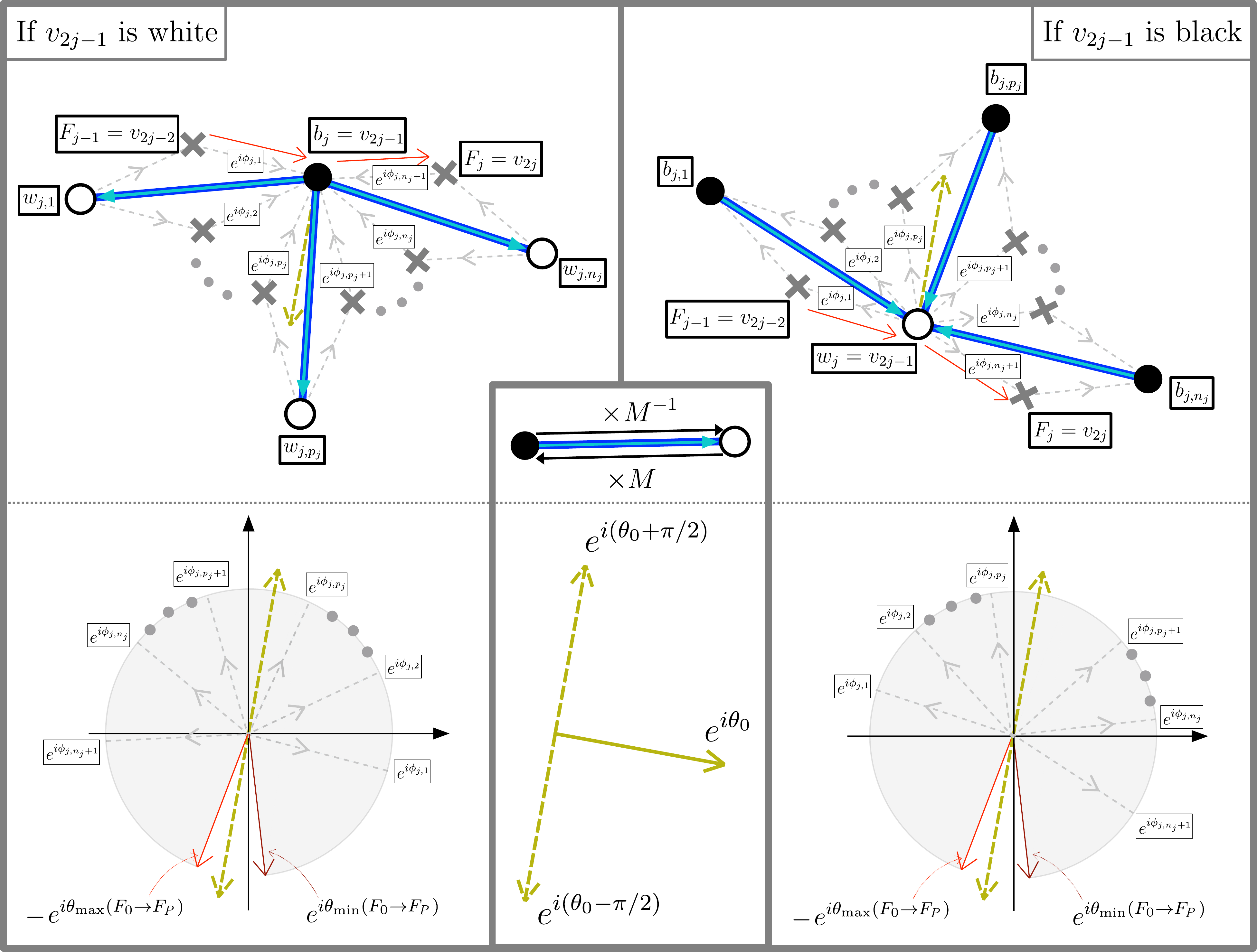}
  \caption{A connection on the isoradial graph associated to the path chosen in Fig.~\ref{fig:pathOnRhombusGraph}, whose edges weights to modify near $v_{2 j - 1}$ depend on whether $v_{2 j - 1}$ is black or white, as described in the main text. The angles $\phi_{j,k}$ that determine the orientation direction of the rhombus edges can be bounded between $\theta_\text{min}(F_0 \to F_P) < \phi_{j,k} < \theta_{max}(F_0 \to F_P) + \pi$.}
  \label{fig:pathOnRhombusGraph_connection}
\end{subfigure}
\end{figure}

Now we want to construct a zipper between $F$ and $F'$ by adding a connection on some edges of the graph along a straight-line path. This will be referred to as a \textit{straight-line zipper}. We will choose a connection with a transition matrix of $M \in U(N)$ across the zipper. In particular, we'll choose a connection so that the weights of some edges for $\overline{\partial}$ near each $v_{2 j - 1} = w_j \text{ or } b_j$ get their weights multiplied by $M^{\pm 1}$. The particular choice of which edge weights get multiplied will depend on whether $v_{2j - 1}$ is white or black. First if $v_{2j-1}=w_j$ is white, then $w_j$ has $n_j$ black neighbors $b_{j,1}, \cdots, b_{j,n_j}$ that lie on the side of the path $v_0 \to \cdots \to v_{2P}$ that points towards $e^{i(\theta_0 + \frac{\pi}{2})}$. And if $v_{2j-1}=b_j$ is black, then $b_j$ has $n_j$ white neighbors $w_{j,1}, \cdots, w_{j,n_j}$ that lie on the side of the path pointing towards $e^{i(\theta_0 - \frac{\pi}{2})}$. In both cases along the edges going around we multiply the  $M^{-1},M$ in the black$\to$white, white$\to$black direction. See Fig.~\ref{fig:pathOnRhombusGraph_connection}.

The reason we chose these specific edge weights is to constrain the directions of the oriented edges in the rhombus graph. Let's define $e^{i \phi_{j,1}}, \cdots, e^{i \phi_{j,n_j+1}}$ be the rhombus edge directions pointing away from $w_j$ going clockwise, and $e^{i \phi_{j,1}}, \cdots, e^{i \phi_{j,n_j+1}}$ be the rhombus edge directions pointing towards $b_j$ going counterclockwise. In particular, we'll have $e^{i \phi_{j,1}} = - e^{i \phi_{2j-2 \to 2j-1}}$ for $v_{2j-1}$ white and $e^{i \phi_{j,n_j+1}} = - e^{i \phi_{2j-2 \to 2j-1}}$ for $v_{2j-1}$ black, with $e^{i \phi_{j \to j+1}}$ defined in Eq.~\eqref{eq:boundedRhombusAngles}. 
Then we'll have that (see Fig.~\ref{fig:pathOnRhombusGraph_connection}) the angles can all bounded between $\theta_\text{min}(F_0 \to F_P) < \phi_{j,k} < \theta_{max}(F_0 \to F_P) + \pi$. In particular, this means that all of these angles will be bounded away from $\theta_0 - \frac{\pi}{2}$. 

\subsection{Cluster Expansion of the twisted determinant: single straight-line zipper} \label{sec:clusterExp_singleZip}

Given a straight-line zipper between $F \to F'$ as described in Sec.~\ref{sec:straightLine_zippers}, we want to expand the determinant in terms of the matrix $M \in U(N)$. Since we have only one matrix involved, we can study this problem in terms of the eigenvalues of $M$, or equivalently if we replace $M$ with a complex number $e^{i\alpha}$ in the construction. We will restrict our attention to real $\alpha \in (0,\pi)$, the other cases can be handled by complex conjugation and periodicity. Throughout the rest of this section, we will refer to $\overline{\partial}(e^{i\alpha})$ as the twisted Kasteleyn matrix.
We have
\begin{equation}
    \overline{\partial}(e^{i\alpha}) 
    = 
    \underbrace{
        \overline{\partial}^{\bullet \to \circ} + (e^{-i\alpha} - 1) \overline{\partial}_{\mathrm{zip}}^{\bullet \to \circ}
    }_{=: \overline{\partial}^{\bullet \to \circ}(e^{i\alpha}) }
    + 
    \underbrace{
        \overline{\partial}^{\circ \to \bullet} + (e^{i\alpha} - 1) \overline{\partial}_{\mathrm{zip}}^{\circ \to \bullet}
    }_{=:  \overline{\partial}^{\circ \to \bullet}(e^{i\alpha}) }
\end{equation}
where $\overline{\partial}_{\mathrm{zip}}^{\bullet \to \circ}$ and $\overline{\partial}_{\mathrm{zip}}^{\circ \to \bullet}$ are the restrictions of the $\overline{\partial}$ matrix to the zipper, for the black-to-white and white-to-black portions of the matrix respectively.

We restrict our attention to evaluating one half of the twisted matrix, namely
$\frac{\det \overline{\partial}^{\bullet \to \circ}(e^{i\alpha})}{\det \overline{\partial}^{\bullet \to \circ}}$. 
The logarithm of this ratio is more tractable to study using the identity $\tr \log = \log \det$. 
As such, we can expand
\begin{equation} \label{eq:logDetExpansion_singZip_1}
\begin{split}
    \log \frac{\det \overline{\partial}^{\bullet \to \circ}(e^{i\alpha})}{\det \overline{\partial}^{\bullet \to \circ}}
    &= 
    \tr \log 
    \left(
        \mathbbm{1} - (1 - e^{-i\alpha}) \frac{\overline{\partial}_{\mathrm{zip}}^{\bullet \to \circ}}{\overline{\partial}^{\bullet \to \circ}}
    \right) 
    = 
    -\sum_{n=1}^{\infty} \frac{1}{n} (1-e^{-i\alpha})^n 
    \tr \bigg[ \bigg( 
        \frac{\overline{\partial}_{\mathrm{zip}}^{\bullet \to \circ}}
        {\overline{\partial}^{\bullet \to \circ}}
    \bigg)^n \bigg]
\end{split}
\end{equation}
where 
$
\frac{\overline{\partial}_{\mathrm{zip}}^{\bullet \to \circ}}
    {\overline{\partial}^{\bullet \to \circ}} 
:= \overline{\partial}_{\mathrm{zip}}^{\bullet \to \circ} 
    \frac{1}{\overline{\partial}^{\bullet \to \circ}}
$. 

\subsubsection{Real and Imaginary Parts of Expansions} \label{sec:real_im_parts_expansion}
At this point, we will note that only the real parts of the series expansion are important in height correlations. By Lemma~\ref{lem:kernel_evals}, all traces of powers of 
$\frac{\overline{\partial}_{\mathrm{zip}}^{\bullet \to \circ}}
{\overline{\partial}^{\bullet \to \circ}} $
will be real. As such, the terms in the expansion corresponding to even powers of $\alpha$ will be real and those with odd powers of $\alpha$ will be imaginary.
The imaginary terms do not affect the absolute value of the determinant. 

To get the real part of the determinant, consider the equality,
\begin{equation} \label{eq:log_expansion_A}
\begin{split}
    & \frac{1}{2}\left(
        \log \left( 1 - (1-e^{-i \alpha}) y \right) + 
        \log \left( 1 - (1-e^{ i \alpha}) y \right) 
    \right)
    = 
    \frac{1}{2}\log \left(
        1 - (2 \sin(\alpha/2))^2 y(1-y)
    \right)
\end{split}
\end{equation}

This leads to the expansion
\begin{equation} \label{eq:singleZip_SeriesExpansion}
\begin{split}
    \tr \log \left( 1 - (1-e^{-i\alpha}) \frac{\overline{\partial}_{\mathrm{zip}}^{\bullet \to \circ}}
             {\overline{\partial}^{\bullet \to \circ}} \right) 
    =
    -\sum_{n=1}^{\infty} \frac{1}{2n}
    (2 \sin(\alpha/2))^{2n}
    \tr
    \left[\left(
        \frac{\overline{\partial}_{\mathrm{zip}}^{\bullet \to \circ}}
             {\overline{\partial}^{\bullet \to \circ}}
        \left(\mathbbm{1} -
        \frac{\overline{\partial}_{\mathrm{zip}}^{\bullet \to \circ}}
             {\overline{\partial}^{\bullet \to \circ}} 
        \right)
    \right)^n\right]
    + \text{`imaginary part'}.
\end{split}
\end{equation}
Note that since each $\frac{\overline{\partial}_{\mathrm{zip}}^{\bullet \to \circ}}{\overline{\partial}^{\bullet \to \circ}}$ is a finite-rank matrix, we'll have that each of 
$\tr \log \left( 1 - (1-e^{\pm i\alpha}) \frac{\overline{\partial}_{\mathrm{zip}}^{\bullet \to \circ}}
             {\overline{\partial}^{\bullet \to \circ}} \right) $
both converge absolutely for small enough $\alpha$, so both sides of the above will converge for small $\alpha$, thus their moments with respect to $\alpha$ will be given by moments with respect to the above.

It will turn out that evaluating these traces 
$
\tr
\left[\left(
    \frac{\overline{\partial}_{\mathrm{zip}}^{\bullet \to \circ}}
         {\overline{\partial}^{\bullet \to \circ}}
    \left(\mathbbm{1} -
    \frac{\overline{\partial}_{\mathrm{zip}}^{\bullet \to \circ}}
         {\overline{\partial}^{\bullet \to \circ}} 
    \right)
\right)^n\right]
$
are tractable, and the much of this section will be dedicated to evaluating these.

\subsection{Exact expressions for zipper traces} \label{sec:exactExpr_zipperTraces}

This section finds an exact expression for the traces
$
\tr
\left[\left(
    \frac{\overline{\partial}_{\mathrm{zip}}^{\bullet \to \circ}}
         {\overline{\partial}^{\bullet \to \circ}}
    \left(\mathbbm{1} -
    \frac{\overline{\partial}_{\mathrm{zip}}^{\bullet \to \circ}}
         {\overline{\partial}^{\bullet \to \circ}} 
    \right)
\right)^n\right]
$
that are part of Eq.~\eqref{eq:singleZip_SeriesExpansion}, culminating in Proposition~\ref{prop:traceId_second}. However, the intermediate steps to the proposition and the reasoning behind them are also at the heart of the multiple zipper calculations in Sec.~\ref{sec:multipleZipper_calculations}. 

The following equality is central.
\begin{lem} \label{lem:nEdges_zipIdLemma}
Let $n > 1$. Let ${\bf e}_1 = (b_1,w_1) , \cdots , {\bf e}_{n}= (b_{n},w_{n})$ be $n$ edges in the zipper. Define $F^{L}_{{\bf e}_i}, F^{R}_{{\bf e}_i}$ to be the rhombus graph vertices corresponding to the two dual faces on the dual edge of each ${\bf e}_i$ on the left,right side of $w_i \to b_i$, as in Fig.~\ref{fig:nEdges_zipIdLemma}. Then for \textit{any} vertex or dual vertex $\Tilde{v}_1, \cdots, \Tilde{v}_n$
\begin{equation} \label{eq:nEdges_zipIdLemma}
\begin{split}
    &\overline{\partial}_{\mathrm{zip}}^{\bullet \to \circ}(b_1,w_1) \cdot g_{w_1 \to b_2}(z_2) \cdot \overline{\partial}_{\mathrm{zip}}^{\bullet \to \circ}(b_2,w_2) \cdot g_{w_2 \to b_3}(z_3) \cdot \cdots \cdot \overline{\partial}_{\mathrm{zip}}^{\bullet \to \circ}(b_{n},w_{n}) \cdot g_{w_{n} \to b_1}(z_{1}) 
    \\
    =&
    \left( \frac{g_{\Tilde{v}_1 \to F^L_{1}}(z_1)}{g_{\Tilde{v}_2\to F^L_{1}}(z_2)} - \frac{g_{\Tilde{v}_1 \to F^R_{1}}(z_1)}{g_{\Tilde{v}_2 \to F^R_{1}}(z_2)} \right)
    \left( \frac{g_{\Tilde{v}_2 \to F^L_{2}}(z_2)}{g_{\Tilde{v}_3 \to F^L_{2}}(z_3)} - \frac{g_{\Tilde{v}_2 \to F^R_{2}}(z_2)}{g_{\Tilde{v}_3 \to F^R_{1}}(z_3)} \right)
    \cdots
    \left( \frac{g_{\Tilde{v}_n \to F^L_{n}}(z_{n})}{g_{\Tilde{v}_1 \to F^L_{2}}(z_{1})} - \frac{g_{\Tilde{v}_n \to F^R_{n}}(z_{n})}{g_{\Tilde{v}_1 \to F^R_{1}}(z_{1})} \right) 
    \\
    &\times 
    \frac{(-i)^n}{(z_1 - z_2) (z_2 - z_3) \cdots (z_{n} - z_{1})}.
\end{split}
\end{equation}
If $n=1$, then the expression above should be taken to modify $\frac{1}{z_1 - z_1}$ with an appropriate limit which defines a derivative
\begin{equation} \label{eq:nEdges_zipIdLemma_nEq1}
\begin{split}
    &\overline{\partial}_{\mathrm{zip}}^{\bullet \to \circ}(b_1,w_1) \cdot g_{w_1 \to b_1}(z_1)
    = `` \frac{-i}{z_1 - z_1}
    \left( \frac{g_{\Tilde{v}_1 \to F^L_{1}}(z_1)}{g_{\Tilde{v}_1 \to F^L_{1}}(z_1)} - \frac{g_{\Tilde{v}_1 \to F^R_{1}}(z_1)}{g_{\Tilde{v}_1 \to F^R_{1}}(z_1)} \right) " 
    = -i \left( \frac{\frac{d}{d z_1} g_{\Tilde{v}_1 \to F^L_{1}}(z_1)}{g_{\Tilde{v}_1 \to F^L_{1}}(z_1)} - \frac{\frac{d}{d z_1} g_{\Tilde{v}_1 \to F^R_{1}}(z_1)}{g_{\Tilde{v}_1 \to F^R_{1}}(z_1)} \right)
\end{split}
\end{equation}
\end{lem}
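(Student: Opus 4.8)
The plan is to read Eq.~\eqref{eq:nEdges_zipIdLemma} as an identity of rational functions in $z_1,\dots,z_n$ and to collapse the cyclic left-hand side into a telescoping product of local rhombus contributions. The only inputs needed are the recursive definition of the discrete exponential, its path-independence (equivalently the multiplicativity $g_{x\to y}(z)\,g_{y\to y'}(z)=g_{x\to y'}(z)$), and the formula Eq.~\eqref{eq:def_kasteleyn_mat} for the Kasteleyn weight. The first reduction I would make is to observe that the right-hand side cannot depend on the auxiliary choices $\tilde v_1,\dots,\tilde v_n$, since the left-hand side does not. Indeed each $\tilde v_i$ enters exactly two bracketed factors—the numerator of the $i$-th (through $g_{\tilde v_i\to F^{L/R}_i}(z_i)$) and the denominator of the $(i{-}1)$-th (through $g_{\tilde v_i\to F^{L/R}_{i-1}}(z_i)$)—and replacing $\tilde v_i$ by $\tilde v_i'$ multiplies all four of those $g$'s by the common scalar $g_{\tilde v_i'\to\tilde v_i}(z_i)$, so the $i$-th factor is scaled by $g_{\tilde v_i'\to\tilde v_i}(z_i)$ and the $(i{-}1)$-th by its inverse; the product is invariant. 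This lets me set $\tilde v_i=w_i$ for all $i$.

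With $\tilde v_i=w_i$ each factor becomes explicit. The one-step recursion gives $g_{w_i\to F^{L}_i}(z)=1/(z-(F^L_i-w_i))$ and $g_{w_i\to F^{R}_i}(z)=1/(z-(F^R_i-w_i))$, while the two-step path $w_i\to F\to b_i$ gives $g_{w_i\to b_i}(z)=1/\big((z-(F^L_i-w_i))(z-(F^R_i-w_i))\big)$. Pulling the $z_{i+1}$-denominators back to $w_i$ by multiplicativity and applying the elementary identity
\begin{equation*}
  \frac{z'-a}{z-a}-\frac{z'-c}{z-c}=\frac{(c-a)\,(z-z')}{(z-a)(z-c)}
\end{equation*}
each $i$-th factor collapses to
\begin{equation*}
  \frac{(F^R_i-F^L_i)\,(z_i-z_{i+1})\,g_{w_i\to b_i}(z_i)}{g_{w_{i+1}\to w_i}(z_{i+1})}.
\end{equation*}
Here $F^R_i-F^L_i$ is precisely the difference of the two face-directions out of $w_i$, which Eq.~\eqref{eq:def_kasteleyn_mat} identifies (up to the factor $i$) with $\overline{\partial}^{\bullet\to\circ}(b_i,w_i)$.

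Taking the product over $i$, three cancellations happen at once. The explicit factor $\prod_i(z_i-z_{i+1})$ cancels against the prefactor $1/\prod_i(z_i-z_{i+1})$; the powers of $i$ coming from $F^R_i-F^L_i$ combine with $(-i)^n$ to give unity; and the exponentials telescope: after reindexing the $g_{w_{i+1}\to w_i}$ denominators, the cyclic product reduces via $g_{w_i\to b_i}(z_i)/g_{w_i\to w_{i-1}}(z_i)=g_{w_{i-1}\to b_i}(z_i)$ to exactly $\prod_i g_{w_i\to b_{i+1}}(z_{i+1})$, the string of discrete exponentials on the left-hand side. Together with the factors $\prod_i\overline{\partial}^{\bullet\to\circ}(b_i,w_i)$ this reproduces Eq.~\eqref{eq:nEdges_zipIdLemma}. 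For $n=1$ the prefactor $1/(z_1-z_1)$ is singular, so I would instead keep $z_2\neq z_1$ with $\tilde v_2=\tilde v_1$, expand the single bracket to first order using $g(z_1)/g(z_2)=1-(z_2-z_1)\,g'(z_1)/g(z_1)+O((z_2-z_1)^2)$, and let $z_2\to z_1$; the resulting $0\cdot\infty$ limit yields the logarithmic-derivative expression of Eq.~\eqref{eq:nEdges_zipIdLemma_nEq1}, matching the $n>1$ computation.

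I expect the main obstacle to be bookkeeping rather than mathematics: correctly matching the left/right labelling of $F^L_i,F^R_i$ relative to the oriented edge $w_i\to b_i$ with the right/left rhombus angles $e^{i\vartheta},e^{i\varphi}$ of Eq.~\eqref{eq:def_kasteleyn_mat}, together with the edge-orientation convention (away from white, toward black) that fixes the sign in each one-step recursion. A mismatched labelling flips the identification to $F^R_i-F^L_i=\pm i\,\overline{\partial}^{\bullet\to\circ}(b_i,w_i)$ and leaves a spurious $(-1)^n$; ensuring the sign is such that the powers of $i$ cancel $(-i)^n$ exactly is the one step that must be checked carefully against the figures.
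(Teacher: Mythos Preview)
Your proof is correct and follows essentially the same route as the paper's. The paper starts from the left-hand side, expands each $\overline{\partial}^{\bullet\to\circ}(b_i,w_i)\,g_{w_i\to b_{i+1}}(z_{i+1})$ explicitly, and then uses the same partial-fraction identity (written there as $\frac{1}{z-a}-\frac{1}{z-c}=\frac{1}{z'-z}\big(\frac{z'-a}{z-a}-\frac{z'-c}{z-c}\big)$) to introduce the $\frac{1}{z_i-z_{i+1}}$ factors and recognize the bracketed ratios of discrete exponentials for arbitrary $\tilde v_i$. Your only organizational difference is that you first isolate the $\tilde v_i$-independence of the right-hand side and then specialize to $\tilde v_i=w_i$, working from right to left; the paper instead carries the generic $\tilde v_i$ through the forward computation. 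Your cautionary remark about matching the $F^L/F^R$ labelling with the $e^{i\vartheta}/e^{i\varphi}$ convention is apt—the paper's figure caption and proof text in fact use opposite assignments—but, as you note, either convention gives $F^R_i-F^L_i=\pm i\,\overline{\partial}^{\bullet\to\circ}(b_i,w_i)$ and the product of signs is absorbed by $(-i)^n(i)^n=1$.
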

\begin{figure}[h!]
  \centering
  \includegraphics[width=0.8\linewidth]{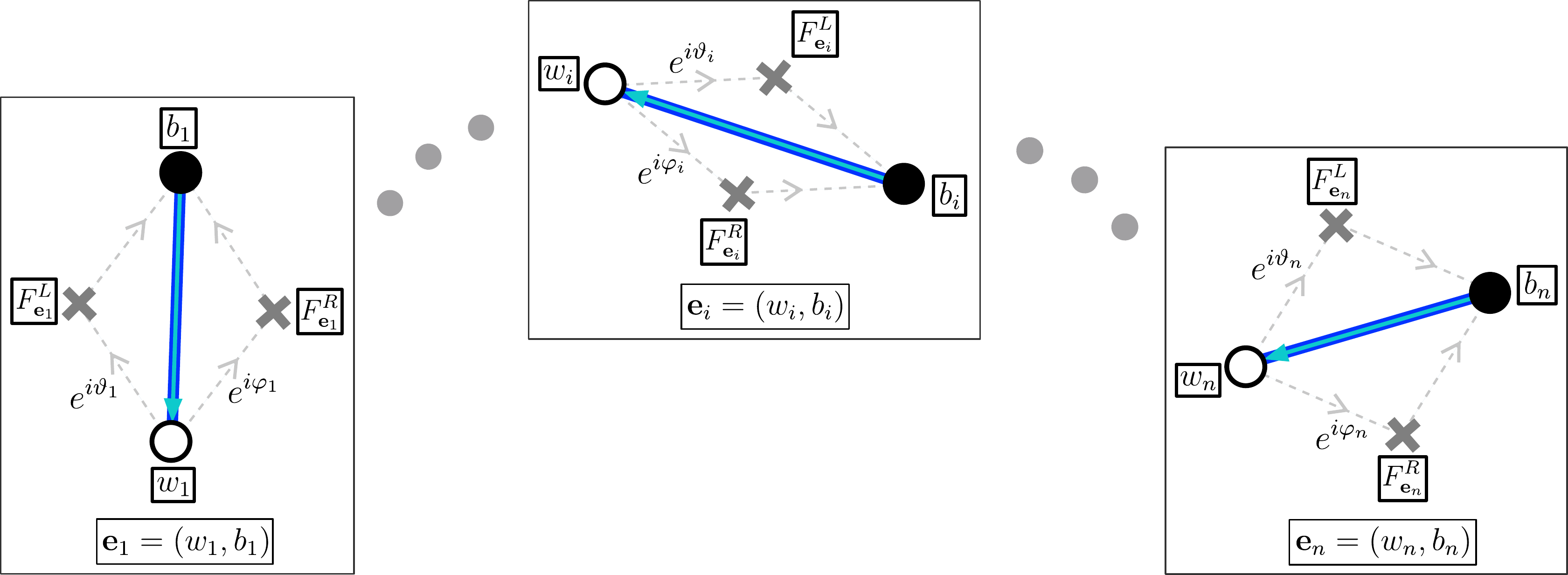}
  \caption{Given edges ${\bf e}_1,\cdots,{\bf e}_n$ along the zipper, the dual faces $F^{L}_{{\bf e}_i}, F^{R}_{{\bf e}_i}$ are on the left, right sides of the edge $w_i \to b_i$. Also, define $e^{i \varphi_i} = F_{{\bf e}_i}^R - w_i$ and $e^{i \vartheta_i} = F_{{\bf e}_i}^L - w_i$ to be the rhombus vectors along each edge.}
  \label{fig:nEdges_zipIdLemma}
\end{figure}

\begin{proof}
    We have that
    \begin{equation*}
        \overline{\partial}_{\mathrm{zip}}^{\bullet \to \circ}(b_i,w_i) = i(e^{i \vartheta_i}-e^{i \varphi_i})
    \end{equation*}
    where $e^{i \varphi_1} = F_{{\bf e}_i}^L - w_i , e^{i \vartheta_1} = F_{{\bf e}_i}^R - w_i$ as in Fig.~\ref{fig:nEdges_zipIdLemma}. Also, we have that each
    \begin{equation*}
        g_{w_i \to b_{i+1}}(z_{i+1}) = g_{b_i \to b_{i+1}}(z_{i+1}) \frac{1}{(z_{i+1}-e^{i \vartheta_i})(z_{i+1}-e^{i \varphi_i})}
    \end{equation*}
    where $b_{n+1}:=b_1$. 
    
    Let's start with the $n=1$ case. We'll have
    \begin{equation*}
    \begin{split}
        \overline{\partial}_{\mathrm{zip}}^{\bullet \to \circ}(b_1,w_1) \cdot g_{w_1 \to b_1}(z_1) 
        &= i (e^{i \vartheta_i}-e^{i \varphi_i}) \frac{1}{(z_{1}-e^{i \varphi_1})(z_{1}-e^{i \vartheta_1})} 
         = i \bigg( \frac{1}{z_{1}-e^{i \vartheta_1}} - \frac{1}{z_{1}-e^{i \varphi_1}} \bigg)
    \end{split}
    \end{equation*}
    Now, note that for any $\Tilde{v}_1$, we can write for some $k$, $\{\zeta_j\}$, $\{\pm_j\}$
    \begin{equation*}
        g_{\Tilde{v}_1 \to b_1}(z_1) = \prod_{j=1}^k (z_1 - e^{i \zeta_j})^{\pm_j 1}.
    \end{equation*}
    Also, we'd have 
    \begin{equation*}
        g_{\Tilde{v}_1 \to F_{{\bf e}_1}^L}(z_1) = g_{\Tilde{v}_1 \to b_1}(z_1) (z_1 - e^{i \varphi_1})
        \quad,\quad
        g_{\Tilde{v}_1 \to F_{{\bf e}_1}^R}(z_1) = g_{\Tilde{v}_1 \to b_1}(z_1) (z_1 - e^{i \vartheta_1}).
    \end{equation*}
    These facts together 
    imply 
    \begin{equation*}
    \begin{split}
        &\overline{\partial}_{\mathrm{zip}}^{\bullet \to \circ}(b_1,w_1) 
        \cdot
        g_{w_1 \to b_1}(z) = i \bigg( \frac{1}{z_{1}-e^{i \vartheta_1}} - \frac{1}{z_{1}-e^{i \varphi_1}} \bigg)
        \\
        &= -i \left(
                \bigg( \frac{1}{z_1 - e^{i \varphi_1}}   + \sum_{j=1}^k \pm_j \frac{1}{z_1 - e^{i \zeta_j}} \bigg) 
                - 
                \bigg( \frac{1}{z_1 - e^{i \vartheta_1}} + \sum_{j=1}^k \pm_j \frac{1}{z_1 - e^{i \zeta_j}} \bigg)
            \right) 
        = -i \left( 
                \frac{\frac{d}{d z_1} g_{\Tilde{v}_1 \to F^L_{1}}(z_1)}{g_{\Tilde{v}_1 \to F^R_{1}}(z_1)} 
                -
                \frac{\frac{d}{d z_1} g_{\Tilde{v}_1 \to F^R_{1}}(z_1)}{g_{\Tilde{v}_1 \to F^L_{1}}(z_1)}
             \right)
    \end{split}
    \end{equation*}
    the result for $n=1$.
    
    For $n>1$, we have that
    \begin{equation*}
    \begin{split}
        &\overline{\partial}_{\mathrm{zip}}^{\bullet \to \circ}(b_1,w_1) \cdot g_{w_1 \to b_2}(z_2) \cdot \overline{\partial}_{\mathrm{zip}}^{\bullet \to \circ}(b_2,w_2) \cdot g_{w_2 \to b_3}(z_3) \cdot \cdots \cdot \overline{\partial}_{\mathrm{zip}}^{\bullet \to \circ}(b_{n},w_{n}) \cdot g_{w_{n} \to b_1}(z_{1}) 
        \\
        &= 
        \bigg[ i(e^{i \vartheta_1}-e^{i \varphi_1}) \frac{1}{(z_{2}-e^{i \vartheta_1})(z_{2}-e^{i \varphi_1})} g_{b_1 \to b_{2}}(z_{2}) \bigg] 
        \cdots 
        \bigg[ i(e^{i \vartheta_n}-e^{i \varphi_n}) \frac{1}{(z_{1}-e^{i \vartheta_n})(z_{1}-e^{i \varphi_n})} g_{b_n \to b_{1}}(z_{1}) \bigg] 
        \\
        &= 
        (-i)^n g_{b_1 \to b_{2}}(z_{2}) \cdots g_{b_n \to b_{1}}(z_{1}) 
        \bigg[ \frac{1}{z_1 - e^{i \varphi_n}} - \frac{1}{z_1 - e^{i \vartheta_n}} \bigg] \cdots 
        \bigg[ \frac{1}{z_n - e^{i \varphi_{n-1}}} - \frac{1}{z_n - e^{i \vartheta_{n-1}}} \bigg] 
        \\
        &=
        (-i)^n 
        \frac{g_{\Tilde{v}_2 \to b_{2}}(z_{2})}{g_{\Tilde{v}_2 \to b_{1}}(z_{2})} 
        \cdots 
        \frac{g_{\Tilde{v}_1 \to b_{1}}(z_{1})}{g_{\Tilde{v}_1 \to b_{n}}(z_{1})} 
        \bigg[
            \frac{1}{z_n-z_1}
            \bigg( 
                \frac{z_n - e^{i \varphi_n}}{z_1 - e^{i \varphi_n}} 
                - 
                \frac{z_n - e^{i \vartheta_n}}{z_1 - e^{i \vartheta_n}} 
            \bigg)
        \bigg] 
        \cdots 
        \bigg[
            \frac{1}{z_{n-1}-z_n}
            \bigg( 
                \frac{z_{n-1} - e^{i \varphi_{n-1}}}{z_n - e^{i \varphi_{n-1}}} 
                - 
                \frac{z_{n-1} - e^{i \vartheta_{n-1}}}{z_n - e^{i \vartheta_{n-1}}} 
            \bigg)
        \bigg] 
        \\
        &= 
        \frac{(-i)^n}{(z_1-z_2) \cdots (z_n-z_1)} 
        \bigg(
            \frac{g_{\Tilde{v}_1 \to F^L_{{\bf e}_1}}(z_1)}{g_{\Tilde{v}_2 \to F^L_{{\bf e}_1}}(z_2)} 
            - 
            \frac{g_{\Tilde{v}_1 \to F^R_{{\bf e}_1}}(z_1)}{g_{\Tilde{v}_2 \to F^R_{{\bf e}_1}}(z_2)}
        \bigg) 
        \cdots 
        \bigg(
            \frac{g_{\Tilde{v}_n \to F^L_{{\bf e}_n}}(z_n)}{g_{\Tilde{v}_1 \to F^L_{{\bf e}_n}}(z_1)}
            - 
            \frac{g_{\Tilde{v}_n \to F^R_{{\bf e}_n}}(z_n)}{g_{\Tilde{v}_1 \to F^R_{{\bf e}_n}}(z_1)} 
        \bigg)
    \end{split}
    \end{equation*}
\end{proof}

A corollary of the above lemma is an explicit integral expression for $\tr \bigg[ \bigg( \frac{\overline{\partial}_{\mathrm{zip}}^{\bullet \to \circ}}{\overline{\partial}^{\bullet \to \circ}} \bigg)^n \bigg]$.

\begin{cor} \label{cor:traceId_first}
    \begin{equation} \label{eq:traceId_first}
        \tr \bigg[ \bigg( 
            \frac{\overline{\partial}_{\mathrm{zip}}^{\bullet \to \circ}}{\overline{\partial}^{\bullet \to \circ}} 
        \bigg)^n \bigg] = 
        \frac{1}{(2 \pi i)^n} 
        \int_0^{-i e^{i \theta_0}\infty} dz_1 
        \cdots 
        \int_0^{-i e^{i \theta_0} \infty} dz_n 
        \frac{1}{(z_1-z_2) \cdots (z_n-z_1)} 
        \bigg(1 - \frac{g_{F \to F'}(z_1)}{g_{F \to F'}(z_2)}\bigg) 
        \cdots
        \bigg(1 - \frac{g_{F \to F'}(z_n)}{g_{F \to F'}(z_1)}\bigg)
    \end{equation}
\end{cor}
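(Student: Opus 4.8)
The plan is to expand the trace of the $n$-th power into a cyclic product of zipper matrix elements and inverse-Kasteleyn elements, turn each inverse-Kasteleyn element into a contour integral, and then apply Lemma~\ref{lem:nEdges_zipIdLemma} for fixed edges so that the remaining sum over zipper edges telescopes into $g_{F \to F'}$. Writing $A := \overline{\partial}_{\mathrm{zip}}^{\bullet \to \circ}$ and $G := \frac{1}{\overline{\partial}^{\bullet \to \circ}}$ for brevity, expanding $(AG)(b,b') = \sum_w A(b,w)\,G(w,b')$ and taking the trace gives
\begin{equation*}
\tr\!\left[(AG)^n\right] = \sum_{b_1,w_1,\dots,b_n,w_n} A(b_1,w_1)\,G(w_1,b_2)\,A(b_2,w_2)\cdots A(b_n,w_n)\,G(w_n,b_1),
\end{equation*}
where each $A$-factor is supported on the zipper, so every pair $(b_i,w_i)$ effectively ranges over the finite set of zipper edges. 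I would then substitute the contour formula Eq.~\eqref{eq:invKast_contour_formula} for each $G(w_i,b_{i+1})$, assigning the integration variable $z_{i+1}$ cyclically (so the last factor $G(w_n,b_1)$ carries $z_1$). Because the straight-line construction of Sec.~\ref{sec:straightLine_zippers} keeps all the relevant rhombus-edge angles bounded away from $\theta_0 - \tfrac{\pi}{2}$, the common contour direction $-i e^{i\theta_0}$ is admissible for every factor, and since everything in sight is a finite sum I can freely interchange the summations with the integrals.

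For each fixed choice of zipper edges ${\bf e}_1,\dots,{\bf e}_n$ the integrand is precisely the left-hand side of Lemma~\ref{lem:nEdges_zipIdLemma}. Applying the lemma with the uniform auxiliary choice $\tilde v_1 = \cdots = \tilde v_n = F$ produces the prefactor $\frac{(-i)^n}{(z_1-z_2)\cdots(z_n-z_1)}$ together with a product of $n$ factors whose $i$-th term is $\frac{g_{F \to F^L_{{\bf e}_i}}(z_i)}{g_{F \to F^L_{{\bf e}_i}}(z_{i+1})} - \frac{g_{F \to F^R_{{\bf e}_i}}(z_i)}{g_{F \to F^R_{{\bf e}_i}}(z_{i+1})}$. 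Since the $n$ edge labels are summed independently across the slots, the edge-sum factorizes over the product, so it suffices to evaluate a single slot $\sum_{\bf e}$ of this difference.

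The core computation is the telescoping of that single-slot edge-sum. Near each path vertex $v_{2j-1}$ the modified edges form a fan whose dual faces sweep in order from $F_{j-1}$ to $F_j$; consecutive edges of the fan share a face, so with the base-point $F$ fixed in both numerator and denominator the differences telescope across the fan to $\frac{g_{F\to F_j}(z_i)}{g_{F\to F_j}(z_{i+1})} - \frac{g_{F\to F_{j-1}}(z_i)}{g_{F\to F_{j-1}}(z_{i+1})}$, and summing over $j = 1,\dots,P$ telescopes once more down the whole path to $\frac{g_{F\to F'}(z_i)}{g_{F\to F'}(z_{i+1})} - \frac{g_{F\to F}(z_i)}{g_{F\to F}(z_{i+1})}$. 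Using $g_{F\to F}\equiv 1$, together with the orientation convention fixing which neighboring face plays the role of $F^L$ versus $F^R$, each slot collapses to $1 - \frac{g_{F\to F'}(z_i)}{g_{F\to F'}(z_{i+1})}$. Collecting the $n$ slots and folding the $\frac{1}{(2\pi)^n}$ from the inverse-Kasteleyn formula into the lemma's $(-i)^n$ via $\frac{(-i)^n}{(2\pi)^n} = \frac{1}{(2\pi i)^n}$ reproduces Eq.~\eqref{eq:traceId_first}.

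The hard part will be the bookkeeping in this final step rather than any analytic difficulty: I must verify that the fan of modified edges around each $w_j$ (resp.\ $b_j$) really sweeps monotonically from $F_{j-1}$ to $F_j$ with the correct $L/R$ labeling, so that the overall sign comes out $+1$ and not $(-1)^n$, and I must confirm that the single contour direction $-i e^{i\theta_0}$ is simultaneously valid for every factor $g_{w_i \to b_{i+1}}$ that can occur in the cyclic product (where $w_i$ and $b_{i+1}$ may belong to arbitrary, possibly distant, zipper edges). Both points rest on the geometric constraints of the straight-line zipper in Sec.~\ref{sec:straightLine_zippers}, which bound all the rhombus angles in $(\theta_{\text{min}},\theta_{\text{max}}+\pi)$ away from the pole locus; once the fan combinatorics is pinned down, the algebraic telescoping itself is immediate.
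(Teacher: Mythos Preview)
Your proposal is correct and follows essentially the same approach as the paper: expand the trace, insert the contour formula for each $G(w_i,b_{i+1})$ along the common ray $-ie^{i\theta_0}$ (valid by the angle bounds of Sec.~\ref{sec:straightLine_zippers}), apply Lemma~\ref{lem:nEdges_zipIdLemma} with all $\tilde v_i=F$, and telescope the independent edge-sums. The only cosmetic difference is that the paper orders all zipper edges in one sequence ${\bf E}_1,\dots,{\bf E}_Q$ with $F^L_{{\bf E}_1}=F$, $F^R_{{\bf E}_Q}=F'$, $F^R_{{\bf E}_j}=F^L_{{\bf E}_{j+1}}$ and telescopes in a single pass, whereas you telescope first within each fan and then along the path; this also dispatches your sign worry directly.
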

\begin{proof}
    By the inverse Kasteleyn formula Eq.~\eqref{eq:invKast_contour_formula} and the discussion of rhombus angles in the zipper in Section~\ref{sec:straightLine_zippers} and Fig.~\ref{fig:pathOnRhombusGraph_connection}, we have that $0 \to -i e^{i \theta_0} \infty$ is a valid integration contour of $\frac{1}{\overline{\partial}^{\bullet \to \circ}}(w,b)$ for all $w,b$ part of the zipper. So,
    \begin{equation*}
        \frac{1}{\overline{\partial}^{\bullet \to \circ}}(w,b) = \frac{1}{2 \pi} \int_{0}^{-i e^{i \theta_0} \infty} dz g_{w \to b}(z).
    \end{equation*}
    
    Also, let us label the edges in the zipper from $F \to F'$ as ${\bf E}_1 \cdots {\bf E}_Q$ in order, so that the dual edges $\{{\bf E}_j^{\vee}\}$ in the order ${\bf E}_1^{\vee} \to \cdots \to {\bf E}_Q^{\vee}$ give the path $F \to F'$ on the dual graph. Note that $F^L_{{\bf E}_1} = F$, $F^R_{{\bf E}_Q} = F'$, and $F^R_{{\bf E}_j} = F^L_{{\bf E}_{j+1}}$ for all $j$. As such, summing over all edges of the right-hand-side of Eq.~\eqref{eq:nEdges_zipIdLemma} is a telescoping sum, that if we choose all $\Tilde{v}_i = F$
    \begin{equation*}
    \begin{split}
        \sum_{ \substack{ {\bf e}_1 \cdots {\bf e}_n \\ \in \\ \{ {\bf E}_1 \cdots {\bf E}_Q \} } }
        \left( 
            \frac{g_{F \to F^L_{{\bf e}_1}}(z_1)}{g_{F \to F^L_{{\bf e}_1}}(z_2)} 
            - 
            \frac{g_{F \to F^R_{{\bf e}_1}}(z_1)}{g_{F\to F^R_{{\bf e}_1}}(z_2)} 
        \right)
        \cdots
        \left( 
            \frac{g_{F \to F^L_{{\bf e}_n}}(z_{n})}{g_{F \to F^L_{{\bf e}_n}}(z_{1})} 
            - 
            \frac{g_{F \to F^R_{{\bf e}_n}}(z_{n})}{g_{F \to F^R_{{\bf e}_n}}(z_{1})} 
        \right) = 
        \left(
            1 - \frac{g_{F \to F'}(z_1)}{g_{F \to F'}(z_2)}
        \right) 
        \cdots 
        \left(
            1 - \frac{g_{F \to F'}(z_n)}{g_{F \to F'}(z_1)}
        \right) 
    \end{split}
    \end{equation*}
    because all terms cancel except for the $F^L_{{\bf E}_1}=F$ and $F^R_{{\bf E}_Q} = F'$ terms. This implies the Eq.~\eqref{eq:traceId_first}.
\end{proof}

Note that in Corollary~\ref{cor:traceId_first}, all the integration contours are the same, $0 \to -i e^{i \theta_0}$. Even though the $\frac{1}{z_{i} - z_{i+1}}$ terms becomes singular, the full integrand is never singular because  $\left( 1-\frac{g_{F \to F'}(z_i)}{g_{F \to F'}(z_{i+1})} \right)$ is a rational function with a zero at $z_i = z_{i+1}$.

There's a similar statement we can show.
\begin{prop} \label{prop:traceId_second}
    \begin{equation} \label{eq:traceId_second}
    \begin{split}
    \tr &\bigg[ \Bigg( 
        \bigg( \frac{\overline{\partial}_{\mathrm{zip}}^{\bullet \to \circ}}{\overline{\partial}^{\bullet \to \circ}} \bigg)
        \bigg( \mathbbm{1} - \frac{\overline{\partial}_{\mathrm{zip}}^{\bullet \to \circ}}{\overline{\partial}^{\bullet \to \circ}} \bigg)
    \Bigg)^n \bigg] =: \mathcal{I}_{2n} 
    \\
    &= 
    (-1)^n \tr \bigg[ \bigg(
        \frac{\overline{\partial}_{\mathrm{zip}}^{\bullet \to \circ}}{\overline{\partial}^{\bullet \to \circ}}
    \bigg)^{2n} \bigg] 
    - 
    (-1)^n {n \choose 1} \tr \bigg[ \bigg(
        \frac{\overline{\partial}_{\mathrm{zip}}^{\bullet \to \circ}}{\overline{\partial}^{\bullet \to \circ}}
    \bigg)^{2n-1} \bigg] 
    + 
    (-1)^n {n \choose 2} \tr \bigg[ \bigg(
        \frac{\overline{\partial}_{\mathrm{zip}}^{\bullet \to \circ}}{\overline{\partial}^{\bullet \to \circ}}
    \bigg)^{2n-2} \bigg] 
    \cdots
    + 
    {n \choose n} \tr \bigg[ \bigg(
        \frac{\overline{\partial}_{\mathrm{zip}}^{\bullet \to \circ}}{\overline{\partial}^{\bullet \to \circ}}
    \bigg)^{2n-n} \bigg] 
    \\
    &= 
    \frac{1}{(2 \pi)^{2n}} 
    \int_0^{-i e^{i \theta_0} \infty} dz_1 
    \cdots 
    \int_0^{i e^{i \theta_0} \infty} dz_{2n} 
    \underbrace{
        \frac{1}{(z_1-z_2) \cdots (z_{2n}-z_1)} 
        \bigg(1 - \frac{g_{F \to F'}(z_1)}{g_{F \to F'}(z_2)}\bigg) 
        \cdots 
        \bigg(1 - \frac{g_{F \to F'}(z_{2n})}{g_{F \to F'}(z_1)}\bigg) 
    }_{=: I_{2n}(z_1,\cdots,z_{2n})}
    \end{split}
    \end{equation}
    where the integration contours for $z_1, \cdots, z_{2n}$ alternate between the contours $z_j: 0 \to -i e^{i \theta_0} \infty$ for $j$ odd and $z_j: 0 \to i e^{i \theta_0} \infty$ for $j$ even. 
\end{prop}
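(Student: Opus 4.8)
The plan is to prove the two displayed equalities separately. For brevity write $K := \overline{\partial}_{\mathrm{zip}}^{\bullet \to \circ}/\overline{\partial}^{\bullet \to \circ}$ and $g := g_{F \to F'}$, and let $\gamma_+$ (resp.\ $\gamma_-$) denote the contour $0 \to -i e^{i\theta_0}\infty$ (resp.\ $0 \to +i e^{i\theta_0}\infty$). The first equality is purely algebraic: since $K$ commutes with itself, $\big(K(\mathbbm{1}-K)\big)^n = K^n (\mathbbm{1}-K)^n$, so the binomial theorem gives $(\mathbbm{1}-K)^n = \sum_{k=0}^n \binom{n}{k}(-1)^k K^k$; multiplying by $K^n$ and taking the trace yields $\mathcal{I}_{2n} = \sum_{k=0}^n (-1)^k \binom{n}{k}\tr[K^{n+k}]$. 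Re-indexing by the power $2n-j := n+k$ (so $k=n-j$) turns the coefficient of $\tr[K^{2n-j}]$ into $(-1)^{n}(-1)^{j}\binom{n}{j}$, which is exactly the alternating sum displayed.

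For the second equality I would work backwards from the target $2n$-fold integral. First apply Corollary~\ref{cor:traceId_first} to each term of the alternating sum, writing $\tr[K^m] = \frac{1}{(2\pi i)^m}\int_{\gamma_+^m} I_m$, where $I_m$ is the cyclic integrand on $m$ variables. The geometric input is that $\gamma_+$ and $\gamma_-$ are the two opposite rays of a single line $L$ through the origin (perpendicular to $F'-F$), oriented in the direction $e^{i(\theta_0-\pi/2)}$ so that $\int_{\gamma_-} = \int_{\gamma_+} - \int_{L}$. Applying this decomposition to each of the $n$ even-indexed variables $z_2, z_4, \ldots, z_{2n}$ in the target integral expands it into $2^{n}$ terms indexed by the subset $S$ of even variables routed onto $L$, each carrying a sign $(-1)^{|S|}$, while the remaining variables (all odd ones and the even ones outside $S$) stay on $\gamma_+$.

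The heart of the argument is a merging identity: if a variable $z_\ell$ is integrated over the full line $L$ while its two cyclic neighbors $z_{\ell-1}, z_{\ell+1}$ sit on $\gamma_+$, then
\begin{equation*}
    \frac{1}{2\pi i}\int_{L} \frac{dz_\ell}{(z_{\ell-1}-z_\ell)(z_\ell - z_{\ell+1})}\left(1 - \frac{g(z_{\ell-1})}{g(z_\ell)}\right)\left(1 - \frac{g(z_\ell)}{g(z_{\ell+1})}\right) = \frac{1}{z_{\ell-1}-z_{\ell+1}}\left(1 - \frac{g(z_{\ell-1})}{g(z_{\ell+1})}\right),
\end{equation*}
i.e.\ integrating $z_\ell$ out collapses its two adjacent cyclic factors into the single factor one would have if $z_\ell$ were deleted, at the cost of a factor $2\pi i$. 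Because the even indices are mutually non-adjacent, every such $z_\ell$ keeps $\gamma_+$ neighbors throughout, so the identity applies once for each element of $S$ and the order of application is immaterial; iterating reduces $I_{2n}$ to $I_{2n-|S|}$ on the $2n-|S|$ surviving $\gamma_+$ variables. By Corollary~\ref{cor:traceId_first} the surviving integral equals $(2\pi i)^{2n-|S|}\tr[K^{2n-|S|}]$, and together with the $|S|$ factors of $2\pi i$ from the merges this produces $(2\pi i)^{2n}\tr[K^{2n-|S|}]$. Collecting the prefactor $\frac{1}{(2\pi)^{2n}} = (-1)^n/(2\pi i)^{2n}$, the sign $(-1)^{|S|}$, and the $\binom{n}{k}$ subsets of each size $k=|S|$ reproduces $(-1)^n \sum_{k}(-1)^{k}\binom{n}{k}\tr[K^{2n-k}] = \mathcal{I}_{2n}$, consistent with the first equality and with the factor of $i^{2n}=(-1)^n$ distinguishing the $\frac{1}{(2\pi)^{2n}}$ prefactor from the $\frac{1}{(2\pi i)^{2n}}$ of Corollary~\ref{cor:traceId_first}.

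The main obstacle is the merging identity itself. The part of the integrand that is constant in $g(z_\ell)$ integrates to zero over $L$ because $z_{\ell-1}, z_{\ell+1}$ lie on the same side of $L$, and the apparent poles at $z_\ell = z_{\ell\pm 1}$ are removable (their residues cancel between the two factors), so the identity reduces to a sum of residues at the zeros and poles of $g$, all of which are unit complex numbers $e^{i\zeta}$. Establishing that the contour may be closed and that these residues telescope to exactly the merged factor is where the explicit factored form $g_{F\to F'}(z) = \prod_k (z - e^{i\zeta_k})^{\pm 1}$ (together with the telescoping of Lemma~\ref{lem:nEdges_zipIdLemma}), the decay of the integrand guaranteed by the asymptotics of $g$ (Appendix~\ref{app:asympt_gFunc}, so that the residue sum over the whole sphere vanishes), and the straight-line-zipper angular constraints from Section~\ref{sec:straightLine_zippers} (which keep the $e^{i\zeta_k}$ off the rays $\gamma_\pm$ and fix which side of $L$ they occupy) all enter; this is also the step where the orientation of $L$ must be tracked to pin down the positive sign asserted above.
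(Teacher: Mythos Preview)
Your high-level strategy is the same as the paper's: both reduce the proposition to a single contour-swap identity (your ``merging identity'' is a repackaging of the paper's Eq.~\eqref{eq:traceId_second_pf_1}) and then iterate it across the $n$ even-indexed variables to generate the alternating binomial sum. The algebra of the first equality and the combinatorics of iterating over non-adjacent even indices are fine.

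The gap is in your sketch of the merging identity. You assert that the piece of the integrand constant in $g(z_\ell)$ integrates to zero over $L$ ``because $z_{\ell-1},z_{\ell+1}$ lie on the same side of $L$''. But by your own setup $z_{\ell\pm 1}$ sit on $\gamma_+$, which is one of the two rays comprising $L$: they lie \emph{on} $L$, not to one side of it. Once you split the integrand into the constant-in-$g(z_\ell)$ piece and the rest, each piece separately acquires genuine (non-removable) poles at $z_\ell=z_{\ell\pm1}$ that sit exactly on the integration contour, so the split integrals are individually ill-defined and the ``close on the empty side'' argument cannot be run. The removability you invoke only holds for the full integrand, and is destroyed by the split.

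This is precisely the obstruction the paper confronts. Its fix is to first nudge the $z_1$-contour slightly off the ray (to $0\to i e^{i(\theta_0-\epsilon)}\infty$) \emph{before} splitting, so that the neighboring variables are now genuinely off the contour. The split is into the two pieces carrying $g(z_1)$ versus $1/g(z_1)$ (your ``non-constant'' part, bisected according to which half-plane its $e^{i\zeta}$-poles occupy), and each piece is then rotated to the target ray in the direction that avoids its $g$-poles. The only pole crossed in this rotation is at the \emph{neighbor} $z_1=z_k$, and its residue is exactly the collapsed $(k{-}1)$-variable integrand; no telescoping over the roots $e^{i\zeta_k}$ of $g$ is needed. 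So your instinct that the identity comes down to a residue is right, but the relevant residue is at $z_{\ell-1}$, not at the zeros and poles of $g$, and you cannot access it without the $\epsilon$-perturbation step.
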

In the above, we defined the full integral to equal $\mathcal{I}_{2n}$ and define the integrand to be $I_{2n}(z_1,\cdots,z_{2n})$.

\begin{proof}
This formula relies on the following formula, valid for any $k > 1$:
\begin{equation} \label{eq:traceId_second_pf_1}
\begin{split}
    &\frac{1}{(2 \pi i)^k} 
    \int_0^{i e^{i \theta_0} \infty} dz_1 
    \int_0^{\pm_2 i e^{i \theta_0}\infty} dz_2  
    \cdots 
    \int_0^{\pm_k i e^{i \theta_0} \infty} dz_{k} 
    \frac{1}{(z_1-z_2) \cdots (z_k-z_1)} 
    \bigg(1 - \frac{g_{F \to F'}(z_1)}{g_{F \to F'}(z_2)}\bigg) 
    \cdots 
    \bigg(1 - \frac{g_{F \to F'}(z_k)}{g_{F \to F'}(z_1)}\bigg) 
    \\
    =& 
    \frac{1}{(2 \pi i)^k} 
    \int_0^{-i e^{i \theta_0} \infty} dz_1 
    \int_0^{\pm_2 i e^{i \theta_0}\infty} dz_2  
    \cdots 
    \int_0^{\pm_k i e^{i \theta_0} \infty} dz_{k} 
    \frac{1}{(z_1-z_2) \cdots (z_k-z_1)} 
    \bigg(1 - \frac{g_{F \to F'}(z_1)}{g_{F \to F'}(z_2)}\bigg) 
    \cdots 
    \bigg(1 - \frac{g_{F \to F'}(z_k)}{g_{F \to F'}(z_1)}\bigg) 
    \\
    &- 
    \frac{1}{(2 \pi i)^{k-1}} 
    \int_0^{\pm_2 i e^{i \theta_0}\infty} dz_2  
    \cdots 
    \int_0^{\pm_k i e^{i \theta_0} \infty} dz_{k} 
    \frac{1}{(z_2-z_3) \cdots (z_k-z_2)} 
    \bigg(1 - \frac{g_{F \to F'}(z_2)}{g_{F \to F'}(z_3)}\bigg) 
    \cdots 
    \bigg(1 - \frac{g_{F \to F'}(z_k)}{g_{F \to F'}(z_2)}\bigg).
\end{split}
\end{equation}

In the above, the contours of $z_i$ for $j = 2,\cdots,k$ are either one of $0 \to \pm_j i \infty$, fixed on both sides of the equation. 
Because the integrand is cyclically symmetric, the above Eq.~\eqref{eq:traceId_second_pf_1} implies the corollary by changing the $z_{2 \ell}: 0 \to i e^{i \theta_0} \infty$ to $z_{2 \ell}: 0 \to -i e^{i \theta_0} \infty$ one at a time. 
In particular, the ${n \choose j}$ collections of changing $j$ of these contours correspond to the coefficient ${n \choose j}$ of $(-1)^j {n \choose j} \tr \bigg[ \bigg( \frac{\overline{\partial}_{\mathrm{zip}}^{\bullet \to \circ}}{\overline{\partial}^{\bullet \to \circ}} \bigg)^{2n-j} \bigg]$. 
Note that it also implies that integral is independent the \textit{order} of the contours $z_j: 0 \to \pm_j i \infty$ (i.e. only depends on the numbers $\#\{\pm_j=+\}$, $\#\{\pm_j=-\}$), since it would mean

\begin{equation*}
\begin{split}
    &\frac{1}{(2 \pi i)^k} 
    \int_0^{+i e^{i \theta_0} \infty} dz_1 
    \int_0^{-i e^{i \theta_0} \infty} dz_2  
    \int_0^{\pm_3 i e^{i \theta_0}\infty} dz_3  
    \cdots 
    \frac{1}{(z_1-z_2) \cdots (z_k-z_1)}
    \bigg(1 - \frac{g_{F \to F'}(z_1)}{g_{F \to F'}(z_2)}\bigg) 
    \cdots 
    \bigg(1 - \frac{g_{F \to F'}(z_k)}{g_{F \to F'}(z_1)}\bigg) 
    \\
    =& 
    \frac{1}{(2 \pi i)^k} 
    \int_0^{-i e^{i \theta_0} \infty} dz_1 
    \int_0^{-i e^{i \theta_0} \infty} dz_2  
    \int_0^{\pm_3 i e^{i \theta_0}\infty} dz_3 
    \cdots 
    \frac{1}{(z_1-z_2) \cdots (z_k-z_1)} 
    \bigg(1 - \frac{g_{F \to F'}(z_1)}{g_{F \to F'}(z_2)}\bigg) 
    \cdots 
    \bigg(1 - \frac{g_{F \to F'}(z_k)}{g_{F \to F'}(z_1)}\bigg) 
    \\
    &- 
    \frac{1}{(2 \pi i)^{k-1}} 
    \int_0^{-i e^{i \theta_0} \infty} dz_2  
    \int_0^{\pm_3 i e^{i \theta_0}\infty} dz_3 
    \cdots \frac{1}{(z_2-z_3) \cdots (z_k-z_2)} 
    \bigg(1 - \frac{g_{F \to F'}(z_2)}{g_{F \to F'}(z_3)}\bigg) 
    \cdots 
    \bigg(1 - \frac{g_{F \to F'}(z_k)}{g_{F \to F'}(z_2)}\bigg) 
    \\
    =& 
    \frac{1}{(2 \pi i)^k} 
    \int_0^{-i e^{i \theta_0} \infty} dz_1 
    \int_0^{+i e^{i \theta_0} \infty} dz_2 
    \int_0^{\pm_3 i e^{i \theta_0}\infty} dz_3 
    \cdots \frac{1}{(z_1-z_2) 
    \cdots (z_k-z_1)} 
    \bigg(1 - \frac{g_{F \to F'}(z_1)}{g_{F \to F'}(z_2)}\bigg) 
    \cdots 
    \bigg(1 - \frac{g_{F \to F'}(z_k)}{g_{F \to F'}(z_1)}\bigg).
\end{split}
\end{equation*}

Now we prove Eq.~\eqref{eq:traceId_second}. Writing $g_{F \to F'}(z) = \prod_{j=1}^{P} \frac{z-e^{i \alpha_j}}{z-e^{i \beta_j}}$, we have that the poles $\{e^{i \beta_j}\}$ of $g_{F \to F'}(z_1)$ are of the form $\theta_0 - \frac{\pi}{2} < \beta_j < \theta_0 + \frac{\pi}{2}\}$, whereas all of the poles $\{e^{i \alpha_j}\}$ of $\frac{1}{g_{F \to F'}(z_1)}$ are of the form $\theta_0 - 3\frac{\pi}{2} < \alpha_j < \theta_0 - \frac{\pi}{2}\}$.  As such, the strategy is to split the product involving $z_1$ into two terms 
\begin{equation*}
    \cdots \frac{\bigg(1 - \frac{g_{F \to F'}(z_1)}{g_{F \to F'}(z_2)}\bigg) \bigg(1 - \frac{g_{F \to F'}(z_k)}{g_{F \to F'}(z_1)}\bigg)}{(z_k-z_1)(z_1-z_2)} \cdots = \cdots \underbrace{\frac{\bigg(1 - \frac{g_{F \to F'}(z_1)}{g_{F \to F'}(z_2)}\bigg)}{(z_k-z_1)(z_1-z_2)}}_{\text{term 1}} \cdots - \cdots \underbrace{\frac{\bigg(\frac{g_{F \to F'}(z_k)}{g_{F \to F'}(z_1)} - \frac{g_{F \to F'}(z_k)}{g_{F \to F'}(z_2)}\bigg)}{(z_k-z_1)(z_1-z_2)}}_{\text{term 2}} \cdots.
\end{equation*}
Then, for each of each of the terms one can shift the contours in such a way to avoid the poles $\{e^{i \alpha}\}$ or $\{e^{i \beta}\}$. In particular one would shift clockwise for `term 1' and counterclockwise for `term 2' since `term 1' has poles at $\{z_1 = e^{i \alpha}\}$ and `term 2' has poles at $\{z_1 = e^{i \beta}\}$. However, this procedure has an issue in that each of the `term 1' and `term 2' have a pole at $z_1 = z_k$. (Note that there's no pole at $z_1 = z_2$.) So there's an issue that the integrals of these individual terms will be undefined on $z_1: 0 \to \pm_k i \infty$ on whichever part of the equation that is. So to rectify this, one first needs to regulate the integration contours so that the $z_1$ contour is separated from the others, which must be done before spliting up the term. See Fig.~\ref{fig:rotatingContours_singZip} for a visual aid for the following computation.

\begin{figure}[h!]
  \centering
  \includegraphics[width=0.8\linewidth]{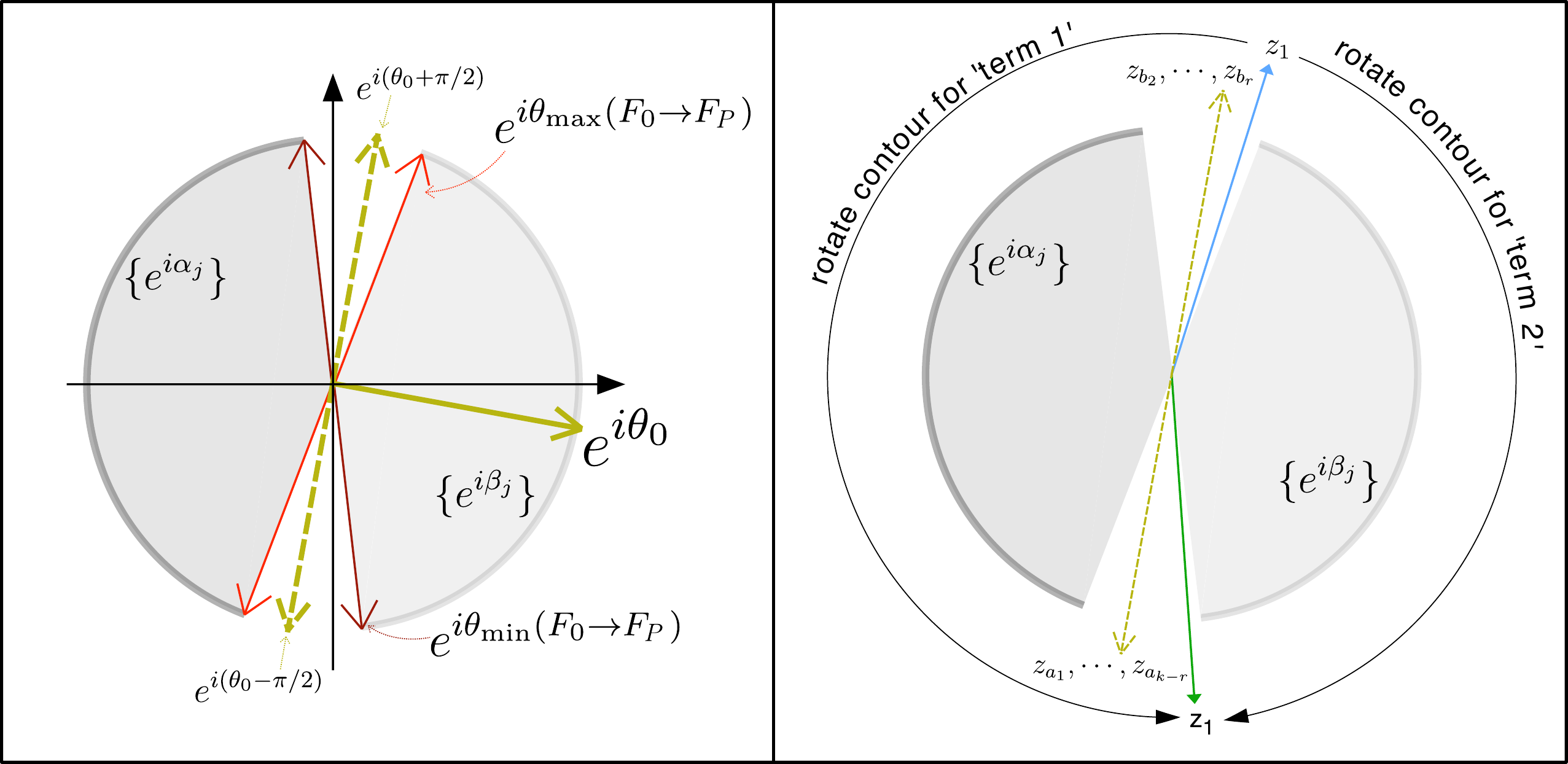}
  \caption{(Left) Depiction of the poles $\{e^{i \alpha}\}$ and $\{e^{i \beta}\}$ related to the angles $e^{i\theta_0}$, $e^{i \theta_\text{min,max}(F \to F')}$ of Fig.~\ref{fig:pathOnRhombusGraph_connection}. The poles lie on unit circle, along the thick parts of the circular wedges. (Right) We slighly deform the $z_1$ contour counterclockwise from the original $0 \to -i e^{i \theta_0} \infty$. Then, rotating the countours to slightly clockwise from $0 \to i e^{i\theta_0} \infty$ separately for each of `term 1' and `term 2', as described in the main text, will give a at $z_1 = z_k$.}
  \label{fig:rotatingContours_singZip}
\end{figure}

First, let $r = \#\{\pm_j=+\}$ and $1 \le a_1 < \cdots < a_{k-r} \le k$ be the coordinates for which $\pm_{a_j} = -$, and let $1 = b_1 < b_2 < \cdots < b_{r} \le k$ be the coordinates with $\pm_{b_j} = +$ . We set $b_1 = 1$ because on the left-hand-side of this proposition, $z_1$ takes the contour $z_1: 0 \to +i \infty e^{i \theta_0}$. 

We can start by slightly rotating the $z_1$ contour to be $0 \to i e^{i (\theta_0 - \epsilon)}$ for $0 < \epsilon \ll 1$ small enough to avoid the poles $\{e^{i \beta_j}\}$ before splitting up the term. Then, we'll rotate this shifted $z_1$ contour to the new contour $z_1: 0 \to -i e^{i (\theta_0 + \epsilon)} \infty$ clockwise for `term 1' and counterclockwise for `term 2'. For `term 1', this rotation will only cross the pole at $z_1 = z_k$ leaving $(-2 \pi i)$ times that residue, which leaves a residue of
\begin{equation*}
    - \frac{1}{(2 \pi i)^{k-1}} \int_0^{-i e^{i \theta_0} \infty} dz_2  \int_0^{\pm_3 i e^{i \theta_0}\infty} dz_3  \cdots \frac{1}{(z_2-z_3) \cdots (z_k-z_2)} \bigg(1 - \frac{g_{F \to F'}(z_2)}{g_{F \to F'}(z_3)}\bigg) \cdots \bigg(1 - \frac{g_{F \to F'}(z_k)}{g_{F \to F'}(z_2)}\bigg)
\end{equation*}
while the rotation for `term 2' will leave no residues. At the end of these rotations we can then recombine the `term 1' and `term 2' into the original integrand on their common contour and rotate it back to $0 \to -i e^{i \theta_0} \infty$. This new integral plus the residue term are what we wanted to obtain.
\end{proof}

\subsubsection{Non-Universality of Imaginary Parts of log-determinant} \label{sec:non_universality_im_part}
In general, the imaginary parts of the above expansion will generally be non-universal and depend on microscopic lattice details. To see an example, we consider the $\alpha^1$ term in the expansion, corresponds to
$\tr \bigg[
    \frac{\overline{\partial}_{\mathrm{zip}}^{\bullet \to \circ}}{\overline{\partial}^{\bullet \to \circ}}  
\bigg]$.
By Corollary~\ref{cor:traceId_first} and writing 
$g_{F \to F'}(z) = \prod_{j=1}^{P} \frac{z-e^{i \alpha_j}}{z-e^{i \beta_j}}$, we'll have
\begin{equation}
    \tr \bigg[
        \frac{\overline{\partial}_{\mathrm{zip}}^{\bullet \to \circ}}{\overline{\partial}^{\bullet \to \circ}}  
    \bigg]
    =
    -\frac{1}{2 \pi i} \int_0 ^ {i e^{i\theta_0} \infty} dz
    \frac{\frac{d}{dz}g_{F \to F'}(z)}{g_{F \to F'}(z)}
    =
    \frac{1}{2 \pi} \sum_{j=1}^P (\alpha_j - \beta_j)
    .
\end{equation}
where the angles $\alpha,\beta$ can be uniquely specified by the integration contour along $0 \to i e^{i \theta_0} \infty$. This quantity depends sensitively on the underlying lattice and can't be written as a universal function of $|F-F'|$, even in the limit of large $|F-F'|$.

In the next Section~\ref{sec:asympt_I2n_singleZip}, we'll see that each of the 
$\tr 
\bigg[ \Bigg( 
    \bigg( \frac{\overline{\partial}_{\mathrm{zip}}^{\bullet \to \circ}}{\overline{\partial}^{\bullet \to \circ}} \bigg)
    \bigg( \mathbbm{1} - \frac{\overline{\partial}_{\mathrm{zip}}^{\bullet \to \circ}}{\overline{\partial}^{\bullet \to \circ}} \bigg)
\Bigg)^n \bigg]$
will asymptotically equal a universal quantity, a uniquely specified function in $|F-F'|$.

However, we also have a series expansion $t = \sum_{n=1}^{\infty} \frac{1}{n} {2n-2 \choose n-1} (t(1-t))^n$ for $0 < t < 1/2$. Thus, we may expect the equality 
\[
    \tr 
    \bigg[
        \frac{\overline{\partial}_{\mathrm{zip}}^{\bullet \to \circ}}{\overline{\partial}^{\bullet \to \circ}} 
    \bigg]
    \stackrel{?}{=}
    \sum_{n=1}^{\infty}
    \frac{1}{n} {2n-2 \choose n-1}
    \tr 
    \bigg[ \Bigg( 
        \bigg( \frac{\overline{\partial}_{\mathrm{zip}}^{\bullet \to \circ}}{\overline{\partial}^{\bullet \to \circ}} \bigg)
        \bigg( \mathbbm{1} - \frac{\overline{\partial}_{\mathrm{zip}}^{\bullet \to \circ}}{\overline{\partial}^{\bullet \to \circ}} \bigg)
    \Bigg)^n \bigg].
\]
As such, we may be concerned that an infinite sum of `universal' terms may sum up to a non-universal one.
There are two issues with this reasoning. First, the equality $t = \sum_{k=1}^{\infty} \frac{1}{k} {2n-2 \choose n-1} (t(1-t))^n$ doesn't hold for all $t$ so there may be convergence issues for eigenvalues outside of $[0,1/2]$. Second, each 
$\tr 
    \bigg[ \Bigg( 
        \bigg( \frac{\overline{\partial}_{\mathrm{zip}}^{\bullet \to \circ}}{\overline{\partial}^{\bullet \to \circ}} \bigg)
        \bigg( \mathbbm{1} - \frac{\overline{\partial}_{\mathrm{zip}}^{\bullet \to \circ}}{\overline{\partial}^{\bullet \to \circ}} \bigg)
    \Bigg)^n \bigg]$
will have sub-leading corrections for which we don't have uniform control, so resummation of these lower-order terms may lead to non-universal sums.

In general, odd powers of $\alpha$ in the expansion of $\log(1 - (1-e^{-i \alpha}) y)$ will be accompanied by such infinite sums in $y(1-y)$, whereas even powers of $\alpha$ will be accompanied by finite sums, by Eq.~\eqref{eq:log_expansion_A}. These finite sums of universal quantities will also be universal. So we'll see that even powers are universal while odd powers are not. In particular, the real part will be universal while the imaginary part isn't.

\subsection{Asymptotics of Eq.~\eqref{eq:traceId_second} for $\mathcal{I}_{2n}$} \label{sec:asympt_I2n_singleZip}
Now, we give asymptotic estimates of the integrals Eq.~\eqref{eq:traceId_second}. In this section, we'll assume WLOG that $e^{i \theta_0} = 1$. And for notational compactness, we'll write $g(z) = g_{F \to F'}(z)$ throughout this section. The computation relies on the asymptotics of the function $g_{F \to F'}(z) = \prod_{j=1}^{P} \frac{z-e^{i \alpha_j}}{z-e^{i \beta_j}}$ from~\cite{kenyon2002critical} and also Appendix~\ref{app:asympt_gFunc}. We'll also define in this section
\begin{equation}
    D := |F' - F|
\end{equation}
to be the distance between the dual faces on the rhombus graph. 

The remainder of this section will be devoted towards the following proposition.
\begin{prop} \label{prop:asymptIntegral_singleZip}
\begin{equation} \label{eq:asymptIntegral_singleZip}
\begin{split}
    &\mathcal{I}_{2n} \cdot (1 + O(1/D)) + O(1/D) 
    =
    \mathcal{J}_{2n} 
    \\
    &:=
    \frac{2}{(2\pi)^{2n}} \int_{0 \xrightarrow{-} 1} dw_1 \int_{0 \xrightarrow{+} 1} dw_2 \cdots \int_{0 \xrightarrow{-} 1} dw_{2n-1} \int_{\mathcal{C}_\text{out}} dw_{2n} \frac{1}{(w_1 - w_2) \cdots (w_{2n} - w_1)} \bigg(1 - e^{-D|w_{2n} - w_{1}|}\bigg).
\end{split}
\end{equation}
In the above (for $\pm \in \{+,-\}$), the contours ${0 \xrightarrow{\pm} 1}$ refer to contours $0 \to 1$ in the complex plane, except deformed slightly in the imaginary direction $\pm i$ away from the real axis. And, we define $\mathcal{C}_\text{out}$ as the contour consisting of the union of the lines $\{0 \to -\infty\} \cup \{\infty \to 1\}$ along the real axis. 
\end{prop}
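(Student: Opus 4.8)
The plan is to feed the large-$D$ asymptotics of the discrete exponential $g_{F\to F'}$ from Appendix~\ref{app:asympt_gFunc} into the exact integral \eqref{eq:traceId_second} for $\mathcal{I}_{2n}$ and extract its leading behavior by steepest descent. Setting $e^{i\theta_0}=1$, write $g(z)=g_{F\to F'}(z)\approx e^{D\psi(z)}$ for the limiting ``action'' $\psi$, whose stationary points (the relevant saddles) sit at the two turning points $z_\star=\pm i$ where the poles $\{e^{i\beta_j}\}$ and zeros $\{e^{i\alpha_j}\}$ of $g$ pinch the integration rays. Each ratio in the integrand then becomes $g(z_i)/g(z_{i+1})\approx e^{D(\psi(z_i)-\psi(z_{i+1}))}$, so the factor $1-g(z_i)/g(z_{i+1})$ is governed by the sign of $\mathrm{Re}(\psi(z_i)-\psi(z_{i+1}))$ along the contours.

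A clean structural input is that the kernel $\frac{dz_1\cdots dz_{2n}}{(z_1-z_2)\cdots(z_{2n}-z_1)}$ is \emph{invariant} under any Möbius substitution $z=\chi^{-1}(w)$: the Jacobian factors $(cw_i+d)^{-2}$ cancel against the factors produced by each $z_i-z_{i+1}$, since every $w_i$ appears in exactly two cyclically-adjacent differences. I would therefore apply the Möbius map $\chi(z)=\tfrac{z}{z+2i}$ normalized to the turning point $z_\star=i$: it sends the rays $0\to+i\infty$ to the inner segment $0\to1$ and the rays $0\to-i\infty$ to the outer loop $\mathcal{C}_\text{out}=\{0\to-\infty\}\cup\{\infty\to1\}$, while leaving the kernel untouched and turning $\mathcal{I}_{2n}$ into an integral in $w$ whose only $D$-dependence sits in the factors $1-g(\chi^{-1}(w_i))/g(\chi^{-1}(w_{i+1}))$.

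Next I would collapse the product. Linearizing, $1-g(z_i)/g(z_{i+1})\approx -D\,\psi'(z_i)(z_i-z_{i+1})$ when $z_i\approx z_{i+1}$, so the denominators $(z_i-z_{i+1})$ cancel and the integrand is stationary only where $\psi'=0$; this forces the integral to localize near the turning points. On the outer contour $\mathrm{Re}\,\psi$ has a definite sign that drives $g(z_i)/g(z_{i+1})\to0$ for all but one cyclic factor, so all but one of the ``outer'' $w$-contours can be pushed back to the inner segment $0\xrightarrow{\pm}1$ without crossing singularities of the (now effectively analytic) integrand; using the cyclic symmetry of $I_{2n}$ to fix the labeling, this reproduces the alternating $\mp i$-deformed inner contours of $\mathcal{J}_{2n}$ and leaves the single surviving factor $1-e^{-D|w_{2n}-w_1|}$ on the last outer contour. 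Finally, the overall factor $2$ arises because the two conjugate turning points $z_\star=\pm i$ contribute identically: $\mathcal{I}_{2n}$ is real by Lemma~\ref{lem:kernel_evals}, so it equals twice the real part of the single-saddle contribution, matching the $2$ in \eqref{eq:asymptIntegral_singleZip}.

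The main obstacle is the uniform error control, which is precisely where the multiplicative $(1+O(1/D))$ and additive $O(1/D)$ in \eqref{eq:asymptIntegral_singleZip} are produced. Two difficulties compound. First, the saddle analysis degenerates exactly at the turning points $\pm i$, where $g$ has its pole/zero pinch, so the replacement $g\approx e^{D\psi}$ must be supplemented by the finer local expansion of Appendix~\ref{app:asympt_gFunc} and shown to hold uniformly across the coalescing contours. Second, the $2n$-fold kernel is singular on the diagonals $z_i=z_{i+1}$, so the contour deformations and the $\mp i$ regularizations must be tracked carefully, in the spirit of the proof of Proposition~\ref{prop:traceId_second}, to ensure the discarded residues and contour tails are genuinely $O(1/D)$ rather than $O(1)$. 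Establishing these uniform estimates near the coalescing saddles, rather than the formal change of variables, is the real content of the argument.
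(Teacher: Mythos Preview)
Your proposal is built on a picture that does not match the actual structure of the problem, and the gap is not merely a matter of missing estimates.

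First, the ``action'' $\psi$ with saddles at $z_\star=\pm i$ does not exist in the form you need. From Appendix~\ref{app:asympt_gFunc}, $g_{F\to F'}(z)\approx\gamma\,e^{Dz}$ for $|z|$ small and $g_{F\to F'}(z)\approx e^{D/z}$ for $|z|$ large; neither $z$ nor $1/z$ has a stationary point at $\pm i$, and there is no single smooth $\psi$ interpolating them with the saddle structure you describe. The poles and zeros of $g$ sit on the unit circle near $\pm 1$ (for $\theta_0=0$), not at $\pm i$, and the paper never invokes steepest descent. Consequently your localization argument (``$\psi'=0$ forces the integral to localize near the turning points'') has no footing, and the claim that on the outer contour all but one ratio $g(z_i)/g(z_{i+1})$ tend to zero is unjustified: along the original rays $0\to\pm i\infty$ neither the numerator nor the denominator is uniformly small, because each factor contains both $g(z_j)$ and $1/g(z_j)$.

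This is exactly the obstacle the paper confronts head-on, and its resolution is the actual content of the proof. The integrand $I_{2n}$ is expanded into $2^n$ pieces ``Part$(x_1\cdots x_n)$'' by opening the alternate factors $(1-g(z_{2j})/g(z_{2j+1}))$, so that in each piece every variable appears through $g(z_j)$ alone or $1/g(z_j)$ alone. Only then can the contours be rotated from $0\to\pm i\infty$ to the real axis (where the asymptotics \eqref{eq:g_asympt}--\eqref{eq:gInv_asympt} apply), and the rotation produces no residues because the relevant adjacent contours always end up on opposite half-lines. The error control is Lemma~\ref{lem:cornerRegion_oneZip} for the ``corner'' regions. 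The passage to the $w$-variables is not a M\"obius map but a Schwinger parametrization $\tfrac{1-e^{D(z_{2j-1}-z_{2j})}}{z_{2j-1}-z_{2j}}=-\int_0^D e^{w_{2j-1}(z_{2j-1}-z_{2j})}\,dw_{2j-1}$ (and its variants), after which the $z$-integrals are evaluated exactly. Summing the $2^n$ pieces (Lemma~\ref{lem:part_x1__xn_int}) produces the contour $\mathcal{C}_{\text{out}}$ for $w_{2n}$, and the factor of $2$ comes from the inversion symmetry $z_j\mapsto 1/z_j$ (pairing the small-$|z|$ and large-$|z|$ contributions), not from two saddles. Your M\"obius observation about the invariance of the cyclic kernel is correct and elegant, but it does not by itself furnish any of these mechanisms.
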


Originally, we wanted to evaluate $\log \frac{\det \overline{\partial}^{\bullet \to \circ}(e^{i \alpha})}{\det \overline{\partial}^{\bullet \to \circ}}$. The Eq.~(\ref{eq:singleZip_SeriesExpansion},\ref{eq:traceId_second},\ref{eq:asymptIntegral_singleZip}) imply that
\begin{equation*}
    \log \frac{\det \overline{\partial}^{\bullet \to \circ}(e^{i\alpha})}{\det \overline{\partial}^{\bullet \to \circ}}
    =
    -\sum_{n=1}^{\infty} \frac{1}{2n} (2 \sin(\alpha/2))^{2n} \mathcal{I}_{2n}
     + \text{`imaginary part'}.
\end{equation*}
In Appendix~\ref{app:sum_leading_order_I2n}, we instead consider the sum over the leading-order contributions $-\sum_{n=1}^{\infty} \frac{1}{2n} (2 \sin(\alpha/2))^{2n} \mathcal{J}_{2n}$, which is expected to give 
$ \log \frac{\det \overline{\partial}^{\bullet \to \circ}(e^{i\alpha})}{\det \overline{\partial}^{\bullet \to \circ}} $
up to negligible corrections. This gives the final result Eq.~\eqref{eq:final_sum_J2n}
\begin{equation} \label{eq:final_sum_J2_reproduced}
    -\sum_{n=1}^{\infty} \frac{1}{2n} (2 \sin(\alpha/2))^{2n} \mathcal{J}_{2n} + O(D^{-1/3})
    =
    -\frac{\alpha^2}{2\pi^2} \left( 1 + \log(D) + \gamma_{\mathrm{Euler}} \right) 
    -
    4 \sum_{n=2}^{\infty} \frac{1}{2n} \left(\frac{\alpha}{2\pi} \right)^{2n} \zeta(2n-1)
\end{equation}
where $\zeta$ is the Riemann-zeta function.

By using the series expansion
\begin{equation}
    -\frac{(2 \arcsin(\beta/2))^2}{2\pi^2}
    =
    -\sum_{n=1}^{\infty} \frac{1}{n} \frac{2}{\pi^2} \frac{(2n-2)!!}{(2n-1)!!} \beta^{2n},
\end{equation}
we can substitute in $\alpha = 2 \arcsin(\beta/2)$ to deduce
\begin{equation}
    \mathcal{J}_{2n} + O(D^{-1/3}) = \frac{4}{\pi^2} \frac{(2n-2)!!}{(2n-1)!!} \log(D) + c_{2n}
\end{equation}
where $c_{2n}$ are constants that can be determined by Eq.~\eqref{eq:final_sum_J2_reproduced} but for which we don't know a closed form.

\subsubsection{Setting up the calculation}
Using the regularity conditions in Appendix~\ref{app:asympt_gFunc}, the asymptotic form of $g(z)$ and $\frac{1}{g(z)}$ functions for $D \gg 1$ will be
\begin{equation} \label{eq:g_asympt}
    g_{F \to F'}(-|z|) \cdot (1 + O(1/D)) =
    \begin{cases}
        \gamma e^{- D |z|} &\text{ if } |z| < D^{-2/3} \\
        \approx 0 &\text{ if }  D^{-2/3} < |z| < D^{2/3} \\
        e^{-D/|z|} &\text{ if } {\sqrt{D}} < |z|
    \end{cases}
\end{equation}
and
\begin{equation} \label{eq:gInv_asympt}
    \frac{1}{g_{F \to F'}(|z|)} \cdot (1 + O(1/D)) =
    \begin{cases}
        \frac{1}{\gamma}\gamma e^{- D |z|} &\text{ if } |z| < D^{-2/3} \\
        \approx 0 &\text{ if } D^{-2/3} < |z| < D^{2/3} \\
       e^{-D/|z|} &\text{ if } D^{2/3} < |z|
    \end{cases} 
\end{equation}
where $\approx 0$ means exponentially small and can be treated as negligible in the integrals (see Prop.~\ref{prop:g_expDecay_cone}). And above, we define $\gamma := g_{F \to F'}(0) = \prod_{j=1}^{P} \frac{e^{i \alpha_j}}{e^{i \beta_j}}$ which is a unit complex number whose value depends on the microscopic details of the specific path and won't make a difference in the asymptotic evaluation of the integrals.

Note that these asymptotics only give useful information about $z$ on the negative real axis for $g(z)$ and on the positive real axis for $\frac{1}{g(z)}$ and don't directly give information about the integrand on the contours $z_j: 0 \to \pm i \infty$. 
As such, to actually use these asymptotic expansions, we need to find a way to shift the contours from the original $0 \to \pm i \infty$ to $0 \to \pm \infty$. 
We can't directly do this, for two reasons. 
First, the integrand has poles at both $\{z_j = e^{i\alpha}\}$ and $\{z_j = e^{i\beta}\}$ since they contain pieces of both $g(z_j)$ and $\frac{1}{g(z_j)}$, so these contour deformations will cross many poles. 
Second, the asymptotics for $g(z)$ and $\frac{1}{g(z)}$ aren't even useful on the same halves of the real axis.
To deal with this, we need to split up the terms in such a way that each term only contains one of $g(z_j)$ or $\frac{1}{g(z_j)}$. 

To do this, first write the integrand of Eq.~\eqref{eq:traceId_second} as
\begin{equation}
\begin{split}
I_{2n} = \frac{1}{(z_1 - z_2) \cdots (z_{2n-1}-z_{2n})(z_{2n}-z_1)}
  &\left[
    \left(1 - \frac{g(z_1)}{g(z_2)}\right)
    \left(1 - \frac{g(z_3)}{g(z_4)}\right)
    \cdots
    \left(1 - \frac{g(z_{2n-1})}{g(z_{2n})}\right)
  \right] \\
    \times
  &\left[
    \left(1 - \frac{g(z_2)}{g(z_3)}\right)
    \cdots
    \left(1 - \frac{g(z_{2n-2})}{g(z_{2n-1})}\right)
    \left(1 - \frac{g(z_{2n})}{g(z_{1})}\right)
  \right].
\end{split}
\end{equation}
If we expand out the second line of the above into its $2^{n}$ factors and consider the terms gotten by multiplying the first line by those factors, then we can see that every term will only have $g(z_j)$ either in the numerator or the denominator. For example, the case $n=2$ has an integrand that can be written as
\begin{equation}
\begin{split}
    I_4 = \frac{1}{(z_1 - z_2) \cdots (z_{4}-z_1)} \times
    \Bigg[ 
        &
        \underbrace{
            \bigg( 1 - \frac{g(z_1)}{g(z_2)} \bigg) 
            \bigg( 1 - \frac{g(z_3)}{g(z_4)} \bigg)
        }_{}
        - 
        \underbrace{
            \bigg( 1 - \frac{g(z_1)}{g(z_2)} \bigg) 
            \frac{g(z_2)}{g(z_3)}
            \bigg(1 - \frac{g(z_3)}{g(z_4)}\bigg)
        }_{} 
        \\ 
        -&
        \underbrace{
            \bigg(1 - \frac{g(z_1)}{g(z_2)}\bigg)
            \bigg(1 - \frac{g(z_3)}{g(z_4)}\bigg) 
            \frac{g(z_4)}{g(z_1)}
        }_{} 
        + 
        \underbrace{
            \bigg(1 - \frac{g(z_1)}{g(z_2)}\bigg) 
            \frac{g(z_2)}{g(z_3)}
            \bigg(1 - \frac{g(z_3)}{g(z_4)}\bigg) 
            \frac{g(z_4)}{g(z_1)} 
        }_{} 
    \Bigg].
\end{split}
\end{equation}
For each $z_i$, each of the terms in with an underbrace in the above can be simplified to only contain a factor of $g(z_i)$ or $\frac{1}{g(z_i)}$. Also, note that the configuration of the contours of integration alternating between $0 \to +i \infty$ and  $0 \to -i \infty$ mean that we can deform the pieces of the full integral for $n=2$ to equal
without crossing singularities and without residues 

\begin{equation} \label{eq:I4_splitUp_terms}
\begin{split}
    \mathcal{I}_4 = \frac{1}{(2 \pi)^4} 
    \Bigg[ 
        &
        \underbrace{
            \int_{0}^{-\infty} d z_1 
            \int_{0}^{+\infty} d z_2 
            \int_{0}^{-\infty} d z_3 
            \int_{0}^{+\infty} d z_4 
            \frac{1}{(z_1 - z_2) \cdots (z_4 - z_1)} 
            \bigg(1 - \frac{g(z_1)}{g(z_2)}\bigg)
            \bigg(1 - \frac{g(z_3)}{g(z_4)}\bigg)
        }_{\text{`Part (00)'}}
        \\
        -&
        \underbrace{
            \int_{0}^{-\infty} d z_1 
            \int_{0}^{-\infty} d z_2 
            \int_{0}^{+\infty} d z_3 
            \int_{0}^{+\infty} d z_4 
            \frac{1}{(z_1 - z_2) \cdots (z_4 - z_1)} 
            \bigg(1 - \frac{g(z_1)}{g(z_2)}\bigg) 
            \frac{g(z_2)}{g(z_3)}
            \bigg(1 - \frac{g(z_3)}{g(z_4)}\bigg)
        }_{\text{`Part (10)'}}
        \\
        -&
        \underbrace{
            \int_{0}^{+\infty} d z_1 
            \int_{0}^{+\infty} d z_2 
            \int_{0}^{-\infty} d z_3 
            \int_{0}^{-\infty} d z_4 
            \frac{1}{(z_1 - z_2) \cdots (z_4 - z_1)} 
            \bigg(1 - \frac{g(z_1)}{g(z_2)}\bigg)
            \bigg(1 - \frac{g(z_3)}{g(z_4)}\bigg) 
            \frac{g(z_4)}{g(z_1)}
        }_{\text{`Part (01)'}}
        \\
        +&
        \underbrace{
            \int_{0}^{+\infty} d z_1 
            \int_{0}^{-\infty} d z_2 
            \int_{0}^{+\infty} d z_3 
            \int_{0}^{-\infty} d z_4 
            \frac{1}{(z_1 - z_2) \cdots (z_4 - z_1)} 
            \bigg(1 - \frac{g(z_1)}{g(z_2)}\bigg) 
            \frac{g(z_2)}{g(z_3)}
            \bigg(1 - \frac{g(z_3)}{g(z_4)}\bigg) 
            \frac{g(z_4)}{g(z_1)} 
    \Bigg]
    }_{\text{`Part (11)'}}.
\end{split}
\end{equation}
In particular, we label the terms `Part($x_1 x_2$)' for $x_1,x_2 \in \{0,1\}$. If $x_j=1$, then the integrand of the term will have a factor $-\frac{g(z_{2j})}{g(z_{2j+1})}$ being multiplied in and the integration contours will be $z_{2j}: 0 \to -\infty$ and $z_{2j+1}: 0 \to +\infty$, whereas if $x_j=\text{0}$, the integrand of the term will have a factor $1$ being multiplied in and the integration contours will be $z_{2j}: 0 \to +\infty$ and $z_{2j+1}: 0 \to -\infty$.

For general $n$, in the same exact manner as above, we can split up the integral into $2^{n}$ parts, labeled `Part($x_1 \cdots x_n$)' for $x_1,\cdots,x_n \in \{0,1\}$. Note that since each of the pieces has a pole at $z_{2j} = z_{2j+1}$ (where again, $z_{2n+1} \equiv z_1$), one may be concerned that moving the contours may produce residues. However, note that in this procedure that we are shifting the contours $z_{2j}: 0 \to i\infty$ and $z_{2j+1}: 0 \to -i\infty$ to $z_{2j}: 0 \to \pm_{2j}\infty$ and $z_{2j+1}: 0 \to -\pm_{2j}\infty$ for all $2^n$ choices of $\{\pm_{2j}\}$. So the fact that these contours always lay opposite to each other means that this procedure will never have the contours crossing any poles and this procedure doesn't produce any residues.

It will be instructive to analyze the asymptotics of the $n=1$ case, $\mathcal{I}_2$, before moving on to the general case. 

\subsubsection{The $n=1$ case, $\mathcal{I}_2$}
Somewhat simpler, the $\mathcal{I}_2$ integral can be written
\begin{equation}
\begin{split}
    \mathcal{I}_2 &= \frac{1}{(2 \pi)^2} \Bigg[ 
      \underbrace{\int_{0}^{-\infty} d z_1 \int_{0}^{+\infty} d z_2 \frac{1}{(z_1 - z_2)(z_2 - z_1)} \bigg(1 - \frac{g(z_1)}{g(z_2)}\bigg)}_{\text{`Part (0)'}} 
     - \underbrace{\int_{0}^{+\infty} d z_1 \int_{0}^{-\infty} d z_2  \frac{1}{(z_1 - z_2)(z_2 - z_1)} \bigg(\frac{g(z_2)}{g(z_1)} - 1\bigg)}_{\text{`Part (1)'}} \Bigg]
\end{split}
\end{equation}
which can be analyzed using the asymptotics in Eqs.~(\ref{eq:g_asympt},\ref{eq:gInv_asympt}). Let's first analyze `Part (0)', as the story for `Part (1)' will be very similar. 

Note that we can use the asymptotic form Eq.~\eqref{eq:g_asympt} of $g(z)$ to write 

\begin{equation} \label{eq:asympts_oneZipper_order2_part0}
\begin{split}
    \text{`Part (0)'} \cdot (1 + O(1/D)) &= \frac{1}{(2 \pi)^2}\int_{0}^{-1} d z_1 \int_{0}^{\infty} d z_2 \frac{1}{(z_1 - z_2)(z_2 - z_1)} \bigg(1 - e^{D(z_1-z_2)}\bigg) + O(1/D)  \\
    &\quad + \frac{1}{(2 \pi)^2}\int_{0}^{-1} d z_1 \int_{0}^{\infty} d z_2 \frac{1}{(z_1 - z_2)(z_2 - z_1)} \bigg(1 - e^{D(\frac{1}{z_1}-\frac{1}{z_2})}\bigg) + O(1/D) \\
    &= \frac{2}{(2 \pi)^2}\int_{0}^{-1} d z_1 \int_{0}^{\infty} d z_2 \frac{1}{(z_1 - z_2)(z_2 - z_1)} \bigg(1 - e^{D(z_1-z_2)}\bigg) + O(1/D) \\
    &= \frac{1}{2 \pi^2} (\log(D) + \gamma_\text{Euler} + 1) + O(1/D)
\end{split}
\end{equation}
In the above, the second equality follows from changing variables $z_1 \mapsto \frac{1}{z_1}$ and $z_2 \mapsto \frac{1}{z_2}$ in the second term. In the third equality, $\gamma_{\text{Euler}}$ the Euler gamma constant, and the result can be found from explicitly evaluating the integral. However, the first line requires some justification, as follows.

First, we note that in the `corner region' of $z_1 \in (-D^{2/3},-\infty)$ and $z_2 \in (0,D^{-2/3})$, the term will be $O(1/D)$ since it will be of the form 
\begin{equation}
\begin{split}
    &-\frac{1}{(2 \pi)^2} \int_{-D^{2/3}}^{-\infty} dz_1 \int_{0}^{\frac{1}{D^{2/3}}} dz_1 \frac{1}{(z_1-z_2)^2}(1 - \gamma e^{D(\frac{1}{z_1} - z_2)}) + O(1/D) \\
    =& -\frac{1}{(2 \pi)^2} \int_{0}^{-D^{-2/3}} dz_1 \int_{0}^{D^{-2/3}} dz_1 \frac{1}{(1-z_1 z_2)^2}(1 - \gamma e^{D(z_1 - z_2)}) + O(1/D) = O(1/D)
\end{split}
\end{equation}
The first line uses the asymptotic form of $g(z)$ to write the integral up to a lower order $O(1/D)$ term. The second line changes variables $z_1 \to \frac{1}{z_1}$, and the third line follows because $|\frac{1 - \gamma e^{D(z_1 - z_2)}}{(1 -z_1 z_2)^2}| < 2$ on the region, and the region itself has area $1/D^{4/3}$. As such, we can asymptotically ignore that part.

Second, even though asymptotically $g(z_1) \approx 0$ and $\frac{1}{g(z_2)} \approx 0$ on the middle regions $z_1 \in (-D^{2/3},-D^{-2/3})$ and $z_2 \in (D^{-2/3},D^{2/3})$, we can substitute $g(z_1) \to e^{D z_1}$ and $\frac{1}{g(z_2)} \to e^{-D z_2}$ into the integrals on the full interval at the cost of a lower-order term, since the exponentially small terms will contribute at most an $O(1/D)$ contribution.

The computation for `Part (1)' goes through exactly the same and gives the same asymptotic result. As such, we have 
\begin{equation}
    \mathcal{I}_2 = \text{`Part (0)'} + \text{`Part (1)'} = \frac{1}{\pi^2}(\log(D) + \gamma_\text{Euler} + 1) \cdot (1 + O(1/D)),
\end{equation}
which is consistent with the Proposition~\ref{prop:asymptIntegral_singleZip}.

Now, we show how show the equality with the integral over the $\{w_j\}$ variables in Proposition~\ref{prop:asymptIntegral_singleZip}. Let's analyze `Part (0)' first. We have

\begin{equation}
\begin{split}
    \text{`Part (0)'} &\cdot (1 + O(1/D)) + O(1/D) 
    \\
    &= 
    \frac{2}{(2 \pi)^2}
    \int_{0}^{-1} d z_1 \int_{0}^{\infty} d z_2 
    \frac{1}{(z_1 - z_2)(z_2 - z_1)} \big(1 - e^{D(z_1-z_2)}\big) 
    \\
    &= 
    \frac{2}{(2 \pi)^2} 
    \int_{0}^{-1} d z_1 \int_{0}^{\infty} d z_2 
    \int_{0}^{D} d w_1 \int_{0}^{-\infty} d w_2 \,\, 
    e^{z_1(w_1-w_2)} e^{z_2(w_2-w_1)} 
    \\
    &= 
    \frac{2}{(2 \pi)^2} 
    \int_{0}^{D} d w_1 \int_{0}^{-\infty} d w_2 
    \frac{1}{(w_1 - w_2)(w_2 - w_1)} \big(1 - e^{(w_2-w_1)}\big).
\end{split}
\end{equation}
The first line was derived previously. The second line uses the `Schwinger parameterization trick' substituting 
$\frac{1-e^{D(z_1-z_2)}}{z_1-z_2} = -\int_{0}^{D} dw_1 \, e^{w_1(z_1-z_2)}$
and 
$\frac{1}{z_2-z_1} = -\int_{0}^{-\infty} dw_2 \, e^{w_2(z_2-z_1)}$. Note that the $w_2$ integral converges since $z_2-z_1 > 0$.
The last line follows from evaluating the integrals over $z_1,z_2$, noting that the $z_2$ integral converges if we choose $w_1: 0 \to D$ and $w_2: 0 \to -\infty$ to be straight lines along the real axis, which would imply $w_2-w_1<0$. 

Similarly, we can write

\begin{equation}
\begin{split}
    \text{`Part (1)'} &\cdot (1 + O(1/D)) + O(1/D) 
    \\
    &= 
    \frac{2}{(2 \pi)^2}
    \int_{0}^{+1} d z_1 \int_{0}^{-\infty} d z_2 
    \frac{1}{(z_1 - z_2)(z_2 - z_1)} \big( 1 - e^{D(z_1-z_2)} \big) e^{D(z_2-z_1)} 
    \\
    &= 
    \frac{2}{(2 \pi)^2} 
    \int_0^{+1} d z_1 \int_{0}^{-\infty} d z_2 
    \int_{0}^{D} d w_1 \int_{\infty}^{D} d w_2 \,\, 
    e^{w_1(z_1-z_2)} e^{w_2(z_2-z_1)} e^{D(z_2-z_1)} \big(e^{D(z_1-z_2)}-1\big)
    \\
    &= 
    \frac{2}{(2 \pi)^2} 
    \int_{0}^{D} d w_1 \int_{\infty}^{D} d w_2
    \frac{1}{(w_1 - w_2)(w_2 - w_1)} \big(1 - e^{-(w_2-w_1)}\big),
\end{split}
\end{equation}
noting similarly that all of the integrals going out to infinity converge.
In total, we find that up to additive and multiplicative factors of $O(1/D)$, we have `Part (0)' + `Part (1)' equal the integral specified in Proposition~\ref{prop:asymptIntegral_singleZip}, after changing variables $w_{1,2} \mapsto w_{1,2} / D$.

\subsubsection{The general case}

Now, let us consider the general case inputting the asymptotic form of Eqs.~(\ref{eq:g_asympt},\ref{eq:gInv_asympt}) into the decomposition discussed following Eq.~\eqref{eq:I4_splitUp_terms}. 

First, let's again consider the `corner regions' of all of the `Part($\cdots$)', where for some $j \neq k$ we have $|z_j| \in (0, D^{-2/3})$ and $|z_k| \in (D^{2/3}, \infty)$ and for all other $\ell \neq j,k$, either $|z_{\ell}| \in (0, D^{-2/3})$ or $|z_{\ell}| \in (D^{2/3}, \infty)$. It will be a general fact that the integrals on these corner regions will be lower-order terms as $D \to \infty$, as follows.

\begin{lem} \label{lem:cornerRegion_oneZip}
Define the integrands $I_{2n}(z_1, \cdots, z_{2n} ; x_1 \cdots x_n)$ of each of the `Part($x_1 \cdots x_n$)'. Then, for all $j \neq k$, we have 
\begin{equation}
    \cdots \int_{0}^{\pm D^{-2/3}} dz_j \cdots \int_{\pm D^{2/3}}^{\pm \infty} dz_k \cdots I_{2n}(z_1, \cdots, z_{2n} ; x_1 \cdots x_n) = O(1/D)
\end{equation}
where the integration regions for all $z_{\ell}, \ell \neq j,k$ can be either one of $(0,\pm D^{-2/3})$ or $\pm (D^{2/3},\infty)$.
\end{lem}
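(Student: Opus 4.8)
The plan is to bound each of the $2^n$ pieces `Part$(x_1\cdots x_n)$' separately on the prescribed corner region and then sum, since the integrand of \eqref{eq:traceId_second} is a fixed finite sum of these pieces, each with its contour signs fixed. The crucial simplification is that on the corner region no variable lies in the intermediate band $(D^{-2/3},D^{2/3})$, so the clean asymptotics \eqref{eq:g_asympt}, \eqref{eq:gInv_asympt} apply directly and the exponentially small middle contributions (Prop.~\ref{prop:g_expDecay_cone}) never arise. I would then substitute $z_\ell\mapsto 1/z_\ell$ on every large variable ($|z_\ell|>D^{2/3}$), exactly as in the computation of $\mathcal{I}_2$. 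This sends the corner region to a box in which every coordinate runs over an interval of length $O(D^{-2/3})$, of total measure $O(D^{-4n/3})$, at the cost of a Jacobian $\prod_{\text{large }\ell}|z_\ell|^{-2}$ (in the new coordinates) that is large near the images of $\infty$.

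The core of the estimate is bounding the integrand. Each factor $g(z_\ell)^{\pm1}$ that sits in its `good' position — numerator on the negative axis, denominator on the positive axis — is $O(1)$ by \eqref{eq:g_asympt}, \eqref{eq:gInv_asympt}. The subtlety is the first-bracket factors $1-g(z_{2i-1})/g(z_{2i})$, which both carry the pole $1/(z_{2i-1}-z_{2i})$ and can produce an exponentially large $g$-ratio when $z_{2i-1},z_{2i}$ lie on the same axis. I would group each such factor with the adjacent second-bracket term so that the offending ratio collapses to a difference quotient $\frac{g(z_a)-g(z_b)}{z_a-z_b}$, which is at once free of the pole and bounded by $\sup|g'|$ on the relevant segment — namely $O(D)$ in the small regime and $O(D^{-1/3})$ in the large regime. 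This is the quantitative version of the cancellation that already makes the full integrand regular at $z_{2i-1}=z_{2i}$ (cf.\ the remark after Cor.~\ref{cor:traceId_first}).

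With the integrand so controlled, the power count is as follows. Each large variable contributes a Jacobian $|z_\ell|^2\sim D^{4/3}$, but it appears in two of the differences $z_i-z_{i+1}$, each of size $\gtrsim |z_\ell|$ since its neighbours are either smaller (after inversion) or oppositely signed; thus $1/\prod(z_i-z_{i+1})$ returns the two factors of $|z_\ell|$ and overcompensates the Jacobian. Collecting powers, each Part is $O(D^{4(c-L)/3})$ up to logarithms, where $L\ge1$ is the number of large variables and $c$ the number of large–large adjacencies. Because the region contains at least one small variable $z_j$, the large variables leave a gap in the cyclic order, so $c\le L-1$ and hence $4(c-L)/3\le -4/3$; the finitely many same-sign collision strips each enhance the baseline by at most $\log D$. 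Therefore every Part, and with it the whole corner integral, is $O\!\left(D^{-4/3}(\log D)^n\right)=O(1/D)$, as claimed.

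The step I expect to be the main obstacle is the regrouping of the second paragraph: establishing, uniformly over all Parts and sign patterns, that the $g$-dependence can be reorganized into bounded values together with difference quotients, so that the vanishing denominators at same-sign collisions and the exponentially large same-axis $g$-ratios are tamed simultaneously. A naive termwise bound on the monomial expansion following \eqref{eq:I4_splitUp_terms} is not available, since individual monomials have non-integrable simple poles on the real contours; only the cancellations inherited from the intact factors $1-g(z_i)/g(z_{i+1})$ control them. This is a finite but intricate bookkeeping; once it is in place, the power count above is routine.
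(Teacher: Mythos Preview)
Your overall strategy---invert the large variables, reduce to a box of side $D^{-2/3}$, and count powers---is the paper's strategy too, and your formula $D^{4(c-L)/3}$ is in fact equivalent to what the paper obtains (each of the $M=2(L-c)\ge 2$ small--large adjacencies breaks the cyclic product into open chains, each scaling as $D^{-2/3}$). But your argument has a genuine gap at precisely the step you flag: the bound you propose for the same-axis difference quotients is too weak, and the power count you write down does not actually follow from it.

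Concretely, you bound $\bigl|\frac{g(z_a)-g(z_b)}{z_a-z_b}\bigr|$ by $\sup|g'|=O(D)$ in the small regime. Take $2n=8$, the piece Part$(0101)$ (so every first-bracket pair $(z_{2m-1},z_{2m})$ lies on the same axis), and the corner region with $z_1$ large and $z_2,\ldots,z_8$ small. Three of the four first-bracket factors are then small--small difference quotients; your bound replaces each by $D$ and discards the corresponding denominator. The remaining integrand is $O(D^3)\cdot\frac{1}{(\tilde z_2+\tilde z_3)(\tilde z_4+\tilde z_5)(\tilde z_6+\tilde z_7)}$, and integrating over the box gives $D^3\cdot D^{-2/3}\cdot D^{-2/3}\cdot (D^{-2/3})^3=O(D^{-1/3})$, not $O(1/D)$. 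So the $\sup|g'|$ bound is insufficient, and your stated power count $D^{4(c-L)/3}$ is inconsistent with it (that formula tacitly assumes every edge contributes a factor $\sim 1/(\text{sum of magnitudes})$, which is exactly what $\sup|g'|$ fails to give).

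The missing ingredient is the sharper inequality the paper uses (Prop.~\ref{prop:gminusg_estimate} together with Eq.~\eqref{eq:gMinusg_lt_1Minusg}): for same-axis small--small pairs,
\[
\left|\frac{g(z_a)-g(z_b)}{z_a-z_b}\right|\;\lesssim\;\frac{1-e^{-D(|z_a|+|z_b|)}}{|z_a|+|z_b|}\;\le\;\frac{1}{|z_a|+|z_b|},
\]
and similarly for $1/g$ and for the large regime. With this, every edge factor---type~A or type~B, first bracket or second---is bounded after inversion by one of $\frac{2}{\tilde z_a+\tilde z_b}$ or $\frac{2}{1+\tilde z_a\tilde z_b}\le 2$, no ``regrouping'' needed. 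The mixed-scale edges (at least two of them, since both a small $z_j$ and a large $z_k$ are present in the cycle) then break the product into open chains $\int_0^{D^{-2/3}}\!\cdots\prod\frac{1}{\tilde z_r+\tilde z_{r+1}}$, each of which is $D^{-2/3}$ times a convergent constant by homogeneity; the product over at least two chains gives $O(D^{-4/3})$.
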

In the above lemma, the choice of signs $\pm$ for the intervals is specified by the integration contours corresponding to `Part($x_1 \cdots x_n$)'.
\begin{proof}
    First, we change variables $z_p \to |z_p| = \pm z_p$ for all $p$. At this point, the integrand can be written as a product of various factors of one of the following forms
    
    \begin{equation*}
    \begin{split}
        \frac{1-\frac{g(z_{2m-1})}{g(z_{2m})}}{|z_{2m-1}| + |z_{2m}|}\frac{1}{|z_{2m}| + |z_{2m+1}|} & \text{ for } z_{2m-1} \in (0,-\infty) \text{ and } z_{2m} \in (0,+\infty), \text{ i.e. } x_m = x_{m-1} = 0, 
        \\
        \frac{1-\frac{g(z_{2m})}{g(z_{2m-1})}}{|z_{2m-1}| + |z_{2m}|}\frac{1}{|z_{2m}| + |z_{2m+1}|} & \text{ for } z_{2m-1} \in (0,+\infty) \text{ and } z_{2m} \in (0,-\infty), \text{ i.e. } x_m = x_{m-1} = 1, 
        \\
        \frac{g(z_{2m-1})-g(z_{2m})}{|z_{2m-1}| - |z_{2m}|}\frac{1}{|z_{2m}| + |z_{2m+1}|} & \text{ for } z_{2m-1} \in (0,-\infty) \text{ and } z_{2m} \in (0,-\infty), \text{ i.e. } x_m = 1,  x_{m-1} = 0, 
        \\
        \frac{\frac{1}{g(z_{2m-1})}-\frac{1}{g(z_{2m})}}{|z_{2m-1}| - |z_{2m}|}\frac{1}{|z_{2m}| + |z_{2m+1}|} & \text{ for } z_{2m-1} \in (0,+\infty) \text{ and } z_{2m} \in (0,+\infty), \text{ i.e. } x_m = 0,  x_{m-1} = 1
    \end{split}
    \end{equation*}
    depending on the specific `Part($x_1 \cdots x_n$)'. 
    
    Next, we change variables to all lie in $\text{int}_0 := (0,D^{-2/3})$ by sending $|z_p| \to \Tilde{z}_p := |z_p|^{\pm 1}$, depending on the interval chosen. At this point, we will use the bounds in Propositions~\ref{prop:1minusg_estimate},\ref{prop:gminusg_estimate} and Eq.~\eqref{eq:gMinusg_lt_1Minusg}. In addition, we use the fact that for these ranges of $\{z_p\}$ that
    \begin{equation*}
        \left| 1 - \frac{g(z_{2m-1})}{g(z_{2m})} \right| \le 2 \quad,\quad \left| 1 - \frac{g(z_{2m})}{g(z_{2m-1})} \right| \le 2 \quad,\quad \left| \frac{1}{g(z_{2m-1})} - \frac{1}{g(z_{2m})} \right| \le 2 \quad,\quad \left| g(z_{2m-1}) - g(z_{2m}) \right| \le 2.
    \end{equation*}
    Then, after having distributed the Jacobians $-\frac{1}{z^2}$ throughout the denominators appropriately, we can bound the full integral as a product of factors of the form
    \begin{equation*}
    \begin{split}
        \frac{2}{\Tilde{z}_{2m-1} + \Tilde{z}_{2m}} \quad,\quad \frac{2}{1+ \Tilde{z}_{2m-1} \Tilde{z}_{2m}} \quad,\quad \frac{1}{\Tilde{z}_{2m} + \Tilde{z}_{2m+1}} \quad,\quad \frac{1}{1+ \Tilde{z}_{2m} \Tilde{z}_{2m+1}},
    \end{split}
    \end{equation*}
    where all $\Tilde{z}_{2m} \in (0,D^{-2/3})$. The fact that the original integration regions were $|z_j|: 0 \to D^{-2/3}$ and $|z_k|: D^{2/3} \to \infty$ means that there's at least one factor of $\frac{1}{1+ \Tilde{z}_{r} \Tilde{z}_{r+1}}$ in the bounding integral. Note that $\frac{1}{1+ \Tilde{z}_{r} \Tilde{z}_{r+1}} \le 1$. As such, we can bound the full integral as a constant power of two times the product of at least two integrals of the form
    \begin{equation}
    \begin{split}
        &\int_{0}^{D^{-2/3}} d\Tilde{z}_r \int_{0}^{D^{-2/3}} d\Tilde{z}_{r+1} \cdots \int_{0}^{D^{-2/3}} d\Tilde{z}_{r+s} \frac{1}{\Tilde{z}_{r} + \Tilde{z}_{r+1}} \frac{1}{\Tilde{z}_{r+1} + \Tilde{z}_{r+2}} \cdots \frac{1}{\Tilde{z}_{r+s-1} + \Tilde{z}_{r+s}} \\
        =& \frac{1}{D^{2/3}} \int_{0}^{1} d\Tilde{z}_r \int_{0}^{1} d\Tilde{z}_{r+1} \cdots \int_{0}^{1} d\Tilde{z}_{r+s} \frac{1}{\Tilde{z}_{r} + \Tilde{z}_{r+1}} \frac{1}{\Tilde{z}_{r+1} + \Tilde{z}_{r+2}} \cdots \frac{1}{\Tilde{z}_{r+s-1} + \Tilde{z}_{r+s}},
    \end{split}
    \end{equation}
    where we can allow possibly $s=0$, in which case the relevant integral would just be $\int_{0}^{D^{-2/3}} d\Tilde{z}_r = \frac{1}{D^{2/3}}$.
    
    The fact that the integrals $\int_{0}^{1} d\Tilde{z}_r \int_{0}^{1} d\Tilde{z}_{r+1} \cdots \int_{0}^{1} d\Tilde{z}_{r+s} \frac{1}{\Tilde{z}_{r} + \Tilde{z}_{r+1}} \cdots \frac{1}{\Tilde{z}_{r+s-1} + \Tilde{z}_{r+s}}$ are convergent, finite numbers independent of $D$ and that there are at least two of them means that the full integral can be bounded by a constant times $\frac{1}{D^{4/3}}$, which implies what we wanted to show.
\end{proof}

Now, we can proceed to show the proposition. In particular, we will show the following expression for `Part($x_1 \cdots x_n$)'.

\begin{lem} \label{lem:part_x1__xn_int}
Fix some $x_1, \cdots, x_n \in \{0,1\}$. Then,
\begin{equation}
    \text{`Part}(x_1,\cdots,x_n)\text{'}\cdot(1+O(1/D)) + O({1}/{D}) = \int_{0}^{D} dw_1 \int_{\mathcal{C}_{x_1}} dw_2 \cdots \int_{0}^{D} dw_{2n-1} \int_{\mathcal{C}_{x_n}} dw_{2n} \frac{1 - e^{-|w_{2n} - w_1|}}{(w_1 - w_2) \cdots (w_{2n}-w_1)}.
\end{equation}
In the above, the contours for $w_{2j}$ are $\mathcal{C}_{x_j}$, defined as follows. We define $\mathcal{C}_0 = 0 \to -\infty$ and $\mathcal{C}_{1} = +\infty \to D$. 
\end{lem}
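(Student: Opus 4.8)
The plan is to follow the template of the explicit $n=1$ computation of `Part (0)' and `Part (1)' carried out above, promoting it to general $n$. There are two moving parts: replacing each $g$-ratio in the integrand of `Part$(x_1\cdots x_n)$' by its asymptotic exponential form, and then applying the Schwinger parametrization to pass from the $z$-integral to the $w$-integral.

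First I would insert the asymptotic forms \eqref{eq:g_asympt} and \eqref{eq:gInv_asympt}. By the decomposition preceding the lemma, each $z_j$ in `Part$(x_1\cdots x_n)$' carries a single factor $g(z_j)^{\pm 1}$, and its contour (on the positive or negative real axis, as dictated by the $x$'s) is precisely the one on which that factor's asymptotics are valid: the negative axis for $g$ and the positive axis for $1/g$. The middle regions $D^{-2/3}<|z_j|<D^{2/3}$ are exponentially small and hence negligible, the mixed near/far ``corner'' regions are $O(1/D)$ by Lemma~\ref{lem:cornerRegion_oneZip}, and the remaining near and far regions are exchanged by $z_j\mapsto 1/z_j$ and contribute equally. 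This replaces each $g(z_j)^{\pm1}$ by an exponential and reduces `Part$(x_1\cdots x_n)$', up to $(1+O(1/D))$ multiplicative and $O(1/D)$ additive errors, to an elementary $z$-integral whose numerator is $\prod_{m}\big(1-e^{D(z_{2m-1}-z_{2m})}\big)$ dressed, for each $m$ with $x_m=1$, by the shift factor $e^{D(z_{2m}-z_{2m+1})}$ coming from the extra $-g(z_{2m})/g(z_{2m+1})$ present in that part. As in the $n=1$ case I would write this $z$-integral with a single variable $z_1$ (on the axis fixed by $x_n$) over a finite segment $0\to\pm 1$ and all remaining $z_2,\dots,z_{2n}$ over semi-infinite rays; this is the organization that reproduces the true asymptotics.

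Next I would Schwinger-parametrize every factor. Each exponential-dressed bond $\tfrac{1-e^{D(z_{2m-1}-z_{2m})}}{z_{2m-1}-z_{2m}}$ becomes $-\int_0^D dw_{2m-1}\,e^{w_{2m-1}(z_{2m-1}-z_{2m})}$, producing the finite ranges $w_{2m-1}\in(0,D)$ whose upper limit $D$ is what ultimately generates the $\log D$ growth. Each bare bond $\tfrac{1}{z_{2m}-z_{2m+1}}$ becomes a semi-infinite Schwinger integral whose direction is forced by the sign of $z_{2m}-z_{2m+1}$; when $x_m=1$, that sign flips and the shift factor $e^{D(z_{2m}-z_{2m+1})}$ translates the contour from $\mathcal{C}_0=\{0\to-\infty\}$ to $\mathcal{C}_1=\{\infty\to D\}$, reproducing $\mathcal{C}_{x_m}$ exactly. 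Integrating out the $z_j$ then finishes the computation: the lone finite integral $\int_0^{\pm 1} dz_1$, whose exponent is the coefficient $w_1-w_{2n}$ of $z_1$, yields the single factor $1-e^{-|w_{2n}-w_1|}$, while each semi-infinite integral $\int dz_j$ for $j\ge 2$ converges (by the same sign bookkeeping) to a rational factor; after collecting the signs the denominators assemble into the cyclic product $(w_1-w_2)\cdots(w_{2n}-w_1)$.

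The main obstacle is the bookkeeping in the first two steps rather than any single hard estimate. Concretely I expect the delicate points to be: (a) making the error control uniform in $n$, so that the middle and corner regions really cost only $O(1/D)$ and the near/far exchange yields precisely the stated elementary integrand, which is where Lemma~\ref{lem:cornerRegion_oneZip} and the regularity conditions are used; (b) verifying that exactly one variable must be kept on a finite segment, so that a \emph{single} exponential survives and all the remaining sign factors conspire to give the clean cyclic denominator (this is already the nontrivial content visible in the $n=1$ and $n=2$ cases, where keeping more than one finite variable spuriously produces extra exponentials); and (c) checking that the shift factors attached to the entries with $x_m=1$ move the even contours onto $\mathcal{C}_1$ while preserving the convergence of every semi-infinite $z$- and $w$-integral. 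Promoting the $n=1,2$ verifications to a uniform argument in $n$ is the crux.
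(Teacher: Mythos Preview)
Your proposal is correct and follows essentially the same approach as the paper's proof: insert the asymptotics \eqref{eq:g_asympt},\eqref{eq:gInv_asympt} with the corner regions handled by Lemma~\ref{lem:cornerRegion_oneZip}, identify the near-$0$ and near-$\infty$ contributions via $z_j\mapsto 1/z_j$ (producing the factor of $2$), keep only $z_1$ on the finite segment $0\to -(-1)^{x_n}$ while extending the others to semi-infinite rays, and then Schwinger-parametrize exactly as you describe so that the odd bonds give $w_{2m-1}\in(0,D)$ and the even bonds give $w_{2m}\in\mathcal{C}_{x_m}$. The delicate points you flag in (a)--(c) are precisely the ones the paper singles out in its justification of the four displayed equalities, so your outline matches the paper's argument step for step.
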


Note that this lemma implies the Proposition~\ref{prop:asymptIntegral_singleZip}. In particular, note that all of the integrands above have the same integrand. As such, adding up all of the integration contours over all assignments of $\{x_j\}$ will give (after changing all variables of all $w_j \mapsto w_j/D$) 
\begin{equation*}
\begin{split}
    \mathcal{I}_{2n} &= \sum_{x_1,\cdots,x_n \in \{0,1\}} \text{`Part}(x_1,\cdots,x_n)\text{'} \cdot (1 + O(1/D))
    \\
    &= \frac{2}{(2\pi)^{2n}} \int_{0}^{1} dw_1 \int_{\mathcal{C}_\text{out}} dw_2 \cdots \int_{0}^{1} dw_{2n-1} \int_{\mathcal{C}_\text{out}} dw_{2n} \frac{1 - e^{-D|w_{2n} - w_1|}}{(w_1 - w_2) \cdots (w_{2n}-w_1)} + O({1}/{D}).
\end{split}
\end{equation*}
Note that for fixed $w_{2n}$, one can set $e^{-|w_{2n}-w_{1}|}$ with one of $e^{w_{2n}-w_{1}}$ or $e^{w_{1}-w_{2n}}$ depending on if $w_{2n}<0$ or $w{2n}>0$. So, the integrands are all holomorphic as functions for all other $w_{\cdots}$ and vanish as $\frac{1}{w^2_{\cdots}}$. As such, one can deform the contours without changing the value of the integrands, as long as a $w_{j}$ contour doesn't cross any $w_{j \pm 1}$ contour in the process. In particular, one can change the contours to be as in the integral Proposition~\ref{prop:asymptIntegral_singleZip}. 

The proof of the lemma will be as follows, similar to the $n=1$ analysis of $\mathcal{I}_2$ from earlier.
\begin{proof}[Proof of Lemma~\ref{lem:part_x1__xn_int}]

First, note that we can use the asymptotics of $g(z)$ and the definition of each of the `Part($x_1,\cdots,x_n$)' as follows. Justification for each equality will be provided afterward.
\begin{equation}
\begin{split}
    &\text{`Part}(x_1,\cdots,x_n)\text{'} \cdot (1 + O(1/D)) + O(1/D) \\
    &= 
    \frac{1}{(2 \pi)^{2n}}
    \int_{0}^{-(-1)^{x_{n}}} d z_1 
    \int_{0}^{(-1)^{x_{1}}\infty} d z_2 
    \int_{0}^{-(-1)^{x_{1}}\infty} d z_3 
    \int_{0}^{(-1)^{x_{2}}\infty} d z_4 
    \cdots 
    \int_{0}^{-(-1)^{x_{n-1}}\infty} d z_{2n-1} 
    \int_{0}^{(-1)^{x_{n}}\infty} d z_{2n} 
    \\
    &\quad\quad 
    \left\{
        \frac{
            (1 - e^{D(z_1-z_2)}) \bigg(-e^{D(z_2 - z_3)} \bigg)^{x_1}
            (1 - e^{D(z_3-z_4)}) \bigg(-e^{D(z_4 - z_5)} \bigg)^{x_2}
            \cdots
            (1 - e^{D(z_{2n-1}-z_{2n})}) \bigg(-e^{D(z_{2n} - z_1)} \bigg)^{x_n}
        }{(z_1-z_2)\cdots(z_{2n}-z_1)} 
    \right\} 
    \\
    &\,\,
    +
    \frac{1}{(2 \pi)^{2n}}
    \int_{-(-1)^{x_{n}}}^{-(-1)^{x_{n}}\infty} d z_1 
    \int_{0}^{(-1)^{x_{1}}\infty} d z_2 
    \int_{0}^{-(-1)^{x_{1}}\infty} d z_3 
    \int_{0}^{(-1)^{x_{2}}\infty} d z_4 
    \cdots 
    \int_{0}^{-(-1)^{x_{n-1}}\infty} d z_{2n-1} 
    \int_{0}^{(-1)^{x_{n}}\infty} d z_{2n} 
    \\
    &\quad\quad 
    \left\{
        \frac{
            (1 - e^{D(\frac{1}{z_1}-\frac{1}{z_2})}) \bigg(-e^{D(\frac{1}{z_2} - \frac{1}{z_3})} \bigg)^{x_1}
            (1 - e^{D(\frac{1}{z_3}-\frac{1}{z_4})}) \bigg(-e^{D(\frac{1}{z_4} - \frac{1}{z_5})} \bigg)^{x_2}
            \cdots
            (1 - e^{D(\frac{1}{z_{2n-1}}-\frac{1}{z_{2n}})}) \bigg(-e^{D(\frac{1}{z_{2n}} - \frac{1}{z_1})} \bigg)^{x_n}
        }{(z_1-z_2)\cdots(z_{2n}-z_1)} 
    \right\}
    \\
    &= \frac{2}{(2 \pi)^{2n}}
    \int_{0}^{-(-1)^{x_{n}}} d z_1 
    \int_{0}^{(-1)^{x_{1}}\infty} d z_2 
    \int_{0}^{-(-1)^{x_{1}}\infty} d z_3 
    \int_{0}^{(-1)^{x_{2}}\infty} d z_4 
    \cdots 
    \int_{0}^{-(-1)^{x_{n-1}}\infty} d z_{2n-1} 
    \int_{0}^{(-1)^{x_{n}}\infty} d z_{2n} 
    \\
    &\quad\quad 
        \left\{
        \frac{
            (1 - e^{D(z_1-z_2)}) \bigg(-e^{D(z_2 - z_3)} \bigg)^{x_1}
            (1 - e^{D(z_3-z_4)}) \bigg(-e^{D(z_4 - z_5)} \bigg)^{x_2}
            \cdots
            (1 - e^{D(z_{2n-1}-z_{2n})}) \bigg(-e^{D(z_{2n} - z_1)} \bigg)^{x_n}
        }{(z_1-z_2)\cdots(z_{2n}-z_1)} 
    \right\} 
    \\
    &= \frac{2}{(2 \pi)^{2n}}
    \int_{0}^{-(-1)^{x_{n}} } d z_1 
    \int_{0}^{(-1)^{x_{1}}\infty} d z_2 
    \int_{0}^{-(-1)^{x_{1}}\infty} d z_3 
    \int_{0}^{(-1)^{x_{2}}\infty} d z_4 
    \cdots 
    \int_{0}^{-(-1)^{x_{n-1}}\infty} d z_{2n-1} 
    \int_{0}^{(-1)^{x_{n}}\infty} d z_{2n} 
    \\
    &\quad\quad
    \int_{0}^{D} d w_1 
    \int_{\mathcal{C}_{x_1}} d w_2
    \int_{0}^{D} d w_3 
    \int_{\mathcal{C}_{x_2}} d w_4 
    \cdots
    \int_{0}^{D} d w_{2n-1} 
    \int_{\mathcal{C}_{x_{2n}}} d w_{2n} \,\, 
    \Big\{ e^{w_1(z_1 - z_{2})} e^{w_2(z_2 - z_{3})} \cdots e^{w_{2n}(z_{2n} - z_{1})} \Big\}
    \\
    &= 
    \frac{2}{(2 \pi)^{2n}} 
    \int_{0}^{D} dw_1 
    \int_{\mathcal{C}_{x_1}} dw_2 
    \cdots 
    \int_{0}^{D} dw_{2n-1} 
    \int_{\mathcal{C}_{x_n}} dw_{2n} 
    \frac{1 - e^{-|w_{2n} - w_1|}}{(w_1 - w_2) \cdots (w_{2n}-w_1)}
\end{split}
\end{equation}
The first equality of the above equation follows from substituting the asymptotics from Eqs.~(\ref{eq:g_asympt},\ref{eq:gInv_asympt}) into the integrand of `Part($x_1,\cdots,x_n$)'. There, we need to add in the $O(1/D)$ correction to extend the approximate $g(z) \approx e^{\pm D z_j}$ or $g(z) \approx e^{\pm D \frac{1}{z_j}}$ from the regions $|z| < D^{-2/3}$ or $|z| > D^{2/3}$. 
The second equality follows from substituting $z_j \to \frac{1}{z_j}$ for each $j$ in the second term. 
The third equality can be justified slightly differently for the $w_j$ integrals depending on if $j$ even versus $j$ odd. The integrals over $w_{2j}$ come from the `Schwinger parameterization' of substituting $-\frac{\big(-e^{D(z_{2j}-z_{2j+1})}\big)^{x_1}}{z_{2j}-z_{2j+1}} = \int_{\mathcal{C}_{x_j}} dw_{2j} \, e^{w_{2j} (z_{2j}-z_{2j+1})}$. 
There, the contour $w_{2j}: 0 \to -\infty$ for $x_j=0$ or $w_{2j}: \infty \to D$ for $x_j=1$ work for each pair $z_{2j} , z_{2j+1}$. 
In particular, since the $z_{2j}$ and $z_{2j+1}$ contours are always opposite each other, these contours always lead to convergent integrals. 
The integrals $w_{2j-1}$ follow from $-\frac{1-e^{D(z_{2j-1}-z_{2j})}}{z_{2j-1}-z_{2j}} = \int_{0}^{D} dw_{2j-1} \, e^{w_{2j-1}(z_{2j-1}-z_{2j})}$. The fourth equality follows from integrating over the $\{z_j\}$, noting that all of the integrals converge because of the orientations of the $\{w_j\}$.
\end{proof}

\section{Multiple Punctures} \label{sec:multipleZipper_calculations}


Now, we consider the analogous calculation for multiple punctures and a gauge field that's flat away from a finite number of well-separated punctures. A key example considered by Dubédat in~\cite{Dubedat2011} is a $U(1)$-valued gauge field when for $s$ pairs of faces $F_1,F_1' , \cdots , F_s,F_s'$ are all well-separated and the field's monodromy around $F_j$ is the unit complex number $e^{i \alpha_j}$ and the monodromy around $F'_j$ is $e^{-i\alpha_j}$. 
Such connections are related to the height correlators $\braket{\exp\left( i \sum_{j=1}^{s} \alpha_j (h(F_j) - h(F'_j)) \right)}$, and the limit $\alpha_j \to \pi$ can be applied to the strength of monomer insertions around the $F_j,F'_j$.
Note that by the Footnote~\ref{foot:unitaryGaugeTransf}, the absolute value $\left| \frac{\det \overline{\partial}^{\bullet \to \circ}(e^{i\alpha_1},\cdots,e^{i\alpha_s})}{\det \overline{\partial}^{\bullet \to \circ}} \right|$ is well-defined if we restrict to unitary gauge transformations. 

In accordance with the free fermion determinant and Gaussian field expectations Eq.~\eqref{eq:abelian_singular_dirac_det} and the results for two punctures Eq.~\eqref{eq:singleZipper_result_intro_2}, one would expect 
\begin{equation*}
    \log \left|\frac{\det \overline{\partial}^{\bullet \to \circ}(e^{i\alpha_1},\cdots,e^{i\alpha_s})}{\det \overline{\partial}^{\bullet \to \circ}}\right|
    \xrightarrow{|F' - F| \to \infty}
    -\frac{1}{2\pi^2} \sum_{j=1}^{s} \left( \alpha_j^2 \log(|F'_j-F_j|) + C(\alpha_j) \right)
    +\frac{1}{\pi^2} \sum_{1 \le i<j \le s} \alpha_i \alpha_j \log \left( \frac{|F_i-F_j||F_i'-F'_j|}{|F_i-F'_j||F'_i-F_j|} \right)
    + o(1).
\end{equation*}
In fact, Dubédat verfies that this holds and converges for $-\pi < \alpha_j < \pi$. Furthermore, in~\cite{Dubedat2018doubleDimersConformalLoopEnsemblesIsomonodromicDeformations}, Dubédat extends this analysis to certain classes of $SL(2,\C)$ connections that are well-separated as above and shows that for non-abelian connections, well-separated zippers the analogous determinant 
$\log \frac{\det \overline{\partial}^{\bullet \to \circ}(U)}{\det \overline{\partial}^{\bullet \to \circ}}$
converges to a certain `$\tau$' function, which can be interpreted as a generalization of the free fermion determinants except for non-abelian connections.

In our analysis, we are able to demonstrate that (in a sense of moments, up to leading-order as the lattice-spacing goes to zero, to be described shortly) $\log \frac{\det \overline{\partial}^{\bullet \to \circ}(U)}{\det \overline{\partial}^{\bullet \to \circ}}$ matches with the free-fermion determinant. In particular, we'll have that for well-separated zippers approximating paths ${\bf P}_j: \tilde{F}_j \to \tilde{F}'_j$ in the complex-plane and endowed with a graph connection representing a jump matrix $U_j \in U(N)$ with eigenvalues $e^{i \alpha^{(j)}_1},\cdots,e^{i \alpha^{(j)}_N}$ going right-to-left across ${\bf P}_j$, the determinant is asymptotically
\begin{equation} \label{eq:zipper_expansion_lattice_intro}
\begin{split}
    &\log \frac{\det \overline{\partial}^{\bullet \to \circ}(U)}{\det \overline{\partial}^{\bullet \to \circ}}
    \\
    &\xrightarrow[R \to \infty]{\textbf{moments}}
    \\
    &\sum_{j=1}^{s}
    \left(
        -\frac{1}{2 \pi^2} ({\alpha^{(j)}_1}^2 + \cdots + {\alpha^{(j)}_N}^2) \log\left(R|\tilde{F}'_j - \tilde{F}_j|\right)
        +
        C(\alpha^{(j)}_1) + \cdots + C(\alpha^{(j)}_N)
    \right)
    + \text{`imaginary part'}
    \\
    &-\sum_{n=2}^{\infty} \frac{1}{n}
    \sum_{\substack{j_1 \cdots j_{n} \in \{1 \cdots s\} \\ !(j_1 = \cdots = j_{n})}} 
    \frac{\tr \left[ (1 - U_{j_1}^{-1}) \cdots (1 - U_{j_n}^{-1}) \right]}{(2 \pi i)^{n}}
    \left\{
    \int_{{\bf P}_{j_1}^{(c(1))}} dw_1 \cdots \int_{{\bf P}_{j_n}^{(c(n))}} dw_{\ell_{1}+\cdots+\ell_{n}}
    \frac{1}{(w_1-w_2)\cdots(w_{n}-w_1)}
    +
    \mathrm{c.c.}
    \right\}
\end{split}
\end{equation}
where above, $c(1) \cdots c(n)$ satisfy $c(k) = 1, c(k+1) = 2, \cdots, c(k+\ell) = \ell+1$ for $j_{k-1} \neq j_{k} = \cdots = j_{k+\ell} \neq j_{k+\ell+1}$ and $j_{k+m} = j_{k}$ for all $k$, the curves ${\bf P}_{j}^{(1)},{\bf P}_{j}^{(2)},{\bf P}_{j}^{(3)},\cdots$ are disjoint curves $\tilde{F}_j \to \tilde{F}'_j$ near ${\bf P}_j$ going right-to-left (see Fig.~\ref{fig:splitContours_appendix}), and $R$ is a parameter for which $1/R$ can be thought of as the lattice-spacing (see Sec.~\ref{sec:multipleZippers_Setup}).
The notation $\xrightarrow[R \to \infty]{\textbf{moments}}$ refers to the fact all moments with respect to the matrix elements of $U_1^{-1},\cdots,U_s^{-1}$ 
\footnote{More precisely, writing $U_j = \exp[ \sum \kappa_a^{(j)} T^a ]$ for $T^a$ a basis of the Lie algebra of whatever group we choose to work with, the moments with respect to the $\kappa_a^{(j)}$ are determined by the formula. In particular, the matrix elements themselves generally won't be independent since we work with $U(N)$.}
have a well-defined limit as $R \to \infty$ and can be rigorously computed using the above formula. 
The cluster expansion Eq.~\eqref{eq:seriesExpansion_multZippers_1}, the results from the single-zipper case Eq.~\eqref{eq:singleZipper_result_intro_2}, and the Proposition~\ref{prop:regulatedContours_integral} taken together give the equation above.
One can compare this the free-fermion calculation Eq.~\eqref{eq:zipper_expansion_final} and see that it agrees.

The first line comes from terms in the cluster expansion that are of the same form as those in Sec.~\ref{sec:straightLine_zippers}. In particular, the `imaginary part' above will be a sum of the `imaginary parts' of the corresponding calculations from Sec.~\ref{sec:straightLine_zippers}, which we argued in Sec.~\ref{sec:non_universality_im_part} be non-universal. The other terms in the cluster expansion correspond to the second line of the expansion. They will all be universal in the sense that their asymptotics only depend on the continuum paths that the gauge-field configurations approximate. These terms will generically have real and imaginary parts which are both universal.

Above, we've been focusing on well-separated zippers, for which all of the above integrals are well-defined and convergent. When the zippers are allowed to share endpoints, then some of the integrals will diverge logarithmically. On the lattice, everything will be a finite number for a fixed $R$, and it turns out that for well-chosen zippers, a simple modification to the integrals (see Sec.~\ref{sec:nonSeparatedZippers}) will give a finite result with a logarithmic divergence in $R$. 

While we can rigorously compute everything with respect to the moments as above, we do not know how to analyze the convergence of the cluster expansion, in particular of the behavior of the subleading terms. In~\cite{Dubedat2018doubleDimersConformalLoopEnsemblesIsomonodromicDeformations}, this convergence was established for certain subclasses of well-separated zipper configurations with a reflectional symmetry around $\R \subset \C$ and for \textit{unipotent} $U_j \in SL(2,\R)$. 
In this case, the first line in the RHS of Eq.~\eqref{eq:zipper_expansion_lattice_intro} diverging logarithmically with $R$ vanishes.
In particular, choosing unipotent $\{U_j\}$ will automatically give zero for the two-puncture calculations of Sec.~\ref{sec:singleZipper}, including for the `non-universal' imaginary parts. As such in this case, our series expansion is universal.
Then, our result can be seen as rigorously giving a series expansion for Dubédat's version of the tau function. 

In Sec.~\ref{sec:multipleZippers_Setup}, we go through the geometric preliminaries needed to compare the lattice and continuum zipper setups for well-separated zippers. In Sec.~\ref{sec:multZip_clusterExpansion}, we derive the cluster expansion used in our analysis. In Sec.~\ref{sec:multZip_evaluating_clusterExp_terms}, we evaluate the terms in the cluster expansion for well-separated zippers. In Sec.~\ref{sec:nonSeparatedZippers}, we explain how to modify the above steps for non-separated zippers.

\subsection{Setup for well-separated Zippers} \label{sec:multipleZippers_Setup}

The general strategy of constructing well-separated zippers will be similar for this computation as it was for the previous case of a single pair of faces $(F,F')$. In particular, we will consider a deformation $\overline{\partial}(U)$ of the Kasteleyn matrix $\overline{\partial}$ by multiplying the edge weights on the graph by weights that correspond to a connection on the graph with monodromies that jump as $U_j$ left-to-right across $F_j \to F'_j$.

We will be able to arrange these monodromies by considering a collection of $s$ zippers, which are each a path on the dual graph connecting $F_j \to F'_j$. Then we will multiply the edge weights by $U_j^{\pm 1}$ on the edges corresponding to the dual edges of the path, where $U_j$ will be multiplied in the direction right-to-left relative to the path and $U_j^{-1}$ will be multiplied in the direction right-to-left.
\footnote{
    Note that these definitions seem `opposite' to the fact that the monodromy in the continuum is multiplied by $U_j$ going left-to-right. A way to see that they should be inverses is that on the graph (see Sec.~\ref{sec:graph_connection}), the gauge-invariant monodromy for a path $v_1 \to v_2 \to \cdots \to v_k \to v_1$ is computed as $U(v_1 \to v_2) U(v_2 \to v_3) \cdot \cdots \cdot U(v_k \to v_1)$. However in the continuum, the monodromy matrices would be multiplied in the opposite order $\tilde{U}(v_k \to v_1) \cdot \cdots \cdot \tilde{U}(v_2 \to v_3) \tilde{U}(v_1 \to v_2)$. 
    As such, identifying the graph monodromies of $U$ as the inverses of the continuum monodromies $\tilde{U}$ is natural and consistent.
}

However, there will be a difference from the previous section in the specific zippers we use to do our calculation. Previously the weights $U_j^{-1}$ were always pointing in the direction of black-to-white vertices, and we chose zippers along a dual that were approximated by a straight line going from $F \to F'$. In the setup for this calculation, we will still keep the $U_j^{-1}$ factors pointing in the black-to-white direction. However, we will abandon specifically choosing the paths to be `straight lines'. Specifically, we want to keep the zippers themselves well-separated from each other so that the zippers don't have any crossings. The reason for this is with some foresight from the previous discussion is that in the continuum limit, the series expansions for these will require asymptotics of the functions $g_{v \to w}(z)$ for $|v-w|$ large. With even more foresight, we'll have that the integrals will involve an integrand $\frac{1}{(w_1 - w_2) \cdots (w_n - w_1)}$ where the $w_{j}$ are on contours that the zipper approximate. If the zippers cross, then these integrals will diverge, or at least be ambiguous.

\begin{figure}[p!]
\begin{subfigure}
  \centering
  \includegraphics[width=\linewidth]{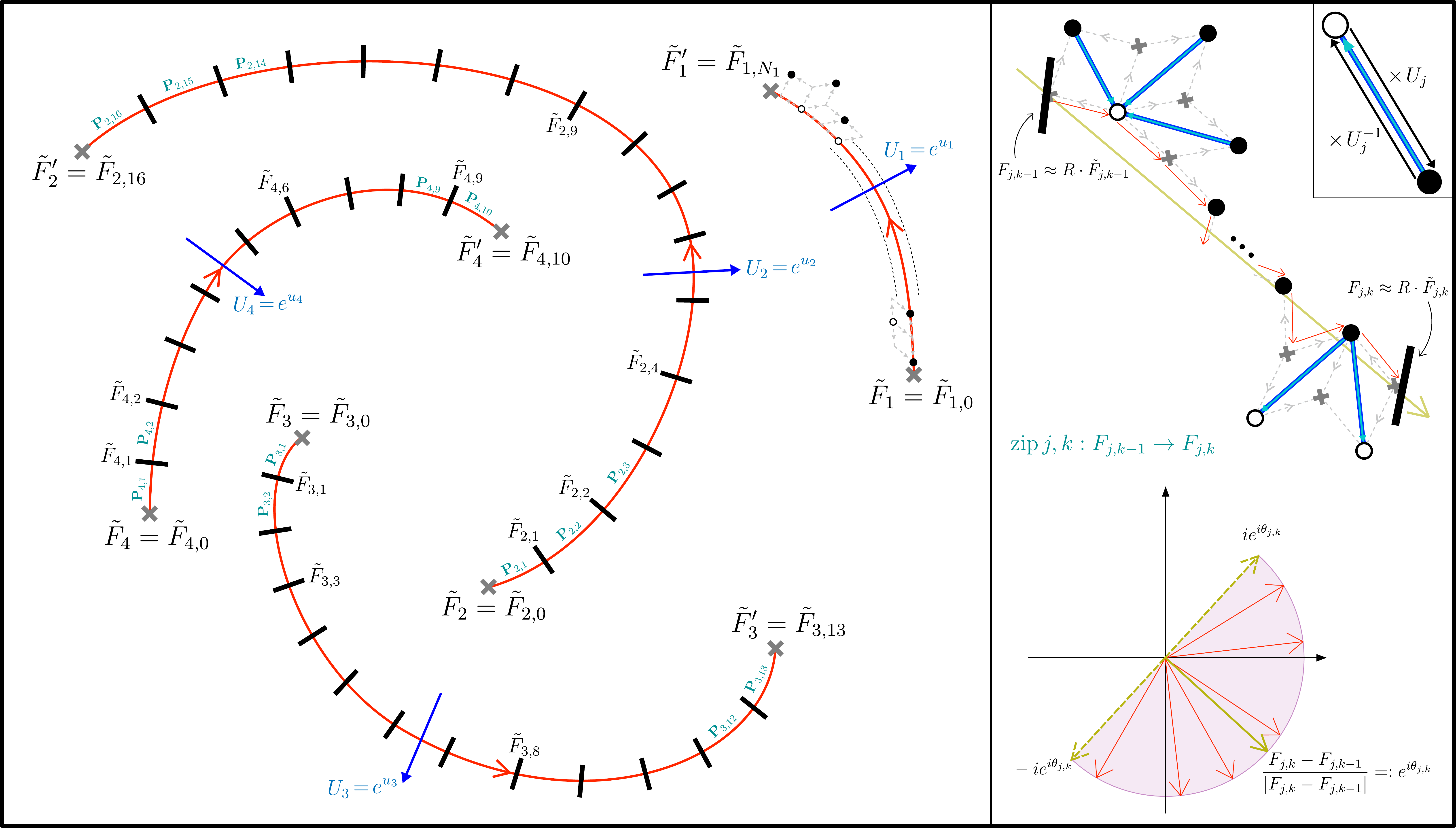}
  \caption{ Example of multiple zippers that are well-separated. (Left) For $j=1\cdots4$, there are four continuous, nonintersecting curves $\bP_j$ in $\C$ that go between $\tilde{F}_j \to \tilde{F}'_j$. They are split into $N_j$ segments, where $N_2 = 16, N_3 = 13, N_4 = 10$ and $N_1$ is unspecified. On the curve $\bP_1$, a depiction (not to scale) of the corresponding rhombus graph is given. 
  We associate a matrix $U_j$ for going right-to-left across the path ${\bf P}_j$, which gives a flat connection on the punctured plane.
  (Right) For some $R \gg 1$, we choose faces $F_{j,k} \approx R \tilde{F}_{j,k}$. Then for each of the segments $\bP_{j,k} : \tilde{F}_{j,k-1} \to \tilde{F}_{j,k}$, we associate a monotone path on the rhombus graph $\mathrm{zip} \, j,k : F_{j,k-1} \to F_{j,k}$. On the top, we depict the straight-line zipper path together with the connection that we put in the graph, associated with the connection in the continuum. On the bottom, we show the allowed range of rhombus angles for these straight-line paths.}
  \label{fig:multipleZippers_example}
\end{subfigure}
\begin{subfigure}
  \centering
  \includegraphics[width=\linewidth]{multipleZippers_splitUpAngle.pdf}
  \caption{ Depicition of necessary conditions for splitting up the curve into sufficiently small segments, namely the existence of angles $\theta^{\mathrm{mid} \,,\, j}_{\{j,j'\},\{k,k'\}} , \theta^{\mathrm{mid} \,,\, j'}_{\{j,j'\},\{k,k'\}}$ and a midpoint $\tilde{F}^{\mathrm{mid}}_{\{j,j'\},\{k,k'\}}$ for which $\tilde{F}_{j,k-1}, \tilde{F}_{j,k}$ and $\tilde{F}_{j',k'-1}, \tilde{F}_{j',k'}$ are in arcs that subtend an angle smaller than $\epsilon/8$ from the midpoint, and for which the rays from the midpoints with angles $\theta^{\mathrm{mid} \,,\, j}_{\{j,j'\},\{k,k'\}} , \theta^{\mathrm{mid} \,,\, j'}_{\{j,j'\},\{k,k'\}}$ going in both directions intersect both line segments.}
  \label{fig:multipleZippers_splitUpAngle}
\end{subfigure}
\end{figure}

As such, we will need to split up our paths into smaller pieces, and if there are crossing between the zippers, then possibly the asymptotics for $g_{v \to w}(z)$ for $v,w$ near the crossings will no longer be valid and the continuum expressions for the integrals wouldn't make sense. Throughout this section, we will assume that all of the zippers in our analysis are contained in a convex
\footnote{By `convex', we mean that for any two vertices in the region, the `straight-line' paths analogous to the ones defined in Section~\ref{sec:straightLine_zippers} and Fig.~\ref{fig:pathOnRhombusGraph} are contained within the region.}
region of the rhombus graph that satisfy the regularity conditions of Appendix~\ref{app:asympt_gFunc}. 

We starting by fixing some notation and talk about how exactly we will phrase the `continuum limit' of our problem. First, note that in the continuum limit, the relative distances between the $\{F_j\}$ and $\{F'_j\}$ will be much larger than the scales relevant to the asymptotics of $g(z)$ and the length 1 of the rhombus edges. And, we are considering the limit as the relative distances goes to $\infty$. As such, we will instead find it convenient to define \textit{fixed} complex numbers $\{\tilde{F}_j\}$ and $\{\tilde{F}'_j\}$ and a additional parameter $R$ for which the coordinate of the faces will be $F_j \approx R \tilde{F}_j$ and $F'_j \approx R \tilde{F}'_j$, where `$\approx$' means that the $F_j,F'_j$ are the closest dual face to the given complex number. We will be looking at terms at leading order in $R$ as $R \to \infty$.

Next, we need to consider how to choose our zippers. First, we want to choose \textit{continuum} paths $\bP_j : \tilde{F}_j \to \tilde{F}'_j$ in the complex plane that are smooth and do not mutually intersect. We will want our zippers to `approximate' these continuum paths in some sense. Recall that in the case of a single zipper, we essentially chose the continuum path $\bP_j$ to be a straight line between $F \to F'$, and we were able to approximate the zipper by choosing a monotone rhombus path. For our case, we may not be able to choose straight paths without them intersecting. So instead, we will split the continuous curve into sufficiently small but still finite-sized pieces on which the zippers can be chosen to be the monotone paths, where `sufficiently small' will be defined soon. In particular, we'll choose to split up the paths into small enough regions that the asymptotics of the functions $g_{F \to F'}(z)$ can be applied for $F,F'$ being endpoints of two different segments. Every continuous path $\bP_j : \tilde{F}_j \to \tilde{F}'_j$ will be split up and turned into some number $N_j$ of segments. There will be $N_j+1$ total points on each $\bP_j$, labeled by complex numbers $\tilde{F}_{j} =: \tilde{F}_{j,0}, \tilde{F}_{j,1}, \tilde{F}_{j,2}, \cdots, \tilde{F}_{j,N_j-1}, \tilde{F}_{j,N_j} := \tilde{F}'_j$. We will give a label $\bP_{j,k} : \tilde{F}_{j,k-1} \to \tilde{F}_{j,k}$ to each of the segments, which each correspond to a monotone, straight-line, segments. Also, we will use $\mathrm{zip} \, j,k : F_{j,k-1} \to F_{j,k}$ to refer to these straight-line segments on the rhombus graph. 
Also, we will find it useful to define the angle $e^{i \theta_{j,k}} = \frac{F_{j,k}-F_{j,k-1}}{|F_{j,k}-F_{j,k-1}|}$. 
See Fig.~\ref{fig:multipleZippers_example} for a depiction. 

Now, we discuss what we mean by `sufficiently small' for the decomposition into segments. This will require the definition of some new auxiliary quantities. The estimate Prop.~\ref{prop:g_expDecay_cone} from Appendix~\ref{app:asympt_gFunc} uses the regularity conditions to show that there exists an angle $\epsilon / 8$ (related to the $\epsilon$ used in the appendix) for which $g_{F \to F'}(z)$ is exponentially suppressed for $z = -|z|e^{i \theta}$ in an intermediate range of $D^{-2/3} < |z| < D^{2/3}$ angles in a range of $\theta \in (\arg(F'-F) - \epsilon/8, \arg(F'-F) + \epsilon/8)$. For the segments to be `sufficiently small' means that for each pair of segments $(\mathrm{zip} \, j,k \, , \, \mathrm{zip} \, j',k')$, there exists a complex number $\tilde{F}^{\mathrm{mid}}_{\{j,j'\},\{k,k'\}}$ in the region subtended by the pair, and two angles $\theta^{\mathrm{mid} \,,\, j}_{\{j,j'\},\{k,k'\}} , \theta^{\mathrm{mid} \,,\, j'}_{\{j,j'\},\{k,k'\}}$ such that both $\tilde{F}_{j,k}, \tilde{F}_{j,k+1}$ are contained in the cone $\tilde{F}^{\mathrm{mid}}_{\{j,j'\},\{k,k'\}} \bigsqcup_{\theta - \theta^{\mathrm{mid} \,,\, j}_{\{j,j'\},\{k,k'\}} \in (-\epsilon/8,\epsilon/8)} e^{i\theta}\R_{>0} $ emanating from $\tilde{F}^{\mathrm{mid}}_{\{j,j'\},\{k,k'\}}$ and so that both $\tilde{F}_{j',k'}, \tilde{F}_{j',k'+1}$ are in the cone $\tilde{F}^{\mathrm{mid}}_{\{j,j'\},\{k,k'\}} \bigsqcup_{\theta - \theta^{\mathrm{mid} \,,\, j'}_{\{j,j'\},\{k,k'\}} \in (-\epsilon/8,\epsilon/8)} e^{i\theta}\R_{>0} $. Furthermore, we require that the rays from $\tilde{F}^{\mathrm{mid}}_{\{j,j'\},\{k,k'\}}$ going out in both directions along the angles $\theta^{\mathrm{mid} \,,\, j}_{\{j,j'\},\{k,k'\}} , \theta^{\mathrm{mid} \,,\, j'}_{\{j,j'\},\{k,k'\}}$ intersect both segments. See Fig.~\ref{fig:multipleZippers_splitUpAngle} for a depiction.

Note that this decomposition is always possible. For $j \neq j'$, it is clear sufficiently subdivided curves will satisfy this since the distance between different segments is bounded from below. For $j = j'$, segments that are far enough apart and small enough can satisfy this. However for segments that are close, the fact that the curves are smooth means that within some neighborhood, the curves well approximate straight lines, and any choice of midpoint $\tilde{F}^{\mathrm{mid}}_{\{j,j'\},\{k,k'\}}$ along the $\bP_j$ between the two segments will be a suitable choice. For \textit{non-adjacent} segments within such a neighborhood, we want to choose $\tilde{F}^{\mathrm{mid}}_{\{j,j'\},\{k,k'\}}$ to be along $\bP_j$ \textit{strictly} in between the segments. For \textit{adjacent} segments $\bP_{j,k},\bP_{j,k+1}$ that share $\tilde{F}_{j,k}$, we will choose $\tilde{F}^{\mathrm{mid}}_{\{j,j\},\{k,k+1\}} = \tilde{F}_{j,k}$. 

\subsection{Cluster Expansion of the twisted determinant: multiple zippers} \label{sec:multZip_clusterExpansion}
Now we will descibe the cluster expansion used in this analysis. 
Consider the Kasteleyn matrix $\overline{\partial}(U)$ as described before, with a connection along the previously mentioned zippers connections contributing $U_1,\cdots,U_s$ to the monodromy going right-to-left across $F_1 \to F'_1, \cdots, F_s \to F'_s$. This is defined by
\begin{equation}
    \overline{\partial}(U) = 
    \underbrace{
        \overline{\partial}^{\bullet \to \circ} 
        + 
        (U^{-1}_1 - 1) \overline{\partial}^{\bullet \to \circ}_{\mathrm{zip} \, 1}
        + \cdots +
        (U^{-1}_s - 1) \overline{\partial}^{\bullet \to \circ}_{\mathrm{zip} \, s}
    }_{=: \overline{\partial}^{\bullet \to \circ}(U) }\
    \,\,+\,\,
    \underbrace{
        \overline{\partial}^{\bullet \to \circ}
        + 
        (U_1 - 1) \overline{\partial}^{\circ \to \bullet}_{\mathrm{zip} \, 1} 
        + \cdots + 
        (U_s - 1) \overline{\partial}^{\circ \to \bullet}_{\mathrm{zip} \, s} 
    }_{=: \overline{\partial}^{\circ \to \bullet}(U) }
\end{equation}
where each $\overline{\partial}^{\bullet \to \circ}_{\mathrm{zip} \, k}$ or $\overline{\partial}^{\circ \to \bullet}_{\mathrm{zip} \, k}$ is the restriction of the $\overline{\partial}$ operator to the $k$th zipper in the black-to-white or white-to-black directions of the lattice. 
As such, we can expand 
\begin{equation*}
\begin{split}
    & \log 
    \frac{ \det \overline{\partial}^{\bullet \to \circ}(U)}{ \det \overline{\partial}^{\bullet \to \circ}}
    = \tr \log \left( \mathbbm{1} 
        - (1-U^{-1}_1) \frac{\overline{\partial}^{\bullet \to \circ}_{\mathrm{zip} \, 1}}{\overline{\partial}^{\bullet \to \circ}}
        - \cdots 
        - (1-U^{-1}_s) \frac{\overline{\partial}^{\bullet \to \circ}_{\mathrm{zip} \, s}}{\overline{\partial}^{\bullet \to \circ}}
      \right)
    \\
    & = 
    -\sum_{n=1}^{\infty} \frac{1}{n} \sum_{i_1 \cdots i_n \in \{1 \cdots s\}} 
        \tr \left[ (1-U^{-1}_1) \cdots (1-U^{-1}_n) \right] \cdot
        \tr \left[
            \frac{\overline{\partial}^{\bullet \to \circ}_{\mathrm{zip} \, i_1}}{\overline{\partial}^{\bullet \to \circ}}
            \cdots
            \frac{\overline{\partial}^{\bullet \to \circ}_{\mathrm{zip} \, i_n}}{\overline{\partial}^{\bullet \to \circ}}
        \right].
\end{split}
\end{equation*}

Note that the terms where $i_1 = i_2 = \cdots = i_n = j$ for all $n$ with some fixed $j$ correspond to the series expansion of 
$\log
\frac
{ \det \left( \overline{\partial}^{\bullet \to \circ} + (1-U_j^{-1})\overline{\partial}^{\bullet \to \circ}_{\mathrm{zip} \, j} \right) }
{ \det \left( \overline{\partial}^{\bullet \to \circ} \right) }$, which was already analyzed in Sec.~\ref{sec:singleZipper}.
We will consider subtracting out these terms and analyzing the remainder. So, we can similarly expand out
\begin{equation} \label{eq:seriesExpansion_multZippers_1}
\begin{split}
    & 
    \frac{ \det \overline{\partial}^{\bullet \to \circ}(U)}{ \det \overline{\partial}^{\bullet \to \circ}}
    -
    \sum_{j=1}^{s}
    \log \frac{ \det \left( \overline{\partial}^{\bullet \to \circ} + (1-U_j^{-1})\overline{\partial}^{\bullet \to \circ}_{\mathrm{zip} \, j} \right) }{ \det \overline{\partial}^{\bullet \to \circ} }
    \\
    & = 
    -\sum_{n=1}^{\infty} \frac{1}{n} \sum_{\substack{i_1 \cdots i_n \in \{1 \cdots s\} \\ i_1 \cdots i_n \text{ not all same}}} 
        \tr \left[ (1-U^{-1}_1) \cdots (1-U^{-1}_n) \right] \cdot
        \tr \left[
            \frac{\overline{\partial}^{\bullet \to \circ}_{\mathrm{zip} \, i_1}}{\overline{\partial}^{\bullet \to \circ}}
            \cdots
            \frac{\overline{\partial}^{\bullet \to \circ}_{\mathrm{zip} \, i_n}}{\overline{\partial}^{\bullet \to \circ}}
        \right].
\end{split}
\end{equation}

Similarly to the series expansion of the single-zipper case, we will show that the leading order contributions to the traces $\tr \left[ \frac{\overline{\partial}^{\bullet \to \circ}_{\mathrm{zip} \, i_1}}{\overline{\partial}^{\bullet \to \circ}} \cdots \frac{\overline{\partial}^{\bullet \to \circ}_{\mathrm{zip} \, i_n}}{\overline{\partial}^{\bullet \to \circ}} \right]$ can be expressed by some continuous integrals over $n$ variables $w_1 \cdots w_n$.

\subsection{Evaluating $\tr \left[ \frac{\overline{\partial}^{\bullet \to \circ}_{\mathrm{zip} \, i_1}}{\overline{\partial}^{\bullet \to \circ}} \cdots \frac{\overline{\partial}^{\bullet \to \circ}_{\mathrm{zip} \, i_n}}{\overline{\partial}^{\bullet \to \circ}} \right]$ for well-separated zippers} \label{sec:multZip_evaluating_clusterExp_terms}

We want to asymptotically evaluate the terms on the left-hand side of Eq.~\eqref{eq:seriesExpansion_multZippers_1} when the $i_1 \cdots i_n$ are not all the same, in the limit of the parameter $R \to \infty$ for fixed $\{ \tilde{F}_j, \tilde{F}'_j \}$. In particular the leading order term will schematically be of the form 
\begin{equation*}
    \tr \left[ \frac{\overline{\partial}^{\bullet \to \circ}_{\mathrm{zip} \, i_1}}{\overline{\partial}^{\bullet \to \circ}} \cdots \frac{\overline{\partial}^{\bullet \to \circ}_{\mathrm{zip} \, i_n}}{\overline{\partial}^{\bullet \to \circ}} \right] \xrightarrow{R \to \infty} 
    \text{``} 
    \frac{1}{(2\pi i)^n}
    \int_{\bP_{i_1}} dw_1 \cdots \int_{\bP_{i_n}} dw_n \frac{1}{(w_1 - w_2) \cdots (w_n - w_1)} + \mathrm{c.c.}
    \text{''}
\end{equation*}
A priori, the integral in quotes on the right-hand-side above is not well-defined because it will be singular whenever $i_{k} = i_{k+1}$. However, we will see instead that the integral above will be a certain \textit{regularization} of the singular expression above. The regularization will involve slightly deforming the contours with with the same $i_{k} = \cdots = i_{k+k'} =: j$ into $k'+1$ contours that are close to $\bP_{j}$ that don't mutually intersect except at the endpoints $\tilde{F}_{j}, \tilde{F}'_j$.

We now describe the particular regularization that pops out of the series expansion. First, note that since $\bP_j$ is a directed curve in the complex plane, there is a canonical `left' and `right' side of the curve. Now suppose $i_{k} = \cdots = i_{k+r-1} =: j \in \{1 \cdots s\}$ are a maximal set of cyclically adjacent indices that are all equal among the indices, i.e. for which $i_{k-1} \neq j$, $i_{k+r} \neq j$. Then we can define $\bP^{(1)}_{j} \cdots \bP^{(r)}_{j}$ to be $r$ non-intersecting curves near $\bP_{j}$ with endpoints at $\tilde{F}_j \to \tilde{F}'_j$ with `left-to-right' order  $\bP^{(r)}_{j} \to \bP^{(r-1)}_{j} \to \cdots \to \bP^{(2)}_{j} \to \bP^{(1)}_{j}$. Then, we'll have the regulated integrals that show up in the series expansions will have $w_{k}, \cdots, w_{k+r-1}$ vary over the curves $\bP^{(1)}_{j}, \cdots, \bP^{(r)}_{j}$. See the below proposition for the formal statement and Figs.~\ref{fig:regulatedContours_example},\ref{fig:splitContours_appendix} for various examples.

\begin{figure}[h!]
  \centering
  \includegraphics[width=\linewidth]{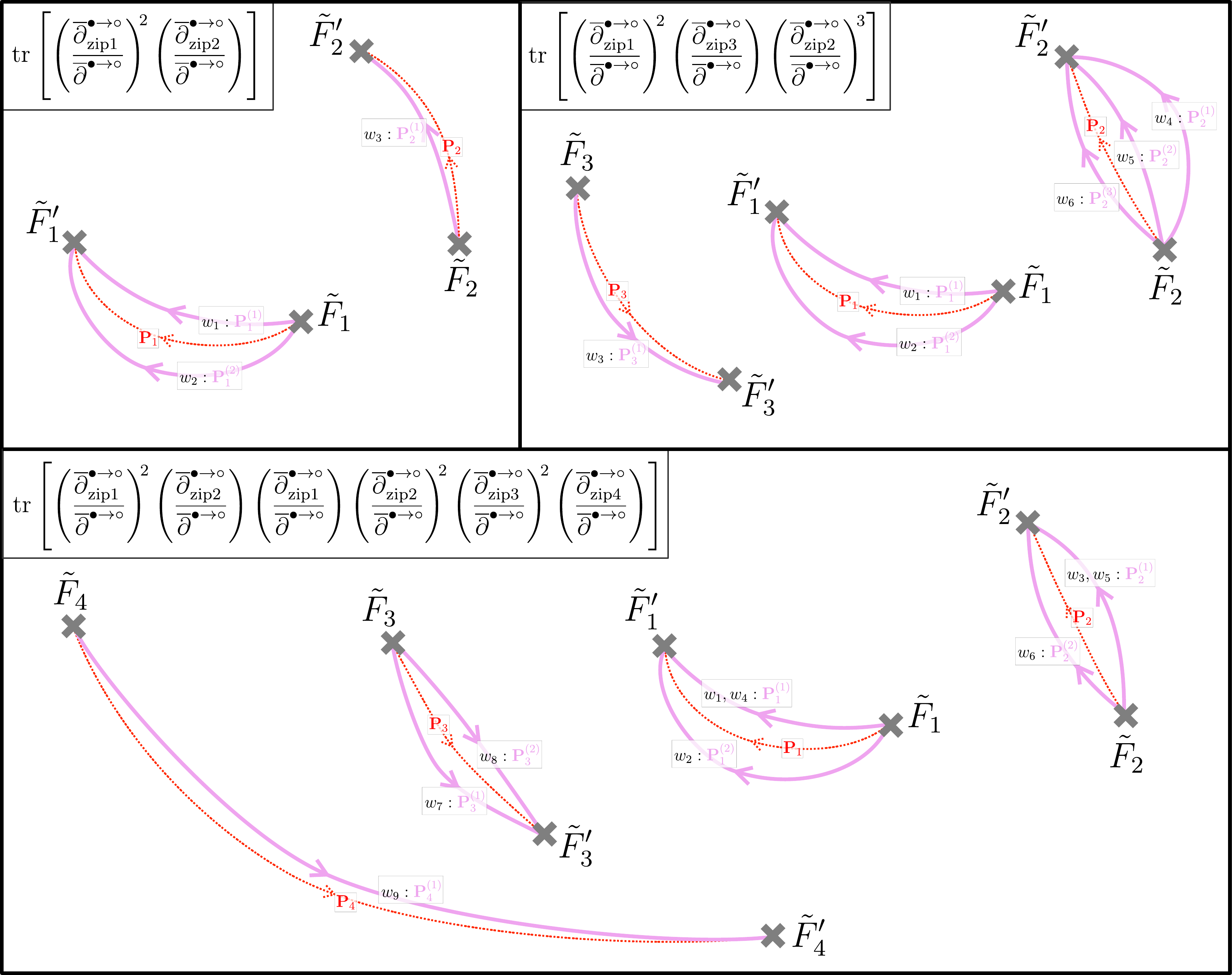}
  \caption{ Various examples of the regulated integration contours of the integrand $\frac{1}{(2 \pi i)^n} \frac{1}{(w_1-w_2) \cdots (w_n-w_1)}$ that appear in the evaluation of $\tr \left[ \frac{\overline{\partial}^{\bullet \to \circ}_{\mathrm{zip} \, i_1}}{\overline{\partial}^{\bullet \to \circ}} \cdots \frac{\overline{\partial}^{\bullet \to \circ}_{\mathrm{zip} \, i_n}}{\overline{\partial}^{\bullet \to \circ}} \right]$. The dotted red lines are the original smooth curves $\{\bP_{\cdots}\}$ used to define the zippers, and are labeled. The solid pink lines are the deformed contours $\{\bP^{(\cdots)}_{\cdots}\}$, and these paths together with the assignments of variables $w_1 \cdots w_n$ to the path (corresponding to the integral in the Prop.~\ref{prop:regulatedContours_integral}) are labeled. (Top-Left) An $n=3$ example with two zippers. (Top-Right) An $n=6$ example with three zippers. (Bottom) An $n=9$ example with four zippers. }
  \label{fig:regulatedContours_example}
\end{figure}

\begin{prop} \label{prop:regulatedContours_integral}
    Let $i_1 \cdots i_n \in \{1 \cdots s\}$ not all be the same index. Let us give adjacent labels, where $i_1 = \cdots = i_{r_1} =: j_1$, $i_{r_1+1} = \cdots = i_{r_1+r_2} =: j_2$, $\cdots$, $i_{r_1+\cdots+r_{t-1}+1} = \cdots = i_{r_1+\cdots+r_{t-1}+r_t} =: j_t$,  with $r_1+\cdots+r_t=m$. Without loss of generality, suppose that $i_1=j_1 \neq j_t = i_n$. Then, we have
    \begin{equation} \label{eq:regulatedContours_integral}
    \begin{split}
        &\tr \left[ \frac{\overline{\partial}^{\bullet \to \circ}_{\mathrm{zip} \, i_1}}{\overline{\partial}^{\bullet \to \circ}} \cdots \frac{\overline{\partial}^{\bullet \to \circ}_{\mathrm{zip} \, i_n}}{\overline{\partial}^{\bullet \to \circ}} \right] \cdot (1 + O(1/R)) + O(1/R) 
        = \tr \left[ 
        \left(\frac{\overline{\partial}^{\bullet \to \circ}_{\mathrm{zip} \, j_1}}{\overline{\partial}^{\bullet \to \circ}}\right)^{r_1}
        \cdots 
        \left(\frac{\overline{\partial}^{\bullet \to \circ}_{\mathrm{zip} \, j_t}}{\overline{\partial}^{\bullet \to \circ}}\right)^{r_t} \right] \cdot (1 + O(1/R)) + O(1/R)
        \\
        &=
        \frac{1}{(2 \pi i)^n} 
        \underbrace{
        \int_{\bP^{(1)}_{j_1}} dw_{1}
        \cdots
        \int_{\bP^{(r_1)}_{j_1}} dw_{r_1}}_{i_1 = \cdots = i_{r_1} = j_1} 
        \cdots
        \underbrace{\int_{\bP^{(1)}_{j_t}} dw_{r_1+\cdots+r_{t-1}+1} 
        \cdots \int_{\bP^{(r_t)}_{j_t}} dw_n
        }_{i_{r_1+\cdots+r_{t-1}+1} = \cdots = i_n = j_t}
        \frac{1}{(w_1 - w_2) \cdots (w_n-w_1)} + \mathrm{c.c.}
    \end{split}
    \end{equation}
    where the contours $\bP^{(\cdots)}_{j_{\cdots}}$ are a choice of regulated contours as defined above. The constants in the different $O(1/R)$ can be chosen to depend only on the continuous paths $\{\bP_j\}$ and their decompositions into segments.
\end{prop}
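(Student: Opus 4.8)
The first equality is essentially a tautology: for consecutive equal indices $i_k=i_{k+1}=j$ the operators $\overline{\partial}^{\bullet \to \circ}_{\mathrm{zip}\,i_k}$ are literally the same matrix, so the product collapses to $\big(\overline{\partial}^{\bullet \to \circ}_{\mathrm{zip}\,j_1}/\overline{\partial}^{\bullet \to \circ}\big)^{r_1}\cdots\big(\overline{\partial}^{\bullet \to \circ}_{\mathrm{zip}\,j_t}/\overline{\partial}^{\bullet \to \circ}\big)^{r_t}$ with $j_\ell \neq j_{\ell+1}$ cyclically; the $(1+O(1/R))$ factors are carried along only for bookkeeping. So the real content is the second equality, and the plan is to run the single-zipper machinery of Sections~\ref{sec:exactExpr_zipperTraces}--\ref{sec:asympt_I2n_singleZip} while tracking the several distinct endpoint pairs $\tilde F_j,\tilde F'_j$ and contour directions at once.

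First I would expand the trace as a sum over a choice of edge $\mathbf{e}_k=(b_k,w_k)$ in zipper $i_k$ for each $k$, represent each inverse-Kasteleyn entry $\frac{1}{\overline{\partial}^{\bullet \to \circ}}(w_k,b_{k+1})$ as a contour integral of $g_{w_k\to b_{k+1}}(z_{k+1})$ via Eq.~\eqref{eq:invKast_contour_formula}, and apply Lemma~\ref{lem:nEdges_zipIdLemma} to rewrite the product $\prod_k \overline{\partial}^{\bullet \to \circ}(b_k,w_k)\,g_{w_k\to b_{k+1}}(z_{k+1})$ as $\frac{(-i)^n}{(z_1-z_2)\cdots(z_n-z_1)}$ times a product of differences of $g$-ratios, with the base points $\tilde v_k$ free. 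Exactly as in Corollary~\ref{cor:traceId_first}, summing each $\mathbf{e}_k$ independently over all of zipper $i_k$ telescopes its factor (using $F^R_{\mathbf{E}_\ell}=F^L_{\mathbf{E}_{\ell+1}}$ along the dual path). Taking all $\tilde v_k$ equal to a single point $p_0$ is legitimate, since the prefactors produced by composition telescope cyclically to $1$ so the answer is independent of $p_0$, generalizing the $\tilde v_i=F$ choice of Corollary~\ref{cor:traceId_first}; this yields the exact representation (with overall prefactor $\tfrac{1}{(2\pi)^n}$, combining to $\tfrac{1}{(2\pi i)^n}$) whose integrand is
\[
\frac{(-i)^n}{(z_1-z_2)\cdots(z_n-z_1)}\prod_{k=1}^{n}\left(\frac{g_{p_0\to \tilde F_{i_k}}(z_k)}{g_{p_0\to \tilde F_{i_k}}(z_{k+1})}-\frac{g_{p_0\to \tilde F'_{i_k}}(z_k)}{g_{p_0\to \tilde F'_{i_k}}(z_{k+1})}\right).
\]
Each factor can then be rewritten, via $g_{p_0\to\tilde F'_{i_k}}=g_{p_0\to\tilde F_{i_k}}\,g_{\tilde F_{i_k}\to\tilde F'_{i_k}}$, to expose the ``long'' discrete exponential $g_{\tilde F_{i_k}\to\tilde F'_{i_k}}$ over the macroscopic distance $D_{i_k}=R|\tilde F'_{i_k}-\tilde F_{i_k}|$, to which Eqs.~\eqref{eq:g_asympt},\eqref{eq:gInv_asympt} apply.

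Next I would choose the $z_k$ contours and insert these asymptotics. This is where the geometric preparation of Section~\ref{sec:multipleZippers_Setup} is indispensable: for every ordered pair of segments the midpoint $\tilde F^{\mathrm{mid}}$ and angles $\theta^{\mathrm{mid}}$ provide a common direction along which $g_{F\to F'}(-|z|)$ is exponentially negligible in the intermediate window $D^{-2/3}<|z|<D^{2/3}$ and is $\gamma e^{-D|z|}$ (near $0$) or $e^{-D/|z|}$ (near $\infty$) otherwise. After substitution I would split each factor into the pieces containing only $g(z_k)$ or only $1/g(z_k)$ — the direct analogue of the $\mathrm{Part}(x_1\cdots x_n)$ decomposition following Eq.~\eqref{eq:I4_splitUp_terms} — Schwinger-parametrize each $\frac{1}{z_k-z_{k+1}}$, and integrate out the $z_k$. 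After rescaling the Schwinger parameters by $R$, the variables $w_k$ run along the curves $\bP_{i_k}$ and the $z$-integrals collapse the integrand to $\frac{1}{(w_1-w_2)\cdots(w_n-w_1)}$; the two asymptotic regimes (near $0$ and near $\infty$, interchanged by $z\mapsto 1/z$) furnish the holomorphic integral and its complex conjugate, with the single-zipper factor of $2$ in $\mathcal{J}_{2n}$ being the degenerate real case. Crucially, inside a block $i_k=\cdots=i_{k+r-1}=j$ the $r$ parameters all sit near $\bP_j$, and tracking which side of $\bP_j$ each variable is pushed to forces them onto the $r$ disjoint displaced curves $\bP^{(1)}_j,\dots,\bP^{(r)}_j$ in the stated left-to-right order — precisely the regularization that renders $\frac{1}{(w_1-w_2)\cdots(w_n-w_1)}$ finite.

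The main obstacle is not the algebra but the uniform error control and the exact identification of the regularized contours. I expect the hardest pieces to be: (i) a multi-zipper version of Lemma~\ref{lem:cornerRegion_oneZip}, showing that ``corner regions'' in which some $z_j\in(0,\pm D^{-2/3})$ while some $z_k\in\pm(D^{2/3},\infty)$ contribute only $O(1/R)$, which requires the segment-by-segment bounds on $1-g/g$ and on differences of $g$ together with well-separatedness to keep each auxiliary kernel of the form $\prod \frac{1}{\tilde z_r+\tilde z_{r+1}}$ integrable; and (ii) verifying that the deformations from $0\to\pm i e^{i\theta}\infty$ to $0\to\pm\infty$ cross no poles and that within each block the displacement into $\bP^{(1)}_j,\dots,\bP^{(r)}_j$ with the correct orientation is forced, so that the output matches Eq.~\eqref{eq:regulatedContours_integral} including the $\mathrm{c.c.}$ term. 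Showing that all implied constants depend only on the continuum curves $\{\bP_j\}$ and their subdivisions — uniformly in $R$ — is what ultimately anchors the estimate to the hypotheses of Section~\ref{sec:multipleZippers_Setup}.
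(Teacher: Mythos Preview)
Your overall architecture is right --- exact identity via Lemma~\ref{lem:nEdges_zipIdLemma}, telescoping over edges, asymptotics of $g$, Schwinger parametrization --- and the first equality is indeed a tautology. But two choices in your plan diverge from what actually makes the proof go through, and one of them is a genuine gap.

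\medskip

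\textbf{The base points cannot all be a single $p_0$.} In Lemma~\ref{lem:nEdges_zipIdLemma} the $\tilde v_k$ are free, and you exploit this by setting them all equal. The paper instead sets $\tilde v_{\rho_1+\cdots+\rho_{j-1}+1}=\mathbf{F}^{m}_{j-1,j}$ (the midpoint between the $(j{-}1)$th and $j$th segments) and $\tilde v_{\rho_1+\cdots+\rho_{j-1}+2}=\cdots=\tilde v_{\rho_1+\cdots+\rho_j}=\mathbf{F}_j$; see Eq.~\eqref{eq:multZip_telescopedSum}. This is not cosmetic. The integration contour for the ``transition'' variable $z_{\rho_1+\cdots+\rho_{j-1}+1}$ must run along the direction $-\exp[i\Theta^{m,j}_{j-1,j}]$ joining the two segments, and the asymptotics of Appendix~\ref{app:asympt_gFunc} for the factor $g_{\mathbf{F}^{m}_{j-1,j}\to\mathbf{F}_j}(z)$ are only available in a narrow cone about that same direction. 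With a single $p_0$ your transition factors become $g_{p_0\to F_{i_k}}(z)$, whose poles and asymptotic cone point along $\arg(F_{i_k}-p_0)$; for generic pairs of segments this is incompatible with the required integration direction, so neither the rotation nor the asymptotic substitution is justified. No single $p_0$ can serve all transitions simultaneously --- this is exactly why the midpoint construction of Section~\ref{sec:multipleZippers_Setup} feeds directly into the choice of base points, not only into the choice of contours.

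\medskip

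\textbf{The contour rotations \emph{do} cross poles, and the residues are the mechanism.} Your item (ii) proposes to ``verify that the deformations from $0\to\pm i e^{i\theta}\infty$ to $0\to\pm\infty$ cross no poles.'' In the single-zipper case that is true because consecutive $z$-contours always end up on opposite rays. In the multi-zipper case it is false: when you split $\mathcal{G}_j$ into its $2^{\lambda_j}$ pieces and rotate the internal contours $z_{\rho_1+\cdots+\rho_{j-1}+2},z_{\rho_1+\cdots+\rho_{j-1}+3},\ldots$ to $\pm e^{i\Theta_j}\infty$, adjacent contours can cross (Figs.~\ref{fig:example_z2z3_rotations},\ref{fig:example_z5_rotations}), and each crossing contributes a residue which is itself a lower-order integral of the same form --- see Eq.~\eqref{eq:contourDeformation_residue_example} and the recursion $\mathrm{Term}(\cdots\,{}_{\rho_j}\,\cdots)=\mathrm{Term}(\cdots\,{+}{-}\,{}_{\rho_j-2}\,\cdots)+\mathrm{Term}(\cdots\,{-}{+}\,{}_{\rho_j-2}\,\cdots)+\mathrm{Term}(\cdots\,{}_{\rho_j-1}\,\cdots)$. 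It is precisely this cascade of residues, after applying the asymptotics and Schwinger-parametrizing, that assembles (via a second round of residue calculus in the $w$-variables, Figs.~\ref{fig:rep_example_contours_210_1st}--\ref{fig:rep_example_contours_210_4th}) into the nested contours $\bP_j^{(1)},\ldots,\bP_j^{(r)}$ in the stated left-to-right order. Your sentence ``tracking which side of $\bP_j$ each variable is pushed to forces them onto the $r$ disjoint displaced curves'' is the statement to be proved, not a step; the actual proof is the inductive bookkeeping summarized in Figs.~\ref{fig:sum_of_terms_zipper}--\ref{fig:general_zipper_proof}, and it depends on the Case~L/Case~R dichotomy of Fig.~\ref{fig:zipper_contours_to_infty}, which in turn relies on the midpoint--basepoint alignment above.
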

The choice of $i_1 \neq i_n$ is done for notational convenience. Note that since the integrands and traces are cyclically symmetric in the $\{ i_1 \cdots i_n \}$, all cases of them not being all the same index can be reduced to the form above.

\begin{proof}
    We will use the decomposition of the zippers into the segments $\{\mathrm{zip} \, j,k = F_{j,k} \to F_{j,k+1}\}$. So for each of the full zippers `$\mathrm{zip} \, j$', we can write
    \begin{equation}
        \overline{\partial}^{\bullet \to \circ}_{\mathrm{zip} \, j} = \overline{\partial}^{\bullet \to \circ}_{\mathrm{zip} \, j,1} + \cdots + \overline{\partial}^{\bullet \to \circ}_{\mathrm{zip} \, j,N_j}
    \end{equation}
    where each of the $\overline{\partial}^{\bullet \to \circ}_{\mathrm{zip} \, j,k}$ is the restriction of the $\overline{\partial}$ operator to the black-to-white part of to the graph along the zipper. From here, our strategy will be to analyze the traces $\tr \left[ \frac{\overline{\partial}^{\bullet \to \circ}_{\mathrm{zip} \, i_1,i'_1}}{\overline{\partial}^{\bullet \to \circ}} \cdots \frac{\overline{\partial}^{\bullet \to \circ}_{\mathrm{zip} \, i_n,i'_n}}{\overline{\partial}^{\bullet \to \circ}} \right]$, where each $i'_k \in \{1 \cdots N_{i_k}\}$. 
    
    It turns out that these traces will have precisely the same form the main proposition. We first establish some notation. Let $\{\bP_{i_{k},i'_{k}}\}$ be the continuous curves along the segments. And for some fixed $j,j'$, we analogously let $\bP^{(1)}_{j,j'}, \bP^{(2)}_{j,j'}, \bP^{(3)}_{j,j'}, \cdots$ be some sequence of curves $\tilde{F}_{j,j'-1} \to \tilde{F}_{j,j'}$ nearby the original $\bP_{j,j'}$ that don't intersect except at the endpoints and so that $\bP^{(1)}_{j,j'} \to \bP^{(2)}_{j,j'} \to \bP^{(3)}_{j,j'} \to \cdots$ is the right-to-left order of the curves with respect to the orientation of $\bP_{j,j'}$. Then, we'll have the following lemma
    \begin{lem} \label{lem:regulatedContours_integral_splitUp}
    Let $q_1 \cdots q_{\tau} \in \{1 \cdots s\}$ not all be the same, and let each $q'_{k} \in \{1 \cdots N_{q_k}\}$, and let positive integers $\{\rho_{k}\}$ satisfy $\rho_1 + \cdots \rho_{\tau} = n$. Then,
    \begin{equation} \label{eq:regulatedContours_integral_splitUp}
    \begin{split}
        &\tr \left[ 
        \left( \frac{\overline{\partial}^{\bullet \to \circ}_{\mathrm{zip} \, q_1,q'_1}}{\overline{\partial}^{\bullet \to \circ}} \right)^{\rho_1}
        \cdots
        \left( \frac{\overline{\partial}^{\bullet \to \circ}_{\mathrm{zip} \, q_{\tau},q'_{\tau}}}{\overline{\partial}^{\bullet \to \circ}} \right)^{\rho_{\tau}}
        \right] \cdot (1 + O(1/R)) + O(1/R)
        \\
        &=
        \frac{1}{(2 \pi i)^n} 
        \underbrace{\int_{\bP^{(1)}_{q_1, q'_1}} dw_{1} \cdots \int_{\bP^{(\rho_1)}_{q_1, q'_1}} dw_{\rho_1}}
        \cdots
        \underbrace{\int_{\bP^{(1)}_{q_{\tau}, q'_{\tau}}} dw_{\rho_1+\cdots+\rho_{\tau-1} + 1} \cdots \int_{\bP^{(\rho_{\tau})}_{q_{\tau}, q'_{\tau}}} dw_n}
        \frac{1}{(w_1 - w_2) \cdots (w_n-w_1)} + \mathrm{c.c.}
    \end{split}
    \end{equation}
    \end{lem}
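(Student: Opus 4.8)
The plan is to follow the template of Corollary~\ref{cor:traceId_first} and the single-zipper asymptotic analysis of Section~\ref{sec:asympt_I2n_singleZip}, but carried out segment-by-segment. First I would expand the trace over black vertices, insert the contour representation of the inverse Kasteleyn matrix from Eq.~\eqref{eq:invKast_contour_formula} (one spectral variable $z_i$ per operator factor), and reindex so that $z_{i+1}$ accompanies $g_{w_i\to b_{i+1}}$:
\[
\tr\!\left[\Big(\tfrac{\overline{\partial}^{\bullet\to\circ}_{\mathrm{zip}\,q_1,q'_1}}{\overline{\partial}^{\bullet\to\circ}}\Big)^{\rho_1}\!\!\cdots\Big(\tfrac{\overline{\partial}^{\bullet\to\circ}_{\mathrm{zip}\,q_\tau,q'_\tau}}{\overline{\partial}^{\bullet\to\circ}}\Big)^{\rho_\tau}\right]=\frac{1}{(2\pi)^n}\sum_{\{b_i,w_i\}}\int\!\cdots\!\int \prod_{i=1}^{n}\overline{\partial}^{\bullet\to\circ}_{\mathrm{zip}\,\mathrm{seg}(i)}(b_i,w_i)\,g_{w_i\to b_{i+1}}(z_{i+1})\,dz_{i+1},
\]
where $\mathrm{seg}(i)$ denotes the segment attached to the $i$-th factor and indices are cyclic. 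The product over the cycle is exactly the left-hand side of Lemma~\ref{lem:nEdges_zipIdLemma}. Applying that identity with the auxiliary points $\tilde v_i$ chosen to be the initial dual vertex $F_{q_k,q'_k-1}$ of the segment to which factor $i$ belongs, and then summing each factor over the edges of its own segment, the left/right dual-face pairs telescope exactly as in the proof of Corollary~\ref{cor:traceId_first}. For a factor in the \emph{interior} of a block of equal indices this collapses to the clean Cauchy factor $1-g_{\mathrm{seg}_k}(z_i)/g_{\mathrm{seg}_k}(z_{i+1})$, while at the boundary between two blocks it leaves cross-segment ratios such as $1/g_{F_{q_{k+1},q'_{k+1}-1}\to F_{q_k,q'_k-1}}(z_{i+1})$ and $g_{\mathrm{seg}_k}(z_i)/g_{F_{q_{k+1},q'_{k+1}-1}\to F_{q_k,q'_k}}(z_{i+1})$.

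Second, I would feed the rotated segment asymptotics of Eqs.~\eqref{eq:g_asympt}--\eqref{eq:gInv_asympt} into this exact integral. The ``sufficiently small'' decomposition of the curves, encoded by the midpoint/cone data of Fig.~\ref{fig:multipleZippers_splitUpAngle}, is exactly what guarantees (via Prop.~\ref{prop:g_expDecay_cone}) that every segment and every cross-segment $g$ is exponentially suppressed throughout the intermediate annulus $D^{-2/3}<|z|<D^{2/3}$; this lets me replace each $g$ on the full ray by its small-$|z|$ exponential $e^{-D|z|}$ or its large-$|z|$ exponential $e^{-D/|z|}$ at the cost of $O(1/R)$, and discard the corner regions by the estimate of Lemma~\ref{lem:cornerRegion_oneZip}. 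As in the single-zipper case I would split each cyclic factor into the products in which every $z_i$ carries either $g(z_i)$ or $1/g(z_i)$ (the $2^{\#\text{blocks}}$ ``Part'' decomposition), shift the alternating contours $0\to\pm i e^{i\theta}\infty$ onto the real rays $0\to\pm\infty$ without crossing poles, and use $z_i\mapsto 1/z_i$ to identify the large-$|z|$ regime with the \emph{complex conjugate} of the small-$|z|$ regime. This last identification is the origin of the ``$+\,\mathrm{c.c.}$'' in Eq.~\eqref{eq:regulatedContours_integral_splitUp}, which degenerates to the factor of two seen in Section~\ref{sec:asympt_I2n_singleZip} precisely when the configuration is invariant under conjugation.

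Third, I would introduce a Schwinger variable $w_i$ per factor via $\frac{1-e^{D(z_i-z_{i+1})}}{z_i-z_{i+1}}=-\int dw_i\,e^{w_i(z_i-z_{i+1})}$ together with its cross-segment analogues, and integrate out all $z_i$ to produce the universal kernel $\frac{1}{(w_1-w_2)\cdots(w_n-w_1)}$. The contour on which $w_i$ lives is read off from the exponent of the corresponding $g$: a factor from segment $(q_k,q'_k)$ produces a contour running along that segment's curve, and the coincident-segment singularities inside a block are regularized exactly as in Proposition~\ref{prop:traceId_second} by placing the $\rho_k$ copies on the nearby non-crossing displacements $\bP^{(1)}_{q_k,q'_k},\dots,\bP^{(\rho_k)}_{q_k,q'_k}$ in right-to-left order. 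The mutual non-crossing of distinct zippers arranged in Section~\ref{sec:multipleZippers_Setup} ensures the resulting multiple integral is absolutely convergent.

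The main obstacle is the block-boundary (cross-segment) bookkeeping in the second step: one must verify that after the region decomposition the cross-segment ratios do not leave residual factors, but instead combine with the clean interior factors to reconstruct precisely the Cauchy kernel linking the contour of the last variable of one block to the first variable of the next, and that all accumulated approximation errors are \emph{uniformly} $O(1/R)$. This uniformity is where the geometric conditions of Fig.~\ref{fig:multipleZippers_splitUpAngle} are indispensable, since they force the cone of validity of the $g$-asymptotics to hold simultaneously for each of the finitely many segment pairs. Once Eq.~\eqref{eq:regulatedContours_integral_splitUp} is established at the segment level, summing it over the decompositions $\overline{\partial}^{\bullet\to\circ}_{\mathrm{zip}\,j}=\sum_{k}\overline{\partial}^{\bullet\to\circ}_{\mathrm{zip}\,j,k}$ of each zipper and regrouping the adjacent segment contours into the full-zipper contours $\bP^{(\cdots)}_{j}$ yields Proposition~\ref{prop:regulatedContours_integral}.
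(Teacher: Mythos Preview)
Your high-level outline (telescope via Lemma~\ref{lem:nEdges_zipIdLemma}, replace each $g$ by its exponential asymptotic, Schwinger-parameterise, then read off the $w$-contours) matches the paper's overall shape, but two load-bearing steps are missing, and one choice you make differs from the paper in a way that matters.

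\textbf{Residues from the interior contour rotations.} Your second step asserts that after the ``Part'' decomposition one can ``shift the alternating contours $0\to\pm i e^{i\theta}\infty$ onto the real rays without crossing poles''. In the multi-segment setting this is not how the initial contours are arranged: for a block with $\rho_j>1$, the variables $z_{\rho_1+\cdots+\rho_{j-1}+2},\dots,z_{\rho_1+\cdots+\rho_j}$ all begin on the \emph{same} ray $0\to -ie^{i\Theta_j}\infty$, not on alternating rays. Once you split the integrand so that each such $z$ carries only one of $g$ or $1/g$, the individual pieces acquire genuine poles at $z_m=z_{m+1}$, and rotating two adjacent variables in opposite directions necessarily crosses one of these poles. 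The paper treats this explicitly (see the discussion of $\widetilde{\mathrm{Term}}(+-{}_0|\cdots)$ versus $\widetilde{\mathrm{Term}}(-+{}_0|\cdots)$ around Eq.~\eqref{eq:contourDeformation_residue_example}): each rotation spawns a residue term which is itself an integral of one lower degree, and this is iterated recursively. Your $2^{\#\text{blocks}}$ decomposition is also not the right count; the paper's splitting is indexed by $\zeta_j^{(2\lambda_j),2},\dots,\zeta_j^{(2\lambda_j),2\lambda_j}$ within each block.

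\textbf{How the right-to-left ordering $\bP^{(1)},\dots,\bP^{(\rho_k)}$ arises.} You say the coincident-segment singularities are ``regularized exactly as in Proposition~\ref{prop:traceId_second}'' by placing the copies in right-to-left order, but that ordering is the conclusion, not an input. In the paper it emerges only after a \emph{second} round of residue calculus, this time in the $w$-variables: the Schwinger parameterisation of each rotated ``Term'' produces contours of the form $({\bf F}_j\to\infty)^{L,R}$ and $(\infty\to{\bf F}'_j)^{L,R}$ (Fig.~\ref{fig:zipper_contours_to_infty}), and one must sum over all the $\eta$-labelled Terms \emph{together with} the residue Terms generated in the $z$-stage, then deform and recombine in $w$-space (Figs.~\ref{fig:rep_example_contours_210_1st}--\ref{fig:rep_example_contours_210_4th} and the inductive Fig.~\ref{fig:general_zipper_proof}) before the nested $\bP^{(1)}_{q_k,q'_k},\dots,\bP^{(\rho_k)}_{q_k,q'_k}$ pattern appears. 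The ``main obstacle'' you flag (cross-segment bookkeeping) is actually the easier part; the hard part is this two-stage residue accounting inside a single block.

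Finally, your choice of auxiliary point $\tilde v_i=F_{q_k,q'_k-1}$ differs from the paper's: for the block-boundary variable $z_{\rho_1+\cdots+\rho_{j-1}+1}$ the paper takes $\tilde v={\bf F}^m_{j-1,j}$, the midpoint of Fig.~\ref{fig:multipleZippers_splitUpAngle}. That midpoint is what makes the single contour $0\to-\exp[i\Theta^{m,j}_{j-1,j}]\infty$ simultaneously valid for the inverse-Kasteleyn integral and lie in the cone where Prop.~\ref{prop:g_expDecay_cone} applies to \emph{both} adjacent segments; with your choice of $\tilde v$ you would have to justify separately that a common asymptotic ray exists.
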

    Note that this proposition implies the full result. In particular, we can first specify the $\{q_k\} , \{\rho_k\}$ so that 
   \begin{equation*}
       ( \underbrace{q_1, \cdots, q_1}_{\rho_1 \text{ times}} ,
         \underbrace{q_2, \cdots, q_2}_{\rho_2 \text{ times}} ,
         \cdots ,
         \underbrace{q_{\tau}, \cdots, q_{\tau}}_{\rho_{\tau} \text{ times}}
       ) = ( i_1, \cdots, i_n ).
   \end{equation*}
    Then we can consider summing over all possible $\{q'_k\}$. Summing over the left-hand-sides of Eq.~\eqref{eq:regulatedContours_integral_splitUp} clearly yields the left-hand-side of Eq.~\eqref{eq:regulatedContours_integral}. Similarly, one can sum over the right-hand-sides of Eq.~\eqref{eq:regulatedContours_integral_splitUp} to yield the right-hand-side of Eq.~\eqref{eq:regulatedContours_integral}, possibly after some contour deformations. As such, proving the above lemma implies the main proposition.
    
    In the remainder of this proof, instead proving the formula for the general case, we will more explicitly show the formula for various simpler cases to avoid tedious notation and explain at the end how the steps generalize in the most general case. 
    
    \subsubsection{Setup and Notation} \label{sec:setupNotation_first_proof}
    
    We introduce some notational compression as follows. Denote $F_{q_j,q'_j-1} =: {\bf F}_{j}$ and $F_{q_j,q'_j} =: {\bf F}'_{j}$ for $j = 1, \cdots, \tau$, so that all of the segments ${\bf F}_{j} \to {\bf F}'_{j}$ will be distinct. 
    Furthermore, denote the paths on the rhombus graph as $F^{\mathrm{mid}}_{\{q_{j-1},q_j\},\{q'_{j-1},q'_j\}} =: {\bf F}^{m}_{j-1,j}$, $\theta^{q_{j},q'_{j}} =: \Theta_{j}$ , $\theta^{\mathrm{mid} , q_{j-1}}_{\{q_{j-1},q_j\},\{q'_{j-1},q'_j\}} =: \Theta^{m ,  j-1}_{j-1,j}$ and $\theta^{\mathrm{mid} \,,\, q_{j}}_{\{q_{j-1},q_j\},\{q'_{j-1},q'_j\}} =: \Theta^{m,j}_{j-1,j}$. 
    For the continuum coordinates, similarly denote $\tilde{F}_{q_j,q'_j-1} =: \tilde{{\bf F}}_{j}$ and $\tilde{F}_{q_j,q'_j} =: \tilde{{\bf F}}'_{1}$, and also $\tilde{{\bf F}}^{\mathrm{mid}}_{\{q_{j-1},q_j\},\{q'_{j-1},q'_j\}} =: \tilde{{\bf F}}^{m}_{j-1,j}$.
    And denote each $\mathrm{zip} \, {q_j,q'_j} =: \widetilde{\mathrm{zip}}_{j}$. We will also denote the continuum paths $\bP^{(j)}_{q_k,q'_k}$ as $\tilde{{\bf F}}_{k} \xrightarrow{(j)} \tilde{{\bf F}}'_{k}$ and also $\bP_{q_k,q'_k}$ as $\tilde{{\bf F}}_{k} \to \tilde{{\bf F}}'_{k}$. In all the above, the indices are cyclic so that $j-1 = 0$ corresponds to the index $\tau$.
    
    To show the lemma, we first claim the following exact equality. 
    \begin{equation} \label{eq:multZip_telescopedSum}
    \begin{split}
        &\tr \left[ 
        \left( \frac{\overline{\partial}^{\bullet \to \circ}_{\widetilde{\mathrm{zip}}_{1}}}{\overline{\partial}^{\bullet \to \circ}} \right)^{\rho_1}
        \cdots
        \left( \frac{\overline{\partial}^{\bullet \to \circ}_{\widetilde{\mathrm{zip}}_{\tau}}}{\overline{\partial}^{\bullet \to \circ}} \right)^{\rho_{\tau}}
        \right] 
        \\ 
        &=\quad \frac{1}{(2 \pi i)^n} 
          {\int}_{0 \to -\exp\left[i \Theta^{m,1}_{\tau,1}\right]\infty}
          dz_1 \,\,
          {\int}_{0 \to -i e^{i \Theta_{1}} \infty}
          dz_2 \,\,
          \,\cdots\, 
          {\int}_{0 \to -i e^{i \Theta_{1}} \infty}
          dz_{\rho_1} \,\,
        \\
        &\quad\quad
          {\int}_{0 \to -\exp\left[i \Theta^{m,2}_{1,2}\right]\infty}
          dz_{\rho_1+1} \,\,
          {\int}_{0 \to -i e^{i \Theta_{2}} \infty}
          dz_{\rho_1+2} \,\,
          \,\cdots\, 
          {\int}_{0 \to -i e^{i \Theta_{2}} \infty}
          dz_{\rho_1+\rho_2} \,\,
        \\
        &\quad\quad \cdots
        \\
        &\quad\quad
         {\int}_{0 \to -\exp\left[i \Theta^{m,\tau}_{\tau-1,\tau}\right]\infty}
         dz_{\rho_1+\cdots+\rho_{\tau-1}+1} \,\,
         {\int}_{0 \to -i e^{i \Theta_{\tau}} \infty}
         dz_{\rho_1+\cdots+\rho_{\tau-1}+2} \,\,
         \,\cdots\, 
         {\int}_{0 \to -i e^{i \Theta_{\tau}} \infty}
         dz_n \,\,
        \frac{1}{(z_1-z_2) \cdots (z_n-z_1)}
        \\
        &\quad\quad\quad
        \cdot\big\{ \,
        \mathcal{G}_1\left(z_{1},\cdots,z_{\rho_1+1}\right)
        \cdot
        \mathcal{G}_2\left(z_{\rho_1+1},z_{\rho_1+2},\cdots,z_{\rho_1+\rho_2+1}\right)
        \,\cdots\,
        \mathcal{G}_{\tau}\left(z_{\rho_1+\cdots+\rho_{\tau-1}+1},z_{\rho_1+\cdots+\rho_{\tau-1}+2},\cdots,z_n,z_1\right)
        \big\}
        \\&\\
        &\text{where for each $j = 1 \cdots \tau$, for $z_{\rho_{1} + \cdots + \rho_{\tau} + 1} = z_{n+1} := z_1$, we define the functions}
        \\
        &\mathcal{G}_{j}\left(z_{\rho_1+\cdots+\rho_{j-1}+1},\cdots,z_{\rho_1+\cdots+\rho_{j}+1}\right)
        =
        \begin{cases}
            &\left( 
                \frac{g_{{\bf F}^{m}_{j-1,j} \to {\bf F}_{j}}(z_{\rho_1+\cdots+\rho_{j-1}+1})}
                     {g_{{\bf F}^{m}_{j,j+1} \to {\bf F}_{j}}(z_{\rho_1+\cdots+\rho_{j-1}+2})}
                -
                \frac{g_{{\bf F}^{m}_{j-1,j} \to {\bf F}'_{j}}(z_{\rho_1+\cdots+\rho_{j-1}+1})}
                     {g_{{\bf F}^{m}_{j,j+1} \to {\bf F}'_{j}}(z_{\rho_1+\cdots+\rho_{j-1}+2})}
            \right) \quad\text{if $\rho_j = 1$}
            \\ & \dotfill \\
            &\left( 
                \frac{g_{{\bf F}^{m}_{j-1,j} \to {\bf F}_{j}}(z_{\rho_1+\cdots+\rho_{j-1}+1})}
                     {1}
                - 
                \frac{g_{{\bf F}^{m}_{j-1,j} \to {\bf F}'_{j}}(z_{\rho_1+\cdots+\rho_{j-1}+1})}
                     {g_{{\bf F}_{j} \to {\bf F}'_{j}}(z_{\rho_1+\cdots+\rho_{j-1}+2})}
            \right)
            \\
            &\times
            \left( 
                1 - \frac{g_{{\bf F}_{j} \to {\bf F}'_{j}}(z_{\rho_1+\cdots+\rho_{j-1}+2})}{g_{{\bf F}_{j} \to {\bf F}'_{j}}(z_{\rho_1+\cdots+\rho_{j-1}+3})} 
            \right)
            \times \cdots \times
            \left( 
                1 - \frac{g_{{\bf F}_{j} \to {\bf F}'_{j}}(z_{\rho_1+\cdots+\rho_{j-1}+\rho_{j}-1})}{g_{{\bf F}_{j} \to {\bf F}'_{j}}(z_{\rho_1+\cdots+\rho_{j-1}+\rho_{j}})}
            \right)
            \\
            &\times \left( 
                \frac{1}
                     {g_{{\bf F}^{m}_{j,j+1} \to {\bf F}_{j}}(z_{\rho_1+\cdots+\rho_{j-1}+\rho_{j}+1})}
                - 
                \frac{g_{{\bf F}_{j}\to {\bf F}'_{j}}(z_{\rho_1+\cdots+\rho_{j-1}+\rho_{j}})}
                     {g_{{\bf F}^{m}_{j,j+1} \to {\bf F}'_{j}}(z_{\rho_1+\cdots+\rho_{j-1}+\rho_{j}+1})}
            \right) \quad\text{if $\rho_j > 1$}.
        \end{cases}
    \end{split}
    \end{equation}
    
    This equality follows from a few steps. First, we recall the fact from Eq.~\eqref{eq:invKast_contour_formula}
    \begin{equation*}
        \frac{1}{\overline{\partial}^{\bullet \to \circ}}(w,b) = \frac{1}{2\pi} \int_{0 \to -e^{i \theta} \infty} dz g_{w \to b}(z) 
        \quad \text{ for any angle } \theta \in (\arg(b-w)-\frac{\pi}{2},\arg(b-w)+\frac{\pi}{2})
    \end{equation*}
    from Lemma~\ref{lem:nEdges_zipIdLemma}. Next, we note that $\theta = \Theta_j + \frac{\pi}{2}$ is an angle that works uniformly for all white vertices $w \in \widetilde{\mathrm{zip}}_j$ and black vertices $b \in \widetilde{\mathrm{zip}}_{j}$, as in Fig.~\ref{fig:pathOnRhombusGraph_connection}. And similarly, $\theta = \Theta^{m,j}_{j-1,j}$ works uniformly for for all white vertices $w \in \widetilde{\mathrm{zip}}_{j-1}$ and black vertices $b \in \widetilde{\mathrm{zip}}_{j}$, which can be seen from Fig.~\ref{fig:multipleZippers_splitUpAngle}. Then, the equality above follows in a similar way to the Corollary~\ref{cor:traceId_first}. In particular, summing the equality of Lemma~\ref{lem:nEdges_zipIdLemma} over all edges in the zipper gives the integrand above via a telescoping sum similar to Corollary~\ref{cor:traceId_first} by choosing each $\tilde{v}_{\rho_1 + \cdots + \rho_{j-1} + 1} =  {\bf F}^{m}_{j-1,j}$ and each $\tilde{v}_{\rho_1 + \cdots + \rho_{j-1} + 2} = \cdots = \tilde{v}_{\rho_1 + \cdots + \rho_{j-1} + \rho_{j}} = {\bf F}_{j}$ if $\rho_{j} > 1$, and the equality follows from the shared integration contours of all expressions. 
    
    \subsubsection{A first example}
    
    The first case we consider will be with $n = 3$ and $\tau = 3$, so that necessarily $\rho_1 = \rho_2 = \rho_3 = 1$. One can look at Fig.~\ref{fig:example_small_zippers_proof} for visual aid, but the specifics of the zipper configurations will not yet be important. The equality Eq.~\eqref{eq:multZip_telescopedSum} will read
    \begin{equation*}
    \begin{split}
        \tr \left[ 
        \frac{\overline{\partial}^{\bullet \to \circ}_{\widetilde{\mathrm{zip}} \, 1}}{\overline{\partial}^{\bullet \to \circ}} 
        \frac{\overline{\partial}^{\bullet \to \circ}_{\widetilde{\mathrm{zip}} \, 2}}{\overline{\partial}^{\bullet \to \circ}} 
        \frac{\overline{\partial}^{\bullet \to \circ}_{\widetilde{\mathrm{zip}} \, 3}}{\overline{\partial}^{\bullet \to \circ}} 
        \right] =
        &
        \frac{1}{(2 \pi i)^3} 
        \int_{0}^{-\exp\left[i \Theta^{m , 1}_{3,1}\right] \infty} dz_{1}
        \int_{0}^{-\exp\left[i \Theta^{m , 2}_{1,2}\right] \infty} dz_{2}
        \int_{0}^{-\exp\left[i \Theta^{m , 3}_{2,3}\right] \infty} dz_{3}
        \frac{1}{(z_1 - z_2)(z_2 - z_3)(z_3 - z_1)} 
        \\
        & \times 
        \left( 
            \frac{g_{{\bf F}^{m}_{3,1} \to {\bf F}_{1}} (z_1)}{g_{{\bf F}^{m}_{1,2} \to {\bf F}_{1}} (z_2)}
          - \frac{g_{{\bf F}^{m}_{3,1} \to {\bf F}'_{1}}(z_1)}{g_{{\bf F}^{m}_{1,2} \to {\bf F}'_{1}}(z_2)}
        \right)
        \left( 
            \frac{g_{{\bf F}^{m}_{1,2} \to {\bf F}_{2}} (z_2)}{g_{{\bf F}^{m}_{2,3} \to {\bf F}_{2}} (z_3)}
          - \frac{g_{{\bf F}^{m}_{1,2} \to {\bf F}'_{2}}(z_2)}{g_{{\bf F}^{m}_{2,3} \to {\bf F}'_{2}}(z_3)}
        \right)
        \left( 
            \frac{g_{{\bf F}^{m}_{2,3} \to {\bf F}_{3}} (z_3)}{g_{{\bf F}^{m}_{3,1} \to {\bf F}_{3}} (z_1)}
          - \frac{g_{{\bf F}^{m}_{2,3} \to {\bf F}'_{3}}(z_3)}{g_{{\bf F}^{m}_{3,1} \to {\bf F}'_{3}}(z_1)}
        \right)
    \end{split}
    \end{equation*}
    Now from here, we can apply the asymptotics of the functions $g_{\cdots \to \cdots}(z)$ from Appendix~\ref{app:asympt_gFunc}. In particular, we have the exponential suppression of all factors $g_{{\bf F}^{m}_{j-1,j} \to {\bf F}_{j}} (z_j)$ , $g_{{\bf F}^{m}_{j-1,j} \to {\bf F}'_{j}} (z_j)$ , $\frac{1}{g_{{\bf F}^{m}_{j-1,j} \to {\bf F}_{j-1}} (z_{j})}$ , $\frac{1}{g_{{\bf F}^{m}_{j-1,j} \to {\bf F}'_{j-1}} (z_{j})}$ in intermediate values of $|z_{j}|$ because of how we  chose the segments ${\bf F}_{\cdots} \to {\bf F}_{\cdots}$ and angles $\Theta^{m , \cdots}_{\cdots,\cdots}$ as in Fig.~\ref{fig:multipleZippers_splitUpAngle}.
    The asymptotics and some more manipulations will give us 
    \begin{equation} \label{eq:firstExample_asymptotics_equation}
    \begin{split}
        &\tr \left[ 
        \frac{\overline{\partial}^{\bullet \to \circ}_{\widetilde{\mathrm{zip}} \, 1}}{\overline{\partial}^{\bullet \to \circ}} 
        \frac{\overline{\partial}^{\bullet \to \circ}_{\widetilde{\mathrm{zip}} \, 2}}{\overline{\partial}^{\bullet \to \circ}} 
        \frac{\overline{\partial}^{\bullet \to \circ}_{\widetilde{\mathrm{zip}} \, 3}}{\overline{\partial}^{\bullet \to \circ}} 
        \right] \cdot (1 + O(1/R)) + O(1/R) 
        \\
        &=
        \frac{1}{(2 \pi i)^3} 
        \int_{0}^{-\exp\left[i \Theta^{m , 1}_{3,1}\right]} dz_{1}
        \int_{0}^{-\exp\left[i \Theta^{m , 2}_{1,2}\right] \infty} dz_{2}
        \int_{0}^{-\exp\left[i \Theta^{m , 3}_{2,3}\right] \infty} dz_{3}
        \frac{1}{(z_1 - z_2)(z_2 - z_3)(z_3 - z_1)} 
        \\
        & \quad\quad\quad\quad\quad\quad\times
        \left( 
            \exp\left[z_1 \overline{({\bf F}_1 - {\bf F}^{m}_{3,1})} - z_2 \overline{({\bf F}_1 - {\bf F}^{m}_{1,2})}\right]
          - \exp\left[z_1 \overline{({\bf F}'_1 - {\bf F}^{m}_{3,1})} - z_2 \overline{({\bf F}'_1 - {\bf F}^{m}_{1,2})}\right]
        \right)
        \\
        & \quad\quad\quad\quad\quad\quad\times
        \left( 
            \exp\left[z_2 \overline{({\bf F}_2 - {\bf F}^{m}_{1,2})} - z_3 \overline{({\bf F}_2 - {\bf F}^{m}_{2,3})}\right]
          - \exp\left[z_2 \overline{({\bf F}'_2 - {\bf F}^{m}_{1,2})} - z_3 \overline{({\bf F}'_2 - {\bf F}^{m}_{2,3})}\right]
        \right)
        \\
        & \quad\quad\quad\quad\quad\quad\times
        \left( 
            \exp\left[z_3 \overline{({\bf F}_3 - {\bf F}^{m}_{2,3})} - z_1 \overline{({\bf F}_3 - {\bf F}^{m}_{3,1})}\right]
          - \exp\left[z_3 \overline{({\bf F}'_3 - {\bf F}^{m}_{2,3})} - z_1 \overline{({\bf F}'_3 - {\bf F}^{m}_{3,1})}\right]
        \right)
        \\
        & \quad +
        \frac{1}{(2 \pi i)^3} 
        \int_{-\exp\left[i \Theta^{m , 1}_{3,1}\right]}^{-\exp\left[i \Theta^{m , 1}_{3,1}\right] \infty} dz_{1}
        \int_{0}^{-\exp\left[i \Theta^{m , 2}_{1,2}\right] \infty} dz_{2}
        \int_{0}^{-\exp\left[i \Theta^{m , 3}_{2,3}\right] \infty} dz_{3}
        \frac{1}{(z_1 - z_2)(z_2 - z_3)(z_3 - z_1)} 
        \\
        & \quad\quad\quad\quad\quad\quad\quad\times
        \left( 
            \exp\left[\frac{1}{z_1} {({\bf F}_1 - {\bf F}^{m}_{3,1})} - \frac{1}{z_2} {({\bf F}_1 - {\bf F}^{m}_{1,2}})\right]
            - 
            \exp\left[\frac{1}{z_1} {({\bf F}'_1 - {\bf F}^{m}_{3,1})} - \frac{1}{z_2} {({\bf F}'_1 - {\bf F}^{m}_{1,2}})\right]
        \right)
        \\
        & \quad\quad\quad\quad\quad\quad\quad\times
        \left( 
            \exp\left[\frac{1}{z_2} {({\bf F}_2 - {\bf F}^{m}_{1,2})} - \frac{1}{z_3} {({\bf F}_2 - {\bf F}^{m}_{2,3})}\right]
            - 
            \exp\left[\frac{1}{z_2} {({\bf F}'_2 - {\bf F}^{m}_{1,2})} - \frac{1}{z_3} {({\bf F}'_2 - {\bf F}^{m}_{2,3})}\right]
        \right)
        \\
        & \quad\quad\quad\quad\quad\quad\quad\times
        \left( 
            \exp\left[\frac{1}{z_3} {({\bf F}_3 - {\bf F}^{m}_{2,3})} - \frac{1}{z_1} {({\bf F}_3 - {\bf F}^{m}_{3,1})}\right]
            - 
            \exp\left[\frac{1}{z_3} {({\bf F}'_3 - {\bf F}^{m}_{2,3})} - \frac{1}{z_1} {({\bf F}'_3 - {\bf F}^{m}_{3,1})}\right]
        \right)
        \\
        &=
        \frac{1}{(2 \pi i)^3} 
        \int_{0}^{-\exp\left[-i \Theta^{m , 1}_{3,1}\right] \infty} dz_{1}
        \int_{0}^{-\exp\left[-i \Theta^{m , 2}_{1,2}\right] \infty} dz_{2}
        \int_{0}^{-\exp\left[-i \Theta^{m , 3}_{2,3}\right] \infty} dz_{3}
        \frac{1}{(z_1 - z_2)(z_2 - z_3)(z_3 - z_1)} 
        \\
        & \quad\quad\quad\quad\quad\quad\times
        \left( e^{{\bf F}'_1  (z_1-z_2)} - e^{{\bf F}_1 (z_1-z_2)} \right)
        \left( e^{{\bf F}'_2  (z_2-z_3)} - e^{{\bf F}_2 (z_2-z_3)} \right)
        \left( e^{{\bf F}'_3  (z_3-z_1)} - e^{{\bf F}_3 (z_3-z_1)} \right)
        \\
        & \quad + \mathrm{c.c.} + O(1/R)
        \\
        &=
        \frac{1}{(+2 \pi i)^3} 
        \int_{0}^{-\exp\left[-i \Theta^{m , 1}_{3,1}\right] \infty} dz_{1}
        \int_{0}^{-\exp\left[-i \Theta^{m , 2}_{1,2}\right] \infty} dz_{2}
        \int_{0}^{-\exp\left[-i \Theta^{m , 3}_{2,3}\right] \infty} dz_{3}
        \int_{{\bf F}_1}^{{\bf F}'_1} dw_{1}
        \int_{{\bf F}_2}^{{\bf F}'_2} dw_{2}
        \int_{{\bf F}_3}^{{\bf F}'_3} dw_{3}
        \\
        & \quad\quad\quad\quad\quad\quad
        \left\{ e^{w_1 (z_1-z_2)} e^{w_2 (z_2-z_3)} e^{w_3 (z_3-z_1)} \right\}
        \\
        & \quad + \mathrm{c.c.} + O(1/R)
        \\
        &=
        \frac{1}{(+2 \pi i)^3}
        \int_{{\bf F}_1}^{{\bf F}'_1} dw_{1}
        \int_{{\bf F}_2}^{{\bf F}'_2} dw_{2}
        \int_{{\bf F}_3}^{{\bf F}'_3} dw_{3}
        \frac{1}{(w_1 - w_2)(w_2 - w_3)(w_3 - w_1)} 
        + \mathrm{c.c.} + O(1/R)
        \\
        &=
        \frac{1}{(+2 \pi i)^3}
        \int_{\tilde{{\bf F}}_1}^{\tilde{{\bf F}}'_1} dw_{1}
        \int_{\tilde{{\bf F}}_2}^{\tilde{{\bf F}}'_2} dw_{2}
        \int_{\tilde{{\bf F}}_3}^{\tilde{{\bf F}}'_3} dw_{3}
        \frac{1}{(w_1 - w_2)(w_2 - w_3)(w_3 - w_1)} 
        + \mathrm{c.c.} + O(1/R)
    \end{split}
    \end{equation}
    where the integrals $w_j : {\bf F}_j \to {\bf F}'_j$ must be taken to be homotopic to the straight line contours ${\bf F}_j \to {\bf F}'_j$ (a fact that will be explained soon), which is consistent with what we wanted to show. Now we explain the above steps and some subtleties in their derivation.
    
    The first equality above is entirely analogous to the first equality of Eq.~\eqref{eq:asympts_oneZipper_order2_part0}, which required a few steps of justification. First, it required the asymptotics of $g_{\cdots \to \cdots}(z_j)$ for the values of $|z_j|$ small (giving the integrand of the first term) and of $|z_j|$ large (giving the second integrand).
    \footnote{A subtle point is that with our choices, sometimes ${\bf F}^{m}_{j,j+1}$ equals one of the ${\bf F}_{j} , {\bf F}'_{j} , {\bf F}_{j+1} , {\bf F}'_{j+1}$ precisely when the two segments correspond to adjacent segments along the zipper. In this case each $|{\bf F}^{m}_{j,j+1} - {\bf F}_{j}|$ , $|{\bf F}^{m}_{j,j+1} - {\bf F}'_{j}|$ , $|{\bf F}^{m}_{j,j+1} - {\bf F}_{j+1}|$ , $|{\bf F}^{m}_{j,j+1} - {\bf F}'_{j+1}|$ is either exactly zero or large when $R \gg 1$, so the asymptotic expansions still are valid, in particular because $g_{F \to F}(z) = 1$ for any face $F$ matches up with $\exp[z(F-F)]=1$.}
    Then, a similar computation as in the Lemma~\ref{lem:cornerRegion_oneZip} shows that the `corner regions' give an $O(1/R)$ contribution to the integrals. Then, extending the asymptotics from the regions of $|z_j|$ small to the whole integration regions give another small $O(1/R)$ contribution.

    The second equality again has a few parts. First, the integrand simplifies because the factors involving the auxiliary midpoints ${\bf F}^{m}_{j-1,j}$ cancel out.
    \footnote{This should not be a surprise, since the original integrand in terms of the $g_{\cdots \to \cdots}$ didn't depend on the exact choice of of the midpoints. Rather, those midpoints were needed as auxiliaries to justify the asymptotic expressions above.}
    Next, we extend the contours of integration of the $z_1$ contour from $|z_1|: 0 \to 1$ to $|z_1|: 0 \to \infty$ and respectively $|z_1|: 1 \to \infty$ to $|z_1|: 0 \to \infty$ for each of the terms. This only gives an extra $O(1/R)$ factor.
    \footnote{
        Note a difference in this computation versus Eq.~\eqref{eq:asympts_oneZipper_order2_part0} in the single zipper case is that we were not able to extend the $z_1$ contour to infinity, because the integrands were more singular and extending the contour would not give a small term (in fact it would diverge) if we did that. Since the integrands here decay more rapidly, we are allowed to extend the integrals at the cost of a small term. Crucially, this depended on our condition that $q_1 \neq q_{\tau}$ (here, $\tau = 3$) which implied that the segments lie on different zippers and the faces $\{{\bf F}_1,{\bf F}'_1\}$ are distinct from $\{{\bf F}_{3},{\bf F}'_{3}\}$. If instead we had, say ${\bf F}_1 = {\bf F}_3$, then 
        $\frac{\left( e^{{\bf F}_3  (z_3-z_1)} - e^{{\bf F}'_3 (z_3-z_1)} \right) \left( e^{{\bf F}_1  (z_1-z_2)} - e^{{\bf F}'_1 (z_1-z_2)} \right)}{(z_3-z_1)(z_1-z_2)} = 
        e^{{\bf F}'_1 (z_3-z_2)} \frac{\left(1 - e^{({\bf F}'_1 - {\bf F}_1)(z_1-z_2)}\right)\left(1 - e^{({\bf F}'_3 - {\bf F}_1)(z_3-z_1)}\right)}{(z_3-z_1)(z_1-z_2)}$ 
        would \textit{not} be exponentially suppressed in $z_1$, so it would not be justified to extend the integration contour.
        \label{foot:extension_warning}
    }
    
    Then, we note that upon changing all variables $z_j \mapsto \frac{1}{z_j}$, the second term and first term are complex conjugates; we choose express the result in terms of the second term.
    
    The third equality follows from the Schwinger parameterizations $ \int_{{\bf F}_j}^{{\bf F}'_j} dw_j e^{w_j (z_{j}-z_{j+1})} = \frac{e^{{\bf F}'_j  (z_{j}-z_{j+1})} - e^{{\bf F}_j (z_{j}-z_{j+1})}}{z_{j}-z_{j+1}}$. At this stage, the exact choice of contours $w_j : {\bf F}_j \to {\bf F}'_j$ may not appear to matter a priori. However, to perform the integrals over all the $z_j$ in the fourth equality, this choice of contour being (homotopic to) a straight line is crucial. In particular, suppose instead that we chose a contour $w_j : {\bf F}_j \to {\bf F}'_j$ that `wrapped around' one of the contours ${\bf F}_{j \pm 1} \to {\bf F}'_{j \pm 1}$. Then, there would be values of $(w_j - w_{j-1})$ {\color{RoyalPurple} [resp. $(w_{j+1} - w_{j})$]} for which the integral over $z_j$ {\color{RoyalPurple} [resp. $z_{j+1}$]} does not converge, because $\int_{0}^{-e^{-i \Theta}\infty} dz \, e^{\kappa z}$ will diverge if $e^{i\Theta}$ and $\kappa$ form an obtuse angle. By Fig.~\ref{fig:multipleZippers_splitUpAngle},  $\mathrm{Re} ( -e^{-i\Theta^{m,j}_{j-1,j}} (w_j-w_{j-1}) ) < 0$ for $w_{j-1}$,$w_j$ along the straight-line segments ${\bf F}_{j-1} \to {\bf F}'_{j-1}$,${\bf F}_{j} \to {\bf F}'_{j}$, so the integral over $z_j$ converges. 

    Indeed, the value of the integral will change upon changing the contour to wrap around in such a way due to the poles at $w_{j} = w_{j \pm 1}$. Such subtleties of contour integration will play a big role in the resulting integrals of more general terms.
    
    The last equality follows from changing variables $w_{j} \mapsto R \cdot w_{j}$, which again changes the integral by a small $O(1/R)$ correction because each ${\bf F}_{j} \approx R \cdot \tilde{{\bf F}}_j$.
    
    \subsubsection{More Setup and More Notation} 
    Before moving on to more representative examples, it will be helpful to setup some more notation to help compactify the computations. 
    Generally, we would like to apply the asymptotics of the functions $g_{\cdots \to \cdots}(z)$ as we did in the previous example. 
    However, if we had some $\rho_j > 1$, then the contours of integration in Eq.~\eqref{eq:multZip_telescopedSum} for $z_{\rho_1 + \cdots + \rho_{j-1} + 2},\cdots,z_{\rho_1 + \cdots + \rho_{j-1} + \rho_{j}}$ would be $0 \to -i e^{i \Theta_j} \infty$ and the integrand would have factors of both $g_{{\bf F}_j \to {\bf F}'_j}$ and $\frac{1}{g_{{\bf F}_j \to {\bf F}'_j}}$. Both of these factors have asymptotic expansions that decay on opposite contours $0 \to -e^{i \Theta_{1}}\infty$ and $0 \to e^{i \Theta_{1}}\infty$, which are mutually incompatible and also incompatible with the original contour.
    
    The strategy to deal with this will be the similar to the single-zipper case 
    in that we need to split up the integrand into various parts so that only one of $g_{{\bf F}_j \to {\bf F}'_j}$ or $\frac{1}{g_{{\bf F}_j \to {\bf F}'_j}}$ 
    appears per term for these variables and then rotate their contours to their relevant rays for asymptotic analysis, often at the cost of residues.
    However, the details of this procedure will be slightly different. In particular, we'll split up the term containing variables
    $z_{\rho_1+\cdots+\rho_{j-1}+2},z_{\rho_1+\cdots+\rho_{j-1}+3}$ and rotate those contours, which would give a residue term.
    Then for the rest of the terms, we do the same thing recursively.
    
    As such, for any even integer $2\lambda_j$ with $0 \le 2\lambda_j \le \rho_j$, we are motivated to write
    
    \begin{equation}
    \begin{split}
        \mathcal{G}_j
        =
        \sum_{\zeta_j^{(2\lambda_j),2},\zeta_j^{(2\lambda_j),4},\cdots,\zeta_j^{(2\lambda_j),2\lambda_j} \in \{0,1\}}
        &g_{{\bf F}^{m}_{j-1,j} \to {\bf F}_{j}}(z_{\rho_1+\cdots+\rho_{j-1}+1})
        \\
        &\times\left( 1 - \frac{g_{{\bf F}_{j} \to {\bf F}'_{j}}(z_{\rho_1+\cdots+\rho_{j-1}+1})}
                        {g_{{\bf F}_{j} \to {\bf F}'_{j}}(z_{\rho_1+\cdots+\rho_{j-1}+2})} \right)
        \times
        \left(
            -\frac{g_{{\bf F}_{j} \to {\bf F}'_{j}}(z_{\rho_1+\cdots+\rho_{j-1}+2})}{g_{{\bf F}_{j} \to {\bf F}'_{j}}(z_{\rho_1+\cdots+\rho_{j-1}+3})} 
        \right)^{\zeta_j^{(2\lambda_j),2}}
        \\
        &\times
        \left( 1 - \frac{g_{{\bf F}_{j} \to {\bf F}'_{j}}(z_{\rho_1+\cdots+\rho_{j-1}+3})}{g_{{\bf F}_{j} \to {\bf F}'_{j}}(z_{\rho_1+\cdots+\rho_{j-1}+4})} \right)
        \times
        \left(
            -\frac{g_{{\bf F}_{j} \to {\bf F}'_{j}}(z_{\rho_1+\cdots+\rho_{j-1}+4})}{g_{{\bf F}_{j} \to {\bf F}'_{j}}(z_{\rho_1+\cdots+\rho_{j-1}+5})} 
        \right)^{\zeta_j^{(2\lambda_j),4}}
        \\
        &\times
        \cdots
        \\
        &\times
        \left( 1 - \frac{g_{{\bf F}_{j} \to {\bf F}'_{j}}(z_{\rho_1+\cdots+\rho_{j-1}+2\lambda_j-1})}{g_{{\bf F}_{j} \to {\bf F}'_{j}}(z_{\rho_1+\cdots+\rho_{j-1}+2\lambda_j})} \right)
        \times
        \left(
          - \frac{g_{{\bf F}_{j} \to {\bf F}'_{j}}(z_{\rho_1+\cdots+\rho_{j-1}+2\lambda_j})}{g_{{\bf F}_{j} \to {\bf F}'_{j}}(z_{\rho_1+\cdots+\rho_{j-1}+2\lambda_j+1})}
        \right)^{\zeta_j^{(2\lambda_j),2\lambda_j}}
        \\
        &\times
        \left( 1 - \frac{g_{{\bf F}_{j} \to {\bf F}'_{j}}(z_{\rho_1+\cdots+\rho_{j-1}+2\lambda_j+1})}{g_{{\bf F}_{j} \to {\bf F}'_{j}}(z_{\rho_1+\cdots+\rho_{j-1}+2\lambda_j+2})} \right)
        \times
        \cdots
        \times
        \left( 1 - \frac{g_{{\bf F}_{j}\to {\bf F}'_{j}}(z_{\rho_1+\cdots+\rho_{j-1}+\rho_{j}})}
        {g_{{\bf F}_{j}\to {\bf F}'_{j}}(z_{\rho_1+\cdots+\rho_{j-1}+\rho_{j}+1})} \right)
        \\
        &\times
        \frac{1}{g_{{\bf F}^{m}_{j,j+1} \to {\bf F}_{j}}(z_{\rho_1+\cdots+\rho_{j-1}+\rho_{j}+1})}
        \\
        =:
        \sum_{\zeta_j^{(2\lambda_j),2},\zeta_j^{(2\lambda_j),4},\cdots,\zeta_j^{(2\lambda_j),2\lambda_j} \in \{0,1\}}
        &
        \mathcal{G}_{j}^{(2\lambda_j)}
        \left\llbracket\zeta_j^{(2\lambda_j),2},\zeta_j^{(2\lambda_j),4},\cdots,\zeta_j^{(2\lambda_j),2\lambda_j}\right\rrbracket
        (z_{\rho_1+\cdots+\rho_{j-1}+1},\cdots,z_{\rho_1+\cdots+\rho_{j-1}+\rho_j+1})
        .
    \end{split}
    \end{equation}
    Above, we split up each factor $\mathcal{G}_j$ into $2^{\lambda_j}$ terms by expanding out the sub-factors
    \begin{equation*}
        \left( 
            1 - \frac{g_{{\bf F}_{j}\to {\bf F}'_{j}}(z_{\rho_1+\cdots+\rho_{j-1}+2})}
                     {g_{{\bf F}_{j}\to {\bf F}'_{j}}(z_{\rho_1+\cdots+\rho_{j-1}+3})} 
        \right),
        \left( 
            1 - \frac{g_{{\bf F}_{j}\to {\bf F}'_{j}}(z_{\rho_1+\cdots+\rho_{j-1}+4})}
                     {g_{{\bf F}_{j}\to {\bf F}'_{j}}(z_{\rho_1+\cdots+\rho_{j-1}+5})} 
        \right),
        \cdots,
        \left( 
            1 - \frac{g_{{\bf F}_{j}\to {\bf F}'_{j}}(z_{\rho_1+\cdots+\rho_{j-1}+2\lambda_j})}
                     {g_{{\bf F}_{j}\to {\bf F}'_{j}}(z_{\rho_1+\cdots+\rho_{j-1}+2\lambda_j+1})}
        \right)
    \end{equation*}
    into its two components, and named them based on the term $\zeta_j^{(2\lambda_j),2},\zeta_j^{(2\lambda_j),4},\cdots,\zeta_j^{(2\lambda_j),2\lambda_j}$
    corresponding to them. Note that each component 
    $\mathcal{G}_{j}^{(2\lambda_j)}
        \left\llbracket\zeta_j^{(2\lambda_j),2},\zeta_j^{(2\lambda_j),4},\cdots,\zeta_j^{(2\lambda_j),2\lambda_j}\right\rrbracket$
    has asymptotics for $g_{{\bf F}_j \to {\bf F}'_j}(z_{\rho_1 + \cdots + \rho_{j-1} + 2\ell})$ that we desire along the axis $(-1)^{\zeta_j^{(2\lambda_j),2\ell}} \exp[i\Theta_1] \infty$ and for $g_{{\bf F}_j \to {\bf F}'_j}(z_{\rho_1 + \cdots + \rho_{j-1} + 2\ell + 1})$ along $(-1)^{1-\zeta_j^{(2\lambda_j),2\ell}} \exp[i\Theta_1] \infty$ if $2\ell+1 \le \rho_j$. This motivates the following definitions.
    
    Now for each $0 \le 2\lambda_j \le \rho_j$ and $\zeta_j^{(2\lambda_j),2\ell} \in \{0,1\}$, define
    \begin{equation}
        \overline{[2\lambda_j+1]} := \mathrm{min}(2\lambda_j+1,\rho_j)
        \quad\quad\quad\quad\quad
        \zeta_j^{(2\lambda_j),2\ell+1} = 1 - \zeta_j^{(2\lambda_j),2\ell} 
        \quad \text{ if } 
        2\lambda_j +1 \le \rho_j
    \end{equation}
    and 
    the strings of characters
    \begin{equation}
        \eta_j^{(2\lambda_j),2\ell}
        := 
        \begin{cases}
            +- \quad \text{ if } \zeta_j^{(2\lambda_j),2\ell} = 0 \\
            -+ \quad \text{ if } \zeta_j^{(2\lambda_j),2\ell} = 1
        \end{cases}
        \text{for } 2\ell < \rho_j
        \quad\quad\quad\quad\quad
        \eta_j^{(2\lambda_j),2\ell}
        := 
        \begin{cases}
            + \quad \text{ if } \zeta_j^{(2\lambda_j),2\ell} = 0 \\
            - \quad \text{ if } \zeta_j^{(2\lambda_j),2\ell} = 1
        \end{cases}
        \text{for } 2\ell = \rho_j.
    \end{equation}
    From here, we will define quantities
    \begin{equation} \label{eq:splitUp_termTilde_recursive}
    \begin{split}
        &\widetilde{\mathrm{Term}}
        \left[ 
            \eta_1^{(2\lambda_1),2} \cdots \eta_1^{(2\lambda_1),2\lambda_1} 
                \,\,{}_{ \rho_1 - \overline{[2\lambda_1+1]}-1 }
            \,\,\Big|\,\, \cdots \,\,\Big|\,\,
            \eta_{\tau}^{(2\lambda_{\tau}),2} \cdots \eta_{\tau}^{(2\lambda_{\tau}),2\lambda_{\tau}} 
                \,\,{}_{ \rho_{\tau} - \overline{[2\lambda_{\tau}+1]}-1 }
        \right]
        \\
        &:=
        \frac{1}{(z_1-z_2) \cdots (z_n-z_1)}
        \\
        &\quad\quad\quad\quad
        \cdot\Bigg\{ \,
        \mathcal{G}_{1}^{(2\lambda_1)}
        \left\llbracket
            \zeta_1^{(2\lambda_1),2},\zeta_1^{(2\lambda_1),4},\cdots,\zeta_1^{(2\lambda_1),2\lambda_1}
        \right\rrbracket
        (z_{1},\cdots,z_{\rho_1+1})
        \\
        &\quad\quad\quad\quad\quad
        \times\,\cdots\,\times
        \\
        &\quad\quad\quad\quad\quad
        \times
        \mathcal{G}_{\tau}^{(2\lambda_{\tau})}
        \left\llbracket
            \zeta_{\tau}^{(2\lambda_{\tau}),2},\zeta_{\tau}^{(2\lambda_{\tau}),4},\cdots,\zeta_{\tau}^{(2\lambda_{\tau}),2\lambda_{\tau}}
        \right\rrbracket
        (z_{\rho_1+\cdots+\rho_{\tau-1}+1},\cdots,z_{\rho_1+\cdots+\rho_{\tau-1}+\rho_{\tau}},z_{1})
        \Bigg\}
    \end{split}
    \end{equation}
    which will be the integrand of the integral 
    \begin{equation} \label{eq:termDefinition}
    \begin{split}
        &\mathrm{Term}
        \left[ 
            \eta_1^{(2\lambda_1),2} \cdots \eta_1^{(2\lambda_1),2\lambda_1} 
                \,\,{}_{ \rho_1 - \overline{[2\lambda_1+1]}-1 }
            \,\,\Big|\,\, \cdots \,\,\Big|\,\,
            \eta_{\tau}^{(2\lambda_{\tau}),2} \cdots \eta_{\tau}^{(2\lambda_{\tau}),2\lambda_{\tau}} 
                \,\,{}_{ \rho_{\tau} - \overline{[2\lambda_{\tau}+1]}-1 }
        \right]
        \\
        &= \frac{1}{(2 \pi i)^n} 
        {\int}_{0 \to -\exp\left[i \Theta^{m,1}_{\tau,1}\right]\infty}
        dz_1 
        \\
        &\quad\quad\quad\quad\,\,
        {\int}_{0 \to (-1)^{\zeta_{1}^{(2 \lambda_1),2}}  \exp[i \Theta_{1}] \infty}
        dz_2
        \,\,\cdots\,\,
        {\int}_{0 \to (-1)^{\zeta_{1}^{(2 \lambda_1),\overline{[2\lambda_1+1]}}} \exp[i \Theta_{1}] \infty}
        dz_{\overline{[2\lambda_1+1]}} \,\,
        \\
        &\quad\quad\quad\quad\,\,
        {\int}_{0 \to -i \exp[i \Theta_{1}] \infty}
        dz_{\overline{[2\lambda_1+1]}+1} 
        \,\,\cdots\,\,
        {\int}_{0 \to -i \exp[i \Theta_{1}] \infty}
        dz_{\rho_1} \,\,
        \\
        &\quad\quad\quad\quad \cdots\cdots
        \\
        &\quad\quad\quad\quad\,\,
        {\int}_{0 \to -\exp\left[i \Theta^{m,1}_{\tau,1}\right]\infty} dz_{\rho_1+\cdots+\rho_{\tau-1}+1}
        \\
        &\quad\quad\quad\quad\,\,
        {\int}_{0 \to (-1)^{\zeta_{\tau}^{(2 \lambda_{\tau}),2}}  \exp[i \Theta_{\tau}] \infty}
        dz_{\rho_1+\cdots+\rho_{\tau-1}+2}
        \,\,\cdots\,\,
        {\int}_{0 \to (-1)^{\zeta_{\tau}^{(2 \lambda_{\tau}),\overline{[2\lambda_{\tau}+1]}}} \exp[i \Theta_{\tau}] \infty}
        dz_{\rho_1+\cdots+\rho_{\tau-1}+\overline{[2\lambda_{\tau}+1]}} \,\,
        \\
        &\quad\quad\quad\quad\,\,
        {\int}_{0 \to -i \exp[i \Theta_{\tau}] \infty}
        dz_{\rho_1+\cdots+\rho_{\tau-1}+\overline{[2\lambda_{\tau}+1]}+1} 
        \,\,\cdots\,\,
        {\int}_{0 \to -i \exp[i \Theta_{\tau}] \infty}
        dz_{\rho_1+\cdots+\rho_{\tau-1}+\rho_{\tau}} \,\,
        \\
        &\quad\quad\quad\quad\quad
        \left\{
            \widetilde{\mathrm{Term}}
            \bigg[ 
                \eta_1^{(2\lambda_1),2} \cdots \eta_1^{(2\lambda_1),2\lambda_1} 
                    \,\,{}_{ \rho_1 - \overline{[2\lambda_1+1]}-1 }
                \,\,\Big|\,\, \cdots \,\,\Big|\,\,
                \eta_{\tau}^{(2\lambda_{\tau}),2} \cdots \eta_{\tau}^{(2\lambda_{\tau}),2\lambda_{\tau}} 
                    \,\,{}_{ \rho_{\tau} - \overline{[2\lambda_{\tau}+1]}-1 }
            \bigg]
        \right\}
    \end{split}
    \end{equation}
    formed by pushing the contours for each $z_{\rho_1+\cdots+\rho_{j-1}+2},\cdots,z_{\rho_1+\cdots+\rho_{j-1}+\overline{[2\lambda_j+1]}}$
    to their locations relevant for asymptotic analysis and leaving the contours for $z_{\rho_1+\cdots+\rho_{j-1}+\overline{[2\lambda_j+1]}+1},\cdots,z_{\rho_1+\cdots+\rho_{j-1}+\rho_j}$
    unrotated, along $0 \to -i \exp[i \Theta_j]\infty$.
    Note that by definition, these integrands satisfy
    \begin{equation}
    \begin{split}
    &\widetilde{\mathrm{Term}}
    \bigg[ 
        \cdots \,\,\Big|\,\,
        \eta_j^{(2\lambda_j),2} \cdots \eta_j^{(2\lambda_j),2\lambda_j} 
        \,\,{}_{ \rho_j - \overline{[2\lambda_j+1]}-1 }
        \,\,\Big|\,\, \cdots 
    \bigg]
    \\
    &\quad=
    \widetilde{\mathrm{Term}}
    \bigg[ 
        \cdots \,\,\Big|\,\,
        \eta_j^{(2\lambda_j),2} \cdots \eta_j^{(2\lambda_j),2\lambda_j} + -
        \,\,{}_{ \rho_j - \overline{[2\lambda_j+1]}- 3 }
        \,\,\Big|\,\, \cdots 
    \bigg]+ 
    \widetilde{\mathrm{Term}}
    \bigg[ 
        \cdots \,\,\Big|\,\,
        \eta_j^{(2\lambda_j),2} \cdots \eta_j^{(2\lambda_j),2\lambda_j} - +
        \,\,{}_{ \rho_j - \overline{[2\lambda_j+1]}- 3 }
        \,\,\Big|\,\, \cdots 
    \bigg]
    \\
    &\text{ if } \rho_j - \overline{[2\lambda_j+1]}-1 > 1
    \\\\ 
    &\widetilde{\mathrm{Term}}
    \bigg[ 
        \cdots \,\,\Big|\,\,
        \eta_j^{(2\lambda_j),2} \cdots \eta_j^{(2\lambda_j),2\lambda_j} 
        \,\,{}_{1}
        \,\,\Big|\,\, \cdots 
    \bigg]
    \\
    &\quad=
    \widetilde{\mathrm{Term}}
    \bigg[ 
        \cdots \,\,\Big|\,\,
        \eta_j^{(2\lambda_j),2} \cdots \eta_j^{(2\lambda_j),2\lambda_j} +
        \,\,{}_{0}
        \,\,\Big|\,\, \cdots 
    \bigg]
    +
    \widetilde{\mathrm{Term}}
    \bigg[ 
        \cdots \,\,\Big|\,\,
        \eta_j^{(2\lambda_j),2} \cdots \eta_j^{(2\lambda_j),2\lambda_j} -
        \,\,{}_{0}
        \,\,\Big|\,\, \cdots 
    \bigg]
    \end{split}
    \end{equation}
    which tells us that the integrands can be recursively defined.

    \subsubsection{Another representative example} \label{sec:mainProof_another_rep_example}
    
    \begin{figure}[p!]
    \begin{subfigure}
        \centering
        \includegraphics[width=0.7\linewidth]{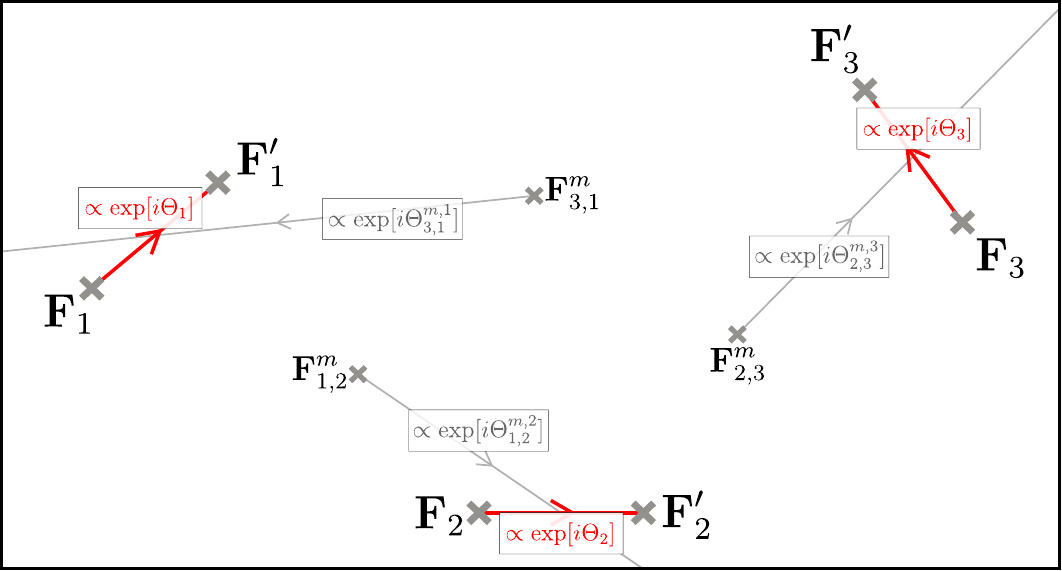}
        \caption{ Example configurations of zipper segments $\widetilde{\mathrm{zip}} \, 1$,  $\widetilde{\mathrm{zip}} \, 2$,  $\widetilde{\mathrm{zip}} \, 3$ and related auxiliary constructions. }
        \label{fig:example_small_zippers_proof}
    \end{subfigure}
    \vspace{2cm}
    \begin{subfigure}
        \centering
        \includegraphics[width=\linewidth]{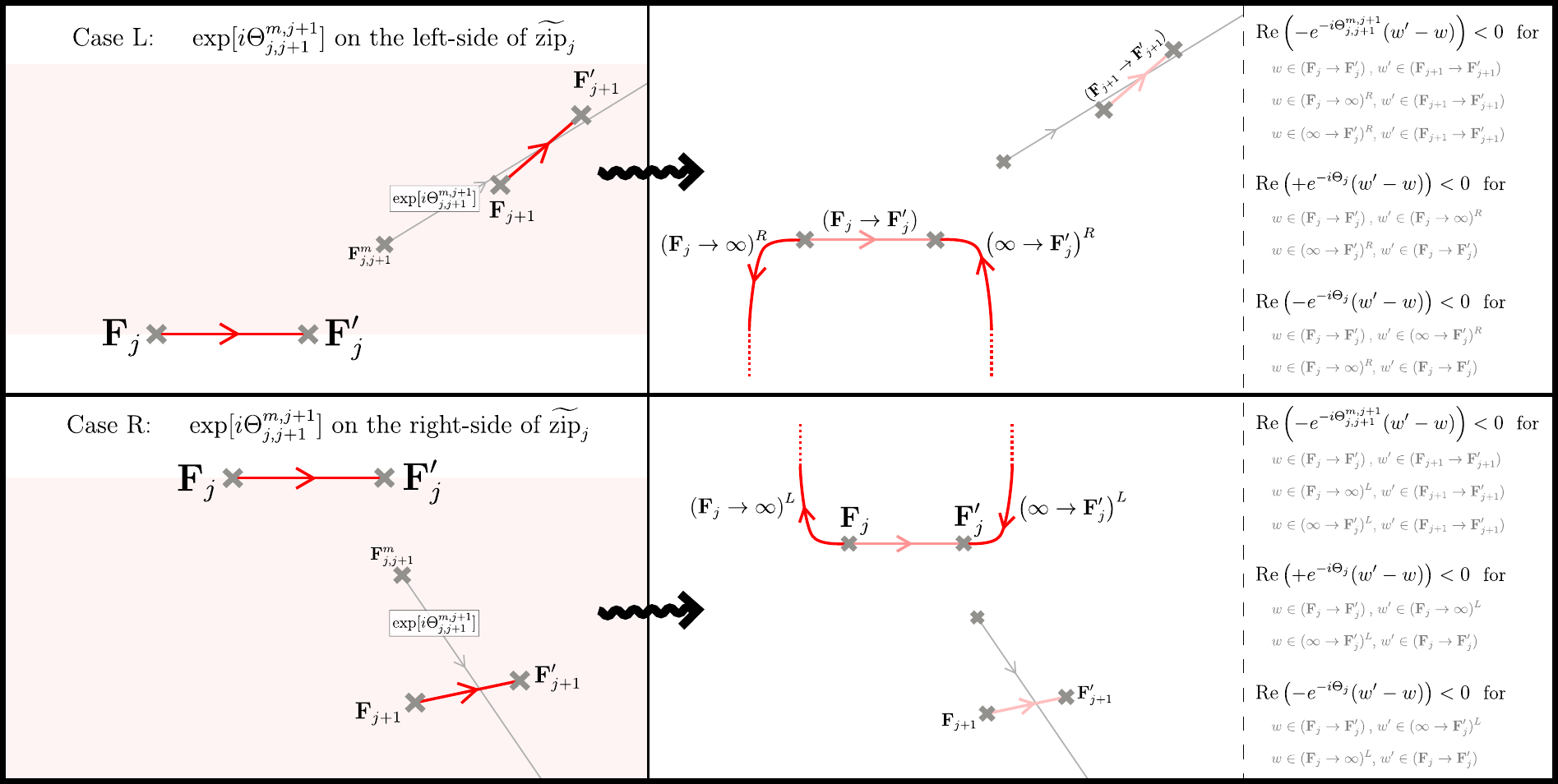}
        \caption{ Given the straight-line segment $({\bf F}_j \to {\bf F}'_j)$, we define several contours as follows. 
        The $({\bf F}_j \to \infty)^L$ and $({\bf F}_j \to \infty)^R$ go out a small amount in the $-\exp[i \Theta_{j}]$ direction then along an infinite ray in the $+i \exp[i \Theta_{j}]$ and $-i \exp[i \Theta_{j}]$ directions respectively. 
        The $(\infty \to {\bf F}'_j)^L$ and $(\infty \to {\bf F}'_j)^R$ start at infinity going along rays in the  $-i \exp[i \Theta_{j}]$ and $+i \exp[i \Theta_{j}]$ directions respectively before turning to the $-\exp[i \Theta_{j}]$ direction to ending at ${\bf F}'_j$. 
        Depending on the orientation of $({\bf F}_{j+1} \to {\bf F}'_{j+1})$, particularly whether $\exp[i \Theta^{m,j+1}_{j,j+1}]$ lies on the left-side (i.e. Case L) or right-side (i.e. Case R) of $({\bf F}_j \to {\bf F}'_j)$, the contours $({\bf F}_j \to \infty)^L$, $({\bf F}_j \to \infty)^R$ or $(\infty \to {\bf F}'_j)^L$, $(\infty \to {\bf F}'_j)^R$ will have the required orientation to apply the Schwinger parameterization trick to evaluate the integrals relevant to our paper. }
        \label{fig:zipper_contours_to_infty}
    \end{subfigure}
    \end{figure}

    Now, we are in a position to consider another representative example, in which we utilize and give examples of the notation we built above. 
    We take $n=6$, $\tau = 3$, and $\rho_1 = 3$, $\rho_2 = 2$, $\rho_3 = 1$ with the zipper segments as in Fig.~\ref{fig:example_small_zippers_proof}. 
    In this example, we'll see several general features of the contour integration procedures that will generalize.
    
    The equality Eq.~\eqref{eq:multZip_telescopedSum} reads
    \begin{equation}
    \begin{split}
        &\tr \left[ 
        \left( \frac{\overline{\partial}^{\bullet \to \circ}_{\widetilde{\mathrm{zip}} \, 1}}{\overline{\partial}^{\bullet \to \circ}} \right)^3
        \left( \frac{\overline{\partial}^{\bullet \to \circ}_{\widetilde{\mathrm{zip}} \, 2}}{\overline{\partial}^{\bullet \to \circ}} \right)^2
        \left( \frac{\overline{\partial}^{\bullet \to \circ}_{\widetilde{\mathrm{zip}} \, 3}}{\overline{\partial}^{\bullet \to \circ}} \right)
        \right]
        =
        {\mathrm{Term}}( {}_2 \,\,|\,\, {}_1 \,\,|\,\, {}_0 )
        \\
        &=
        \frac{1}{(2 \pi i)^6} 
        \int_{0}^{-\exp\left[i \Theta^{m , 1}_{3,1}\right] \infty} dz_{1}
        \int_{0}^{-i\exp\left[i \Theta^{(1)}_{1}\right] \infty}    dz_{2}
        \int_{0}^{-i\exp\left[i \Theta^{(2)}_{1}\right] \infty}    dz_{3}
        \int_{0}^{-\exp\left[i \Theta^{m , 2}_{1,2}\right] \infty} dz_{4}
        \int_{0}^{-i\exp\left[i \Theta_{2}\right] \infty}          dz_{5}
        \\
        &\quad\quad\quad
        \int_{0}^{-\exp\left[i \Theta^{m , 3}_{2,3}\right] \infty} dz_{6}
        \frac{1}{(z_1 - z_2)(z_2 - z_3)(z_3 - z_4)(z_4 - z_5)(z_5 - z_6)(z_6 - z_1)} 
        \left( 
            \frac{g_{{\bf F}^{m}_{3,1} \to {\bf F}_{1}} (z_1)}{1}
          - \frac{g_{{\bf F}^{m}_{3,1} \to {\bf F}'_{1}}(z_1)}{g_{{\bf F}_{1} \to {\bf F}'_{1}}(z_2)}
        \right)
        \\
        & \quad\quad\quad\times 
        \Bigg\{
            \underbrace{
            \quad\quad \begin{matrix}&\, \vspace{-4pt} \\ &1 \\ &\, \vspace{-4pt} \end{matrix} \quad\quad
            }_{\widetilde{\mathrm{Term}}( +- {}_0 \,\,|\,\, {}_1 \,\,|\,\, {}_0 )}
            -
            \underbrace{
            \frac{g_{{\bf F}_{1} \to {\bf F}'_{1}}(z_2)}{g_{{\bf F}_{1} \to {\bf F}'_{1}}(z_3)}
            }_{\widetilde{\mathrm{Term}}( -+ {}_0 \,\,|\,\, {}_1 \,\,|\,\, {}_0 )}
        \Bigg\}
        \left( 
            \frac{1}{g_{{\bf F}^{m}_{1,2} \to {\bf F}_{1}} (z_4)}
          - \frac{g_{{\bf F}_{1} \to {\bf F}'_{1}}(z_3)}{g_{{\bf F}^{m}_{1,2} \to {\bf F}'_{1}}(z_4)}
        \right)
        \left( 
            \frac{g_{{\bf F}^{m}_{1,2} \to {\bf F}_{2}} (z_4)}{1}
          - \frac{g_{{\bf F}^{m}_{1,2} \to {\bf F}'_{2}}(z_4)}{g_{{\bf F}_{2} \to {\bf F}'_{2}}(z_5)}
        \right)
        \\
        & \quad\quad\quad\times 
        \left( 
            \frac{1}{g_{{\bf F}^{m}_{2,3} \to {\bf F}_{2}} (z_6)}
          - \frac{g_{{\bf F}_{2} \to {\bf F}'_{2}}(z_5)}{g_{{\bf F}^{m}_{2,3} \to {\bf F}'_{2}}(z_6)}
        \right)
        \left( 
            \frac{g_{{\bf F}^{m}_{2,3} \to {\bf F}_{3}} (z_6)}{g_{{\bf F}^{m}_{3,1} \to {\bf F}_{3}} (z_1)}
          - \frac{g_{{\bf F}^{m}_{2,3} \to {\bf F}'_{3}}(z_6)}{g_{{\bf F}^{m}_{3,1} \to {\bf F}'_{3}}(z_1)}
        \right),
    \end{split}
    \end{equation}
    where we split up the various underbraced terms in the integrand and define the `$\widetilde{\mathrm{Term}}(\cdots|\cdots|\cdots$)' as the term in the integrand corresponding to them in the same way as defined earlier. 
    
    \begin{figure}[p!]
    \begin{subfigure}
        \centering
        \includegraphics[width=0.8\linewidth]{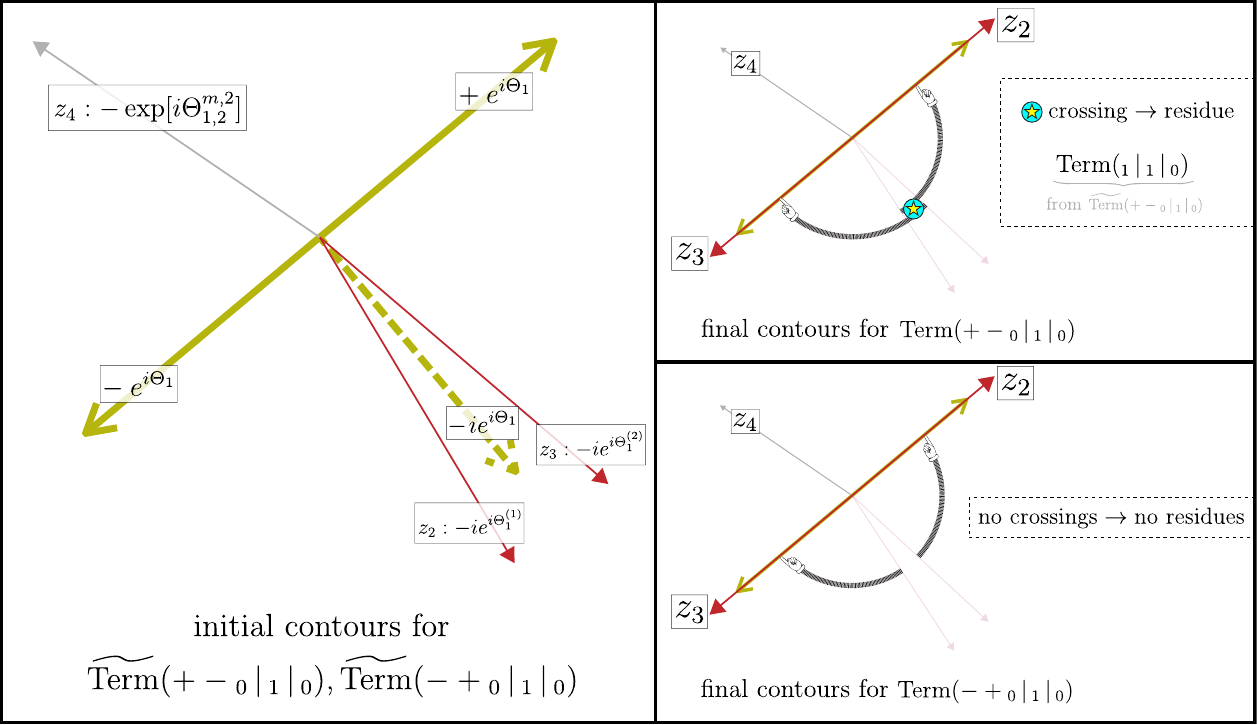}
        \caption{Rotating the $z_2,z_3$ contours in this example from around the $-i e^{i \Theta_{1}}$ axis to their relevant final $+ e^{i \Theta_{1}}$ or $- e^{i \Theta_{1}}$ axes for each of the integrands $\widetilde{\mathrm{Term}} ( + - {}_0 \,\,|\,\, {}_1 \,\,|\,\, {}_0 )$, $\widetilde{\mathrm{Term}} ( - + {}_0 \,\,|\,\, {}_1 \,\,|\,\, {}_0 )$, matching the constructions in Fig.~\ref{fig:example_small_zippers_proof}.  Because of the zipper segments' orientations, these deformations will never cross the $-\exp[i \Theta^{m,2}_{1,2}]$ axis.
        (Left) Relevant integration axes for $z_2,z_3,z_4$.
        (Right-Top) The contour deformation to turn $\widetilde{\mathrm{Term}} ( + - {}_0 \,\,|\,\, {}_1 \,\,|\,\, {}_0 )$ with the initial contours into ${\mathrm{Term}} ( + - {}_0 \,\,|\,\, {}_1 \,\,|\,\, {}_0 )$ has crossings of the $z_2$ and $z_3$ axes, which lead to a residue ${\mathrm{Term}} ( {}_1 \,\,|\,\, {}_1 \,\,|\,\, {}_0 )$ like in Eq.~\eqref{eq:contourDeformation_residue_example}. 
        (Right-Bottom) The deformation to turn the initial integrals over $\widetilde{\mathrm{Term}} ( - + {}_0 \,\,|\,\, {}_1 \,\,|\,\, {}_0 )$ into ${\mathrm{Term}} ( - + {}_0 \,\,|\,\, {}_1 \,\,|\,\, {}_0 )$ have no crossings, thus no residues.}
        \label{fig:example_z2z3_rotations}
    \end{subfigure}
    \vspace{0.5cm}
    \begin{subfigure}
        \centering
        \includegraphics[width=0.95\linewidth]{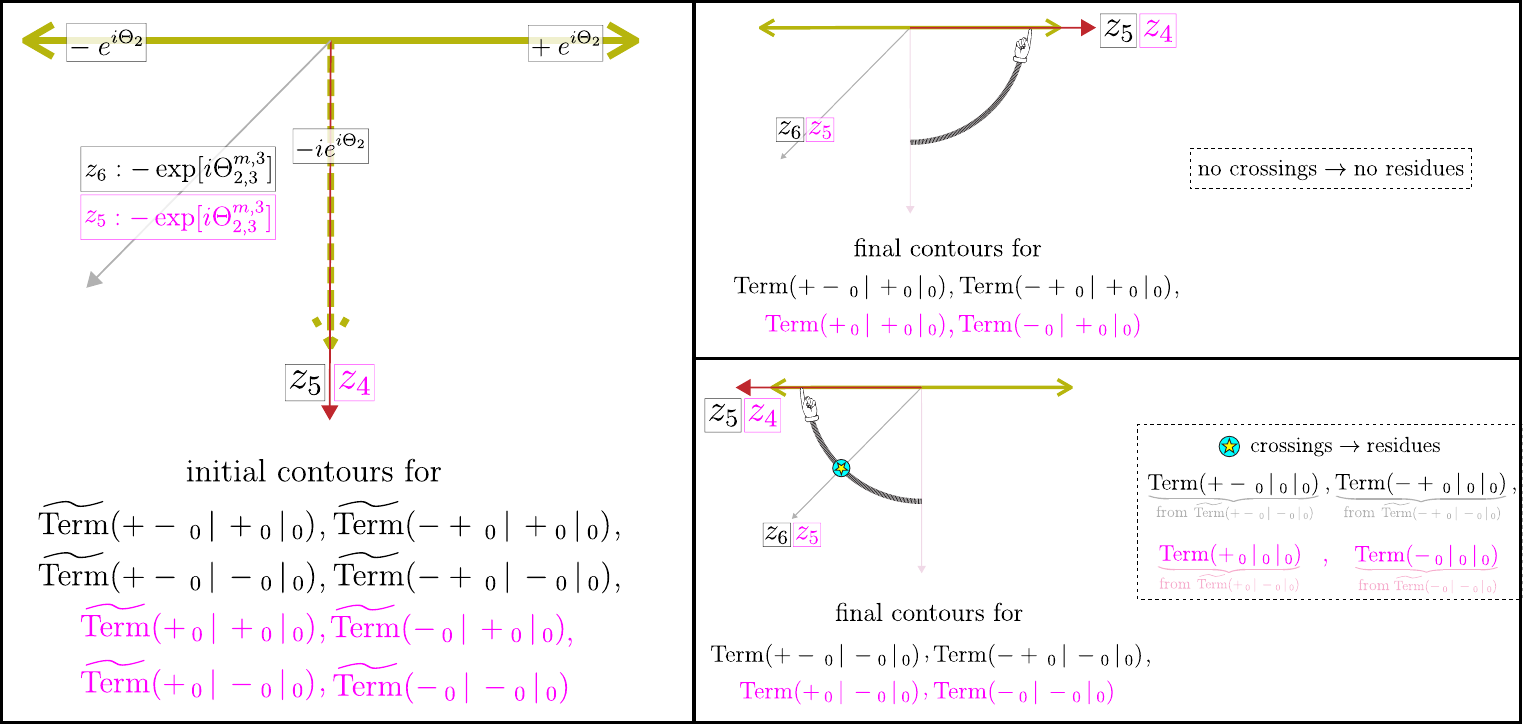}
        \caption{
        Rotating the contours in this example from around the $-i e^{i \Theta_{2}}$ axis to their relevant final $+ e^{i \Theta_{2}}$ or $- e^{i \Theta_{2}}$ axes for the zippers in Fig.~\ref{fig:example_z2z3_rotations}.
        Some of the contour rotations give residues because of the relative configurations of zippers.
        In black are the relevant contours and constructions for
        ${\widetilde{\mathrm{Term}}}( + - {}_0 \,\,|\,\, + {}_0 \,\,|\,\, {}_0 )$,
        ${\widetilde{\mathrm{Term}}}( - + {}_0 \,\,|\,\, + {}_0 \,\,|\,\, {}_0 )$,
        ${\widetilde{\mathrm{Term}}}( + - {}_0 \,\,|\,\, - {}_0 \,\,|\,\, {}_0 )$,
        ${\widetilde{\mathrm{Term}}}( - + {}_0 \,\,|\,\, - {}_0 \,\,|\,\, {}_0 )$
        and in pink are the relevant ones for
        ${\widetilde{\mathrm{Term}}}( + \,{}_0 \,\,|\,\, + {}_0 \,\,|\,\, {}_0 )$,
        ${\widetilde{\mathrm{Term}}}( - \,{}_0 \,\,|\,\, + {}_0 \,\,|\,\, {}_0 )$,
        ${\widetilde{\mathrm{Term}}}( + \,{}_0 \,\,|\,\, - {}_0 \,\,|\,\, {}_0 )$,
        ${\widetilde{\mathrm{Term}}}( - \,{}_0 \,\,|\,\, - {}_0 \,\,|\,\, {}_0 )$.}
        \label{fig:example_z5_rotations}
    \end{subfigure}
    \end{figure}
    
    Above, we defined two split-up contours along the two angles $-i e^{i \Theta_1^{(1)}}, -i e^{i \Theta_1^{(2)}}$ are `near enough' $-i e^{i \Theta_1}$. We choose $-i e^{i \Theta_1^{(1)}}$ lying slightly clockwise to $-i e^{i \Theta_1^{(2)}}$ as in Fig.~\ref{fig:example_z2z3_rotations}.
    Here, `near enough' means the contours avoid the poles of $g_{{\bf F}_1 \to {\bf F}'_1}$ or $\frac{1}{g_{{\bf F}_1 \to {\bf F}'_1}}$. We need to split up the contours because the integrands ${\widetilde{\mathrm{Term}}( +- {}_0 \,\,|\,\, {}_1 \,\,|\,\, {}_0 )}$ and ${\widetilde{\mathrm{Term}}( -+ {}_0 \,\,|\,\, {}_1 \,\,|\,\, {}_0 )}$ have poles at $z_2=z_3$, so we need to split up the contours to get a well-defined integral.
    
    Now, we carry out the contour deformations the above integrals into `$\mathrm{Term}(+- {}_{0} | {}_{1} | {}_0$)',`$\mathrm{Term}(-+ {}_{0} | {}_{1} | {}_0$)' respectively so that we can apply the asymptotics of the $g_{\cdots}$ to the $z_2,z_3$ contours.
    As we alluded to before, some of contour deformations will often be at the cost of a residue caused by the contours crossing at one of poles at $z_2 = z_3$.
    See Fig.~\ref{fig:example_z2z3_rotations} for a depiction of this contour rotation procedure.
    
    Because of the way the zippers are oriented as in Fig.~\ref{fig:example_small_zippers_proof}, we'll have that there are only crossings at $z_2 = z_3$ to obtain $\mathrm{Term} ( +- {}_0 \,\,|\,\, {}_1 \,\,|\,\, {}_0 )$ and there is no such crossing to obtain $\mathrm{Term} ( -+ {}_0 \,\,|\,\, {}_1 \,\,|\,\, {}_0 )$.
    And, we can arrange the crossings of $z_2 = z_3$ to occur along the contour $0 \to -i e^{i \Theta_1} \infty$. 
    These crossings give rise to residues. 
    For example, the integral over $\widetilde{\mathrm{Term}}( +- {}_{0} \,\,|\,\, {}_{1} \,\,|\,\, {}_{0} )$ will become 
    \begin{equation} \label{eq:contourDeformation_residue_example}
    \begin{split}
        & \frac{1}{(2 \pi i)^6} 
        \int_{0}^{-\exp\left[i \Theta^{m , 1}_{3,1}\right] \infty} dz_{1}
        \int_{0}^{-i\exp\left[i \Theta^{(1)}_{1}\right] \infty}    dz_{2}
        \int_{0}^{-i\exp\left[i \Theta^{(2)}_{1}\right] \infty}    dz_{3}
        \int_{0}^{-\exp\left[i \Theta^{m , 2}_{1,2}\right] \infty} dz_{4}
        \int_{0}^{-i\exp\left[i \Theta_{2}\right] \infty}          dz_{5}
        \\
        & \quad\quad\quad
        \int_{0}^{-\exp\left[i \Theta^{m , 3}_{2,3}\right] \infty} dz_{6}
        \Big\{ \widetilde{\mathrm{Term}}( +- {}_{0} \,\,|\,\, {}_{1} \,\,|\,\, {}_{0} ) \Big\}
        \\
        &= {\mathrm{Term}}( +- {}_{0} \,\,|\,\, {}_{1} \,\,|\,\, {}_{0} )
        \\
        &\quad+
        \frac{1}{(2 \pi i)^5} 
        \int_{0}^{-\exp\left[i \Theta^{m , 1}_{3,1}\right] \infty} dz_{1}
        \int_{0}^{-i\exp\left[i \Theta_{1}\right] \infty}          dz_{2}
        \int_{0}^{-\exp\left[i \Theta^{m , 2}_{1,2}\right] \infty} dz_{4}
        \int_{0}^{-i\exp\left[i \Theta_{2}\right] \infty}          dz_{5}
        \\
        & \quad\quad\quad
        \int_{0}^{-\exp\left[i \Theta^{m , 3}_{2,3}\right] \infty} dz_{6}
        \frac{1}{(z_1 - z_2)(z_2 - z_4)(z_4 - z_5)(z_5 - z_6)(z_6 - z_1)}
        \\
        & \quad\quad\quad\times
        \left( 
            \frac{g_{{\bf F}^{m}_{3,1} \to {\bf F}_{1}} (z_1)}{1}
          - \frac{g_{{\bf F}^{m}_{3,1} \to {\bf F}'_{1}}(z_1)}{g_{{\bf F}_{1} \to {\bf F}'_{1}}(z_2)}
        \right)
        \Bigg( 
            \underbrace{
                \frac{1}{g_{{\bf F}^{m}_{1,2} \to {\bf F}_{1}} (z_4)}
            }_{\widetilde{\mathrm{Term}} ( + {}_0  \,\,|\,\, {}_1 \,\,|\,\, {}_0 )}
            - 
            \underbrace{
                \frac{g_{{\bf F}_{1} \to {\bf F}'_{1}}(z_2)}{g_{{\bf F}^{m}_{1,2} \to {\bf F}'_{1}}(z_4)}
            }_{\widetilde{\mathrm{Term}} ( - {}_1  \,\,|\,\, {}_1 \,\,|\,\, {}_0 )}
        \Bigg)
        \left( 
            \frac{g_{{\bf F}^{m}_{1,2} \to {\bf F}_{2}} (z_4)}{1}
            -
            \frac{g_{{\bf F}^{m}_{1,2} \to {\bf F}'_{2}}(z_4)}{g_{{\bf F}_{2} \to {\bf F}'_{2}}(z_5)}
        \right)    
        \\
        & \quad\quad\quad\times    
        \left( 
            \frac{1}{g_{{\bf F}^{m}_{2,3} \to {\bf F}_{2}} (z_6)}
          - \frac{g_{{\bf F}_{2} \to {\bf F}'_{2}}(z_5)}{g_{{\bf F}^{m}_{2,3} \to {\bf F}'_{2}}(z_6)}
        \right)
        \left( 
            \frac{g_{{\bf F}^{m}_{2,3} \to {\bf F}_{3}} (z_6)}{g_{{\bf F}^{m}_{3,1} \to {\bf F}_{3}} (z_1)}
          - \frac{g_{{\bf F}^{m}_{2,3} \to {\bf F}'_{3}}(z_6)}{g_{{\bf F}^{m}_{3,1} \to {\bf F}'_{3}}(z_1)}
        \right)
        \\
        &= {\mathrm{Term}}( +- {}_{0} \,\,|\,\, {}_{1} \,\,|\,\, {}_{0} )
         + {\mathrm{Term}}( {}_{1} \,\,|\,\, {}_{1} \,\,|\,\, {}_{0} ).
    \end{split}
    \end{equation}
    In the right-side of the first equality, we get the rotated $\mathrm{Term} ( + - {}_0 \,\,|\,\, {}_1 \,\,|\,\, {}_0 )$ and the plus a residue term. For the residue term, after changing notation $z_4,z_5,z_6 \mapsto z_3,z_4,z_5$, we can split the integrand up into two parts that correspond to $\widetilde{\mathrm{Term}} ( + \, {}_0 \,\,|\,\, {}_1 \,\,|\,\, {}_0 ) + \widetilde{\mathrm{Term}} ( - \, {}_0 \,\,|\,\, {}_1 \,\,|\,\, {}_0 )$ which add up to ${\mathrm{Term}} ( {}_1 \,\,|\,\, {}_1 \,\,|\,\, {}_0 ) $ when integrated. 
    In total, we summarize Fig.~\ref{fig:example_z2z3_rotations} and the above calculations by the following equality:
    \begin{equation} \label{eq:example_equality_pt1}
    \begin{split}
        {\mathrm{Term}}( {}_2 \,\,|\,\, {}_1 \,\,|\,\, {}_0 )
        =
        \underbrace{
            {\mathrm{Term}}( + - {}_0 \,\,|\,\, {}_1 \,\,|\,\, {}_0 ) 
        }_{\text{from integration of } {\widetilde{\mathrm{Term}}}( + - {}_0 \,\,|\,\, {}_1 \,\,|\,\, {}_0 ) }
        \quad\quad
        +
        \underbrace{
            {\mathrm{Term}}( - + {}_0 \,\,|\,\, {}_1 \,\,|\,\, {}_0 ) 
        }_{\text{from integration of } {\widetilde{\mathrm{Term}}}( - + {}_0 \,\,|\,\, {}_1 \,\,|\,\, {}_0 ) } 
        +
        \underbrace{
            {\mathrm{Term}}( {}_1 \,\,|\,\, {}_1 \,\,|\,\, {}_0 )
        }_{\text{residue}}
    \end{split}
    \end{equation}
    
    At this point, we still haven't rotated all of the contours away from the ray $-i e^{i \Theta_1}$ since the integrals over
    $\widetilde{\mathrm{Term}} ( + {}_0 \,\,|\,\, {}_1 \,\,|\,\, {}_0 )$,$\widetilde{\mathrm{Term}} ( - {}_0 \,\,|\,\, {}_1 \,\,|\,\, {}_0 )$ 
    remain unrotated. We follow the same contour rotation procedure as before. This time, the integrand has a pole at $z_2 = z_3$. However, similarly to Fig.~\ref{fig:example_z2z3_rotations} this is not crossed. Additionally, we will not encounter any more residues since we only rotate one contour. This gives
    \begin{equation} \label{eq:example_equality_pt2}
    \begin{split}
        &{\mathrm{Term}}( {}_1 \,\,|\,\, {}_1 \,\,|\,\, {}_0 )
        \\
        &=
        \frac{1}{(2 \pi i)^5} 
        \int_{0}^{-\exp\left[i \Theta^{m , 1}_{3,1}\right] \infty} dz_{1}
        \int_{0}^{-i\exp\left[i \Theta_{1}\right] \infty}          dz_{2}
        \int_{0}^{-\exp\left[i \Theta^{m , 2}_{1,2}\right] \infty} dz_{3}
        \int_{0}^{-i\exp\left[i \Theta_{2}\right] \infty}          dz_{4}
        \int_{0}^{-\exp\left[i \Theta^{m , 3}_{2,3}\right] \infty} dz_{5}
        \\
        &\quad\quad\quad
        \Big\{ 
          {\widetilde{\mathrm{Term}} ( + \,{}_0  \,\,|\,\, {}_1 \,\,|\,\, {}_0 )} + {\widetilde{\mathrm{Term}} ( - \,{}_0  \,\,|\,\, {}_1 \,\,|\,\, {}_0 )} 
        \Big\}
        \\
        &= {{\mathrm{Term}} ( + \,{}_0 \,\,|\,\, {}_1 \,\,|\,\, {}_0 )} 
         + {{\mathrm{Term}} ( - \,{}_0  \,\,|\,\, {}_1 \,\,|\,\, {}_0 )}.
    \end{split}
    \end{equation}
    
    Now, we do the same procedure as above to rotate all of the remaining contours away from $-i e^{i \Theta_2}$.
    Recall from Eq.~\eqref{eq:splitUp_termTilde_recursive} that each
    ${\widetilde{\mathrm{Term}}}(\, \cdots \,\,|\,\,   {}_1 \,\,|\,\, \cdots \,)
    ={\widetilde{\mathrm{Term}}}(\, \cdots \,\,|\,\, + {}_0 \,\,|\,\, \cdots \,)
    +{\widetilde{\mathrm{Term}}}(\, \cdots \,\,|\,\, - {}_0 \,\,|\,\, \cdots \,)$.
    We need to rotate the the $z_5$ contour for
    ${\widetilde{\mathrm{Term}}}( + - {}_0 \,\,|\,\, + {}_0 \,\,|\,\, {}_0 )$,
    ${\widetilde{\mathrm{Term}}}( + - {}_0 \,\,|\,\, - {}_0 \,\,|\,\, {}_0 )$,
    ${\widetilde{\mathrm{Term}}}( - + {}_0 \,\,|\,\, + {}_0 \,\,|\,\, {}_0 )$,
    ${\widetilde{\mathrm{Term}}}( - + {}_0 \,\,|\,\, - {}_0 \,\,|\,\, {}_0 )$
    and respectively the $z_4$ contour for
    ${\widetilde{\mathrm{Term}}}( + {}_0 \,\,|\,\, + {}_0 \,\,|\,\, {}_0 )$,
    ${\widetilde{\mathrm{Term}}}( + {}_0 \,\,|\,\, - {}_0 \,\,|\,\, {}_0 )$,
    ${\widetilde{\mathrm{Term}}}( - {}_0 \,\,|\,\, + {}_0 \,\,|\,\, {}_0 )$,
    ${\widetilde{\mathrm{Term}}}( - {}_0 \,\,|\,\, - {}_0 \,\,|\,\, {}_0 )$.
    In this case however, because of the ways that the zippers are configured, some of the contour rotations will cross the the pole along the $z_6$ and respectively $z_5$ contour which give some residues. See Fig.~\ref{fig:example_z5_rotations}.

    For example, we get
    \begin{equation} \label{eq:example_equality_pt3}
    \begin{split}
        &{\mathrm{Term}}( + - {}_0 \,\,|\,\,   {}_1 \,\,|\,\, {}_0 )
        \\
        &=
        \frac{1}{(2 \pi i)^6} 
        \int_{0}^{-\exp\left[i \Theta^{m , 1}_{3,1}\right] \infty} dz_{1}
        \int_{0}^{-i\exp\left[i \Theta^{(1)}_{1}\right] \infty}    dz_{2}
        \int_{0}^{-i\exp\left[i \Theta^{(2)}_{1}\right] \infty}    dz_{3}
        \int_{0}^{-\exp\left[i \Theta^{m , 2}_{1,2}\right] \infty} dz_{4}
        \int_{0}^{-i\exp\left[i \Theta_{2}\right] \infty}          dz_{5}
        \\
        & 
        \quad\quad\quad\quad
        \int_{0}^{-\exp\left[i \Theta^{m , 3}_{2,3}\right] \infty} dz_{6}
        \left\{
        \widetilde{\mathrm{Term}}( + - {}_0 \,\,|\,\, + \,{}_0 \,\,|\,\, {}_0 ) 
        +
        \widetilde{\mathrm{Term}}( + - {}_0 \,\,|\,\, - \,{}_0 \,\,|\,\, {}_0 )
        \right\}
        \\
        &=
        {\mathrm{Term}}( + - {}_0 \,\,|\,\, + \,{}_0 \,\,|\,\, {}_0 ) 
        +
        {\mathrm{Term}}( + - {}_0 \,\,|\,\, - \,{}_0 \,\,|\,\, {}_0 )
        +
        {\mathrm{Term}}( + - {}_0 \,\,|\,\,   {}_0 \,\,|\,\, {}_0 ).
    \end{split}
    \end{equation}
    In general, we get the equalities
    \begin{equation} \label{eq:example_equality_pt4}
    \begin{split}
    &
    {\mathrm{Term}}( + - {}_0 \,\,|\,\,   {}_1 \,\,|\,\, {}_0 )
    =
    {\mathrm{Term}}( + - {}_0 \,\,|\,\, + {}_0 \,\,|\,\, {}_0 ) 
    +
    {\mathrm{Term}}( + - {}_0 \,\,|\,\, - {}_0 \,\,|\,\, {}_0 )
    +
    {\mathrm{Term}}( + - {}_0 \,\,|\,\,   {}_0 \,\,|\,\, {}_0 )
    \\
    &
    {\mathrm{Term}}( - + {}_0 \,\,|\,\,   {}_1 \,\,|\,\, {}_0 )
    =
    {\mathrm{Term}}( - + {}_0 \,\,|\,\, + {}_0 \,\,|\,\, {}_0 ) 
    +
    {\mathrm{Term}}( - + {}_0 \,\,|\,\, - {}_0 \,\,|\,\, {}_0 )
    +
    {\mathrm{Term}}( - + {}_0 \,\,|\,\,   {}_0 \,\,|\,\, {}_0 )
    \\
    &
    {\mathrm{Term}}( + \,{}_0 \,\,|\,\,   {}_1 \,\,|\,\, {}_0 )
    =
    {\mathrm{Term}}( + \,{}_0 \,\,|\,\, + {}_0 \,\,|\,\, {}_0 ) 
    +
    {\mathrm{Term}}( + \,{}_0 \,\,|\,\, - {}_0 \,\,|\,\, {}_0 )
    +
    {\mathrm{Term}}( + \,{}_0 \,\,|\,\,   {}_0 \,\,|\,\, {}_0 )
    \\
    &
    {\mathrm{Term}}( - \,{}_0 \,\,|\,\,   {}_1 \,\,|\,\, {}_0 )
    =
    {\mathrm{Term}}( - \,{}_0 \,\,|\,\, + {}_0 \,\,|\,\, {}_0 ) 
    +
    {\mathrm{Term}}( - \,{}_0 \,\,|\,\, - {}_0 \,\,|\,\, {}_0 )
    +
    {\mathrm{Term}}( - \,{}_0 \,\,|\,\,   {}_0 \,\,|\,\, {}_0 ).
    \end{split}
    \end{equation}
    In total, by the equalities Eq.~(\ref{eq:example_equality_pt1},\ref{eq:example_equality_pt2},\ref{eq:example_equality_pt3},\ref{eq:example_equality_pt4}), we get
    \begin{equation} \label{eq:example_term210_rotated}
    \begin{split}
        {\mathrm{Term}}( {}_2 \,\,|\,\,   {}_1 \,\,|\,\, {}_0 )
        &=
        {\mathrm{Term}}( + - {}_0 \,\,|\,\, + {}_0 \,\,|\,\, {}_0 )
        +
        {\mathrm{Term}}( + - {}_0 \,\,|\,\, - {}_0 \,\,|\,\, {}_0 )
        +
        {\mathrm{Term}}( - + {}_0 \,\,|\,\, + {}_0 \,\,|\,\, {}_0 )
        +
        {\mathrm{Term}}( - + {}_0 \,\,|\,\, - {}_0 \,\,|\,\, {}_0 )
        \\
        &\,\,\,+
        {\mathrm{Term}}( + \,{}_0 \,\,|\,\, + {}_0 \,\,|\,\, {}_0 )
        +
        {\mathrm{Term}}( + \,{}_0 \,\,|\,\, - {}_0 \,\,|\,\, {}_0 )
        +
        {\mathrm{Term}}( - \,{}_0 \,\,|\,\, + {}_0 \,\,|\,\, {}_0 )
        +
        {\mathrm{Term}}( - \,{}_0 \,\,|\,\, - {}_0 \,\,|\,\, {}_0 )
        \\
        &\,\,\,+
        {\mathrm{Term}}( + - {}_0 \,\,|\,\, {}_0 \,\,|\,\, {}_0 )
        +
        {\mathrm{Term}}( - + {}_0 \,\,|\,\, {}_0 \,\,|\,\, {}_0 )
        \\
        &\,\,\,+
        {\mathrm{Term}}( + \,{}_0 \,\,|\,\, {}_0 \,\,|\,\, {}_0 )
        +
        {\mathrm{Term}}( - \,{}_0 \,\,|\,\, {}_0 \,\,|\,\, {}_0 ).
    \end{split}
    \end{equation}
    
    Now, we're in a position to apply the asymptotic analysis to these terms. One of these terms is
    \begin{equation}
    \begin{split}
        &{\mathrm{Term}}( + - {}_0 \,\,|\,\, - {}_0 \,\,|\,\, {}_0 )
        \\
        &=
        \frac{1}{(2 \pi i)^6} 
        \int_{0}^{-\exp\left[i \Theta^{m , 1}_{3,1}\right] \infty} dz_{1}
        \int_{0}^{+\exp\left[i \Theta_{1}\right] \infty} dz_{2}
        \int_{0}^{-\exp\left[i \Theta_{1}\right] \infty} dz_{3}
        \int_{0}^{-\exp\left[i \Theta^{m , 2}_{1,2}\right] \infty} dz_{4}
        \int_{0}^{-i\,\exp\left[i \Theta_{2}\right] \infty} dz_{5}
        \\
        & \quad\quad\quad
        \int_{0}^{-\exp\left[i \Theta^{m , 3}_{2,3}\right] \infty} dz_{6}
        \frac{1}{(z_1 - z_2)(z_2 - z_3)(z_3 - z_4)(z_4 - z_5)(z_5 - z_6)(z_6 - z_1)}
        \left( 
            \frac{g_{{\bf F}^{m}_{3,1} \to {\bf F}_{1}} (z_1)}{1}
          - \frac{g_{{\bf F}^{m}_{3,1} \to {\bf F}'_{1}}(z_1)}{g_{{\bf F}_{1} \to {\bf F}'_{1}}(z_2)}
        \right)
        \\
        & \quad\quad\quad\quad\quad\quad\quad\quad\quad\quad\quad\quad\times
        (1)
        \left( 
            \frac{1}{g_{{\bf F}^{m}_{1,2} \to {\bf F}_{1}} (z_4)}
          - \frac{g_{{\bf F}_{1} \to {\bf F}'_{1}}(z_3)}{g_{{\bf F}^{m}_{1,2} \to {\bf F}'_{1}}(z_4)}
        \right)
        \left( 
            \frac{g_{{\bf F}^{m}_{1,2} \to {\bf F}_{2}} (z_4)}{1}
          - \frac{g_{{\bf F}^{m}_{1,2} \to {\bf F}'_{2}}(z_4)}{g_{{\bf F}_{2} \to {\bf F}'_{2}}(z_5)}
        \right)
        \\
        & \quad\quad\quad\quad\quad\quad\quad\quad\quad\quad\quad\quad\times
        \left( - \frac{g_{{\bf F}_{2} \to {\bf F}'_{2}}(z_5)}{g_{{\bf F}_{2} \to {\bf F}'_{2}}(z_6)} \right)
        \frac{1}{g_{{\bf F}^{m}_{2,3} \to {\bf F}_{2}} (z_6)}
        \left( 
            \frac{g_{{\bf F}^{m}_{2,3} \to {\bf F}_{3}} (z_6)}{g_{{\bf F}^{m}_{3,1} \to {\bf F}_{3}} (z_1)}
          - \frac{g_{{\bf F}^{m}_{2,3} \to {\bf F}'_{3}}(z_6)}{g_{{\bf F}^{m}_{3,1} \to {\bf F}'_{3}}(z_1)}
        \right)
    \end{split}
    \end{equation}
    whose asymptotics give 
    \begin{equation} \label{eq:example_asymptotic_splitTerm}
    \begin{split}
        &{\mathrm{Term}}( + - {}_0 \,\,|\,\, - {}_0 \,\,|\,\, {}_0 )
        \cdot (1 + O(1/R)) + O(1/R)
        \\
        &=
        \frac{1}{(+2 \pi i)^6} 
        \int_{0}^{-\exp\left[-i \Theta^{m , 1}_{3,1}\right] \infty} dz_{1}
        \int_{0}^{+\exp\left[-i \Theta_{1}\right] \infty} dz_{2}
        \int_{0}^{-\exp\left[-i \Theta_{1}\right] \infty} dz_{3}
        \int_{0}^{-\exp\left[-i \Theta^{m , 2}_{1,2}\right] \infty} dz_{4}
        \int_{0}^{-\exp\left[-i \Theta_{2}\right] \infty} dz_{5}
        \\
        & \quad\quad\quad
        \int_{0}^{-\exp\left[-i \Theta^{m , 3}_{2,3}\right] \infty} dz_{6}
        \frac{\left( e^{{\bf F}'_1  (z_1-z_2)} - e^{{\bf F}_1 (z_1-z_2)} \right)}{z_1 - z_2}
        \frac{\left(                           - e^{{\bf F}_1 (z_2-z_3)} \right)}{z_2 - z_3}
        \frac{\left( e^{{\bf F}'_1  (z_3-z_4)} - e^{{\bf F}_1 (z_3-z_4)} \right)}{z_3 - z_4}
        \\
        & \quad\quad\quad\quad\quad\quad\quad\quad\quad\quad\quad\quad\times
        \frac{\left( e^{{\bf F}'_2  (z_4-z_5)} - e^{{\bf F}'_2 (z_4-z_5)} \right)}{z_4 - z_5}
        \frac{\left( e^{{\bf F}'_2  (z_5-z_6)                           } \right)}{z_5 - z_6}
        \frac{\left( e^{{\bf F}'_3  (z_6-z_1)} - e^{{\bf F}'_3 (z_6-z_1)} \right)}{z_6 - z_1}
        \\
        & \quad + \mathrm{c.c.} + O(1/R)
        \\
        &=
        \frac{1}{(+2 \pi i)^6} 
        \int_{0}^{-\exp\left[-i \Theta^{m , 1}_{3,1}\right] \infty} dz_{1}
        \int_{0}^{+\exp\left[-i \Theta_{1}\right] \infty} dz_{2}
        \int_{0}^{-\exp\left[-i \Theta_{1}\right] \infty} dz_{3}
        \int_{0}^{-\exp\left[-i \Theta^{m , 2}_{1,2}\right] \infty} dz_{4}
        \int_{0}^{-\exp\left[-i \Theta_{2}\right] \infty} dz_{5}
        \\
        & \quad\quad\quad
        \int_{0}^{-\exp\left[-i \Theta^{m , 3}_{2,3}\right] \infty} dz_{6}
        \int_{{\bf F}_1}^{{\bf F}'_1} dw_{1}
        \int_{({\bf F}_1 \to \infty)^L} dw_{2}
        \int_{{\bf F}_1}^{{\bf F}'_1} dw_{3}
        \int_{{\bf F}_2}^{{\bf F}'_2} dw_{4}
        \int_{(\infty \to {\bf F}'_2)^R} dw_{5}
        \int_{{\bf F}_3}^{{\bf F}'_3} dw_{6}
        \\
        & \quad\quad\quad\quad\quad\quad
        \left\{ e^{w_1 (z_1-z_2)} e^{w_2 (z_2-z_3)} e^{w_3 (z_3-z_4)} e^{w_4 (z_4-z_5)} e^{w_5 (z_5-z_6)} e^{w_6 (z_6-z_1)} \right\}
        \\
        & \quad + \mathrm{c.c.} + O(1/R)
        \\
        &=
        \frac{1}{(+2 \pi i)^6} 
        \int_{{\bf F}_1}^{{\bf F}'_1} dw_{1}
        \int_{({\bf F}_1 \to \infty)^L} dw_{2}
        \int_{{\bf F}_1}^{{\bf F}'_1} dw_{3}
        \int_{{\bf F}_2}^{{\bf F}'_2} dw_{4}
        \int_{(\infty \to {\bf F}'_2)^R} dw_{5}
        \int_{{\bf F}_3}^{{\bf F}'_3} dw_{6}
        \\
        & \quad\quad\quad\quad\quad\quad
        \left\{
        \frac{1}{(w_1-w_2)(w_2-w_3)(w_3-w_4)(w_4-w_5)(w_5-w_6)(w_6-w_1)}
        \right\}
        \\
        & \quad + \mathrm{c.c.} + O(1/R).
    \end{split}
    \end{equation}
    The first equality above follows from the same reasoning as second equality of Eq.~\eqref{eq:firstExample_asymptotics_equation}. 
    In the second equality above, the contours $w_2: ({\bf F}_1 \to \infty)^L$ and $w_5: (\infty \to {\bf F}'_2)^R$ refer to integration contours as defined and depicted in Fig.~\ref{fig:zipper_contours_to_infty}. As in that figure, we note that the relevant contours are configured in such a way that the individual integrals over each of the $z_{\cdots}$ converge.
    The second equality itself is then similar to the third equality of Eq.~\eqref{eq:firstExample_asymptotics_equation}.
    The third equality above again is similar to the last equality of Eq.~\eqref{eq:firstExample_asymptotics_equation}, relying on the convergence of the integrals from the Schwinger trick as in Fig.~\ref{fig:zipper_contours_to_infty}. Again, all of the integrals ${\bf F}_j \to {\bf F}'_j$ are taken to be the straight-line segments between the faces.
    
    \begin{figure}[p!]
    \begin{subfigure}
        \centering
        \includegraphics[width=\linewidth]{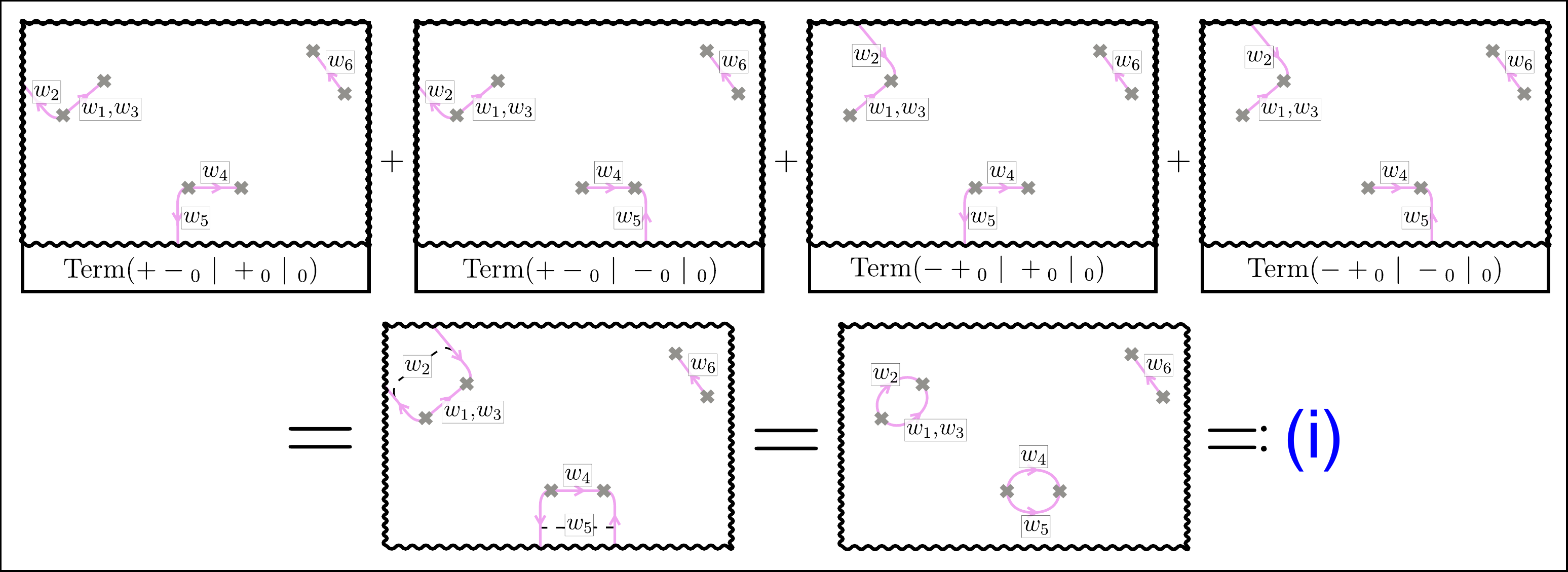}
        \caption{Contours for the terms
        ${\mathrm{Term}}( + - {}_0 \,\,|\,\, + \,{}_0 \,\,|\,\, {}_0 )$,
        ${\mathrm{Term}}( + - {}_0 \,\,|\,\, - \,{}_0 \,\,|\,\, {}_0 )$,
        ${\mathrm{Term}}( - + {}_0 \,\,|\,\, + \,{}_0 \,\,|\,\, {}_0 )$,
        ${\mathrm{Term}}( - + {}_0 \,\,|\,\, - \,{}_0 \,\,|\,\, {}_0 )$
        with integrand $\frac{1}{(2 \pi i)^6} \frac{1}{(w_1-w_2)(w_2-w_3)(w_3-w_4)(w_4-w_5)(w_5-w_6)(w_6-w_1)}$,
        up to additive and multiplicative factors of $O(1/R)$.
        When adding up these terms, note that we can deform the $w_2$ contour to the position as shown. This gives a sub-term we call $(i)$.
        }
        \label{fig:rep_example_contours_210_1st}
    \end{subfigure}
    \vspace{3.5cm}
    \begin{subfigure}
        \centering
        \includegraphics[width=\linewidth]{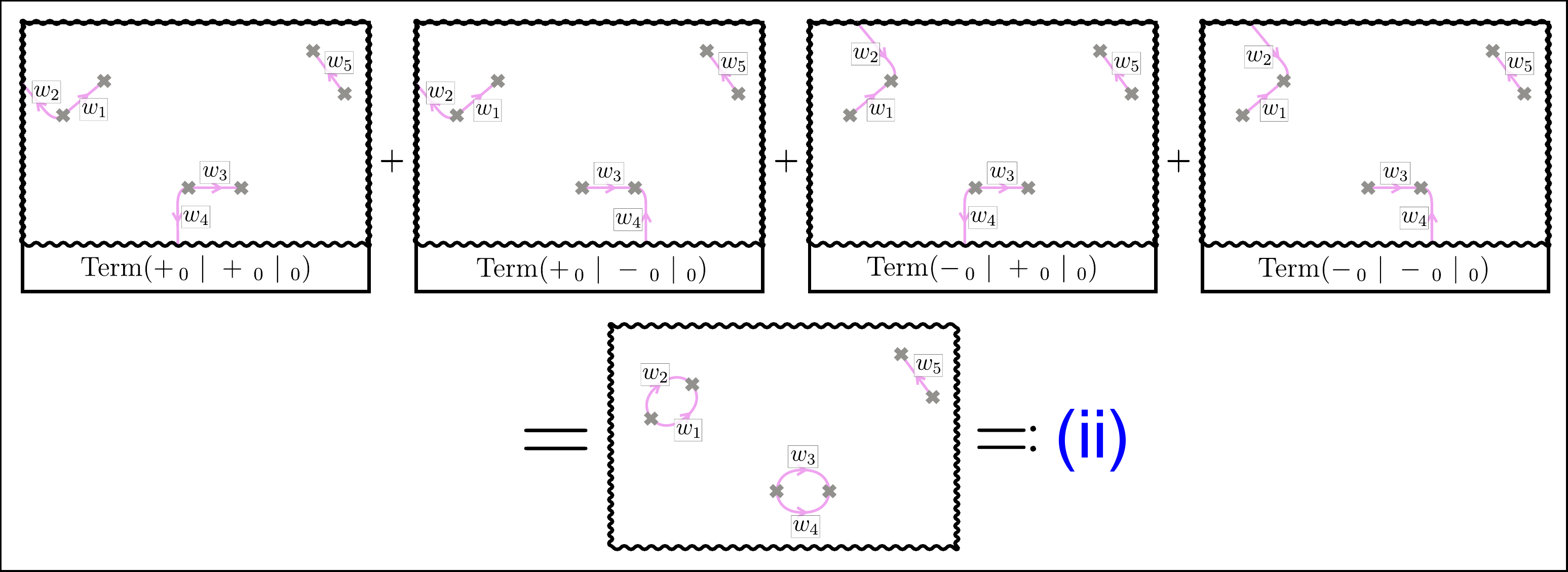}
        \caption{Contours for the terms
        ${\mathrm{Term}}( + {}_0 \,\,|\,\, + \,{}_0 \,\,|\,\, {}_0 )$,
        ${\mathrm{Term}}( + {}_0 \,\,|\,\, - \,{}_0 \,\,|\,\, {}_0 )$,
        ${\mathrm{Term}}( - {}_0 \,\,|\,\, + \,{}_0 \,\,|\,\, {}_0 )$,
        ${\mathrm{Term}}( - {}_0 \,\,|\,\, - \,{}_0 \,\,|\,\, {}_0 )$
        with integrand $\frac{1}{(2 \pi i)^5} \frac{1}{(w_1-w_2)(w_2-w_3)(w_3-w_4)(w_4-w_5)(w_5-w_1)}$,
        up to additive and multiplicative factors of $O(1/R)$,
        which are then added up. This gives a sub-term we call $(ii)$.
        }
        \label{fig:rep_example_contours_210_2nd}
    \end{subfigure}
    \end{figure}

    \begin{figure}[p!]
    \begin{subfigure}
        \centering
        \includegraphics[width=\linewidth]{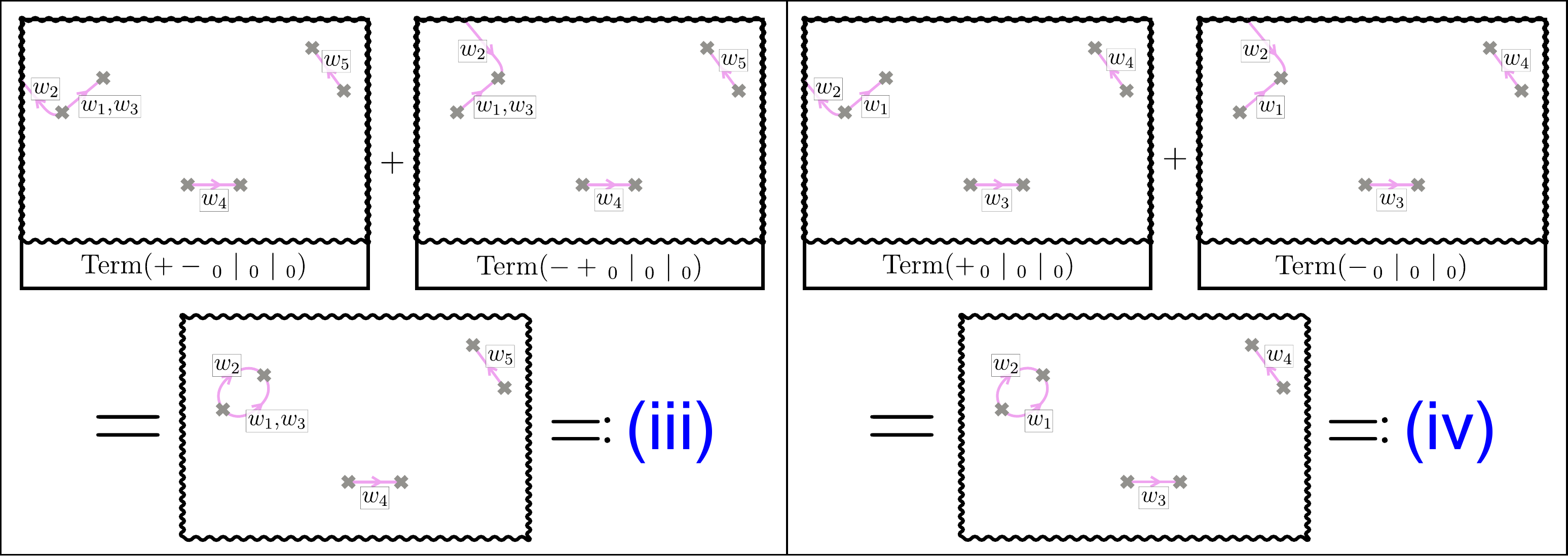}
        \caption{Adding the terms (Left)
        ${\mathrm{Term}}( + - {}_0 \,\,|\,\, {}_0 \,\,|\,\, {}_0 )$,
        ${\mathrm{Term}}( - + {}_0 \,\,|\,\, {}_0 \,\,|\,\, {}_0 )$
        with integrand $\frac{1}{(2 \pi i)^5} \frac{1}{(w_1-w_2)(w_2-w_3)(w_3-w_4)(w_4-w_5)(w_5-w_1)}$,
        and the terms (Right)
        ${\mathrm{Term}}( + {}_0 \,\,|\,\, {}_0 \,\,|\,\, {}_0 )$,
        ${\mathrm{Term}}( - {}_0 \,\,|\,\, {}_0 \,\,|\,\, {}_0 )$
        with integrand $\frac{1}{(2 \pi i)^4} \frac{1}{(w_1-w_2)(w_2-w_3)(w_3-w_4)(w_4-w_1)}$,
        up to additive and multiplicative factors of $O(1/R)$,
        which are then added up. These gives sub-terms we call $(iii)$ and $(iv)$ respectively.
        }
        \label{fig:rep_example_contours_210_3rd}
    \end{subfigure}
    \vspace{1cm}
    \begin{subfigure}
        \centering
        \includegraphics[width=\linewidth]{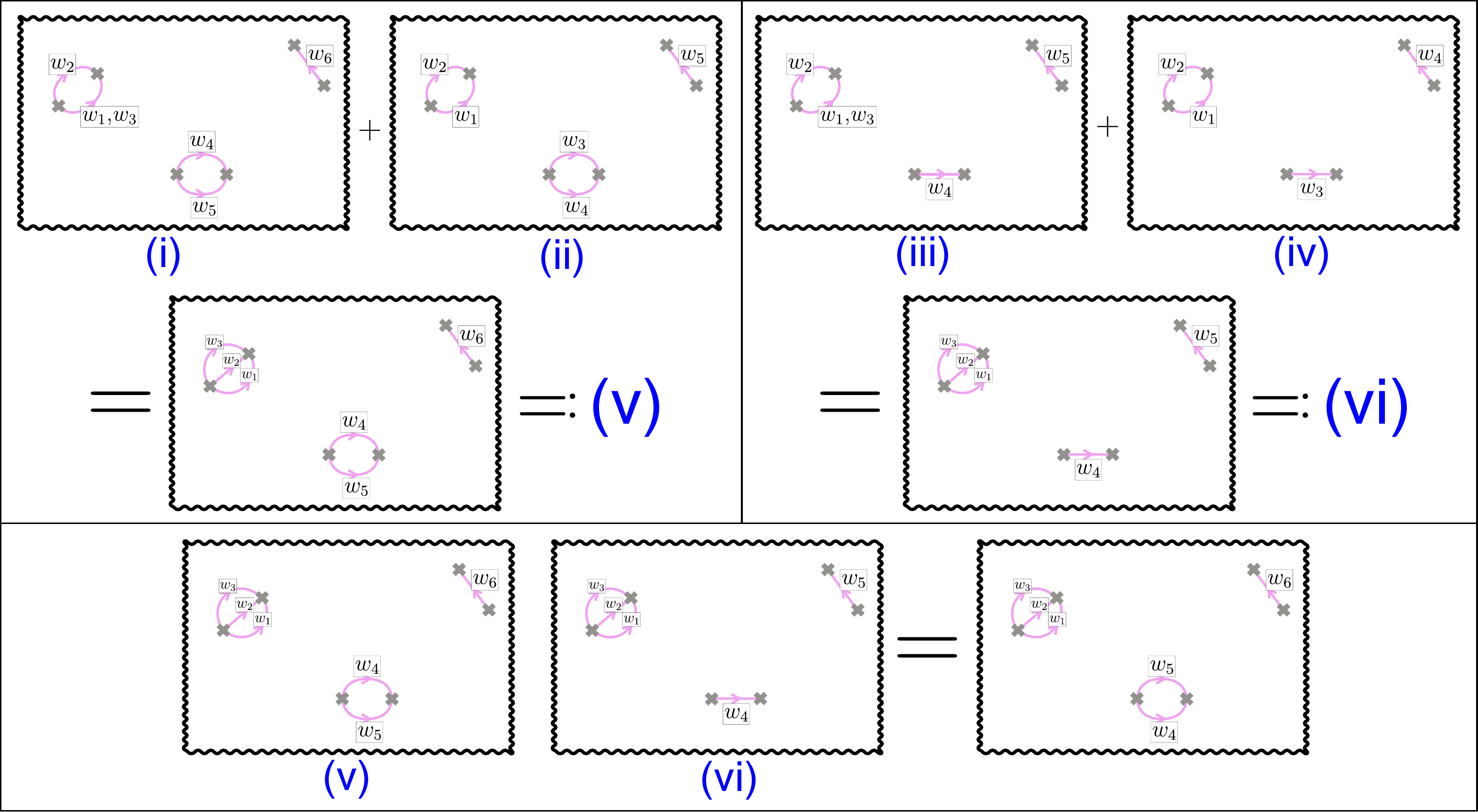}
        \caption{
            (Top-Left) Adding up the sub-terms $(i)+(ii)$ gives a sub-term we call $(v)$. This equality follows from a residue coming from a contour deformation wrapping the $w_3$ contour around the $w_2$ one in $(i)$. 
            (Top-Right) Similarly, adding up the sub-terms $(iii)+(iv)$ gives a sub-term we call $(vi)$.
            (Bottom) In $(v)$, wrapping the $w_5$ contour around the $w_6$ contour gives a residue which when added to $(vi)$ gives the contour shown, which matches the integral in Lemma~\ref{lem:regulatedContours_integral_splitUp}.
        }
        \label{fig:rep_example_contours_210_4th}
    \end{subfigure}
    \end{figure}

    Similarly, we can depict the integration contours for all of the terms on the right-hand-side of Eq.~\eqref{eq:example_term210_rotated}. See Figs.~\ref{fig:rep_example_contours_210_1st},\ref{fig:rep_example_contours_210_2nd},\ref{fig:rep_example_contours_210_3rd} for the contours and the integrals they represent. Also in those figures, we add various groups of contours so that all terms with a contour $({\bf F}_j \to \infty)^{L,R}$ on a variable get paired with another term with a contour $(\infty \to {\bf F}'_j)^{L,R}$ on the same variable. As such, all contours will start with some ${\bf F}_j$ and end with some ${\bf F}'_j$. As such, we can deform all contours to be close to the straight-line segment ${\bf F}_j \to {\bf F}'_j$. However, we have to be mindful of the order of the contours around ${\bf F}_j \to {\bf F}'_j$ because crossing a contour $w_{k}$ with a contour $w_{k+1}$ will lead to a residue because of the factor $\frac{1}{2 \pi i}\frac{1}{w_k-w_{k+1}}$ in the integrand. 
    
    As in the Figs.~\ref{fig:rep_example_contours_210_1st},\ref{fig:rep_example_contours_210_2nd},\ref{fig:rep_example_contours_210_3rd}, we choose to define the terms $(i),(ii),(iii),(iv)$ as the groups of terms in Eq.~\eqref{eq:example_term210_rotated} we described above to add up. Then in Fig.~\ref{fig:rep_example_contours_210_4th}, we note that similar deformations of contours and accounting of their residues gives the integral we wanted to show in Lemma~\ref{lem:regulatedContours_integral_splitUp}, up to additive and multiplicative factors of $O(1/R)$.

\subsubsection{Argument for general terms}

\begin{figure}[p!]
\begin{subfigure}
    \centering
    \includegraphics[width=\linewidth]{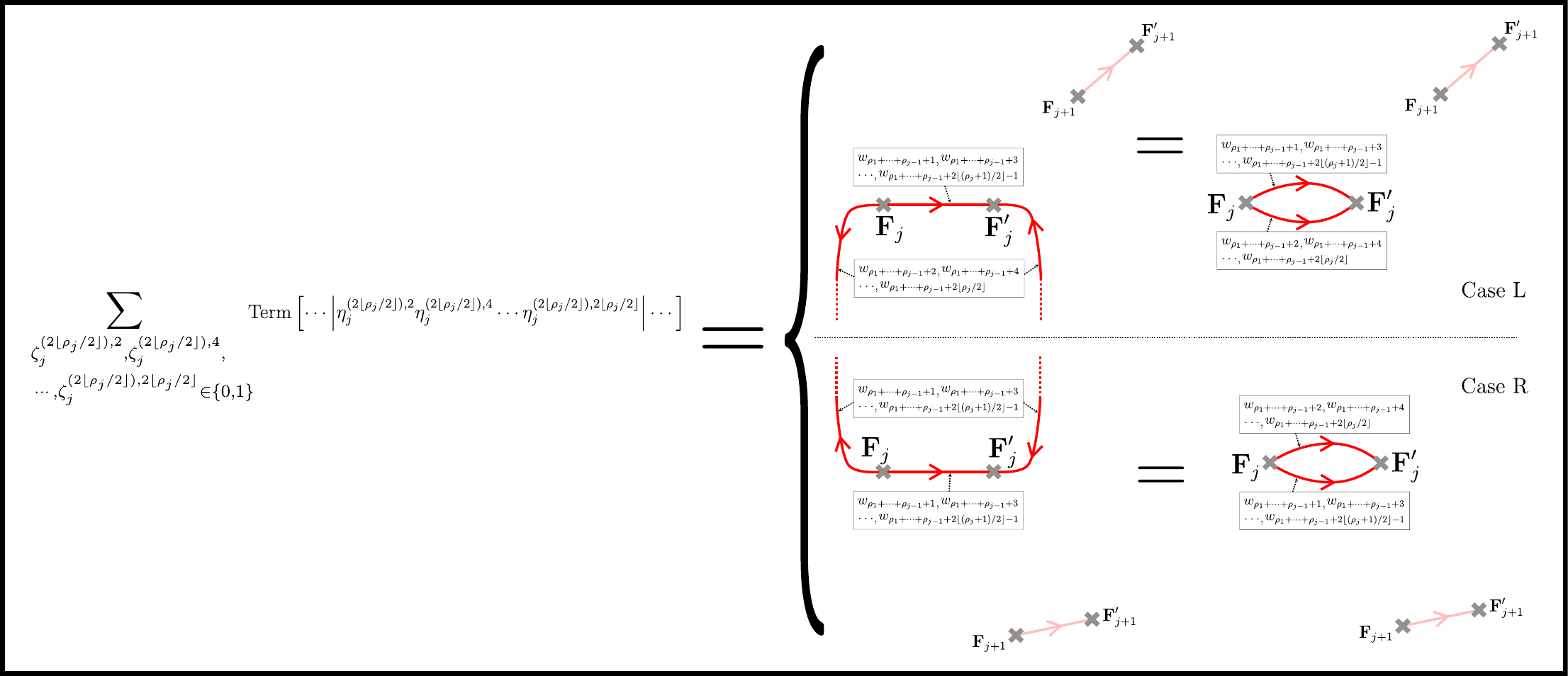}
    \caption{
    $\sum_{
        \zeta_j^{(2\floor{\rho_j/2}),2},
        \zeta_j^{(2\floor{\rho_j/2}),4},
        \cdots,
        \zeta_j^{(2\floor{\rho_j/2}),2\floor{\rho_j/2}},
        \in \{0,1\}
    }
    \mathrm{Term}
    \left[\,\cdots \,\Big|\, 
        \eta_j^{(2\floor{\rho_j/2}),2},
        \eta_j^{(2\floor{\rho_j/2}),4},
        \cdots,
        \eta_j^{(2\floor{\rho_j/2}),2\floor{\rho_j/2}} \, {}_{0}
    \,\Big|\, \cdots \,\right]$
    asymptotes to a contour integral of $\frac{1}{(2 \pi i)^n} \frac{1}{(w_1-w_2) \cdots (w_n-w_1)}$ with $\rho_j$ contours near $( {\bf F}_j \to {\bf F}'_j )$ in the configuration shown above. The specific configurations depend on whether the zipper $( {\bf F}_{j+1} \to {\bf F}'_{j+1} )$ and $\exp[i \Theta^{m,j+1}_{j,j+1}]$ give a Case L or Case R configuration (see Fig.~\ref{fig:zipper_contours_to_infty}).
    }
    \label{fig:sum_of_terms_zipper}
\end{subfigure}
\vspace{1cm}
\begin{subfigure}
    \centering
    \includegraphics[width=\linewidth]{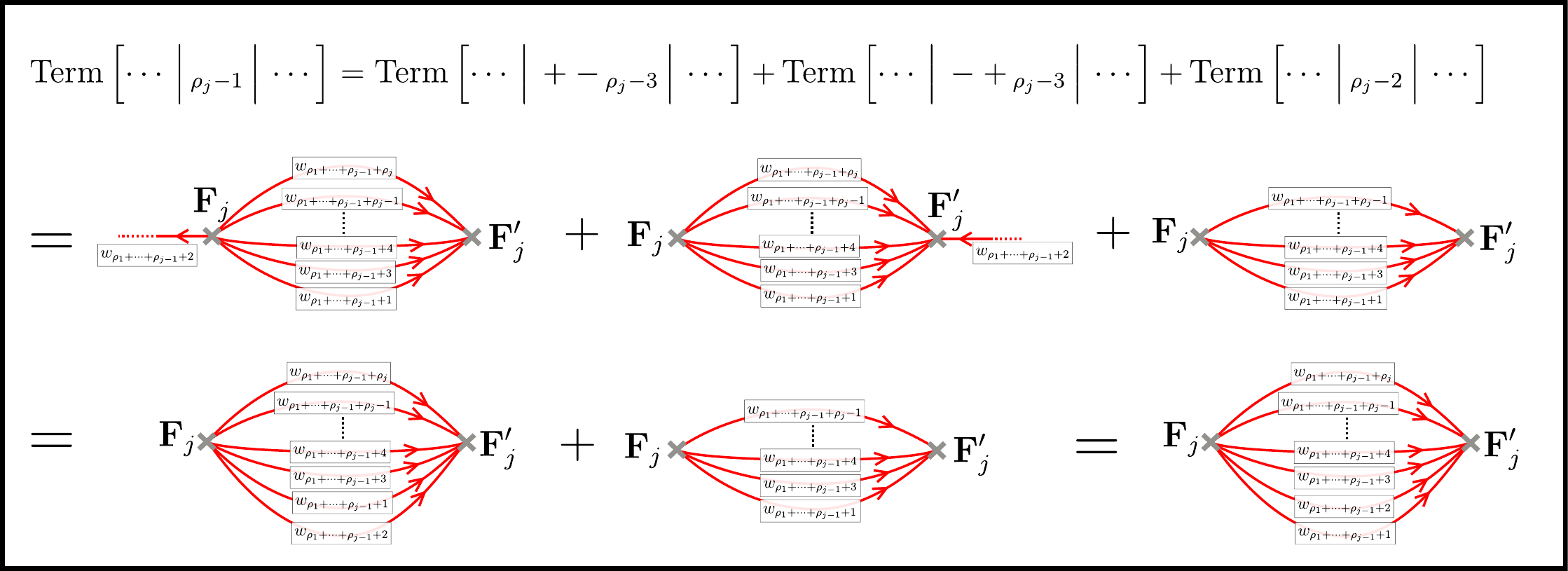}
    \caption{
        Inductive argument for the sum over contour integrals in the expansion into terms of the form Fig.~\ref{fig:sum_of_terms_zipper} giving the desired contour integral for our Lemma. The $\mathrm{Term}\left[\cdots \,\Big|\, + - {}_{\rho_j - 3} \,\Big|\, \cdots \right]$ and $\mathrm{Term}\left[\cdots \,\Big|\, - + {}_{\rho_j - 3} \,\Big|\, \cdots \right]$ can inductively be shown to have the configurations shown above, with $\rho_j$ contours $w_{\rho_1 + \cdots + \rho_{j-1} + 1},\cdots,w_{\rho_1 + \cdots + \rho_{j-1} + \rho_j}$ and an integrand $\frac{1}{(2 \pi i)^n} \frac{1}{(w_1-w_2)\cdots(w_n-w_1)}$. These have contours for $w_{\rho_1+\cdots+\rho_{j-1}+2}$ being $({\bf F}_j \to \infty)$ and $(\infty \to {\bf F}'_j)$ respectively, and add up to a closed contour $({\bf F}_j \to {\bf F}'_j)$ on the right-side of all other contours relative to the others.
        The $\mathrm{Term}\left[\cdots \,\Big|\, + - {}_{\rho_j - 3} \,\Big|\, \cdots \right]$ is the residue term, and has $\rho_j-1$ contours $w_{\rho_1 + \cdots + \rho_{j-1} + 1},\cdots,w_{\rho_1 + \cdots + \rho_{j-1} + \rho_j-1}$ and an integrand $\frac{1}{(2 \pi i)^{n-1}} \frac{1}{(w_1-w_2)\cdots(w_{n-1}-w_1)}$, with the arguments and contours for $w_{\rho_1 + \cdots + \rho_{j-1} + \rho_j},\cdots,w_{n-1}$ being the same as $w_{\rho_1 + \cdots + \rho_{j-1} + \rho_j + 1},\cdots,w_n$ respectively. This residue term acts as the residue that arises from switching the order of the $w_{\rho_1 + \cdots + \rho_{j-1} + 1},w_{\rho_1 + \cdots + \rho_{j-1} + 2}$ contours in the previous equality.
    }
    \label{fig:general_zipper_proof}
\end{subfigure}
\end{figure}

The arguments and notation we used for the example above will readily generalize. In particular, the residue calculus in the spectral variables $z_1,z_2,\cdots$ conspire to produce a series of terms which eventually we'll be able to transform into real-space variables $w_1,w_2,\cdots$ via the Schwinger parameterization trick. Then, we regroup these terms in convenient ways so that all contours run between some ${\bf F}_j \to {\bf F}'_j$ in some orders. Then, adding the contours in some special orders together with some residue calculus in the real-space variables $w_1,w_2,\cdots$ will conspire to produce the integral shown in Lemma~\ref{lem:regulatedContours_integral_splitUp}. 

In general, we have the equality
\begin{equation}
    \mathrm{Term}(\cdots \,|\, \cdots \, {}_{\rho_j} \,|\, \cdots) 
    =
    \mathrm{Term}(\cdots \,|\, \cdots +- {}_{\rho_j-2} \,|\, \cdots) 
    +
    \mathrm{Term}(\cdots \,|\, \cdots -+ {}_{\rho_j-2} \,|\, \cdots) 
    +
    \mathrm{Term}(\cdots \,|\, \cdots {}_{\rho_j-1} \,|\, \cdots) 
    \quad
    \text{ if  } \rho_j \ge 2
\end{equation}
where the $\cdots$ refer to any assignment of the variables $\eta^{\cdots}_{\cdots} \in \{+ -, - +,+,-\}$ as described in Sec.~\ref{sec:mainProof_another_rep_example}.
This equality follows from the same reasoning as in Eq.~\eqref{eq:example_equality_pt1}. Similarly, we find that
\begin{equation}
    \mathrm{Term}(\cdots \,|\, \cdots \, {}_{1} \,|\, \cdots) 
    =
    \mathrm{Term}(\cdots \,|\, \cdots + \, {}_{0} \,|\, \cdots) 
    +
    \mathrm{Term}(\cdots \,|\, \cdots - \, {}_{0} \,|\, \cdots) 
    +
    \begin{cases}
        \mathrm{Term}(\cdots \,|\, \cdots \,{}_{0} \,|\, \cdots) 
        &\quad  \text{ for Case L}
        \\
        0
        &\quad  \text{ for Case R}
    \end{cases}
\end{equation}
where Case L / Case R are described in Fig.~\ref{fig:zipper_contours_to_infty}.
A prototype calculation for Case R is given in Eq.~\eqref{eq:example_equality_pt2}, while a prototype for Case L is in Eq.~\eqref{eq:example_equality_pt3}. This equality follows because rotating the contours $z_{\cdots}$ from $-i \exp[i \Theta_j]$ to $+ \exp[i \Theta_j]$ or $- \exp[i \Theta_j]$ gives a residue from crossing the $-\exp[i\Theta^{m,j+1}_{j,j+1}]$ axis in Case L, like in Fig.~\ref{fig:example_z5_rotations} whereas this crossing is avoided in Case R, like in Fig.~\ref{fig:example_z2z3_rotations}.

In general, note that these equalities allow us to expand 
$\tr \left[ 
    \left( \frac{\overline{\partial}^{\bullet \to \circ}_{\widetilde{\mathrm{zip}}_{1}}}{\overline{\partial}^{\bullet \to \circ}} \right)^{\rho_1}
    \cdots
    \left( \frac{\overline{\partial}^{\bullet \to \circ}_{\widetilde{\mathrm{zip}}_{\tau}}}{\overline{\partial}^{\bullet \to \circ}} \right)^{\rho_{\tau}}
\right]$
in terms of ``$\mathrm{Term}$''s of the form 
$\mathrm{Term}(\cdots \, {}_{0} \,|\, \cdots \, {}_{0} \,|\, \cdots)$
for which we can apply the asymptotic analysis of the functions $g_{v \to w}(z)$. In particular, we get this sum is over all admissible assignments of the variables
assignment of the variables $\eta^{\cdots}_{\cdots} \in \{+ -, - +,+,-\}$ consistent with the index-lengths. Now, the same asymptotic analysis from Eq.~\eqref{eq:example_asymptotic_splitTerm} and the reasoning from the examples of adding up all assignments of $\eta^{\cdots}_{\cdots} \in \{+ -, - +,+,-\}$ in Figs.~\ref{fig:rep_example_contours_210_1st},\ref{fig:rep_example_contours_210_2nd},\ref{fig:rep_example_contours_210_3rd} shows that in general, the sum
\begin{equation*}
    \sum_{
        \zeta_j^{(2\floor{\rho_j/2}),2},
        \zeta_j^{(2\floor{\rho_j/2}),4},
        \cdots,
        \zeta_j^{(2\floor{\rho_j/2}),2\floor{\rho_j/2}},
        \in \{0,1\}
    }
    \mathrm{Term}
    \left[\,\cdots \,\Big|\, 
        \eta_j^{(2\floor{\rho_j/2}),2},
        \eta_j^{(2\floor{\rho_j/2}),4},
        \cdots,
        \eta_j^{(2\floor{\rho_j/2}),2\floor{\rho_j/2}}
        \, {}_{0}
    \,\Big|\, \cdots \,\right]
\end{equation*}
has the asymptotic form of a certain contour integral with integrand 
$\frac{1}{(2 \pi i)^n} \frac{1}{(w_1-w_2) \cdots (w_n-w_1)} $ and for which the contours $z_{\rho_1+\cdots+\rho_{j-1}+1},\cdots,z_{\rho_1+\cdots+\rho_{j-1}+\rho_j}$ can be taken to be in a neighborhood of $ ( {\bf F}_j \to {\bf F}'_j ) $ in a certain order. The specific order depends on whether $({\bf F}_{j} \to {\bf F}_{j})$ and $({\bf F}_{j+1} \to {\bf F}_{j+1})$ are oriented as in Case L or Case R. See Fig.~\ref{fig:sum_of_terms_zipper} for depictions.

And in general, an inductive argument summarized in Fig.~\ref{fig:general_zipper_proof} shows that this sum of the contour integrals gives the desired integral in the contour configuration described in this lemma. The base cases for the various Case L and Case R are essentially given by the example in the previous subsection, culminating in Fig.~\ref{fig:rep_example_contours_210_4th}. This concludes the proof.

\end{proof}

\subsection{Modifications for computing $\tr \left[ \frac{\overline{\partial}^{\bullet \to \circ}_{\mathrm{zip} \, i_1}}{\overline{\partial}^{\bullet \to \circ}} \cdots \frac{\overline{\partial}^{\bullet \to \circ}_{\mathrm{zip} \, i_n}}{\overline{\partial}^{\bullet \to \circ}} \right]$ with zippers sharing endpoints.} \label{sec:nonSeparatedZippers}

Now, we consider the modifications needed to generalize the previous subsections computations to connections where the zippers share endpoints. 
Note that if there are variables $w_{j},w_{j+1}$ along zippers that are well-separated and don't share endpoints, the continuum integrals in Proposition~\ref{prop:regulatedContours_integral} converge. 
However, if we replace the continuum integrals in Proposition~\ref{prop:regulatedContours_integral} with zippers so that every pair of adjacent variables $w_{j},w_{j+1}$ vary along zippers that share endpoints, the resulting integral is actually divergent. Since we work on a lattice, for any finite $R$, the final result should be be finite, although the final result should be expected to diverge as $R \to \infty$. 

\begin{figure}[h!]
    \centering
    \includegraphics[width=0.7\linewidth]{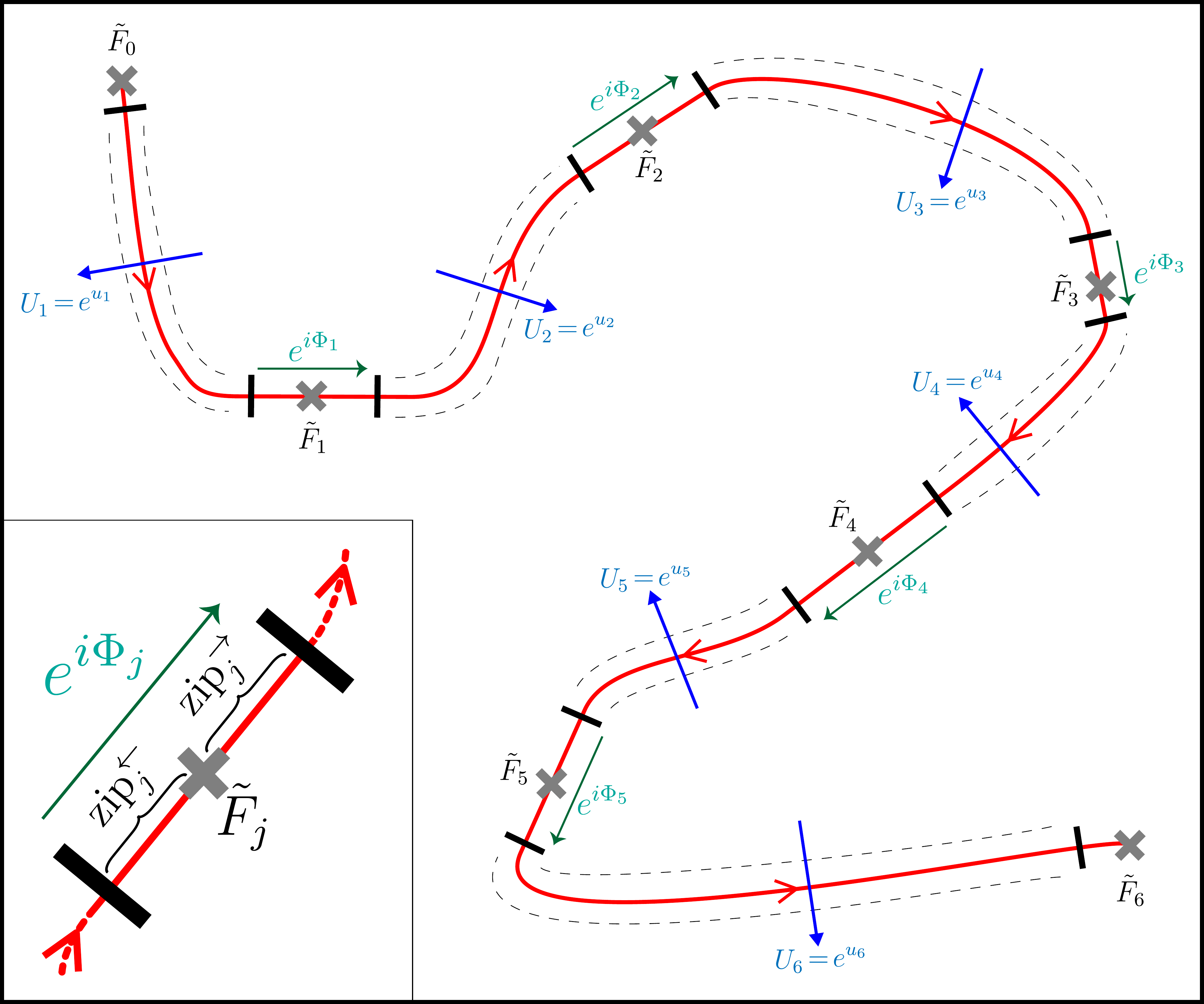}
    \caption{
        Configurations of zippers as described in Sec.~\ref{sec:nonSeparatedZippers}.
    }
    \label{fig:nonSeparated_zippers}
\end{figure}

To do the computation for finite $R$, we will choose a specific representation of the zipper gauge field, as in Fig.~\ref{fig:nonSeparated_zippers}.
Given punctures $\tilde{F}_0,\cdots,\tilde{F}_{s} \in \C$, we can choose continuum curve passing through the points in the order $\tilde{F}_0 \to \tilde{F}_1 \to \cdots \to \tilde{F}_{s-1} \to \tilde{F}_{s}$ and so that the neighborhood of the curve in the neighborhood of each of the interior points  $\tilde{F}_1, \cdots, \tilde{F}_{s-1}$ is a straight line segment. Then, for each of the parts of the curve $\tilde{F}_{j-1} \to \tilde{F}_{j}$, we assign the connection $U_j = e^{u_j}$ going across the segment. 
Then, we break up the zippers into small segments in the same way as Figs.~\ref{fig:multipleZippers_example},\ref{fig:multipleZippers_splitUpAngle}, making sure that the segments on either side of $\tilde{F}_j$ for $j=1,\cdots,s-1$ are straight lines parallel to each other. We refer to these parallel straight-line segments on the left- and right-side of $\tilde{F}_j$ as $\mathrm{zip}^{\leftarrow}_j$ and $\mathrm{zip}^{\rightarrow}_j$ respectively, and we define their direction along the curve as $e^{i\Phi_j}$.

Now, we want to describe the modification to the expression of Lemma~\ref{lem:regulatedContours_integral_splitUp} in the situation where $\mathrm{zip}\, q_1,q'_1 = \mathrm{zip}_j^{\rightarrow}$ and $\mathrm{zip}\, q_n,q'_n = \mathrm{zip}_j^{\leftarrow}$ or vice-versa for some $j$. 
\textbf{\textit{In summary, given a kind of configuration of zippers, the only modification is to replace the integrand }}
\begin{equation*}
    \frac{1}{(2 \pi i)^n} \frac{1}{(w_1-w_2)\cdots(w_n-w_1)} + \mathrm{c.c.}
    \mapsto
    \frac{1}{(2 \pi i)^n} \frac{1-e^{-R|w_n-w_1|}}{(w_1-w_2)\cdots(w_n-w_1)} + \mathrm{c.c.},
\end{equation*}
so that the term for $i_1,\cdots,i_n$ not all equal is asymptotically
\begin{equation} \label{eq:zipper_expansion_lattice_nonSep}
\begin{split}
    &\tr \left[ \frac{\overline{\partial}^{\bullet \to \circ}_{\mathrm{zip} \, i_1}}{\overline{\partial}^{\bullet \to \circ}} \cdots \frac{\overline{\partial}^{\bullet \to \circ}_{\mathrm{zip} \, i_n}}{\overline{\partial}^{\bullet \to \circ}} \right]
    \xrightarrow{R \to \infty}
    \frac{1}{(2 \pi i)^{n}}
    \int_{{\bf P}_{j_1}^{(c(1))}} dw_1 \cdots \int_{{\bf P}_{j_n}^{(c(n))}} dw_{\ell_{1}+\cdots+\ell_{n}}
    \frac{1-e^{-R|w_n-w_1|}}{(w_1-w_2)\cdots(w_n-w_1)} + \mathrm{c.c.}.
\end{split}
\end{equation}
We can compare this to the analogous factor of $(1-e^{-D|w_{2n}-w_1|})$ which appeared in the single-zipper case of Propsition.~\ref{prop:asymptIntegral_singleZip}.

In general, the only step in the previous subsection's arguments that don't carry through is the analog of the second equality in Eq.~\eqref{eq:firstExample_asymptotics_equation}. 
Using the same notational compression as Sec.~\ref{sec:setupNotation_first_proof}, this step involved changing the integration bounds of $z_1$ from $0 \to -\exp\left[i \Theta_{\tau,1}^{m,1} \right]$ to $0 \to -\exp\left[i \Theta_{\tau,1}^{m,1} \right]\infty$. If the zippers were all well-separated, we explained there how this extension did not affect the asymptotic analysis and warned in the Footnote~\ref{foot:extension_warning} how this breaks down if $\mathrm{zip}\, q_1,q'_1 = \mathrm{zip}_j^{\rightarrow}$ and $\mathrm{zip}\, q_n,q'_n = \mathrm{zip}_j^{\leftarrow}$ {\color{RoyalPurple}[resp. vice-versa]} for some $j$. In this case, we have $\exp\left[i \Theta_{\tau,1}^{m,1} \right] = e^{i\Phi_j}$ {\color{RoyalPurple}[resp. $-e^{i\Phi_j}$]}. Then, the analog of the fourth equality of Eq.~\eqref{eq:firstExample_asymptotics_equation} will have the integral over $z_1$ return $\frac{1-e^{e^{-i\Phi_j}(w_n-w_1)}}{w_n-w_1} = \frac{1-e^{-|w_n-w_1|}}{w_n-w_1}$ {\color{RoyalPurple}[resp. $\frac{1-e^{-e^{i\Phi_j}(w_n-w_1)}}{w_n-w_1} = \frac{1-e^{-|w_n-w_1|}}{w_n-w_1}$]}. The factor of $R$ in front of $|w_n-w_1|$ comes from replacing all ${\bf F}_{\cdots} \mapsto {\bf \tilde{F}}_{\cdots}$.

While the above argument gives the $\frac{1-e^{-R|w_n-w_1|}}{w_n-w_1}$ factor for adjacent segments, one may as well stick in that factor for the whole integral, since $e^{-R|w_n-w_1|}$ is negligibly small for non-adjacent segments.

\section{Discussion} \label{sec:discussion}

We've shown how the series expansion of isoradial dimer models in the presence of singular connections flat away from a fixed set of punctures matches an analogous series expansion for free fermions in the scaling limit of the lattice size going to zero. 

It would be nice to reproduce the convergence results of Dubédat~\cite{Dubedat2011,Dubedat2018doubleDimersConformalLoopEnsemblesIsomonodromicDeformations} in our framework.
Furthermore, it would be interesting to see if the series expansions for more general correlation functions could be explicitly resummed. In~\cite{Dubedat2011}, the question of evaluating height function correlations like $|\braket{e^{i \, 2\pi /3 (h(F_1) + h(F_2) + h(F_3))}}|$ was raised. In our framework, this corresponds to three punctures and two zippers of strength $e^{i \, 2\pi/3}$ between them. 
The results of Sec.~\ref{sec:nonSeparatedZippers} give a series expansion for these correlations, but a few thing are unclear, namely whether the subleading corrections make a difference to the summation or whether we should even expect a nice universal answer to these correlations in the first place.

While we spent our time analyzing this for the case of singular connections, an analogous statement should hold if we consider a \textit{smooth}, non-flat connections on $\C$ approximated by a graph connection. In particular, the expansion Eq.~\eqref{eq:diracDeterminant_full} is expected to be recovered from considering the analogous cluster expansion of the graph connection. 
In the case of a smooth connection, the difference between the twisted $\overline{\partial}(U)$ and the original $\overline{\partial}$ will be of the same order as the lattice scale; the leading order terms from summing over all edges in the graph should lead to the same expression as the continuous Dirac fermions. 
We leave details and precise formulations of this to future work.

Also, we expect these results to hold in more general settings than isoradial dimers. In particular, we expect the same result to hold for any liquid phase dimer model~\cite{kso2003amoebae}. 
For example, recent progress towards generalizing isoradial graphs to support dimer models with higher-genus spectral curves has been made. In~\cite{bcdT2022isoradialimmersion}, so-called \textit{isoradial immersions} are classified and shown~\cite{bcdT2020ellipticGenus1} to have analogous discrete exponentials and contour integral formulas for inverse Kasteleyn matrices in various phases of more general dimer models (see also ~\cite{bcdT2020ellipticGenusHigher,fock2015inverseSpectral,ggk2022inverseSpecMapDimers}). We expect that in the liquid phase of critical dimers, analogous computations to the exact identities of Sec.~\ref{sec:exactExpr_zipperTraces} would facilitate these computations. 

While the expectation for liquid phase dimers is clear, it is not clear what to expect for dimers in a gaseous phase away from criticality. It would be interesting if they could be identified with, for example, adding a fermionic mass term or more general terms to the Lagrangian.

And more generally and vaguely, it would be nice to elucidate and expand on the close relationship between the dimer model and the free fermion model. 
The free fermion is a `basis' of rational conformal field theories in the sense that products and quotients of its operator algebras generate many if not all of them~\cite{bigYellowCFT1997}.
Since many statistical models' correlations are described by rational theories, it is interesting to ask how generally various models' correlations could be expressed in terms of the dimer model, either in an exact or scaling-limit sense. 

\section*{Acknowledgments}
We acknowledge the Yale Math Department for keeping us fed. We also thank Rick Kenyon for helpful discussions and suggestions related to this problem, Andy Neitzke for discussions about tau functions, Fedor Petrov and Terry Tao for MathOverflow discussions~\cite{mathOverflowMultIntegralZeta2021,mathOverflowDiagRestrHilbTransf2022} that helped us with the computations in Appendix~\ref{app:sum_leading_order_I2n}, and Mathematica~\cite{Mathematica} for many computations.

\appendix

\section{Asymptotics of the functions $g_{v \to w}(z)$} \label{app:asympt_gFunc}

We now discuss asymptotics of the functions $g(z) = g_{v \to w}(z) = \prod_{j=1}^{P} \frac{z-e^{i \alpha_j}}{z-e^{i\beta_j}}$ in the limit of large distance between vertices corresponding to dual faces on the rhombus grid $D := |w-v| \to \infty$, following and extending the analysis of~\cite{kenyon2002critical}.

In a similar vein as Sec.~\ref{sec:straightLine_zippers}), the fact that we are dealing with a rhombus graph means~\cite{kenyon2002critical,BMS05} that there exists some $\phi_0$ for which $\{\beta_j\}$ are angles within some range $( \arg(w-v) - \frac{\pi}{2} , \arg(w-v) + \frac{\pi}{2} )$ and the $\{\alpha_j\}$ are in a range $( \arg(w-v) + \frac{\pi}{2} , \arg(w-v) + \frac{3\pi}{2} )$. 

To study the function $g(z)$, we need to impose certain \textit{regularity conditions} on the angles $\{\alpha_j\}$ and $\{\beta_j\}$. In particular, we want that these angles are always bounded away from the extremal values, that there exists angles $\phi_0$ and $\epsilon > 0$ such that 
\begin{equation*}
- \frac{\pi}{2} + \epsilon < \beta_j - \phi_0 < \frac{\pi}{2} - \epsilon \quad\text{ and }\quad \frac{\pi}{2} + \epsilon < \alpha_j - \phi_0 < \frac{3\pi}{2} - \epsilon \quad \text{ with } |\arg(w-v) - \phi_0| < \epsilon,
\end{equation*}
uniformly for all $v,w$ in the portion of the graph we care about. This condition will automatically be satisfied if the rhombus graph is periodic, or more generally if the total possible number of angles is finite (studied in~\cite{BMS05} as the \textit{quasicrystallograhic} condition). 

In the remainder section, we will assume this regularity condition. In the remainder of this section, we'll WLOG that $\arg(w-v) = 0$.

Note that this regularity condition implies there is some constant $c_{\epsilon} = \cos(\epsilon) < 1$ such that $\mathrm{Re} \, e^{i (\beta_j - \phi_0)} = \cos(\beta_j - \phi_0) > c_{\epsilon}$ and $-\mathrm{Re} \, e^{i (\alpha_j - \phi_0)} = -\cos(\alpha_j - \phi_0) > c_{\epsilon}$, so that 
\begin{equation} \label{eq:D_N_bounds}
\begin{split}
    &D = |w-v| \ge \mathrm{Re}((w-v)e^{-i \phi_0}) = \mathrm{Re} \, \sum_{j=1}^{P} (e^{i (\beta_j - \phi_0)} - e^{i (\alpha_j - \phi_0)}) > 2 P \cdot c_{\epsilon}  \quad,\quad D = \left|  \sum_{j=1}^{P} (e^{i \beta_j} - e^{i \alpha_j}) \right| < 2 P,
    \\
    &\text{so} \,\, \frac{D}{2} < P <  \frac{1}{2 c_{\epsilon}} \cdot D
\end{split}
\end{equation}
using our assumptions about the $\{\alpha_j\}, \{\beta_j\}$. Now, we can express $g(z)$ using a series expansion. 

For $|z|<1$, we have
\begin{equation} \label{eq:g_le1_seriesExp}
\begin{split}
    g(z) &= \prod_{j=1}^{P} \frac{z-e^{i \alpha_j}}{z-e^{i\beta_j}} = \left( \prod_{j=1}^{P} \frac{e^{i\alpha_j}}{e^{i\beta_j}} \right) \cdot \exp \left[ z \sum_{j=1}^P (e^{-i \beta_j} - e^{-i \alpha_j}) + \frac{z^2}{2} \sum_{j=1}^P (e^{-2 i \beta_j} - e^{-2 i \alpha_j}) + \frac{z^3}{3} \sum_{j=1}^P (e^{-3 i \beta_j} - e^{-3 i \alpha_j}) + \cdots \right] \\
    &= \gamma \cdot \exp \left[ \overline{(w-v)} z + h_1(z) \right]. 
\end{split}
\end{equation}
In the above, we defined $\gamma := \left( \prod_{j=1}^{P} \frac{e^{i\alpha_j}}{e^{i\beta_j}} \right)$ in addition to $h_1(z) = \frac{z^2}{2} \sum_{j=1}^P (e^{-2 i \beta_j} - e^{-2 i \alpha_j}) + \frac{z^3}{3} \sum_{j=1}^P (e^{-3 i \beta_j} - e^{-3 i \alpha_j}) + \cdots$. We can bound $h_1(z)$ and its derivatives using Eq.~\eqref{eq:D_N_bounds}, so that for $|z|<1$, 
\begin{equation} \label{eq:g_le1_hEstimates}
    |h_1(z)| < 2 P (z-\log{(1-z)}) < \frac{D}{c_\epsilon} (z - \log(1-z))
\end{equation}

For $|z|>1$, we have
\begin{equation} \label{eq:g_ge1_seriesExp}
\begin{split}
    g(z) &= \prod_{j=1}^{P} \frac{z-e^{i \alpha_j}}{z-e^{i\beta_j}} = \exp \left[\frac{1}{z} \sum_{j=1}^P (e^{i \beta_j} - e^{i \alpha_j}) + \frac{1}{2 z^2} \sum_{j=1}^P (e^{2 i \beta_j} - e^{2 i \alpha_j}) + \frac{1}{3 z^3} \sum_{j=1}^P (e^{3 i \beta_j} - e^{3 i \alpha_j}) + \cdots \right] \\
    &= \exp \left[(w-v) z + h_2(1/z) \right]. 
\end{split}
\end{equation}
Above, we defined $h_2(z) = \frac{z^2}{2} \sum_{j=1}^P (e^{2 i \beta_j} - e^{2 i \alpha_j}) + \frac{z^3}{3} \sum_{j=1}^P (e^{3 i \beta_j} - e^{3 i \alpha_j}) + \cdots$. Again, we can bound $h_1(z)$ and its derivatives using Eq.~\eqref{eq:D_N_bounds}, so that for $|z| > 1$, 
\begin{equation} \label{eq:g_le1_hEstimates2}
    |h_2(z)| < 2 P (z - \log{(1-z)}) < \frac{D}{c_{\epsilon}} (z - \log{(1-z)})
\end{equation}

Now, let's consider how to bound $|g(z)|$ for $z$ lying in a cone $z \in \bigsqcup_{\theta \in (-\frac{\epsilon}{8},\frac{\epsilon}{8})} e^{i \theta} \R_{< 0}$. In particular, we will see that for intermediate values of $ D^{-2/3} < |z| < D^{2/3} $ in the above cone, $g(z)$ is exponentially suppressed. 
\begin{prop} \label{prop:g_expDecay_cone}
Let $D$ be sufficiently large. Then there exists a constant $c'_{\epsilon}$ depending only on $\epsilon$ for which 
\begin{equation}
    |g(z)| < \exp\left[ - \frac{D^{1/3}}{c'_{\epsilon}} \right]
\end{equation}
for all $z = -e^{i\theta}|z|$ with $\theta \in (-\frac{\epsilon}{8},\frac{\epsilon}{8})$ and $D^{-2/3} < |z| < D^{2/3}$. 
\end{prop}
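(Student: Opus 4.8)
The plan is to show the exponential suppression of $g(z)$ in the intermediate annulus by leveraging the series representations \eqref{eq:g_le1_seriesExp} and \eqref{eq:g_ge1_seriesExp} together with the bounds \eqref{eq:g_le1_hEstimates}, \eqref{eq:g_le1_hEstimates2} and the two-sided estimate $\frac{D}{2} < P < \frac{D}{2 c_\epsilon}$ from \eqref{eq:D_N_bounds}. The key observation is that $|g(z)|$ is controlled by the real part of the leading linear term. Writing WLOG $\arg(w-v)=0$ so that $w-v = D$ (after the normalization in the appendix), for $|z|<1$ we have from \eqref{eq:g_le1_seriesExp} that $\log|g(z)| = \log|\gamma| + \mathrm{Re}(\overline{(w-v)}\, z) + \mathrm{Re}\, h_1(z) = \mathrm{Re}(D z) + \mathrm{Re}\, h_1(z)$, since $|\gamma|=1$. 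For $z = -e^{i\theta}|z|$ with $|\theta| < \epsilon/8$, the leading term is $\mathrm{Re}(D z) = -D|z|\cos\theta \le -D|z|\cos(\epsilon/8)$, which is strongly negative and drives the decay. Similarly, for $|z|>1$ I would use \eqref{eq:g_ge1_seriesExp}: $\log|g(z)| = \mathrm{Re}((w-v)z) + \mathrm{Re}\, h_2(1/z) = -D|z|\cos\theta + \mathrm{Re}\, h_2(1/z)$, and again the leading term is negative of order $D|z|$.

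The steps I would carry out are as follows. First, treat the regime $D^{-2/3} < |z| < 1$ using \eqref{eq:g_le1_seriesExp}. Here the leading term gives $\mathrm{Re}(Dz) \le -D|z|\cos(\epsilon/8) \le -D^{1/3}\cos(\epsilon/8)$, using $|z| > D^{-2/3}$. The correction $|h_1(z)| < \frac{D}{c_\epsilon}(|z| - \log(1-|z|))$ must be shown to be subdominant; since $|z|<1$ is bounded and in fact $|z|$ can approach $1$, one needs to be careful, but for $|z|$ bounded away from $1$ (say $|z| < 1/2$) the quantity $|z| - \log(1-|z|) = O(|z|^2)$ so $|h_1(z)| = O(D|z|^2)$, which is smaller than the leading $D|z|$ term by a factor $|z| < 1$. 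Second, treat $1 < |z| < D^{2/3}$ symmetrically via \eqref{eq:g_ge1_seriesExp}, where the substitution $z \mapsto 1/z$ maps this regime to $D^{-2/3} < |1/z| < 1$ and the estimate \eqref{eq:g_le1_hEstimates2} plays the role of \eqref{eq:g_le1_hEstimates}; the leading term $-D|z|\cos\theta \le -D^{1/3}\cos(\epsilon/8)$ again dominates. Combining, in both regimes one obtains $\log|g(z)| \le -c\, D^{1/3}$ for a constant $c$ depending only on $\epsilon$, which after choosing $c'_\epsilon = 1/c$ gives the claim.

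The main obstacle I anticipate is the transitional region near $|z| = 1$, where both the series for $|z|<1$ and $|z|>1$ degrade: the bound $|h_1(z)| < \frac{D}{c_\epsilon}(|z| - \log(1-|z|))$ blows up logarithmically as $|z| \to 1^-$, so one cannot naively conclude that the correction is subdominant there. To handle this I would argue more carefully that near $|z|\approx 1$ the product $g(z) = \prod_j \frac{z - e^{i\alpha_j}}{z - e^{i\beta_j}}$ should be estimated directly rather than through the exponentiated series: for $z$ in the cone around the negative real axis and the $\beta_j$ clustered near $\arg(w-v)=0$ (i.e.\ near $+1$) while the $\alpha_j$ cluster near $-1$, each factor $\left|\frac{z - e^{i\alpha_j}}{z - e^{i\beta_j}}\right|$ is uniformly bounded below $1$ by a constant depending on $\epsilon$ (the numerator involves $z$ near $-1$ and $e^{i\alpha_j}$ near $-1$, making it small, while the denominator stays bounded away from zero). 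Taking the product over $P > D/2$ such factors then yields suppression $e^{-cD}$, which is far stronger than needed. Thus I would split the annulus into $D^{-2/3} < |z| < 1 - \delta$, a neighborhood $|z-(-e^{i\theta})| \approx 1$ handled by the direct product bound, and $1 + \delta < |z| < D^{2/3}$, choosing $\delta$ (e.g.\ a fixed small constant) so that the series estimates remain valid on the outer two pieces and the direct estimate covers the middle. The bookkeeping of constants across these three subregions, ensuring a single $c'_\epsilon$ works uniformly and that "sufficiently large $D$" absorbs lower-order terms, is the delicate part; the underlying mechanism (negative real part of the linear term) is robust.
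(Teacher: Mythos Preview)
Your three–zone strategy (series on $D^{-2/3}<|z|<\delta$ and on $1/\delta<|z|<D^{2/3}$, direct product on $\delta<|z|<1/\delta$) is different from the paper's approach, which uses a single direct estimate of the product $|g(z)|=\prod_j\sqrt{\frac{1+\frac{2\cos(\alpha_j-\theta)}{|z|+1/|z|}}{1+\frac{2\cos(\beta_j-\theta)}{|z|+1/|z|}}}$ over the whole annulus. Your Zones~1 and~3 are fine (modulo the typo, inherited from the paper, in the $|z|>1$ expansion: the leading term is $(w-v)/z$, not $(w-v)z$, giving $\mathrm{Re}(D/z)=-\tfrac{D}{|z|}\cos\theta\le -D^{1/3}\cos(\epsilon/8)$).

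The genuine gap is in Zone~2. Your claim that each factor $\bigl|\tfrac{z-e^{i\alpha_j}}{z-e^{i\beta_j}}\bigr|$ is uniformly below $1$ is false under the stated regularity hypothesis. That hypothesis bounds $\beta_j-\phi_0$ away from $\pm\pi/2$ by $\epsilon$, but $\phi_0$ may differ from $\arg(w-v)=0$ by up to $\epsilon$; hence $\beta_j$ itself can lie within $O(\epsilon)$ of $\pi/2$ (or $-\pi/2$). For such a $\beta_j$ and $\theta$ near $-\epsilon/8$, one gets $\cos(\beta_j-\theta)<0$, so the denominator $1+\tfrac{2\cos(\beta_j-\theta)}{|z|+1/|z|}$ is less than $1$ and the corresponding factor exceeds $1$. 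The same can happen on the $\alpha$ side. So the ``each factor is $<1$'' argument does not go through, and this is precisely the content that the proposition adds over Kenyon's earlier ray estimate.

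The paper's remedy is combinatorial rather than analytic: it calls angles within $\epsilon/2$ of $\pm\pi/2$ ``problematic,'' and uses the constraint $\mathrm{Im}(w-v)=0$ (i.e.\ $\sum_j\sin\tilde\beta_j=0$) to show that the fraction of problematic angles is strictly below $1/2$, bounded away by a constant depending only on $\epsilon$. Problematic factors are then bounded above by $\exp\bigl[+\tfrac{\sin(3\epsilon/8)}{|z|+1/|z|}\bigr]$ and unproblematic ones by $\exp\bigl[-\tfrac{\sin(3\epsilon/8)}{|z|+1/|z|}\bigr]$; since the latter outnumber the former, the product still decays like $\exp\bigl[-c\,\tfrac{P}{|z|+1/|z|}\bigr]$. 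You could patch your Zone~2 by importing exactly this counting argument; without it, the middle zone is not covered.
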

In~\cite{kenyon2002critical}, a similar inequality was demonstrated (in our notation) along the ray $e^{i \phi_0} \R_{<0}$. This statement now is a strengthening of the inequality in a cone containing $e^{i \phi_0} \R_{<0}$.
\begin{proof}
    The point of our proof will be to balance two competing effects. First, the rhombus angles that are bounded away from $\pm i$ will be contribute to the exponential decay of $g(z)$. Second, the rhombus angles close to $\pm i$ may cause $g(z)$ to grow. As such, if we can show that there aren't that many rhombus angles close to $\pm i$, then the exponential decay of the ones away from $\pm i$ will outweigh the factors of contributing to possible exponential growth.
    
    Recall that $w-v = \sum_{j=1}^{P}(e^{i \beta_j} - e^{i \alpha_j})$. Let us label $\{e^{i \beta_1}, \cdots, e^{i\beta_P}, -e^{i\alpha_1}, \cdots, -e^{i\alpha_P}\}$ as $\{e^{i\tilde{\beta}_1}, \cdots, e^{i\tilde{\beta}_{2P}}\}$ so that all $-\frac{\pi}{2} < \tilde{\beta} < \frac{\pi}{2}$.
    Note that the regularity condition implies that the total possible range of the angles is an interval of length $(\pi - 2 \epsilon)$ by our regularity conditions.
    
    Furthermore, let's say that the angles $\{\tilde{\beta}_1, \cdots, \tilde{\beta}_{k}\}$ are the `problematic angles' that lie in a range $(\frac{\pi}{2} - \frac{\epsilon}{2}, \frac{\pi}{2})$. Since the total range of possible angles has length $\pi - 2 \epsilon$, we'll have that if $k > 0$, then all other angles $\{e^{i\tilde{\beta}_{k+1}}, \cdots, e^{i\tilde{\beta}_{2P}}\}$ will necessarily lie in the range $(-\frac{\pi}{2} + \frac{3 \epsilon}{2}, \frac{\pi}{2} - \frac{\epsilon}{2})$. We will henceforth assume that $k>0$, so that the problematic angles are in the range $(\frac{\pi}{2} - \frac{\epsilon}{2}, \frac{\pi}{2})$. If there are no problematic angles in this range, we can consider instead they exist in the range $(-\frac{\pi}{2}, -\frac{\pi}{2} + \frac{\epsilon}{2})$ and that the rest of the angles exist in $(\frac{\pi}{2} + \frac{\epsilon}{2}, \frac{\pi}{2} - \frac{3\epsilon}{2})$. If there are no problematic angles in either of these ranges, then the estimates can be derived similarly to~\cite{kenyon2002critical} and will be stronger than in the rest of this proof.
    
    We want to show that the fraction of problematic angles is less than $1/2$ and also bounded away from $1/2$. Note that since $w - v = \sum_{j=1}^{2P} e^{i \tilde{\beta}_j}$ is real, we have
    \begin{equation*}
        \sum_{j=1}^{k} \sin(\tilde{\beta}_j) = -\sum_{j=k+1}^{2P} \sin(\tilde{\beta}_j).
    \end{equation*}
    Also, we have that for $j=1,\cdots,k$, 
    $\sin(\tilde{\beta}_j) > \sin(\frac{\pi}{2} - \frac{\epsilon}{2}) = \cos(\frac{\epsilon}{2}) $
    . And for $j=k+1,\cdots,2P$, $-\sin(\tilde{\beta}_j) < \sin(\frac{\pi}{2} - \frac{3 \epsilon}{2}) = \cos(\frac{3\epsilon}{2})$. These with the above equation give the bound
    \begin{equation*}
       k \cdot \cos(\epsilon/2) < \sum_{j=1}^{k} \sin(\tilde{\beta}_j) = -\sum_{j=k+1}^{2P} \sin(\tilde{\beta}_j) < (2P - k) \cos(3\epsilon/2)
    \end{equation*}
    which implies
    \begin{equation} \label{eq:gConeBound_k_vs_N}
       \frac{k}{2 P} < \frac{1}{2 + \frac{\cos(\epsilon/2) - \cos(3\epsilon/2)}{\cos(3\epsilon/2)}}
    \end{equation}
    which is bounded away from $1/2$ as we wanted.
    
    Now, we turn back to analyzing $|g(z)|$ in our range. Since $z = -|z|e^{i \theta}$, we have 
    \begin{equation} \label{eq:gConeBound_abs_g}
    \begin{split}
        |g(z)| 
        &= \left| \prod_{j=1}^{P} \frac{z-e^{i \alpha_j}}{z-e^{i\beta_j}} \right|
        = \left| \prod_{j=1}^{P} \frac{|z|+e^{i (\alpha_j-\theta)}}{|z|+e^{i(\beta_j-\theta)}} \right|
        = \prod_{j=1}^{P} \sqrt{ \frac{|z|^2 + 2|z| \cos(\alpha_j-\theta) + 1}{|z|^2 + 2|z| \cos(\beta_j-\theta) + 1} } = \prod_{j=1}^{P} \sqrt{ \frac{ 1 + \frac{2 \cos(\alpha_j - \theta)}{|z|+1/|z|} }{ 1 + \frac{2 \cos(\beta_j - \theta)}{|z|+1/|z|} } }.
    \end{split}
    \end{equation}
    Let's now bound each of the numerators and denominators in the above.
    
    First, consider $\alpha_j$ or $\beta_j$ being an `unproblematic' angle $\tilde{\beta}_{2k+1},\cdots,\tilde{\beta}_{2P}$, meaning that 
    $\alpha_j \in (\frac{\pi}{2} + \frac{\epsilon}{2} , \frac{3\pi}{2} - \frac{3\epsilon}{2})$ 
    and 
    $\beta_j \in (-\frac{\pi}{2} + \frac{\epsilon}{2} , \frac{\pi}{2} - \frac{3\epsilon}{2})$. 
    Combining with the original restriction 
    $\theta \in (-\frac{\epsilon}{8},\frac{\epsilon}{8})$,
    we get that for these unproblematic angles, $\alpha_j - \theta \in (\frac{\pi}{2} + \frac{3 \epsilon}{8} , \frac{3\pi}{2} - \frac{11\epsilon}{8})$ so that $-\sin({3\epsilon}/{8}) < \cos(\alpha_j - \theta) < 0$. 
    Similarly, $\beta_j - \theta \in (-\frac{\pi}{2} + \frac{3 \epsilon}{8} , \frac{\pi}{2} - \frac{11\epsilon}{8})$ so that $\cos(\beta_j - \theta) > \sin({3\epsilon}/{8})$.
    These give the estimates
    \begin{equation} \label{eq:gConeBound_unprob_terms}
    \begin{split}
        &
        \left|1 + \frac{2}{|z|+1/|z|} \cos(\alpha_j - \theta) \right|
        < 
        \left|1 - \frac{2}{|z|+1/|z|} \sin({3\epsilon}/{8})\right| 
        < 
        \left|1 - \frac{1}{|z|+1/|z|} \sin({3\epsilon}/{8})\right| 
        < 
        \exp\left[- \frac{1}{|z|+1/|z|} \sin({3\epsilon}/{8})\right],
        \\
        & \text{ and } \quad
        \left|\frac{1}{1 + \frac{2}{|z|+1/|z|} \cos(\beta_j - \theta)} \right| 
        <
        \left| 1 - \frac{1}{|z|+1/|z|} \sin({3\epsilon}/{8}) \right| 
        <
        \exp\left[ - \frac{1}{|z|+1/|z|} \sin({3\epsilon}/{8})\right].
    \end{split}
    \end{equation}
    In the latter set of inequalities, the first step uses $\frac{1}{1+x} < 1 - \frac{x}{2}$ for $x \in (0,1)$. Note that these terms contribute to an exponential decay of $|g(z)|$.
    
    Now, we consider how to bound the `problematic angles' which satisfy
    $\alpha_j \in (\frac{3\pi}{2} - \frac{\epsilon}{2} , \frac{3\pi}{2})$ 
    and 
    $\beta_j \in (-\frac{\pi}{2} - \frac{\epsilon}{2} , \frac{\pi}{2})$.
    For these angles, note that $\alpha_j - \theta \in (\frac{3\pi}{2} - \frac{3\epsilon}{8},\frac{3\pi}{2} + \frac{\epsilon}{8})$ and $\beta_j - \theta \in (\frac{\pi}{2} - \frac{3\epsilon}{8},\frac{\pi}{2} + \frac{\epsilon}{8})$. We get that
    $-\sin(3\epsilon/8) < \cos(\alpha_j - \theta) < \sin(\epsilon/8)$ 
    and
    $-\sin(\epsilon/8) < \cos(\beta_j - \theta) < \sin(3\epsilon/8)$. This gives
    \begin{equation} \label{eq:gConeBound_prob_terms}
    \begin{split}
        &
        \left|1 + \frac{2}{|z|+1/|z|} \cos(\alpha_j - \theta)\right| 
        < 
        \left|1 + \frac{2}{|z|+1/|z|} \sin({\epsilon}/{8}) \right| 
        < 
        \left|1 + \frac{1}{|z|+1/|z|} \sin({3\epsilon}/{8}) \right| 
        < 
        \exp\left[\frac{1}{|z|+1/|z|} \sin({3\epsilon}/{8})\right],
        \\
        & \text{ and } \quad
        \frac{1}{\left|1 + \frac{2}{|z|+1/|z|} \cos(\beta_j - \theta)\right|} 
        < 
        \frac{1}{\left|1 - \frac{2}{|z|+1/|z|} \sin(\epsilon/8)\right|} 
        < 
        1 + \frac{1}{|z|+1/|z|} \sin({3\epsilon}/{8}) \le \exp\left[ \frac{1}{|z|+1/|z|} \sin({3\epsilon}/{8}) \right].
    \end{split}
    \end{equation}
    In the first line, the second inequality uses $2 \sin(\epsilon/8) < \sin(3\epsilon/8)$ for all $\epsilon \in (0,\frac{\pi}{2})$.
    In the second line, the second inequality uses $\frac{1}{1-x \sin(\epsilon/8)} < 1 + \frac{x}{2} \sin(3\epsilon/8)$ for $x \in (0,1)$ and $\epsilon \in (0,\frac{\pi}{2})$.
    
    Combining Eqs.~(\ref{eq:gConeBound_k_vs_N},\ref{eq:gConeBound_abs_g},\ref{eq:gConeBound_unprob_terms},\ref{eq:gConeBound_prob_terms}) shows that there's some constant 
    $\tilde{c}'_{\epsilon} = \left( 1 - \frac{2}{2 + \frac{\cos(\epsilon/2) - \cos(3\epsilon/2)}{\cos(3\epsilon/2)}} \right) \sin(\frac{3\epsilon}{8}) > 0$
    for which
    \begin{equation*}
        |g(z)| < \exp[-2P \tilde{c}'_{\epsilon} \frac{1}{|z|+1/|z|}].
    \end{equation*}
    Now, we use the fact that $\frac{1}{|z|+1/|z|} < D^{1/3}$ for $D^{-2/3} < |z| < D^{2/3}$ and that $P > D/2$ to prove the in the proposition. 
\end{proof}

Now, we want to show the following statements comparing the asymptotic forms of expressions that show up throughout this paper.
\begin{prop} \label{prop:1minusg_estimate} 
Let $z_1 = e^{i \theta_1}|z_1|, z_2 = e^{i\theta_2}|z_2|$. 
Suppose either 
$0 < |z_1|,|z_2| < D^{-2/3}$
or
$D^{2/3} < |z_1|,|z_2| < \infty$. 
And suppose
$\pi - \epsilon/8 < \theta_1 < \pi + \epsilon/8$ and $-\epsilon/8 < \theta_2 < \epsilon/8$, 
Then there exist positive $D_{\epsilon}$ and $C_{\epsilon}$  depending only on $\epsilon$ such that for $D > D_{\epsilon}$
\begin{equation*}
     \frac{1 - \frac{g(z_1)}{g(z_2)}}{z_1 - z_2} = \left(1 + \frac{C(z_1,z_2)}{D}\right) \frac{1 - e^{D (z_1 - z_2)}}{z_1 - z_2}, 
     \quad \text{ where } |C(z_1,z_2)| < C_{\epsilon}
\end{equation*}
\end{prop}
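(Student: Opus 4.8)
The plan is to pass to the two series representations of $g$ and reduce the whole statement to an elementary estimate for $\tfrac{1-e^{A+\delta}}{1-e^{A}}$, the delicate point being the interplay between the angular hypotheses and the size of $z_{1}-z_{2}$. Consider first the regime $0<|z_{1}|,|z_{2}|<D^{-2/3}$. Because we have normalized $\arg(w-v)=0$, \eqref{eq:g_le1_seriesExp} gives $\overline{(w-v)}=w-v=D$, so in the ratio the prefactor $\gamma$ cancels and
\[
  \frac{g(z_{1})}{g(z_{2})}=\exp\!\big[D(z_{1}-z_{2})+h_{1}(z_{1})-h_{1}(z_{2})\big].
\]
Writing $A:=D(z_{1}-z_{2})$ and $\delta:=h_{1}(z_{1})-h_{1}(z_{2})$, the quantity to control is $\tfrac{1-e^{A+\delta}}{1-e^{A}}=1+\tfrac{e^{A}(1-e^{\delta})}{1-e^{A}}$, and it suffices to show the second summand is $O(1/D)$ with implied constant depending only on $\epsilon$. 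The regime $D^{2/3}<|z_{1}|,|z_{2}|$ is handled by the substitution $z_{j}\mapsto 1/z_{j}$ together with \eqref{eq:g_ge1_seriesExp} and the estimate \eqref{eq:g_le1_hEstimates2} for $h_{2}$, which reproduces the same $A$ and an analogous $\delta$; I would set this reduction up once at the start so a single computation covers both cases.

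Next I would extract the geometry from the angular hypotheses $\pi-\epsilon/8<\theta_{1}<\pi+\epsilon/8$ and $-\epsilon/8<\theta_{2}<\epsilon/8$. Both $z_{1}$ and $-z_{2}$ then point within $\epsilon/8$ of the direction $\pi$, so $z_{1}-z_{2}$ lies in a cone about the negative real axis and one gets two facts with $c_{\theta}:=\cos(\epsilon/8)$: first $\mathrm{Re}(z_{1}-z_{2})\le-c_{\theta}|z_{1}-z_{2}|$, hence $\mathrm{Re}\,A\le-c_{\theta}|A|\le0$ and $|e^{A}|\le1$; and second, crucially, $|z_{1}-z_{2}|\ge c_{\theta}\max(|z_{1}|,|z_{2}|)=:c_{\theta}r$, i.e. $r\le |A|/(Dc_{\theta})$. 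This coupling between $r$ and $|A|$ is the mechanism that makes the estimate work: $z_{1}$ and $z_{2}$ can be close only when both are close to the origin.

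I would then bound $\delta$ using the second-order vanishing of $h_{1}$. Factoring $z_{1}^{n}-z_{2}^{n}=(z_{1}-z_{2})(z_{1}^{n-1}+\dots+z_{2}^{n-1})$ in $h_{1}(z)=\sum_{n\ge2}\tfrac{z^{n}}{n}\sum_{j}(e^{-in\beta_{j}}-e^{-in\alpha_{j}})$ gives $\delta=(z_{1}-z_{2})S$ with $|S|\le 2P\tfrac{r}{1-r}<\tfrac{D}{c_{\epsilon}}\tfrac{r}{1-r}$ by \eqref{eq:D_N_bounds}. Inserting $r\le|A|/(Dc_{\theta})$ from the previous step yields $|S|\le\tfrac{|A|}{c_{\epsilon}c_{\theta}(1-r)}$ and hence $|\delta|=\tfrac{|A|}{D}|S|\le\tfrac{|A|^{2}}{Dc_{\epsilon}c_{\theta}(1-r)}$. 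Finally, splitting on $|A|\le1$ versus $|A|>1$, I would use $|1-e^{A}|\ge c|A|$ on the first branch (from $1-e^{A}=-A\int_{0}^{1}e^{tA}\,dt$ with $\mathrm{Re}\,A\le0$) and $|1-e^{A}|\ge1-e^{-c_{\theta}}$ on the second, together with $|1-e^{\delta}|\le 2|\delta|$ (valid once $|\delta|$ is bounded, which follows since $|\delta|=O(D^{-1/3})$) and the boundedness of $|A|^{k}e^{-c_{\theta}|A|}$. On the branch $|A|\le1$ this gives $|C|=D\,\tfrac{|e^{A}|\,|1-e^{\delta}|}{|1-e^{A}|}\lesssim\tfrac{2|A|}{cc_{\epsilon}c_{\theta}(1-r)}=O(1)$, and on the branch $|A|>1$ the factor $|e^{A}|\le e^{-c_{\theta}|A|}$ combined with $r\le|A|/(Dc_{\theta})$ turns the dangerous factor of $D$ into $|A|^{2}e^{-c_{\theta}|A|}=O(1)$; in both cases $|C|$ is bounded by a constant depending only on $\epsilon$.

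The step I expect to be the main obstacle is precisely the one in the second paragraph: recognizing and exploiting the coupling $|z_{1}-z_{2}|\ge c_{\theta}\max(|z_{1}|,|z_{2}|)$. Without it, the bound $|\delta|\le|z_{1}-z_{2}|\,O(D^{1/3})$ only yields a relative correction of size $O(D^{-2/3})$, which would make $C$ grow like $D^{1/3}$ near the diagonal $z_{1}\approx z_{2}$, where $1-e^{A}$ is smallest. The angular hypotheses save the estimate by forcing $z_{1},z_{2}$ (and therefore $\delta$) to be correspondingly small exactly in that regime; carrying this coupling consistently through the $|A|\le1$ and $|A|>1$ branches, and checking that the $h_{2}$ version transfers verbatim under $z\mapsto1/z$, is where I would concentrate the care.
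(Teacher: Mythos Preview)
Your proposal is correct, and in fact your decomposition is sharper than the paper's. Both you and the paper start the same way: write $g(z_1)/g(z_2)=e^{A+\delta}$ with $A=D(z_1-z_2)$ and $\delta=h_1(z_1)-h_1(z_2)$, use the angular hypotheses to get $\mathrm{Re}\,A\le -c_\theta|A|$, and bound $|\delta|$ via the quadratic vanishing of $h_1$. The paper then factors $1-e^{A+\delta}=e^{\delta}(e^{-\delta}-e^{A})$ and asserts separately that $e^{\delta}=1+O(1/D)$ and $\frac{e^{-\delta}-e^{A}}{1-e^{A}}=1+O(1/D)$; but with the paper's own bound $|\delta|\lesssim D(|z_1|+|z_2|)^2\lesssim D^{-1/3}$ these two factors are each only $1+O(D^{-1/3})$, so as written the argument does not reach the claimed $O(1/D)$. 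Your identity $\frac{1-e^{A+\delta}}{1-e^{A}}=1+\frac{e^{A}(1-e^{\delta})}{1-e^{A}}$ is exactly what repairs this: it retains the factor $e^{A}$, whose exponential decay $|e^{A}|\le e^{-c_\theta|A|}$ absorbs the $|A|^2$ coming from $|\delta|\lesssim |A|^2/D$ on the branch $|A|>1$. The coupling $r\le |A|/(Dc_\theta)$ you isolate is the mechanism that upgrades the raw bound $|\delta|\lesssim Dr^2$ to $|\delta|\lesssim |A|^2/D$, and your two-branch analysis then closes cleanly. In short, the paper's splitting hides a cancellation that your single-term remainder makes explicit; what you flag as ``the main obstacle'' is indeed the crux, and you handle it correctly.
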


\begin{prop} \label{prop:gminusg_estimate} 
Let $z_1 = e^{i \theta_1}|z_1|, z_2 = e^{i\theta_2}|z_2|$. 
Suppose either 
$0 < |z_1|,|z_2| < D^{-2/3}$
or
$D^{2/3} < |z_1|,|z_2| < \infty$. 
And suppose
$\pi - \epsilon/8 < \theta_1 < \pi + \epsilon/8$ and $\pi - \epsilon/8 < \theta_2 < \pi + \epsilon/8$,
\begin{equation*}
     \frac{g(z_1) - g(z_2)}{z_1 - z_2} = \left(1 + \frac{C(z_1,z_2)}{D}\right) \frac{e^{D z_1} - e^{D z_2}}{z_1 - z_2}, \quad \text{ where } |C(z_1,z_2)| < C_{\epsilon}
\end{equation*}
or that both $z_1,z_2$ both satisfy either $0 < z_1,-z_2 < D^{-2/3}$ or $D^{2/3} < z_1,-z_2 < \infty$.
\end{prop}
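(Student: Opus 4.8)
The plan is to mirror the proof of the companion Proposition~\ref{prop:1minusg_estimate}, exploiting that both $z_1,z_2$ lie in a region where $g$ admits the convergent exponential representation of Eqs.~\eqref{eq:g_le1_seriesExp} and \eqref{eq:g_ge1_seriesExp}. First I would reduce the two modulus ranges to a single case: on the large-modulus region $D^{2/3}<|z_1|,|z_2|$ the substitution $z\mapsto 1/z$ (which is precisely the change of variables performed in the application, Lemma~\ref{lem:cornerRegion_oneZip}) turns $g(z)=\exp[(w-v)/z+h_2(1/z)]$ into the small-modulus form $\exp[Dz+h_2(z)]$ with the new variable in $|z|<D^{-2/3}$, so that the estimate \eqref{eq:g_le1_hEstimates2} plays the role of \eqref{eq:g_le1_hEstimates}. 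Hence it suffices to treat $0<|z_1|,|z_2|<D^{-2/3}$ with the representation $g(z)=\gamma\exp[Dz+h_1(z)]$, recalling that $\arg(w-v)=0$ makes $\overline{(w-v)}=D$. The same representation is valid throughout the disk $|z|<1$ irrespective of the sign of $\mathrm{Re}(z)$, which is exactly what covers the additional hypothesis in which $z_1>0$ and $z_2<0$ sit on opposite sides of the origin; the modulus-one prefactor $\gamma$ simply factors out of $g(z_1)-g(z_2)$ and is harmless for the absolute-value bounds in which this proposition is used.

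The central computation is a pointwise comparison of the two difference quotients. Writing $g(z)=\gamma e^{Dz}e^{h_1(z)}$, I would first expand
\[
  g(z_1)-g(z_2)=\gamma\bigl(e^{Dz_1}-e^{Dz_2}\bigr)+\gamma\bigl(e^{Dz_1}(e^{h_1(z_1)}-1)-e^{Dz_2}(e^{h_1(z_2)}-1)\bigr),
\]
with the second bracket the "correction." The inputs are $|h_1(z)|=O(D|z|^2)$ from Eq.~\eqref{eq:g_le1_hEstimates} and the companion bound $|h_1'(z)|=O(D|z|)$ obtained by termwise differentiation of the series (the derivative estimates alluded to in the text); on $|z|<D^{-2/3}$ these read $|h_1(z)|=O(D^{-1/3})$ and $|h_1'(z)|=O(D^{1/3})$. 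To control the $\tfrac{1}{z_1-z_2}$ singularity uniformly — especially in the near-diagonal limit $z_1\to z_2$ — I would recast both differences as contour integrals over the straight segment $[z_2,z_1]$, using $g'(z)=g(z)\bigl(D+h_1'(z)\bigr)$, so that
\[
  \frac{g(z_1)-g(z_2)}{z_1-z_2}=\frac{1}{z_1-z_2}\int_{z_2}^{z_1}De^{Dz}\,\gamma e^{h_1(z)}\Bigl(1+\tfrac{h_1'(z)}{D}\Bigr)\,dz .
\]
Since the slowly varying factor $\gamma e^{h_1(z)}\bigl(1+h_1'(z)/D\bigr)=\gamma\bigl(1+O(D^{-1/3})\bigr)$ changes negligibly across an interval of length $|z_1-z_2|<2D^{-2/3}$, it may be pulled out up to the stated multiplicative error, leaving exactly $\tfrac{1}{z_1-z_2}\int_{z_2}^{z_1}De^{Dz}\,dz=\tfrac{e^{Dz_1}-e^{Dz_2}}{z_1-z_2}$ and producing the claimed factor $1+C(z_1,z_2)/D$ with $C$ controlled by $\epsilon$ and $c_\epsilon$ alone.

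The main obstacle is making the bound on $C(z_1,z_2)$ genuinely uniform over the whole region. Two points require care: (i) the near-diagonal regime, which the integral representation above handles automatically; and (ii) configurations in which the comparison numerator $e^{Dz_1}-e^{Dz_2}$ is itself small because $D(z_1-z_2)$ sits near a nonzero multiple of $2\pi i$ — possible since $|\mathrm{Im}\,D(z_1-z_2)|$ can grow like $D^{1/3}$ even inside the narrow cone. There one must check that the $h_1$-correction cancels commensurately so the ratio stays bounded, and the integral form makes this transparent because the same oscillatory factor $e^{Dz}$ appears in the numerator and in the extracted leading term. A secondary, bookkeeping point is the exact power of $D$: the crude series bounds yield a multiplicative $1+O(D^{-1/3})$ correction, which is already what every downstream use needs (Lemma~\ref{lem:cornerRegion_oneZip} and the eventual $O(D^{-1/3})$ estimate), so I would prove the estimate in this uniform multiplicative form; sharpening to a literal $1/D$ rate would require the second-difference structure of $h_1$ rather than its pointwise size. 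Finally, the region boundary $|z|\sim D^{-2/3}$ is matched to Proposition~\ref{prop:g_expDecay_cone}, which guarantees nothing is lost from the intermediate annulus.
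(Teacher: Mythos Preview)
Your overall strategy is sound and ends at the same place as the paper, but the route is genuinely different. The paper's proof is a two-line sketch: it factors algebraically
\[
\frac{g(z_1)-g(z_2)}{z_1-z_2}=e^{h_1(z_1)}\,\frac{e^{Dz_1}-e^{Dz_2+h_1(z_2)-h_1(z_1)}}{z_1-z_2}
\]
(up to the unimodular $\gamma$), and then invokes the same two estimates used in Proposition~\ref{prop:1minusg_estimate}: one for $e^{h_1(z_1)}=1+o(1)$ and one asserting that the perturbed difference quotient equals the unperturbed one times $1+o(1)$. Your route instead writes the difference quotient as an average of $g'=g\,(D+h_1')$ over the segment $[z_2,z_1]$ and extracts the slowly-varying factor. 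Both approaches need the bounds $|h_1(z)|\lesssim D|z|^2$ and $|h_1'(z)|\lesssim D|z|$; your integral form has the mild advantage that the near-diagonal limit $z_1\to z_2$ is manifestly regular.

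Two of your side remarks are worth highlighting because they are sharper than the paper's own treatment. First, your concern~(ii) about zeros of $e^{Dz_1}-e^{Dz_2}$ at $D(z_1-z_2)\in2\pi i\mathbb{Z}\setminus\{0\}$ is a genuine issue with the \emph{statement} as written for the full cone $|\theta_j-\pi|<\epsilon/8$: at such points the right-hand side vanishes while the left-hand side does not, so no bounded $C(z_1,z_2)$ can exist there. The paper's proof simply does not address this. In practice this does no harm, because the only consumer of the proposition (Lemma~\ref{lem:cornerRegion_oneZip}) immediately substitutes real $z_1,z_2$ on the negative axis, where $D(z_1-z_2)$ is real and the issue evaporates; but you are right that the cone-version needs either an additional hypothesis or a reformulation as an \emph{additive} rather than multiplicative error. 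Second, your observation about the rate is correct: the inputs $|h_1|\lesssim D|z|^2=O(D^{-1/3})$ only deliver a multiplicative $1+O(D^{-1/3})$ correction, not $1+O(1/D)$; the paper's own argument for Proposition~\ref{prop:1minusg_estimate} in fact produces the same $D^{-1/3}$ (cf.\ the Taylor remainder there, which is of size $D^{-1/3}$), so the ``$C/D$ with $|C|<C_\epsilon$'' in both statements should be read as shorthand for an $o(1)$ multiplicative error. Since every downstream estimate only needs $o(1)$, this is cosmetic.
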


Note that for Proposition~\ref{prop:1minusg_estimate} the $z_1,z_2$ have approximately opposite arguments close to the negative,positive real axis, and $\frac{g(z_1)}{g(z_2)}$ is decays exponentially in that region, whereas for Proposition~\ref{prop:1minusg_estimate} $z_1,z_2$ have approximately the same arguments close to the negative real axis. Also, for all $|z_1|,|z_2| > 0$, note that 
\begin{equation} \label{eq:gMinusg_lt_1Minusg}
    0 < \frac{ \exp[-D|z_2|] - \exp[-D|z_1|] }{ |z_1| - |z_2| } < \frac{ 1 - \exp[-D(|z_1|+|z_2|)] }{ |z_1| + |z_2| },
\end{equation}
which can be combined with the above Propositions to estimate certain quantities that show up in our analyses.

\begin{proof}[Proof of Proposition~\ref{prop:1minusg_estimate}]
We will show the statement for $-D^{-2/3} < z_1,z_2 < 0$. The other cases of $-\infty < z_1,z_2 < -D^{2/3}$ or involving $\frac{1}{g(z)}$ follows from the exact same reasoning using the above estimates of $g(z)$ after changing variables $z \to \frac{1}{z}$ or using the similar estimates for $\frac{1}{g(z)}$.

First, let's write 
\begin{equation} \label{eq:1minusg_estimate_1}
    \frac{1 - \frac{g(z_1)}{g(z_2)}}{z_1 - z_2} 
    = 
    \frac{1 - \exp\left[D(z_1 - z_2) + h_1(z_1) + h_2(z_2)\right]}{z_1 - z_2}
    = 
    \exp\left[h_1(z_1)+h_1(z_2)\right] \frac{\exp\left[-h_1(z_1)-h_2(z_2)\right] - \exp\left[D(z_1 - z_2)\right]}{z_1 - z_2}.
\end{equation}
We want to estimate the real and imaginary parts of the factor $\exp\left[\pm(h_1(z_1)+h_2(z_2))\right]$. First, we need to estimate $|h_1(z_1)+h_2(z_2)|$. By Eq.~(\ref{eq:g_le1_hEstimates},\ref{eq:g_le1_hEstimates2}),
\begin{equation} \label{eq:abs_hPlush_estimate}
\begin{split}
    |h_1(z_1)+h_2(z_2)|
    &< 
    \frac{D}{c_{\epsilon}} 
    (|z_1| + |z_2| - \log(1-|z_1|) - \log(1-|z_2|)) 
    <
    \frac{D}{c_{\epsilon}} \big( (|z_1|+|z_2|) - \log( 1 - (|z_1|+|z_2|) ) \big) 
    \\
    &< 
    \frac{D}{c_{\epsilon}} (|z_1|+|z_2|)^2, \quad \text{ on } |z_1|+|z_2| \in (0,2 D^{-2/3}) \text{ for } D > 10
\end{split}
\end{equation}
This gives (for $D > 10$)
\begin{equation}
\begin{split}
    \exp\left[- \frac{D}{c_{\epsilon}} (|z_1|+|z_2|)^2 \right] 
    < 
    \mathrm{Re} \exp[\pm(h_1(z_1)+h_2(z_2))] 
    &<
    \exp\left[ \frac{D}{c_{\epsilon}} (|z_1|+|z_2|)^2 \right],
    \\
    \text{ and } \quad
    | \mathrm{Im} \exp[\pm(h_1(z_1)+h_2(z_2))] | 
    &< 
    \exp\left[ \frac{D}{c_{\epsilon}} (|z_1|+|z_2|)^2 \right] \cdot \frac{D}{c_{\epsilon}} (|z_1|+|z_2|)^2
\end{split}
\end{equation}
At this point, we want to estimate $\exp[\frac{D}{c_{\epsilon}} (|z_1|+|z_2|)^2]$. By the Taylor approximation theorem, we can write
\begin{equation}
\begin{split}
    \exp[\frac{D}{c_{\epsilon}} (|z_1|+|z_2|)^2] 
    =
    1 + R(|z_1|+|z_2|), \text{ where } |R(y)| 
    &< y
    \cdot \mathrm{max}_{|z_1|+|z_2| \in (0, 2 D^{-2/3})} \left\{  \frac{D}{c_{\epsilon}} 2 y \exp\left[ \frac{D}{c_{\epsilon}} (|z_1|+|z_2|)^2 \right] \right\}
    \\
    &= \frac{(2 D)^{-1/3}}{c_{\epsilon}} \exp \left[\frac{(2 D)^{-1/3}}{2 c_{\epsilon}} \right].
\end{split}
\end{equation}
Combining the above equations gives that there exists positive $D_{1,\epsilon}, C_{1,\epsilon}$ such that if $D > D_{1,\epsilon}$,
\begin{equation} \label{eq:1minusg_estimate_2}
    \exp[\pm(h_1(z_1) + h_2(z_2))] 
    = 
    1 + \frac{C_{1,\pm}(z_1,z_2)}{D}, \quad \text{ where } |C_{1,\pm}(z_1,z_2)| < C_{1,\epsilon}.
\end{equation}
for $|z_1|,|z_2|  < D^{-2/3}$.
Similarly, we can say that 
\begin{equation*}
    \frac{\exp[-h_1(z_1)-h_2(z_2)] - e^{D (z_1 - z_2)}}{z_1 - z_2} 
    = 
    \frac{1 - e^{D (z_1 - z_2)}}{z_1 - z_2}
    \left( 1 - \frac{1 - \exp[-h_1(z_1)-h_2(z_2)]}{1 - e^{D(z_1 - z_2)}} \right).
\end{equation*}
So, using the estimate Eq.~\eqref{eq:abs_hPlush_estimate} and considering the range of arguments of $z_1,z_2$, we can say that there exist positive constants $D_{2,\epsilon}, C_{2,\epsilon}$ such that if $D > D_{2,\epsilon}$ then
\begin{equation} \label{eq:1minusg_estimate_3}
    \frac{\exp[-h_1(z_1)-h_2(z_2)] - e^{D (z_1 - z_2)}}{z_1 - z_2} 
    =
    \frac{1 - e^{D (z_1 - z_2)}}{z_1 - z_2}
    \left( 1 + \frac{C_2(z_1,z_2)}{D} \right), 
    \quad \text{ where } |C_2(z_1,z_2)| < C_{2,\epsilon} .
\end{equation}
From here, the proposition follows by combining Eqs.~(\ref{eq:1minusg_estimate_1},\ref{eq:1minusg_estimate_2},\ref{eq:1minusg_estimate_3}) choosing $C_{\epsilon} = C_{1,\epsilon} \cdot C_{2,\epsilon}$ and $D_{\epsilon} = \mathrm{max}(10,D_{1,\epsilon},D_{2,\epsilon})$.
\end{proof}

\begin{proof}[Proof of Proposition~\ref{prop:gminusg_estimate}]
This estimate is proved quite similarly to the Proposition~\ref{prop:1minusg_estimate}. Again, we'll just show this in the range $0 < |z_1| , |z_2| < D^{-2/3}$, and the rest can be proved similarly. First write 
\begin{equation*}
    \frac{g(z_1) - g(z_2)}{z_1 - z_2} = e^{h_1(z_1)} \frac{\exp[-D z_1] - \exp[-D z_2 - h_1(z_1) + h_1(z_2)]}{z_1 - z_2}.
\end{equation*}
Similar estimates for $e^{h_1(z_1)}$ and $\exp[- h_1(z_1) + h_1(z_2)]$ to the previous section show that there's constants $C_{3,\epsilon},D_{4,\epsilon}$ so that for $D  > D_{3,\epsilon}$ large enough
\begin{equation*}
    e^{h_1(z_1)} = \left(1 + \frac{C_{3}(z_1)}{D}\right), \quad \text{ where } |C_{3}(z_1)| < C_{3,\epsilon} 
\end{equation*}
and for $D  > D_{4,\epsilon}$,
\begin{equation*}
    \frac{\exp[-D z_1] - \exp[-D z_2 - h_1(z_1) + h_1(z_2)]}{z_1 - z_2} = \frac{\exp[-D z_1] - \exp[-D z_2]}{z_1 - z_2} \left( 1 + \frac{C_{3}(z_1,z_2)}{D} \right), \quad \text{ where } |C_{4}(z_1,z_2)| < C_{4,\epsilon}.
\end{equation*}
Then, we can choose $\tilde{C}_{\epsilon} =  C_{3,\epsilon} \cdot C_{4,\epsilon}$ and $\tilde{D}_{\epsilon} = \max(D_{3,\epsilon},D_{4,\epsilon})$.
\end{proof}

\section{Spectral Theory of Finite Hilbert Transformations} \label{app:finiteHilbTransf}

Consider a finite interval $(a,b) \subset \R$ and the operator 
$T_{(a,b)}: L^2(a,b) \to L^2(a,b)$ defined as 
\begin{equation}
   (T_{(a,b)} f) (w) := \frac{1}{i \pi} \mathrm{p.v.} 
   \int_{a}^{b} \frac{f(w')}{w'-w} dw'
\end{equation}
where $\mathrm{p.v.}$ refers to the Cauchy principal value of the integral. Such a $T_{(a,b)}$ is called a `Finite Hilbert Transformation', since it is a restriction of the Hilbert transformation on $L^2(\R) \to L^2(\R)$,
\begin{equation}
    (T f) (w) := \frac{1}{i \pi} \mathrm{p.v.} 
   \int_{-\infty}^{\infty} \frac{f(w')}{w'-w} dw'.
\end{equation}
Note $T$ has a spectrum of ${\pm 1}$, where the $+1$ eigenspace is generated by plane waves $\{ e^{i k w} | k>0 \}$ and the $-1$ eigenspace is generated by plane waves $\{ e^{i k w} | k<0 \}$. As such, the spectrum of $T_{(a,b)}$ is contained entirely in the range $[-1,1]$.

In~\cite{Koppelman1959}, it is found that the functions 
\begin{equation} \label{eq:finiteHilb_eigenfucns}
    f_{\xi}(w) := \frac{\exp\left[ \frac{1}{2 \pi i} \left(\log \frac{1+\xi}{1-\xi}\right) \left(\log \frac{b-w}{w - a}\right) \right]}{ \sqrt{ (b - w) (w - a) } }
\end{equation}
are eigenfunctions that satisfying $(T f_{\xi})(z) = \xi f_{\xi}(z)$. 

This was solved for in~\cite{Koppelman1959} by supposing $(T f_{\xi})(z) = \xi f_{\xi}(z)$ and using the ansatz that one can continue $f_{\xi}$ analytically, to consider the functions $f^{\star}_{\xi}(w) := \frac{1}{2 \pi i} \int_{a}^{b} \frac{f_{\xi}(w')}{w'-w} dw'$ (defined holomorphically for $w \in \C$ \textit{without} using a principal value). In particular, one would have the relations,
\begin{equation} \label{eq:barrierEqs_0}
\begin{split}
    f^{\star}_{\xi}(w + i 0^+) - f^{\star}_{\xi}(w - i 0^+) 
    &= 
    f_{\xi}(w), \quad \text{for } w \in (a,b) 
    \\
    f^{\star}_{\xi}(w + i 0^+) - f^{\star}_{\xi}(w - i 0^+) 
    &=
    0, \quad \text{for } w \in \R - (a,b) 
    \\
    f^{\star}_{\xi}(w + i 0^+) + f^{\star}_{\xi}(w - i 0^+) 
    &= \frac{1}{\pi i} \mathrm{p.v.} \int_{a}^{b} \frac{f_{\xi}(w')}{w'-w} dw' 
    = (T_{(a,b)} f_{\xi})(w) = \xi f_{\xi}(w), \quad \text{for } w \in (a,b) 
    \\
    f^{\star}_{\xi}(\infty) &= 0.
\end{split}
\end{equation}
The first two equalities follows from the residue theorem showing a branch cut along $(a,b)$, and the third one follows from the ansatz of $f_{\xi}(w)$ being an eigenfunction. The above equalities combine to give the `barrier equations'
\begin{equation} 
\begin{split}
    (\xi-1) f^{\star}_{\xi}(w + i 0^+) - (\xi+1) f^{\star}_{\xi}(w - i 0^+) &= 0, \quad \text{for } w \in (a,b) \\
            f^{\star}_{\xi}(w + i 0^+) -         f^{\star}_{\xi}(w - i 0^+) &= 0, \quad \text{for } w \in (a,b).
\end{split}
\end{equation}
In particular, one can guess that for $\xi \in (-1,1)$
\begin{equation}
    h_{\xi}^{\pm}(w) = 
    \exp\left[ 
    \frac{1}{2 \pi i} \left( \log \frac{1+\xi}{1-\xi} \pm i\pi \right) 
    \int_{a}^{b} \frac{1}{w'-w} dw'
    \right] 
\end{equation}
are also solutions that have a branching structure that solve the barrier equation. However, these functions satisfy $h_{\xi}^{\pm}(\infty) = 1$. As such, any multiple of $f^{\star}_{\xi}(w) = h_{\xi}^{+}(w) - h_{\xi}^{-}(w)$ is a solution that vanishes at $\infty$. From here, one can use Eq.~\eqref{eq:barrierEqs_0} to find that the $f_{\xi}(w)$ defined in Eq.~\eqref{eq:finiteHilb_eigenfucns} work. 

There is also a direct derivation as follows. For now, suppose $-1 < \xi < 0$ so that $\log \frac{1+\xi}{1-\xi} > 0$.
\begin{equation} \label{eq:alternateDerivation_finiteHilbertEigenfunctions}
\begin{split}
    (T f_{\xi}) (w) 
    &= 
    \frac{1}{i \pi} \mathrm{p.v.} \int_{a}^{b} 
    \frac{\exp\left[ \frac{1}{2 \pi i} \left(\log \frac{1+\xi}{1-\xi}\right) \left(\log \frac{b-w'}{w'-a}\right) \right]}
    { \sqrt{ (b - w') (w' - a) } } 
    \frac{1}{w'-w} dw' 
    \\
    &=
    \mathrm{p.v.} \int_{-\infty}^{\infty} \frac{1}{2 \pi i} 
    \frac{\exp\left[ \frac{1}{2 \pi i} \left(\log \frac{1+\xi}{1-\xi}\right) W' \right]}{\cosh(\frac{W'}{2})} 
    \frac{1}{\left( \frac{b-a}{e^{W'}+1} + a \right) -w} 
    dW',
\end{split}
\end{equation}
where the second equality follows from the change of variables $W' = \log \frac{b-w'}{w'-a}$. At this point, note that the integrand has poles at $W' \in 2 \pi i \Z  + \log \frac{b-w}{w-a}$, and has a residue of 
$\frac{1}{2 \pi i}\frac{2 (-1)^{k+1}}{\sqrt{(b-w)(w-a)}} \left( \frac{1+\xi}{1-\xi} \right)^k \exp\left[ \frac{1}{2 \pi i} \left(\log \frac{1+\xi}{1-\xi}\right) \left(\log \frac{b-w}{w-a}\right) \right]  $
at each pole $W' = 2 \pi i  k + \log \frac{b-w}{w-a}$. We can close the contour in the upper-half plane, since $\log \frac{1+\xi}{1-\xi} > 0$. Since we are considering a principal value integral, the full integral is the sum over $2 \pi i$ times the poles with $k > 0$ and $\pi i$ times the $k=0$ pole, since the $k=0$ pole lies on the real axis. This sum over poles gives a geometric series that ends up equalling $\xi f_{\xi}(w)$. Essentially the same argument for $0 < \xi < 1$ works, except one needs to close the contour in the lower-half plane.

Upon proper normalization of the functions $f_{\xi}(w)$, these functions form an orthonormal basis and there's a Fourier-inversion formula using these eigenfunctions. In particular, the formulas
\begin{equation} \label{eq:inversionFormulas_finiteHilbert}
\begin{split}
    f(w) &= \frac{1}{\pi} \sqrt{\frac{b-a}{2}} \int_{-1}^{1} \frac{\exp\left[ \frac{1}{2 \pi i} \left(\log \frac{1+\xi}{1-\xi}\right) \left(\log \frac{b-w}{w-a}\right) \right]}{ \sqrt{ (1 - \xi^2) (b - w) (w - a) } } g(\xi) d\xi \\
    g(\xi) &= \frac{1}{\pi} \sqrt{\frac{b-a}{2}} \int_{a}^{b} \frac{\exp\left[-\frac{1}{2 \pi i} \left(\log \frac{1+\xi}{1-\xi}\right) \left(\log \frac{b-w}{w-a}\right) \right]}{ \sqrt{ (1 - \xi^2) (b - w) (w - a) } } f(w) dw
\end{split}
\end{equation}
give equations for the spectral transform of functions in the `real space' $L^2(a,b)$ to functions in the `spectral space' $L^2(-1,1)$.

A derivation of the above follows from the standard Fourier Inversion formula
\begin{equation}
    F(x) = \frac{1}{\sqrt{2\pi}} \int_{-\infty}^{\infty} e^{i k x}  G(k) dk \quad,\quad G(k) = \frac{1}{\sqrt{2\pi}} \int_{-\infty}^{\infty} e^{-i k x} F(x) dx
\end{equation}
upon the changes of variables and reassignments
\begin{equation*}
\begin{split}
    k = \frac{1}{\sqrt{2 \pi}} \log \frac{1+\xi}{1-\xi}
        \quad&,\quad 
    x = -\frac{1}{\sqrt{2 \pi}} \log \frac{b-w}{w-a}
    \\
    g(\xi) = \left( \frac{2}{\pi} \right)^{1/4} \frac{1}{\sqrt{1-\xi^2}} G(u(\xi))
        \quad&,\quad
    f(w) = \left( \frac{2}{\pi} \right)^{1/4} \sqrt{\frac{b-a}{(b-w)(w-a)}} F(w(x)).
\end{split}
\end{equation*}
Note that the above assignments of $f,g$ with respect to $F,G$ are consistent with $\int_{-\infty}^{\infty} |G(k)|^2 dk = \int_{-1}^{1} |g(\xi)|^2 d\xi$ and  $\int_{-\infty}^{\infty} |F(x)|^2 dx = \int_{a}^{b} |f(w)|^2 dw$.

In this paper, we considered $a=0,b=1$ and the operators
\begin{equation}
    (T^{\pm}f)(w) := \frac{1}{i \pi} \int_{0 \xrightarrow{\pm} 1} \frac{f(w')}{w'-w} dw'
\end{equation}
defined on functions on a neighborhood of $(0,1)$. The fact that all eigenfunctions $f_{\xi}(w)$ are holomorphic in $w$ around $(0,1)$ mean that
\begin{equation}
    (T^{\pm}f_{\xi})(w) = (\xi \pm 1) f_{\xi}(w)
\end{equation}
by the residue theorem. 

Similarly, we eventually find it useful to express $\frac{1}{w-w'}$ for $w' \in (0,1) \pm i 0^+$ that lie slightly above or below the real-axis around $(0,1)$. The inversion formulas in Eq.~\eqref{eq:inversionFormulas_finiteHilbert} give
\begin{equation} \label{eq:spectralRep_1_over_w}
\begin{split}
    \frac{1}{w-w'} &= \frac{i}{2 \pi} \int_{-1}^{1} \frac{-\xi \pm 1}{1 - \xi^2} f_{-\xi}(w') f_{\xi}(w) d\xi, \quad \text{ for } w' \in (0,1) \pm i 0^+.
\end{split}
\end{equation}
This follows because an expression for the `$g(\xi)$' of Eq.~\eqref{eq:inversionFormulas_finiteHilbert} ends up returning $(-\xi \pm 1)\frac{f_{-\xi}(w')}{\pi \sqrt{2} \sqrt{1-\xi^2}}$, which  can be derived similarly to the discussion around Eq.~\eqref{eq:alternateDerivation_finiteHilbertEigenfunctions}, where the $-\xi \pm 1$ factor follows because the above integral is not a principal value integral and integrating around the pole along the real axis modifies the sum over poles.

\section{Summing over the leading-order contributions, $-\sum_{n=1}^{\infty} \frac{1}{2n} (2 \sin(\alpha/2))^2 \mathcal{J}_{2n}$ for a single zipper} 
\label{app:sum_leading_order_I2n}

In Eq.~\eqref{eq:asymptIntegral_singleZip}, we defined the leading-order contribution to $\mathcal{I}_{2n}$ as
\begin{equation}
    \mathcal{J}_{2n} := \frac{2}{(2\pi)^{2n}} \int_{0 \xrightarrow{-} 1} dw_1 \int_{0 \xrightarrow{+} 1} dw_2 \cdots \int_{0 \xrightarrow{-} 1} dw_{2n-1} \int_{\mathcal{C}_\text{out}} dw_{2n} \frac{1}{(w_1 - w_2) \cdots (w_{2n} - w_1)} \bigg(1 - e^{-D|w_{2n} - w_{1}|}\bigg),
\end{equation}
so that Proposition~\ref{prop:asymptIntegral_singleZip} says
$\mathcal{I}_{2n} \cdot (1+O(1/D)) + O(1/D) = \mathcal{J}_{2n}$.

We will instead consider the similar sum
$-\sum_{n=1}^{\infty} \frac{1}{2n} (2 \sin(\alpha/2))^{2n} \mathcal{J}_{2n}$
which will sum over all the universal leading-order contributions. This sum can be tackled by the spectral theory of the operators $T^{\pm}$
\begin{equation*}
    (T^{\pm} f)(w) := \frac{1}{i \pi} \int_{0 \xrightarrow{\pm} 1} \frac{f(w')}{w'-w} dw' 
\end{equation*}
defined on functions in a neighborhood of $(0, 1)$. These operators' spectral theory is closely related to that of the `Finite Hilbert Transformation'. This operation is reviewed in Appendix~\ref{app:finiteHilbTransf}.

The relevant points are that these operators are diagonalized by certain functions $\{f_{\xi}(w)\}_{\xi \in (-1,1)}$ (see Eq.~\eqref{eq:finiteHilb_eigenfucns}), so that
\begin{equation*}
    (T^{\pm} f)(w) = (\xi \pm 1) f_{\xi}(w).
\end{equation*}
After an additional normalization, the $f_{\xi}(w)$ can be thought of as orthonormal, and there is an analog of the Fourier-inversion formula (see Eq.~\eqref{eq:inversionFormulas_finiteHilbert}) using these functions that is an isometry between the `real space' $L^2(0,1)$ and the `spectral space' $L^2(-1,1)$ parameterized by the $\xi$ variable. As in Eq.~\eqref{eq:spectralRep_1_over_w}, the inversion formula allows us to write
\begin{equation}
\begin{split}
    \frac{1}{w_1-w_2} &= \frac{i}{2 \pi} \int_{-1}^{1} \frac{-\xi + 1}{1 - \xi^2} f_{-\xi}(w_2) f_{\xi}(w_1) d\xi,
\end{split}
\end{equation}
since $w_2-w_1$ lies slightly above the real axis. At this point, we can write 
\begin{equation*}
\begin{split}
    \mathcal{J}_{2n} :&= 2 \frac{(i \pi)^{2n-2}}{(2\pi)^{2n}} \int_{\mathcal{C}_\text{out}} dw_{2n} \int_{0 \xrightarrow{-} 1} dw_1  \int_{-1}^{1} d\xi \, f_{\xi}(w_1) \frac{1 - e^{-D|w_{2n} - w_{1}|}}{w_{2n}-w_1} \frac{i}{2 \pi} \frac{-\xi + 1}{1 - \xi^2}
    \\ 
    &\quad\quad\quad
    \int_{0 \xrightarrow{-} 1} \frac{dw_{2n-1}}{i \pi} \frac{1}{w_{2n-1}-w_{2n}} \cdots \int_{0 \xrightarrow{-} 1} \frac{dw_3}{i \pi} \frac{1}{w_3-w_4} \int_{0 \xrightarrow{+} 1} \frac{dw_2}{i \pi} \frac{1}{w_2-w_3} f_{-\xi}(w_2)
    \\
    &= 2 \frac{(i \pi)^{2n-2}}{(2\pi)^{2n}} \int_{\mathcal{C}_\text{out}} dw_{2n} \int_{0 \xrightarrow{-} 1} dw_1  \int_{-1}^{1} d\xi \, f_{\xi}(w_1) \frac{1 - e^{-D|w_{2n} - w_{1}|}}{w_{2n}-w_1} \frac{i}{2 \pi} \frac{-\xi + 1}{1 - \xi^2}
    \\
    &\quad\quad\quad
    \int_{0 \xrightarrow{-} 1} \frac{dw_{2n-1}}{i \pi} \frac{1}{w_{2n-1}-w_{2n}} \cdots \int_{0 \xrightarrow{-} 1} \frac{dw_3}{i \pi} \frac{-\xi + 1}{w_3-w_4} f_{-\xi}(w_3)
    \\
    &= \cdots
    \\
    &= 2 \frac{(i \pi)^{2n-2}}{(2\pi)^{2n}} \int_{\mathcal{C}_\text{out}} dw_{2n} \int_{0 \xrightarrow{-} 1} dw_1  \int_{-1}^{1} d\xi \, f_{\xi}(w_1) \frac{1 - e^{-D|w_{2n} - w_{1}|}}{w_{2n}-w_1} \frac{i}{2 \pi} \frac{-\xi + 1}{1 - \xi^2} (-\xi-1)^{n-2} (-\xi+1)^{n-1}
    \\
    &\quad\quad\quad
    \int_{0 \xrightarrow{-} 1} \frac{dw_{2n-1}}{i \pi} \frac{1}{w_{2n-1}-w_{2n}} f_{-\xi}(w_{2n-1}).
\end{split}
\end{equation*}
At each step above, we used the fact that $(T^{\pm}f_{-\xi})(w) = (-\xi \pm 1) f_{-\xi}(w)$. For $j$ even, the $(n-1)$ integrals so far over $w_j$ correspond to $T^{+}$. For $j$ odd, the $(n-2)$ integrals so far over $w_j$ correspond to $T^{-}$. 

We want to do the same for the integral over $w_{2n-1}$. However, at this point we have an issue that $w_{2n}$ is not in $(0,1)$. One way to evaluate the next integral is to analytically continue the solution by evaluating the integral for $w'_{2n} \in (0,1) + i0^+$ and considering the analytic continuation of that solution to $w_{2n}$ while avoiding the branch cut at $(0,1)$. Substituting in the expression for $f_{\xi}(w_{2n})$ as in Eq.~\eqref{eq:finiteHilb_eigenfucns} and doing this integral gives 
\begin{equation*}
\begin{split}
    \mathcal{J}_{2n} &= 2 \frac{(i \pi)^{2n-2}}{(2\pi)^{2n}} \int_{\mathcal{C}_\text{out}} dw_{2n} \int_{0 \xrightarrow{-} 1} dw_1  \int_{-1}^{1} d\xi \, f_{\xi}(w_1) \frac{1 - e^{-D|w_{2n} - w_{1}|}}{w_{2n}-w_1} \frac{i}{2 \pi} \frac{-\xi + 1}{1 - \xi^2} (-\xi-1)^{n-1} (-\xi+1)^{n-1}
    \\
    &\quad\quad\quad\quad\quad\quad\quad\quad\quad \times
        \frac{\exp\left[ \frac{1}{2 \pi i} \left(\log \frac{1-\xi}{1+\xi}\right) \left(\log \frac{w_{2n}-1}{w_{2n}}\right) \right]}{ \sqrt{ w_{2n} (w_{2n} - 1) } } \cdot \frac{\mathrm{sgn}(w_{2n})}{i} \sqrt{\frac{1+\xi}{1-\xi}}
    \\
    &= \frac{1}{\pi^3} 
    \int_{\mathcal{C}_\text{out}} dw_{2n} \int_{0}^{1} dw_1  \int_{-1}^{1} d\xi \, \frac{\exp\left[ \frac{1}{2 \pi i} \left(\log \frac{1+\xi}{1-\xi}\right) \left(\log \frac{1-w_{1}}{w_{1}} \frac{w_{2n}}{w_{2n}-1}\right) \right]}{ \sqrt{w_{1} (1-w_{1})} \sqrt{w_{2n} (w_{2n}-1)} }
    \frac{1 - e^{-D|w_{2n} - w_{1}|}}{w_{2n}-w_1}
    \\
    &\quad\quad\quad\quad\quad\quad\quad\quad\quad \times
    \left(\frac{1-\xi^2}{4}\right)^{n} 
    \frac{1}{(1-\xi^2)^{3/2}} \mathrm{sgn}(w_{2n})
\end{split}
\end{equation*}
Above, $\mathrm{sgn}(w_{2n})$ is the sign function, equalling $-1$ on the negative branch and $+1$ on the positive branch.
Now, we can set $\xi = \tanh(x/2)$ to get
\begin{equation*}
\begin{split}
    \mathcal{J}_{2n} 
    = 
    \frac{1}{2 \pi^3}
    \int_{\mathcal{C}_\text{out}} dw_{2n} \int_{0}^{1} dw_1  \int_{-\infty}^{\infty} dx \, \frac{\exp\left[ \frac{1}{2 \pi i} x \left(\log \frac{1-w_{1}}{w_{1}} \frac{w_{2n}}{w_{2n}-1}\right) \right] \mathrm{sgn}(w_{2n})}{ \sqrt{w_{1} (1-w_{1})} \sqrt{w_{2n} (w_{2n}-1)} }
    \frac{1 - e^{-D|w_{2n} - w_{1}|}}{w_{2n}-w_1} 
    \left( \frac{1}{(2\cosh(x/2))^2} \right)^{n} \cosh(x/2) 
\end{split}
\end{equation*}
Note that $\frac{1}{(2 \cosh(x/2))^2} < 1/4$ for $x \in \R - \{0\}$. This means that the series Eq.~\eqref{eq:log_expansion_A} converges so that 
\begin{equation}
    -\sum_{n=1}^{\infty} \frac{1}{2n} (2 \sin(\alpha/2))^{2n} \left(\frac{1}{(2 \cosh(x/2))^2}\right)^n 
    = 
    \frac{1}{2}
    \log \left[ 1 - \frac{\sin(\alpha/2)^2}{\cosh(x/2)^2} \right]
    = 
    \frac{1}{2}
    \log \left[ 
        \frac{\cos(\alpha)+\cosh(x)}
             {1+\cosh(x)}
    \right]
\end{equation}
where the logarithm's branch is well-defined and satisfies 
$
\log \left[ \frac{\cos(\alpha)+\cosh(x)}{1+\cosh(x)} \right] \xrightarrow[]{x \to \pm \infty}
0
$
.

From here, we can derive that the full sum over the integrals to be
\begin{equation}
\begin{split}
    &-\sum_{n=1}^{\infty} \frac{1}{2n}
    (2 \sin(\alpha/2))^2
    \mathcal{J}_{2n} \\
    &= 
    \frac{1}{4 \pi^3}
    \int_{\mathcal{C}_\text{out}} dw_{2n} \int_{0}^{1} dw_1  \int_{-\infty}^{\infty} dx \, \frac{\exp\left[ \frac{1}{2 \pi i} x \left(\log \frac{1-w_{1}}{w_{1}} \frac{w_{2n}}{w_{2n}-1}\right) \right]}{ \sqrt{w_{1} (1-w_{1})} \sqrt{w_{2n} (w_{2n}-1)} }
    \frac{1 - e^{-D|w_{2n} - w_{1}|}}{w_{2n}-w_1} \cosh\left(\frac{x}{2}\right)
    \log \left[ \frac{\cos(\alpha)+\cosh(x)}{1+\cosh(x)} \right].
\end{split}
\end{equation}
Note that the analytic continuation of the logarithm for complex $x$ can be chosen to have branch cuts along along all of the segments $x \in i (\pi - \alpha ) \cup i (\pi + \alpha) + 2 \pi i \Z$, each of which map the arguments
$\frac{\cos(\alpha)+\cosh(x)}{1+\cosh(x)}$ to paths $0 \to \infty \cup \infty \to 0$.
One can check that the branch cut is of the form
\begin{equation*}
\begin{split}
    \log \left[ \frac{\cos(\alpha)+\cosh(x+i0^+)}{1+\cosh(x+i0^+)} \right]
    -
    \log \left[ \frac{\cos(\alpha)+\cosh(x+i0^-)}{1+\cosh(x+i0^-)} \right]
    =
    \begin{cases}
        +2 \pi i
        , \quad &\text{ for } 
        x_0 \in (\pi - \alpha , \pi) + 2 \pi \Z 
        \\
        -2 \pi i
        , \quad &\text{ for } 
        x_0 \in (\pi , \pi + \alpha ) + 2 \pi \Z 
        \\
        0
        , \quad &\text{ for } 
        x_0 \in \R - \big( ( \pi-\alpha , \pi+\alpha ) + 2 \pi \Z \big)
    \end{cases}
\end{split}
\end{equation*}

At this point, we can perform the $x$ integral by closing the contour into the upper-half or lower-half plane, depending on if $\Lambda := \log \frac{1-w_{1}}{w_{1}} \frac{w_{2n}}{w_{2n}-1}$ is negative or positive. Let's first consider the case of $\Lambda < 0$, so we close into the upper-half plane. The contour can be deformed to surround the branch cuts. In particular, we end up getting
\begin{equation*}
\begin{split}
    &\int_{-\infty}^{\infty} dx \, 
    \exp\left[ \frac{\Lambda}{2 \pi i} \cdot x  \right]
    \cosh\left(\frac{x}{2}\right)
    \log \left[ \frac{\cos(\alpha)+\cosh(x)}{1+\cosh(x)} \right] 
    \\
    &= 
    + 2 \pi i \sum_{m = 0}^{\infty}
    \int_{i(\pi - \alpha + 2 \pi m)}^{i (\pi + 2 \pi m)}  dx 
    \exp\left[ \frac{\Lambda}{2 \pi i} \cdot x  \right]
    \cosh\left(\frac{x}{2}\right)
    - 2 \pi i \sum_{m = 0}^{\infty}
    \int_{i (\pi + 2 \pi m)}^{i (\pi + \alpha + 2 \pi m)}  dx 
    \exp\left[ \frac{\Lambda}{2 \pi i} \cdot x  \right]
    \cosh\left(\frac{x}{2}\right)
    \\
    &=
    -2\pi 
    \left(\sum_{m = 0}^{\infty} (-1)^m e^{m \Lambda} \right)
    \int_{\pi - \alpha}^{\pi}  dx \exp\left[ \frac{\Lambda}{2 \pi} \cdot x  \right]
    \cos\left(\frac{x}{2}\right) 
    +2\pi 
    \left(\sum_{m = 0}^{\infty} (-1)^m e^{m \Lambda} \right)
    \int_{\pi}^{\pi + \alpha}  dx \exp\left[ \frac{\Lambda}{2 \pi} \cdot x  \right]
    \cos\left(\frac{x}{2}\right) 
    \\
    &=
    -2\pi \frac{1}{1 + e^{\Lambda} } 
        \int_{0}^{\alpha} d\alpha' 
    \exp\left[ \frac{\Lambda}{2 \pi} \cdot (\pi - \alpha') \right] \cos\left(\frac{\pi - \alpha'}{2}\right) 
    +2\pi \frac{1}{1 + e^{\Lambda} } 
    \int_{-\alpha}^{0} d\alpha' \exp\left[ \frac{\Lambda}{2 \pi} \cdot (\pi - \alpha') \right] \cos\left(\frac{\pi - \alpha'}{2}\right) 
    \\
    &=
    -\pi \frac{1}{\cosh\left(\frac{\Lambda}{2}\right)}    
    \left\{
        \int_{0}^{\alpha} d\alpha' 
        -
        \int_{-\alpha}^{0} d\alpha' 
    \right\} 
    \sin\left(\frac{\alpha'}{2}\right) 
    \exp\left(-\frac{\alpha' \Lambda}{2 \pi}\right)
\end{split}
\end{equation*}
The analogous calculation for $\Lambda<0$ gives the same answer.
As such, we can write 
\begin{equation} \label{eq:leading_exact_plus_error}
\begin{split}
    &-\sum_{n=1}^{\infty} \frac{1}{2n}
    (2 \sin(\alpha/2))^2
    \mathcal{J}_{2n}
    \\
    &= 
    -\frac{1}{2 \pi^2} 
    \left\{
        \int_{0}^{\alpha} d\alpha' 
        -
        \int_{-\alpha}^{0} d\alpha' 
    \right\} 
    \sin\left(\frac{\alpha'}{2}\right)
    \int_{\mathcal{C}_\text{out}} dw_{2n} 
    \int_{0}^{1} dw_1  \, 
    \frac{\cosh\left(-\frac{\alpha' \Lambda}{2 \pi}\right)/\cosh\left(\frac{\Lambda}{2}\right)}
    { \sqrt{w_{1} (1-w_{1})} \sqrt{w_{2n} (w_{2n}-1)} }
    \frac{1 - e^{-D|w_{2n} - w_{1}|}}{w_{2n}-w_1}
    \\
    &= 
    -\frac{1}{\pi^2} 
    \left\{
        \int_{0}^{\alpha} d\alpha' 
        -
        \int_{-\alpha}^{0} d\alpha' 
    \right\} 
    \sin\left(\frac{\alpha'}{2}\right)
    \int^{1}_{\infty} dw_{2n} \int_{0}^{1} dw_1  \, 
    \left( \frac{1-w_{1}}{w_{1}} \frac{w_{2n}}{w_{2n}-1} \right)^{-\alpha' / 2 \pi} 
    \frac{1 - e^{-D(w_{2n} - w_{1})}}{(w_{2n}-w_1)^2}
    \\
    &= 
    -\frac{1}{\pi^2} 
    \left\{
        \int_{0}^{\alpha} d\alpha' 
        -
        \int_{-\alpha}^{0} d\alpha' 
    \right\} 
    \sin\left(\frac{\alpha'}{2}\right)
    \int_{0}^{1} d\Tilde{w}_{2n} \int_{0}^{\infty} d\Tilde{w}_1  \, 
    \left( \frac{\Tilde{w}_{1}}{\Tilde{w}_{2n}} \right)^{-\alpha' / 2 \pi} 
    \frac{1 - e^{-D(\frac{1}{1-\Tilde{w}_{2n}} - \frac{1}{1+\Tilde{w}_1})}}{(\Tilde{w}_{2n} + \Tilde{w}_1)^2} 
    \\
    &= 
    -\frac{1}{\pi^2} 
    \left\{
        \int_{0}^{\alpha} d\alpha' 
        -
        \int_{-\alpha}^{0} d\alpha' 
    \right\} 
    \sin\left(\frac{\alpha'}{2}\right)
    \int_{0}^{1} d\Tilde{w}_{2n} \int_{0}^{\infty} d\Tilde{w}_1  \, 
    \frac{ \left( \frac{\Tilde{w}_{1}}{\Tilde{w}_{2n}} \right)^{-\alpha' / 2\pi} }{(\Tilde{w}_{2n} + \Tilde{w}_1)^2} 
    \left\{
        \underbrace{
            \Big( 1 - e^{-D(\Tilde{w}_{2n} + \Tilde{w}_{1})} \Big)
        }_{\text{`Piece 1'}}
        -
        \underbrace{
            \Big( 
                e^{-D(\frac{1}{1-\Tilde{w}_{2n}} - \frac{1}{1+\Tilde{w}_1})} 
                - 
                e^{-D(\Tilde{w}_{2n} + \Tilde{w}_{1})} 
            \Big)
        }_{\text{`Piece 2'}}
    \right\}
\end{split}
\end{equation}
Above, the first equality was substituting in the result of the previous algebra. The second equality has two ingredients. One was the fact that $\frac{1}{\cosh\left(\frac{\Lambda}{2}\right)} \frac{1}{\sqrt{w_{1} (1-w_{1})} \sqrt{w_{2n} (w_{2n}-1)}} = \frac{2}{|w_{2n}-w_{1}|}$ over our integration range. The other is that the integral over $w_{2n}: \infty \to 1$ equals the integral over $w_{2n}: 0 \to -\infty$ upon the substitutions $\{w_1 \to 1-w_1 , w_{2n} \to 1-w_{2n}\}$. The third equality follows from substituting $\Tilde{w}_1 = \frac{1-w_1}{w_1}$, $\Tilde{w}_{2n} = \frac{w_{2n}-1}{w_{2n}}$. In the fourth equality, we split up the integrand and split it up into two pieces and denote their respective integrals as `Piece 1' and `Piece 2'.

We will argue now that the integral over `Piece 2' is negligible. First, we use the fact $\frac{1}{1-\tilde{w}_{2n}}-\frac{1}{1+\tilde{w}_{1}} = \frac{\tilde{w}_{2n}}{1-\tilde{w}_{2n}}+\frac{\tilde{w}_{1}}{1+\tilde{w}_{1}}$. 
Then, we note that for $D$ large enough, if $\tilde{w_1} > D^{-2/3}$ {\color{RoyalPurple}[resp. $\tilde{w_{2n}} > D^{-2/3}$]}, 
we have $\frac{\tilde{w}_{1}}{1+\tilde{w}_{1}} > 0.99 D^{-2/3}$ {\color{RoyalPurple}[resp. $\frac{\tilde{w}_{2n}}{1-\tilde{w}_{2n}} > D^{-2/3}$]}. 
In these regions, we would have $e^{-D(\frac{1}{1-\Tilde{w}_{2n}} - \frac{1}{1+\Tilde{w}_1})} < e^{-D^{1/3}}$ and $e^{-D(\Tilde{w}_{2n} + \Tilde{w}_{1})}  < e^{-0.99D^{1/3}}$
{\color{RoyalPurple}[resp. $e^{-D(\frac{1}{1-\Tilde{w}_{2n}} - \frac{1}{1+\Tilde{w}_1})} < e^{-D^{1/3}}$ and $e^{-D(\Tilde{w}_{2n} + \Tilde{w}_{1})} < e^{-D^{1/3}}$]}.
\begin{equation} \label{eq:leading_error_2}
\begin{split}
    |\text{`Piece 2'}|
    &=
    \left|
    \frac{1}{\pi^2}     
    \left\{
        \int_{0}^{\alpha} d\alpha' 
        -
        \int_{-\alpha}^{0} d\alpha' 
    \right\} 
    \sin\left(\frac{\alpha'}{2}\right)
    \int_{0}^{1} d\Tilde{w}_{2n} 
    \int_{0}^{\infty} d\Tilde{w}_1  
    \, 
    \frac{ \left( \frac{\Tilde{w}_{1}}{\Tilde{w}_{2n}} \right)^{-\alpha' / 2\pi} }{(\Tilde{w}_{2n} + \Tilde{w}_1)^2} 
    \Big( 
        e^{-D(\frac{1}{1-\Tilde{w}_{2n}} - \frac{1}{1+\Tilde{w}_1})} 
        - 
        e^{-D(\Tilde{w}_{2n} + \Tilde{w}_{1})} 
    \Big)
    \right|
    \\
    &<\quad
    2 e^{-D^{1/3}}
    \frac{1}{\pi^2}    
    \left\{
        \int_{0}^{\alpha} d\alpha' 
        -
        \int_{-\alpha}^{0} d\alpha' 
    \right\} 
    \sin\left(\frac{\alpha'}{2}\right)
    \int_{D^{-2/3}}^{1} d\Tilde{w}_{2n} 
    \int_{0}^{\infty} d\Tilde{w}_1  
    \, 
    \frac{ \left( \frac{\Tilde{w}_{1}}{\Tilde{w}_{2n}} \right)^{-\alpha' / 2\pi} }{(\Tilde{w}_{2n} + \Tilde{w}_1)^2} 
    \\
    &\quad+
    2 e^{-0.99 D^{1/3}}
    \frac{1}{\pi^2}    
    \left\{
        \int_{0}^{\alpha} d\alpha' 
        -
        \int_{-\alpha}^{0} d\alpha' 
    \right\} 
    \sin\left(\frac{\alpha'}{2}\right)
    \int_{D^{-2/3}}^{\infty} d\Tilde{w}_1 
    \int_{0}^{1} d\Tilde{w}_{2n} 
    \, 
    \frac{ \left( \frac{\Tilde{w}_{1}}{\Tilde{w}_{2n}} \right)^{-\alpha' / 2\pi} }{(\Tilde{w}_{2n} + \Tilde{w}_1)^2} 
    \\
    &\quad+
    \left|
        \frac{1}{\pi^2}   
        \left\{
            \int_{0}^{\alpha} d\alpha' 
            -
            \int_{-\alpha}^{0} d\alpha' 
        \right\} 
        \sin\left(\frac{\alpha'}{2}\right)
        \int_{0}^{D^{-2/3}} d\Tilde{w}_{2n} 
        \int_{0}^{D^{-2/3}} d\Tilde{w}_1  
        \, 
        \frac{ \left( \frac{\Tilde{w}_{1}}{\Tilde{w}_{2n}} \right)^{-\alpha' / 2\pi} }{(\Tilde{w}_{2n} + \Tilde{w}_1)^2} 
        \Big( 
            e^{-D(\frac{\Tilde{w}_{2n}}{1-\Tilde{w}_{2n}} + \frac{\Tilde{w}_1}{1+\Tilde{w}_1})} 
            - 
            e^{-D(\Tilde{w}_{2n} + \Tilde{w}_{1})} 
        \Big)
    \right|
    \\
    &=: \text{`Piece 2a'} + \text{`Piece 2b'} + \text{`Piece 2c'}
\end{split}
\end{equation}
Here on the last equality, $\text{`Piece 2a'}, \text{`Piece 2b'}, \text{`Piece 2c'}$ are respectively the first, second, third integrals after the previous inequality. We will analyze these terms separately.
First, we have 
\begin{equation} \label{eq:leading_error_2a}
\begin{split}
    \text{`Piece 2a'}
    &=
    2 e^{-D^{1/3}}
    \frac{1}{\pi^2} 
    \left\{
        \int_{0}^{\alpha} d\alpha' 
        -
        \int_{-\alpha}^{0} d\alpha' 
    \right\} 
    \sin\left(\frac{\alpha'}{2}\right)
    \int_{D^{-2/3}}^{1} d\Tilde{w}_{2n} 
    \int_{0}^{\infty} d\Tilde{w}_1  
    \, 
    \frac{ \left( \frac{\Tilde{w}_{1}}{\Tilde{w}_{2n}} \right)^{-\alpha' / 2\pi} }{(\Tilde{w}_{2n} + \Tilde{w}_1)^2} 
    \\
    &=
    2 e^{-D^{1/3}}
    \frac{1}{\pi^2} 
    \left\{
        \int_{0}^{\alpha} d\alpha' 
        -
        \int_{-\alpha}^{0} d\alpha' 
    \right\} 
    \int_{D^{-2/3}}^{1} d\Tilde{w}_{2n} 
    \, 
    \frac{\alpha'}{4} \frac{1}{w_{2n}}
    =
    \frac{\alpha^2}{2\pi^2}\log(D^{2/3})  e^{-D^{1/3}} < \frac{\log(D)e^{-D^{1/3}}}{3} = O(1/D).
\end{split}
\end{equation}
Next we have
\begin{equation} \label{eq:leading_error_2b}
\begin{split}
    \text{`Piece 2b'}
    &=
    2 e^{-0.99 D^{1/3}}
    \frac{1}{\pi^2} 
    \left\{
        \int_{0}^{\alpha} d\alpha' 
        -
        \int_{-\alpha}^{0} d\alpha' 
    \right\} 
    \sin\left(\frac{\alpha'}{2}\right)
    \int_{D^{-2/3}}^{\infty} d\Tilde{w}_1 
    \int_{0}^{1} d\Tilde{w}_{2n} 
    \, 
    \frac{ \left( \frac{\Tilde{w}_{1}}{\Tilde{w}_{2n}} \right)^{-\alpha' / 2\pi} }{(\Tilde{w}_{2n} + \Tilde{w}_1)^2} 
    \\
    &=
    2 e^{-0.99 D^{1/3}}
    \frac{1}{\pi^2}    
    \left\{
        \int_{0}^{\alpha} d\alpha' 
        -
        \int_{-\alpha}^{0} d\alpha' 
    \right\} 
    \sin\left(\frac{\alpha'}{2}\right)
    \int_{D^{-2/3}}^{\infty} d\Tilde{w}_1 
    \, 
    \Tilde{w}_1^{-2 - \frac{\alpha'}{2\pi}}
    \left( 
        \frac{\Tilde{w}_1}{1+\Tilde{w}_1} 
        - 
        {}_{2}F_{1} \, \left(1,1+\frac{\alpha'}{2\pi},2+\frac{\alpha'}{2\pi};-\frac{1}{\Tilde{w}_1}\right) 
        \cdot
        \frac{\alpha'}{2\pi+\alpha'} 
    \right)
    \\
    &<
    4 e^{-0.99 D^{1/3}}
    \frac{1}{\pi^2} 
    \left\{
        \int_{0}^{\alpha} d\alpha' 
        -
        \int_{-\alpha}^{0} d\alpha' 
    \right\} 
    \sin\left(\frac{\alpha'}{2}\right)
    \int_{D^{-2/3}}^{\infty} d\Tilde{w}_1 
    \, 
    \Tilde{w}_1^{-2 - \frac{\alpha'}{2\pi}}
    \\
    &= 
    4 e^{-0.99 D^{1/3}}
    \frac{1}{\pi^2} 
    \left\{
        \int_{0}^{\alpha} d\alpha' 
        -
        \int_{-\alpha}^{0} d\alpha' 
    \right\} 
    \sin\left(\frac{\alpha'}{2}\right)
    \frac{2\pi}{\alpha'+2\pi} D^{\left(2+\frac{\alpha'}{\pi}\right)/3}
    <
    \frac{8}{\pi} e^{-0.99 D^{1/3}} D = O(1/D).
\end{split}
\end{equation}
In the above, ${}_{2}F_{1}$ is the Gauss hypergeometric function~\cite{bailey1935generalized}. We used the following fact that for $\alpha' \in (0,\pi)$ and $\tilde{w}_1 \in (0,\infty)$ that
$ 0 < {}_{2}F_{1} \, \left(1,1+\frac{\alpha'}{2\pi},2+\frac{\alpha'}{2\pi};-\frac{1}{\Tilde{w}_1}\right) < 1 $. This follows from inputting negative values of $w$ into the next lemma and bounding ${}_{2}F_{1} \, \left(1,1+\frac{\alpha'}{2\pi},2+\frac{\alpha'}{2\pi};-\frac{1}{\Tilde{w}_1}\right)$ by $\int_{0}^{\infty} dt \, e^{-t} = 1$
\begin{lem} \label{lem:hypergeometric_functions_A}
The hypergeometric function ${}_{2}F_{1} \, \left(1,1+\frac{\alpha'}{2\pi},2+\frac{\alpha'}{2\pi};w\right)$ is an analytic continuation of the following integral.
\begin{equation}
    {}_{2}F_{1} \, \left(1,1+\frac{\alpha'}{2\pi},2+\frac{\alpha'}{2\pi};w\right)
    =
    \int_{0}^{\infty} dt \frac{ e^{-t} }{ 1- e^{-\frac{t}{1+\frac{\alpha'}{2\pi}}} w }
\end{equation}
\end{lem}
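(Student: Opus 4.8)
The plan is to prove the identity as an equality of analytic functions of $w$ by matching Taylor coefficients at the origin and then invoking analytic continuation. First I would abbreviate $c := 1 + \frac{\alpha'}{2\pi}$, so that the claim reads ${}_{2}F_{1}(1,c,c+1;w) = \int_{0}^{\infty} dt\, \frac{e^{-t}}{1 - e^{-t/c}w}$, with $c$ a positive real number since $\alpha' \in (0,\pi)$.

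For the left-hand side I would expand the defining hypergeometric series $
{}_{2}F_{1}(1,c,c+1;w) = \sum_{n\ge 0} \frac{(1)_n (c)_n}{(c+1)_n}\frac{w^n}{n!}
$ and simplify the coefficient. Since $(1)_n = n!$, the factor $\frac{(1)_n}{n!}$ equals $1$, and the Pochhammer ratio telescopes as $\frac{(c)_n}{(c+1)_n} = \frac{c(c+1)\cdots(c+n-1)}{(c+1)(c+2)\cdots(c+n)} = \frac{c}{c+n}$. This collapses the series to $\sum_{n\ge 0}\frac{c}{c+n}\,w^n$, valid for $|w|<1$.

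For the right-hand side, for $|w|<1$ I would expand the integrand as a geometric series $\frac{1}{1-e^{-t/c}w} = \sum_{n\ge 0} e^{-nt/c}w^n$, which is legitimate for all $t\in(0,\infty)$ because $e^{-t/c}\le 1$. Integrating term by term then yields $w^n\int_{0}^{\infty} e^{-t(1+n/c)}\,dt = w^n\frac{c}{c+n}$, reproducing exactly the coefficients found for the hypergeometric series. The interchange of summation and integration is justified by dominated convergence, the dominating function being $e^{-t}\sum_{n}|w|^n = \frac{e^{-t}}{1-|w|}$, which is integrable on $(0,\infty)$.

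Finally, I would observe that both sides extend to analytic functions of $w$ on a common domain: the hypergeometric side by the standard analytic continuation of ${}_{2}F_{1}$, and the integral side on $\C$ minus the locus where $e^{-t/c}w = 1$ for some $t\ge 0$, i.e.\ $w\in[1,\infty)$. Since all Taylor coefficients at the origin agree, the identity theorem forces equality throughout the common domain, which in particular covers the negative real arguments $w = -1/\tilde{w}_1$ used in Eq.~\eqref{eq:leading_error_2b}. The only points requiring genuine care are the dominated-convergence justification for the term-by-term integration and a clean description of the branch and domain of the continuation; neither is a real obstacle, so this last bookkeeping step is the mild technical heart of the argument rather than any substantial difficulty.
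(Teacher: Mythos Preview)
Your proof is correct and follows essentially the same route as the paper: both simplify the Pochhammer ratio to obtain the series $\sum_{n\ge 0}\frac{c}{c+n}w^n$ (with $c=1+\frac{\alpha'}{2\pi}$), then match it against the term-by-term integration of the geometric expansion of the integrand. Your write-up is in fact more careful than the paper's, which has a typo in its intermediate step (an extra $e^{-t}$) and omits the dominated-convergence and analytic-continuation remarks you supply.
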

\begin{proof}
    By the definition of hypergeometric functions~\cite{bailey1935generalized}, and manipulations of the various Pochhammer symbols, one can show that ${}_{2}F_{1} \, \left(1,1+\frac{\alpha'}{2\pi},2+\frac{\alpha'}{2\pi};w\right)$ is defined by the series
    \begin{equation}
        {}_{2}F_{1} \, \left(1,1+\frac{\alpha'}{2\pi},2+\frac{\alpha'}{2\pi};w\right)
        =
        \sum_{n=0}^{\infty} \frac{z^n}{1+\frac{n}{1+\frac{\alpha'}{2\pi}}}
        =
        \int_{0}^{\infty} dt \, e^{-t} \sum_{n=0}^{\infty} z^n e^{-\left(1+\frac{n}{1+\frac{\alpha'}{2\pi}}\right) t}
        =
        \int_{0}^{\infty} dt \frac{ e^{-t} }{ 1- e^{-\frac{t}{1+\frac{\alpha'}{2\pi}}} w }.
    \end{equation}
\end{proof}

Now, we want to bound the last `Piece 2c'.
\begin{equation} \label{eq:leading_error_2c}
\begin{split}  
    |\text{`Piece 2c'}|
    &<
    \frac{1}{\pi^2} 
    \left\{
        \int_{0}^{\alpha} d\alpha' 
        -
        \int_{-\alpha}^{0} d\alpha' 
    \right\} 
    \sin\left(\frac{\alpha'}{2}\right)
    \int_{0}^{D^{-2/3}} d\Tilde{w}_{2n} 
    \int_{0}^{D^{-2/3}} d\Tilde{w}_1  
    \, 
    \left|
        \frac{ \left( \frac{\Tilde{w}_{1}}{\Tilde{w}_{2n}} \right)^{-\alpha' / 2\pi} }{(\Tilde{w}_{2n} + \Tilde{w}_1)^2} 
        e^{-D(\Tilde{w}_{2n} + \Tilde{w}_{1})}
        \left(
            1
            -
            e^{-D(\frac{\Tilde{w}_{2n}^2}{1-\Tilde{w}_{2n}} - \frac{\Tilde{w}_{1}^2}{1+\Tilde{w}_1})}
        \right)
    \right|
    \\
    &<
    \frac{1}{\pi^2} 
    \left\{
        \int_{0}^{\alpha} d\alpha' 
        -
        \int_{-\alpha}^{0} d\alpha' 
    \right\} 
    \sin\left(\frac{\alpha'}{2}\right)
    \int_{0}^{D^{-2/3}} d\Tilde{w}_{2n} 
    \int_{0}^{D^{-2/3}} d\Tilde{w}_1  
    \, 
    \left|
        \frac{ \left( \frac{\Tilde{w}_{1}}{\Tilde{w}_{2n}} \right)^{-\alpha' / 2\pi} }{(\Tilde{w}_{2n} + \Tilde{w}_1)^2} \cdot
        2 D 
        \left( 
            \frac{\Tilde{w}_{2n}^2}{1-\Tilde{w}_{2n}}
            - 
            \frac{\Tilde{w}_{1}^2}{1+\Tilde{w}_1} 
        \right)
    \right|
    \\
    &<
    \frac{2 D}{\pi^2} 
    \left\{
        \int_{0}^{\alpha} d\alpha' 
        -
        \int_{-\alpha}^{0} d\alpha' 
    \right\} 
    \sin\left(\frac{\alpha'}{2}\right)
    \int_{0}^{D^{-2/3}} d\Tilde{w}_{2n} 
    \int_{0}^{D^{-2/3}} d\Tilde{w}_1  
    \, 
    \left( \frac{\Tilde{w}_{1}}{\Tilde{w}_{2n}} \right)^{-\alpha' / 2\pi} \cdot
    \frac{3 D^{-2/3}}{\Tilde{w}_{2n} + \Tilde{w}_1}
    \\
    &=
    \frac{6}{\pi^2} D^{-1/3}
    \left\{
        \int_{0}^{\alpha} d\alpha' 
        -
        \int_{-\alpha}^{0} d\alpha' 
    \right\} 
    \left\{
        \pi
        -
        \frac{4\pi^2 \sin\left(\frac{\alpha'}{2}\right)}{2\pi\alpha'+\alpha'^2} \,
        {}_{2}F_{1} \, \left(1,\frac{\alpha'}{2\pi},2+\frac{\alpha'}{2\pi};-1\right)
    \right\}
    = O(D^{-1/3})
\end{split}
\end{equation}
The first inequality brought the absolute values inside the integral and rewrote the integrand. 
In the second inequality, we use two facts. First, we have $e^{-D(\tilde{w_1}+\tilde{w_{2n}})} < 1$. 
Then, we have that for $D$ (mildly) large enough, we have 
$D \left( \frac{\Tilde{w}_{2n}^2}{1-\Tilde{w}_{2n}} - \frac{\Tilde{w}_{1}^2}{1+\Tilde{w}_1} \right) \ll 1$
for 
$\Tilde{w}_{1} \in (0,D^{-2/3})$ and $\Tilde{w}_{2n} \in (0,D^{-2/3})$ 
so that
$
\left| 1 - e^{-D(\frac{\Tilde{w}_{2n}^2}{1-\Tilde{w}_{2n}} - \frac{\Tilde{w}_{1}^2}{1+\Tilde{w}_1})} \right| 
< 
\left|
    2 D 
    \left( 
        \frac{\Tilde{w}_{2n}^2}{1-\Tilde{w}_{2n}}
        - 
        \frac{\Tilde{w}_{1}^2}{1+\Tilde{w}_1} 
    \right)
\right|
$.
In the third inequality, we use
$
\left|
    \frac{
        \frac{\Tilde{w}_{2n}^2}{1-\Tilde{w}_{2n}}
        - 
        \frac{\Tilde{w}_{1}^2}{1+\Tilde{w}_1} 
    }{(\Tilde{w}_{2n} + \Tilde{w}_1)^2}
\right|
=
\left|
    \frac{
        \tilde{w}_1-(1+\tilde{w}_1)\tilde{w}_{2n}
    }{(1+\tilde{w_1})(1-\tilde{w_{2n}})(\Tilde{w}_{2n} + \Tilde{w}_1)}
\right|
<
\frac{3 D^{-2/3}}{\Tilde{w}_{2n} + \Tilde{w}_1}
$
within the integration region for $D$ large enough. 
In the fourth (in)equality, we perform the integrals over $\tilde{w}_1$ and $\tilde{w}_{2n}$.
In the final equality, we use the Lemma~\ref{lem:hypergeometric_functions_B} below to bound ${}_{2}F_{1} \, \left(1,\frac{\alpha'}{2\pi},2+\frac{\alpha'}{2\pi};-1\right) < \int_{0}^{\infty} dt_1 \int_{0}^{\infty} dt_2 \, e^{-t_1}e^{-t_2} = 1$. As such, one can see that the integrand is a finite convergent integral that can be uniformly bounded for $\alpha \in (0,\pi)$.

\begin{lem} \label{lem:hypergeometric_functions_B}
The hypergeometric function ${}_{2}F_{1} \, \left(1,\frac{\alpha'}{2\pi},2+\frac{\alpha'}{2\pi};w\right)$ is an analytic continuation of the following integral.
\begin{equation}
    {}_{2}F_{1} \, \left(1,\frac{\alpha'}{2\pi},2+\frac{\alpha'}{2\pi};w\right)
    =
    \int_{0}^{\infty} dt_1
    \int_{0}^{\infty} dt_2
    \frac{e^{-t_1}e^{-t_2} }{1-e^{-t_1\frac{2\pi}{\alpha'}} e^{-t_2\frac{2\pi}{2\pi+\alpha'}} w}
\end{equation}
\end{lem}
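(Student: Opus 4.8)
The plan is to reproduce, in a two-variable form, exactly the argument used for Lemma~\ref{lem:hypergeometric_functions_A}: expand the hypergeometric function as a power series via Pochhammer symbols, recognize each coefficient as a product of two elementary factors, write each factor as a Laplace-type integral, and then resum. Write $\mu := \frac{\alpha'}{2\pi}$ for brevity. By the definition of ${}_{2}F_{1}$ in~\cite{bailey1935generalized}, since the first parameter is $1$ we have $(1)_n = n!$, so the $n!$ in the denominator cancels and
\begin{equation*}
    {}_{2}F_{1}\left(1,\mu,2+\mu;w\right) = \sum_{n=0}^{\infty} \frac{(\mu)_n}{(2+\mu)_n}\, w^n .
\end{equation*}
The key simplification is that this ratio of Pochhammer symbols telescopes: since $(\mu)_n = \prod_{k=0}^{n-1}(\mu+k)$ and $(2+\mu)_n = \prod_{k=2}^{n+1}(\mu+k)$, all common factors cancel and $\frac{(\mu)_n}{(2+\mu)_n} = \frac{\mu(\mu+1)}{(\mu+n)(\mu+n+1)}$ (checking $n=0,1,2$ directly confirms this).

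Next I would represent each of the two surviving factors as an integral using $\frac{1}{a}=\int_0^\infty e^{-au}\,du$ for $a>0$, writing
\begin{equation*}
    \frac{\mu}{\mu+n} = \int_{0}^{\infty} dt_1\, e^{-t_1\left(1+\frac{n}{\mu}\right)}
    \quad,\quad
    \frac{1+\mu}{1+\mu+n} = \int_{0}^{\infty} dt_2\, e^{-t_2\left(1+\frac{n}{1+\mu}\right)} .
\end{equation*}
Substituting these into the series and using $e^{-t_1 n/\mu} = \left(e^{-t_1\frac{2\pi}{\alpha'}}\right)^n$ and $e^{-t_2 n/(1+\mu)} = \left(e^{-t_2\frac{2\pi}{2\pi+\alpha'}}\right)^n$ puts the summand in the form $w^n \left(e^{-t_1\frac{2\pi}{\alpha'}}e^{-t_2\frac{2\pi}{2\pi+\alpha'}}\right)^n$ times $e^{-t_1}e^{-t_2}$. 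Interchanging the sum with the double integral and summing the geometric series in $n$ then yields
\begin{equation*}
    \sum_{n=0}^{\infty} \frac{\mu(\mu+1)}{(\mu+n)(\mu+n+1)}\, w^n
    =
    \int_{0}^{\infty} dt_1 \int_{0}^{\infty} dt_2\,
    \frac{e^{-t_1}e^{-t_2}}{1 - e^{-t_1\frac{2\pi}{\alpha'}}e^{-t_2\frac{2\pi}{2\pi+\alpha'}}\,w},
\end{equation*}
which is the claimed identity.

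The only point requiring care — the mild obstacle — is justifying the interchange of summation and integration and identifying the region of validity. For $|w|<1$ and $t_1,t_2>0$ the geometric series converges absolutely and the summand is dominated by $e^{-t_1}e^{-t_2}|w|^n$, so Fubini (or monotone convergence for $w<0$, which is the regime in which the lemma is actually applied in Eq.~\eqref{eq:leading_error_2c}) legitimizes the exchange and guarantees convergence of the double integral since the denominator stays bounded away from zero. Finally, since the right-hand integral is holomorphic in $w$ on the cut plane $\mathbb{C}\setminus[1,\infty)$ while the left-hand series defines ${}_{2}F_{1}$ only on $|w|<1$, the equality extends by analytic continuation, giving the stated representation of ${}_{2}F_{1}\left(1,\frac{\alpha'}{2\pi},2+\frac{\alpha'}{2\pi};w\right)$ exactly as in the one-variable case of Lemma~\ref{lem:hypergeometric_functions_A}.
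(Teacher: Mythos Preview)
Your proposal is correct and follows essentially the same approach as the paper: both expand the series via Pochhammer symbols to obtain $\sum_{n\ge 0}\frac{w^n}{(1+\frac{2\pi n}{\alpha'})(1+\frac{2\pi n}{2\pi+\alpha'})}$, write each factor as a Laplace integral $\int_0^\infty e^{-t(1+\cdots)}\,dt$, and then resum the geometric series. Your version simply adds a brief justification of the sum--integral interchange and the analytic continuation, which the paper omits.
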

\begin{proof}
    By the definition of hypergeometric functions, and manipulations of the various Pochhammer symbols, one can show that ${}_{2}F_{1} \, \left(1,1+\frac{\alpha'}{2\pi},2+\frac{\alpha'}{2\pi};w\right)$ is defined by the series
    \begin{equation}
    \begin{split}
        &{}_{2}F_{1} \, \left(1,\frac{\alpha'}{2\pi},2+\frac{\alpha'}{2\pi};w\right)
        =
        \sum_{n=0}^{\infty} \frac{w^n}{(1+\frac{2\pi n}{\alpha'})(1+\frac{2\pi n}{2\pi+\alpha'})}
        \\
        &=
        \int_{0}^{\infty} dt_1
        \int_{0}^{\infty} dt_2
        \sum_{n=0}^{\infty} w^n e^{-t_1(1+\frac{2\pi n}{\alpha'})} e^{-t_2(1+\frac{2\pi n}{2\pi+\alpha'})} 
        =
        \int_{0}^{\infty} dt_1
        \int_{0}^{\infty} dt_2
        \frac{e^{-t_1}e^{-t_2} }{1-e^{-t_1\frac{2\pi}{\alpha'}} e^{-t_2\frac{2\pi}{2\pi+\alpha'}} w}
    \end{split}
    \end{equation}
\end{proof}

Combining Eqs.~(\ref{eq:leading_exact_plus_error},\ref{eq:leading_error_2},\ref{eq:leading_error_2a},\ref{eq:leading_error_2b},\ref{eq:leading_error_2c}), we get
\begin{equation}
    -\sum_{n=1}^{\infty} \frac{1}{2n}
    (2 \sin(\alpha/2))^2
    \mathcal{J}_{2n}
    =
    \frac{1}{\pi^2} 
    \left\{
        \int_{0}^{\alpha} d\alpha' 
        -
        \int_{-\alpha}^{0} d\alpha' 
    \right\} 
    \sin\left(\frac{\alpha'}{2}\right)
    \int_{0}^{1} d\Tilde{w}_{2n} 
    \int_{0}^{\infty} d\Tilde{w}_1  \, 
    \frac{ \left( \frac{\Tilde{w}_{1}}{\Tilde{w}_{2n}} \right)^{-\alpha' / 2\pi} }{(\Tilde{w}_{2n} + \Tilde{w}_1)^2}
    \Big( 1 - e^{-D(\Tilde{w}_{2n} + \Tilde{w}_{1})} \Big)
    +
    O(D^{-1/3})
\end{equation}
Now, we can perform the first two integrals over $\tilde{w}_1$ and $\tilde{w}_{2n}$ in Mathematica and arrive at
\begin{equation} \label{eq:leadingorder_w1_w2n_integrated}
\begin{split}
    &    
    -\sum_{n=1}^{\infty} \frac{1}{2n}
    (2 \sin(\alpha/2))^2
    \mathcal{J}_{2n}
    + O(D^{-1/3})
    \\
    &=
    -\frac{1}{2\pi^2} 
    \left\{
        \int_{0}^{\alpha} d\alpha' 
        -
        \int_{-\alpha}^{0} d\alpha' 
    \right\} 
    \Bigg\{
        \alpha'
        + 
        \alpha' D^{1+\frac{\alpha'}{2\pi}} 
        \frac{\Gamma\left(1+\frac{\alpha'}{2\pi}\right)}{\Gamma\left(2+\frac{\alpha'}{2\pi}\right)^2}
        \cdot
        {}_{2}F_{2} 
        \left[
            \begin{matrix} 
                1+\frac{\alpha'}{2\pi} \,\,,&\,\, 1+\frac{\alpha'}{2\pi} 
                \\ 
                2+\frac{\alpha'}{2\pi} \,\,,&\,\, 2+\frac{\alpha'}{2\pi} 
            \end{matrix}
            \,\,;\,\,
            -D
        \right]
    \\
    &\quad\quad\quad\quad\quad\quad\quad
        +
        2 \cdot D^{1+\frac{\alpha'}{2\pi}} \,
        \Gamma\left(-\frac{\alpha'}{2\pi}\right) \sin\left(\frac{\alpha'}{2}\right)
        \left(
            e^{-D}
            -
            \left(D - \frac{\alpha'}{2\pi}\right)
            \int_{1}^{\infty} dt \, {e^{-D t}} t^{\frac{\alpha}{2\pi}}
        \right)
    \Bigg\}
\end{split}
\end{equation}
In the above, $\Gamma$ is the Gamma function and 
$
{}_{2}F_{2} 
\left[
    \begin{matrix} 
        1+\frac{\alpha'}{2\pi} \,\,,&\,\, 1+\frac{\alpha'}{2\pi} 
        \\ 
        2+\frac{\alpha'}{2\pi} \,\,,&\,\, 2+\frac{\alpha'}{2\pi} 
    \end{matrix}
    \,\,;\,\,
    -D
\right]
$
is the generalized hypergeometric function~\cite{bailey1935generalized}.
Note that the second line above can be uniformly bounded so only contributes at most an $O(1/D)$ factor. We have the following lemma to analyze the other term.
\begin{lem} \label{lem:hypergeometric_functions_C}
    The hypergeometric function
    $
    {}_{2}F_{2} 
    \left[
        \begin{matrix} 
            1+\frac{\alpha'}{2\pi} \,\,,&\,\, 1+\frac{\alpha'}{2\pi} 
            \\ 
            2+\frac{\alpha'}{2\pi} \,\,,&\,\, 2+\frac{\alpha'}{2\pi} 
        \end{matrix}
        \,\,;\,\,
        w
    \right]
    $
    is an analytic continuation of the following integral.
    \begin{equation}
        {}_{2}F_{2} 
        \left[
            \begin{matrix} 
                1+\frac{\alpha'}{2\pi} \,\,,&\,\, 1+\frac{\alpha'}{2\pi} 
                \\ 
                2+\frac{\alpha'}{2\pi} \,\,,&\,\, 2+\frac{\alpha'}{2\pi} 
            \end{matrix}
            \,\,;\,\,
            w
        \right]
        =
        \left(1+\frac{\alpha'}{2\pi}\right)^2
        \int_{0}^{\infty} dt_1
        \int_{0}^{\infty} dt_2
        e^{-t_1\left(1+\frac{\alpha'}{2\pi}\right)}
        e^{-t_2\left(1+\frac{\alpha'}{2\pi}\right)} 
        \exp\left[w \, e^{-t_1}  e^{-t_2} \right].
    \end{equation}
\end{lem}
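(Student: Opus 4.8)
The plan is to prove Lemma~\ref{lem:hypergeometric_functions_C} exactly as the preceding two lemmas (Lemmas~\ref{lem:hypergeometric_functions_A} and~\ref{lem:hypergeometric_functions_B}) were proved: start from the defining power series of the generalized hypergeometric function, rewrite each coefficient as a double integral of decaying exponentials via the elementary identity $\frac{1}{c} = \int_0^\infty e^{-ct}\,dt$ for $\mathrm{Re}(c)>0$, then interchange the sum and the integrals and resum the resulting geometric-type series into a single exponential.

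\begin{proof}
By the definition of the generalized hypergeometric function \cite{bailey1935generalized} and standard manipulations of the Pochhammer symbols, the series representation is
\begin{equation*}
    {}_{2}F_{2}
    \left[
        \begin{matrix}
            1+\frac{\alpha'}{2\pi} \,\,,&\,\, 1+\frac{\alpha'}{2\pi}
            \\
            2+\frac{\alpha'}{2\pi} \,\,,&\,\, 2+\frac{\alpha'}{2\pi}
        \end{matrix}
        \,\,;\,\,
        w
    \right]
    =
    \sum_{n=0}^{\infty}
    \frac{\left(1+\frac{\alpha'}{2\pi}\right)^2}{\left(1+\frac{\alpha'}{2\pi}+n\right)^2}
    \frac{w^n}{n!}.
\end{equation*}
For each factor in the denominator we use $\frac{1}{1+\frac{\alpha'}{2\pi}+n} = \int_0^\infty e^{-t\left(1+\frac{\alpha'}{2\pi}+n\right)}\,dt$, applied once with variable $t_1$ and once with $t_2$, so that
\begin{equation*}
    \frac{1}{\left(1+\frac{\alpha'}{2\pi}+n\right)^2}
    =
    \int_{0}^{\infty} dt_1 \int_{0}^{\infty} dt_2 \,
    e^{-t_1\left(1+\frac{\alpha'}{2\pi}+n\right)}
    e^{-t_2\left(1+\frac{\alpha'}{2\pi}+n\right)}.
\end{equation*}
Substituting this in, pulling out the constant $\left(1+\frac{\alpha'}{2\pi}\right)^2$, and interchanging the summation with the integrals (justified by absolute convergence when $|w\,e^{-t_1}e^{-t_2}|<1$, which holds on the domain where the series converges, with the general case following by analytic continuation), we obtain
\begin{equation*}
    \left(1+\frac{\alpha'}{2\pi}\right)^2
    \int_{0}^{\infty} dt_1 \int_{0}^{\infty} dt_2 \,
    e^{-t_1\left(1+\frac{\alpha'}{2\pi}\right)}
    e^{-t_2\left(1+\frac{\alpha'}{2\pi}\right)}
    \sum_{n=0}^{\infty} \frac{\left(w\,e^{-t_1}e^{-t_2}\right)^n}{n!}.
\end{equation*}
Recognizing the inner sum as $\exp\!\left[w\,e^{-t_1}e^{-t_2}\right]$ gives the claimed integral representation. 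The right-hand side converges for all $w$ and is holomorphic in $w$, so it furnishes the analytic continuation of the hypergeometric function.
\end{proof}

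The only nontrivial point, and hence the main obstacle, is justifying the interchange of the infinite sum with the double integral. Within the radius of convergence of the series this follows from Tonelli/Fubini applied to the nonnegative integrand after replacing $w$ by $|w|$ and $e^{-t_i}$ by their moduli; for the remaining values of $w$ one invokes the fact that the resulting integral defines an entire function of $w$ which agrees with the hypergeometric series on an open set, so the two coincide by the identity theorem. Everything else is the same bookkeeping with Pochhammer symbols and the elementary Gamma-integral identity already used in the proofs of Lemmas~\ref{lem:hypergeometric_functions_A} and~\ref{lem:hypergeometric_functions_B}, so I would keep this proof as short as those, merely noting the convergence domain rather than belaboring it.
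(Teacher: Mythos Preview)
Your proof is correct and follows essentially the same approach as the paper's own proof: write out the series definition, use the identity $\frac{1}{c^2}=\int_0^\infty\int_0^\infty e^{-(t_1+t_2)c}\,dt_1\,dt_2$ with $c=n+1+\frac{\alpha'}{2\pi}$, interchange sum and integral, and resum the exponential. The only difference is that you are slightly more explicit about the justification for the interchange and the analytic continuation, which the paper leaves implicit.
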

\begin{proof}
    By the definition of hypergeometric functions, and manipulations of the various Pochhammer symbols, one can show that $
    {}_{2}F_{2} 
    \left[
        \begin{matrix} 
            1+\frac{\alpha'}{2\pi} \,\,,&\,\, 1+\frac{\alpha'}{2\pi} 
            \\ 
            2+\frac{\alpha'}{2\pi} \,\,,&\,\, 2+\frac{\alpha'}{2\pi} 
        \end{matrix}
        \,\,;\,\,
        w
    \right]
    $ 
    is defined by the series
    \begin{equation}
    \begin{split}
        &
        {}_{2}F_{2} 
        \left[
            \begin{matrix} 
                1+\frac{\alpha'}{2\pi} \,\,,&\,\, 1+\frac{\alpha'}{2\pi} 
                \\ 
                2+\frac{\alpha'}{2\pi} \,\,,&\,\, 2+\frac{\alpha'}{2\pi} 
            \end{matrix}
            \,\,;\,\,
            w
        \right]
        =
        \left(1+\frac{\alpha'}{2\pi}\right)^2
        \sum_{n=0}^{\infty} 
        \frac{w^n}{n!}
        \frac{1}{\left(n+1+\frac{\alpha'}{2\pi}\right)^2}
        \\
        &=
        \left(1+\frac{\alpha'}{2\pi}\right)^2
        \int_{0}^{\infty} dt_1
        \int_{0}^{\infty} dt_2
        \sum_{n=0}^{\infty} \frac{w^n}{n!} 
        e^{-t_1\left(n+1+\frac{\alpha'}{2\pi}\right)} 
        e^{-t_2\left(n+1+\frac{\alpha'}{2\pi}\right)} 
        \\
        &=
        \left(1+\frac{\alpha'}{2\pi}\right)^2
        \int_{0}^{\infty} dt_1
        \int_{0}^{\infty} dt_2
        e^{-t_1\left(1+\frac{\alpha'}{2\pi}\right)}
        e^{-t_2\left(1+\frac{\alpha'}{2\pi}\right)} 
        \exp\left[w \, e^{-t_1}  e^{-t_2} \right].
    \end{split}
    \end{equation}
\end{proof}
This gives 
\begin{equation} \label{eq:hypergeom_2F2_term}
\begin{split}    
    &D^{1+\frac{\alpha'}{2\pi}} 
    \frac{\Gamma\left(1+\frac{\alpha'}{2\pi}\right)}{\Gamma\left(2+\frac{\alpha'}{2\pi}\right)^2}
    \cdot
    {}_{2}F_{2} 
    \left[
        \begin{matrix} 
            1+\frac{\alpha'}{2\pi} \,\,,&\,\, 1+\frac{\alpha'}{2\pi} 
            \\ 
            2+\frac{\alpha'}{2\pi} \,\,,&\,\, 2+\frac{\alpha'}{2\pi} 
        \end{matrix}
        \,\,;\,\,
        -D
    \right]
    \\
    &=
    \frac{1}{\Gamma\left(1+\frac{\alpha'}{2\pi}\right)}
    \int_{0}^{\infty} dt_1
    \int_{0}^{\infty} dt_2
    e^{-(t_1+t_2-\log(D))\left(1+\frac{\alpha'}{2\pi}\right)}
    \exp\left[-\, e^{-(t_1+t_2-\log(D))} \right]
    \\
    &=
    \frac{1}{\Gamma\left(1+\frac{\alpha'}{2\pi}\right)}
    \int_{-\log(D)}^{\infty} dt' (t'+\log(D)) e^{-t'\left(1+\frac{\alpha'}{2\pi}\right)} \exp\left[-e^{-t'}\right]
    \\
    &=
    \frac{1}{\Gamma\left(1+\frac{\alpha'}{2\pi}\right)}
    \int_{-\infty}^{\infty} dt' 
    (t'+\log(D)) 
    e^{-t'\left(1+\frac{\alpha'}{2\pi}\right)} 
    \exp\left[-e^{-t'}\right]
    +
    O(1/D)
    \\
    &=
    \frac{1}{\Gamma\left(1+\frac{\alpha'}{2\pi}\right)}
    \int_{0}^{\infty} d{t''}
    (-\log(t'')+\log(D)) 
    {t''}^{\left(1+\frac{\alpha'}{2\pi}\right)-1} 
    e^{-t''}
    +
    O(1/D)
    =
    \log(D) - \frac{\Gamma'\left(1+\frac{\alpha'}{2\pi}\right)}{\Gamma\left(1+\frac{\alpha'}{2\pi}\right)}
    +
    O(1/D).
\end{split}
\end{equation}
The first equality above uses Lemma~\ref{lem:hypergeometric_functions_C} and simplification. The second equality follows from setting $t'=t_1+t_2-\log(D)$ and doing the integral over $t_1-t_2$. In the third equality, the integral over $t': -\infty \to -\log(D)$ is an $O(1/D)$ term since the integrand is super-exponentially suppressed so we can add it in. In the fourth equality, we substitute $t'' = e^{-t'}$. In the last equality, $\Gamma'$ is the derivative of the Gamma function and note that the integration of the term corresponding to $\log(t'')$ gives $\Gamma'\left(1+\frac{\alpha'}{2\pi}\right)$ and the term corresponding to $\log(D)$ gives $\log(D) \Gamma\left(1+\frac{\alpha'}{2\pi}\right)$. 

Finally, using Eqs.~(\ref{eq:leadingorder_w1_w2n_integrated},\ref{eq:hypergeom_2F2_term}) we get
\begin{equation} \label{eq:final_sum_J2n}
\begin{split}
    &
    -\sum_{n=1}^{\infty} \frac{1}{2n}
    (2 \sin(\alpha/2))^2
    \mathcal{J}_{2n}
    + O(D^{-1/3})
    =
    -\frac{1}{2\pi^2} 
    \left\{
        \int_{0}^{\alpha} d\alpha' 
        -
        \int_{-\alpha}^{0} d\alpha' 
    \right\} 
    \alpha'
    \left(
        1 + \log(D) 
        - 
        \frac{\Gamma'\left(1+\frac{\alpha'}{2\pi}\right)}{\Gamma\left(1+\frac{\alpha'}{2\pi}\right)} 
    \right)
    \\
    &= 
    -\frac{\alpha^2}{2\pi^2} \left( 1 + \log(D) + \gamma_{\mathrm{Euler}} \right) 
    -
    4 \sum_{n=2}^{\infty} \frac{1}{2n} \left( \frac{\alpha}{2\pi} \right)^{2n} \zeta(2n-1)
\end{split}
\end{equation}
where $\zeta$ is the Riemann-zeta function.

\section{ 2D Free Fermion Determinants and Tau Functions } \label{app:diracAndTau}

We warn the reader that throughout this section the computations should be taken to be formal, and the results of this section aren't meant to be mathematically rigorous. However, these formal manipulations match the rigorous calculations of this paper for the dimer model, of Sec.~\ref{sec:multipleZipper_calculations}.

The tau function as introduced by~\cite{jmuTau1981}, is a function that's holomorphic in the variables $p_1, \cdots, p_s$, depending on a flat connection $\rho$ on the punctured plane $\C - \{p_1, \cdots, p_s\}$. A more closely related viewpoint that we take is related to that of Palmer~\cite{palmer1993} who interprets the tau function as some regularized determinant of a Cauchy-Riemann operator in the presence of a singular gauge field that's flat away from the punctures $p_1, \cdots, p_s$. 

The perspective we take in this paper is to formally consider the `determinant' of the two-dimensional Dirac operator $\slashed{\partial}(p_1,\cdots,p_s | \rho)$ in the presence of such a flat connection $\rho$ with respect to punctures $p_1, \cdots, p_s$. Normalizing with the determinant of the starnard Dirac operator $\slashed{\partial}$ we will get a function
\begin{equation}
    \tau(p_1, \cdots, p_s | \rho) := 
    \frac{\det\Big(\slashed{\partial}(p_1,\cdots,p_s | \rho)\Big)}{\det\big( \slashed{\partial} \big)}
\end{equation}
This function will differ from the original tau function of~\cite{jmuTau1981} in two ways, which is why we refer to our functions as `modified tau functions'. 

The first difference is that $\log \tau(p_1, \cdots, p_s | \rho)$ will be defined in terms of a series expansion of integrals, many of which are actually divergent integrals. Even though this series expansion may seem ill-defined, in our paper in the context of dimer models, the integrals in the series expansion are naturally regulated by the lattice scale and diverge logarithmically with this scale. 
The second is that our series expansion of $\log \tau$ is actually not holomorphic in the $p_i$, but rather is a sum of a holomorphic and antiholomorphic functions. The holomorphic part can be identified with the original Jimbo-Miwa-Ueno tau function while the antiholomorphic part can be identified with a similar `conjugate' version (see the end of Sec.~\ref{app:tau_zipper}).

In Sec.~\ref{app:tau_gaugeFields}, we will review the gauge-theoretic setup and the Jimbo-Miwa-Ueno definition of the tau function in their original vocabulary in terms of Fuschian systems, and then compare the Fuschian representations to the zipper representations of flat gauge fields. In Sec.~\ref{app:tau_dirac}, we review the calculus of computing determinants of Dirac operators and show how it's gauge-invariant. Then in Sec.~\ref{app:tau_fuschian}, we will show how the Fuschian representation of the gauge field gives a representation of $\log \tau$ that satisfies the defining Jimbo-Miwa-Ueno equations. In Sec.~\ref{app:tau_zipper}, we show how the alternate series expansion for $\tau$ encountered in dimer correlations (see Sec.~\ref{sec:multipleZipper_calculations}) can be derived using the zipper representation of the gauge fields. 

\subsection{Singular Gauge Fields and Isomonodromic Tau functions} \label{app:tau_gaugeFields}
Consider the punctured plane $\C - \{p_1 , \cdots , p_s\}$ and a connection $\rho$ that's flat away from the punctures. Furthermore, suppose that the monodromy of $\rho$ around infinity is zero, so that $\rho$ can be viewed as a flat connection on $\CP^1 - \{p_1 , \cdots , p_s\}$. 
In our setup, we will consider $GL(N,\C)$-connection, which can be viewed as an $\gl_N$-valued 1-form $A$ on $\CP^1$. We will mostly consider connections that satisfy the \textit{flatness condition} $dA - A \wedge A = 0$ away from the punctures. In local (real) coordinates $x,y$ on $\C$ or equivalently the (complex) coordinates $z = x + i y$ and $\overline{z} = x - i y$, we will have
\begin{equation}
\begin{split}
    &A = A_x(x,y) dx + A_y(x,y) dy = A_z(z,\overline{z}) dz + A_{\overline{z}}(z,\overline{z}) d\overline{z} 
    \\
    &A_x = A_{z} + A_{\overline{z}} \quad , \quad A_y = i A_{z} - i A_{\overline{z}}
    \\
    &\partial_y A_x - \partial_x A_y - [A_x,A_y] = 0 \quad \mathrm{iff} \quad \partial_{\overline{z}} A_z - \partial_z A_{\overline{z}} - [A_z,A_{\overline{z}}] = 0,
\end{split}
\end{equation}
where the last equation holds when we consider flat gauge fields. 
We say that a (local) $N$-component field $\Psi$ is a \textit{(local) flat section} of $A$ if it satisfies the equation $(d-A)\Psi = 0$, which gives a pair of differential equations that $\Psi$ needs to satisfy. 

Note for any $GL(N,\C)$-valued \textit{gauge transformation} $g(x,y)$, we have 
\begin{equation}
    (d-A)\Psi = 0 \quad \mathrm{iff} \quad (d - (g A g^{-1} - g d g^{-1})) (g \Psi)
\end{equation}
As such, the flatness of $\Psi$ with respect to $A$ is equivalent to the flatness of flatness of $g \Psi$ with respect to $g A g^{-1} - g d g^{-1}$. As such, we say that two gauge fields $A,A'$ are \textit{gauge equivalent} if there exists a function $g$ such that $A' = g A g^{-1} - g d g^{-1}$.

Via the path-ordered exponential, a \textit{basis} local flat sections $\{\Psi_1, \cdots, \Psi_N\}$ of any flat gauge field on $\CP^1 - \{p_1,\cdots,p_s\}$ exist on any simply connected subset of $\CP^1 - \{p_1 , \cdots , p_s\}$. However, solutions will often not be consistent in a loop going around a puncture, and will instead have a \textit{monodromy} going around a loop associated to crossings with branch cuts coming out from the punctures. For example, the solution $\Psi = \left(\frac{z}{z-1}\right)^{\alpha}$ is a holomorphic solution to $\left(\partial_z - \alpha\left(\frac{1}{z} - \frac{1}{z-1} \right) \right)\Psi = 0$, $\partial_{\overline{z}}\Psi = 0$, but for $\alpha \not\in \Z$, $\Psi$ has a branch cut along paths connecting $0 \to 1$. In general, given a basepoint $p$ in $\CP^1 - \{p_1 , \cdots , p_s\}$, the path-ordered exponential  associates flat connections to \textit{monodromy representations} $\pi_1(\CP^1 - \{p_1, \cdots, p_s\}) \to GL(N,\C)$. Under gauge transformation $g$, such representations are equivalent up to conjugation by $g(p) \in GL(N,\C)$. And, any two flat connections with the same monodromy representation are gauge equivalent.

Another way to think about the monodromy representation is as follows.
We can fix a simply connected subset by choosing a collection of $k$ non-intersecting segments, each of which start at one of the $p_1 , \cdots , p_s$ and end at some common point $p$, where the clockwise order of the zippers coming out from $p$ are $p_1 \to p_2 \to \cdots \to p_s$.
We often refer to such segments as `zippers' in analogy with the lattice case.
Cutting $\CP^1 - \{p_1,\cdots,p_s\}$ along a small neighborhood of these zippers will leave a simply connected region. 
And, we can construct a matrix of solutions ${\bf \Psi} = (\Psi_1 , \cdots , \Psi_N)$ with $(d-A){\bf \Psi}=0$ in this region and smooth it out near the zippers. 
Choosing a gauge transformation $g = {\bf \Psi}^{-1}$ will mean that in the region away from the cuts, we'll have that the transformed gauge field is zero, since $A' = -{\bf \Psi}^{-1} (d - A) {\bf \Psi} = 0$. 
After making this gauge transformation, the path-ordered exponential going from left-to-right across the zippers coming from $p_1, \cdots, p_s$ will be associated to monodromy matrices $M_1, \cdots, M_s$, which generate the flat connection $\pi_1(\CP^1 - \{p_1,\cdots,p_s\}) \to GL(N,\C)$. Sometimes, we refer to the $\{M_j\}$ as \textit{jump matrices}. Because there's no monodromy around the point $p$ where the zippers meet, there's an additional `flatness condition' that $M_s \cdots M_1 = \mathbbm{1}$. In general, the zippers need not meet at a single point, and any graph with endpoints at $p_1, \cdots, p_s$ will work. If $M_1, \cdots, M_s$ are associated to loops going around $p_1, \cdots, p_s$, the matrices going along the edges of any such graph are subject to a flatness condition around each vertex of the graph that's not $p_1, \cdots, p_s$. See Fig.~\ref{fig:flatGaugeFieldToZippers} for a depiction. We use these zipper representations in Sec.~\ref{app:tau_zipper}.

\begin{figure}[h!]
  \centering
  \includegraphics[width=\linewidth]{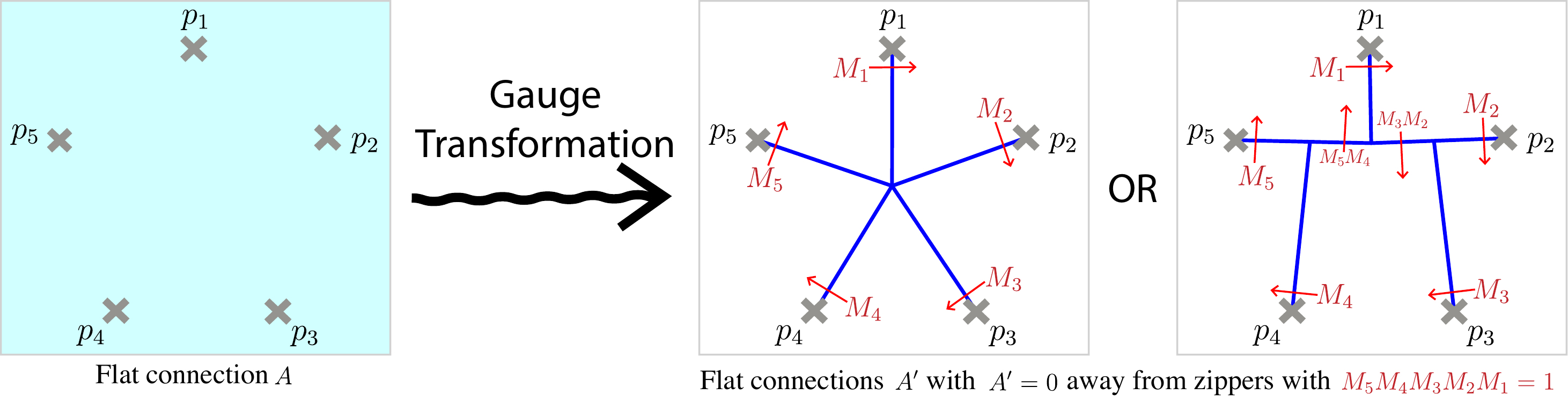}
  \caption{A gauge transformation can transform a flat gauge into one that's zero in a simply connected region away from some set of `zippers' which have matrices attached to them according to what the path-ordered exponential assigns going across the zipper. Around points that aren't $p_1, \cdots, p_s$ where the zippers meet, there is a flatness condition for the matrices.}
  \label{fig:flatGaugeFieldToZippers}
\end{figure}

A related problem of finding holomorphic solutions to so-called \textit{Fuschian systems} can be phrased this way. 
For some points $p_1 , \cdots , p_s \in \C$ and $n \times n$ matrices $B_1 , \cdots, B_s$ satisfying $\sum_{j} B_j = 0$, consider the differential equations
\begin{equation}
    \partial_z \Psi = \left( \sum_{j} \frac{B_j}{z-p_j} \right) \Psi \quad\quad \text{and} \quad\quad \partial_{\overline{z}} \Psi = 0.
\end{equation}
Local solutions to these equations are equivalent to finding local flat sections of a connection with $A_z = \sum_{j} \frac{B_j}{z-p_j}$, $A_{\overline{z}} = 0$. Note that away from the $p_j$, $dA - A \wedge A = 0$ so $A$ flat. The condition $\sum_{j} B_j = 0$ means that there's no pole of $A_z$ at infinity, so the gauge field is non-singular if extended to $\CP^1$. 

Since the Fuschian connection is flat, any points $p_1,\cdots,p_s$ and matrices $B_1,\cdots,B_s$ are associated to monodromy matrices $M_1, \cdots, M_s$ up to overall conjugation, as in the zipper construction above. 
\footnote{In fact, the inverse problem of constructing matrices construct matrices $B_1,\cdots,B_s$ given points $\{p_1, \cdots, p_s\}$ given monodromy data $\{M_1, \cdots, M_s\}$, known as Hilbert's 21st problem, can be solved for \textit{generic} monodromy matrices $\{M_1, \cdots, M_s\}$ (c.f.~\cite{bolibrukh199521st}).}
In general, small deformations the $\{p_j\}$ or of the matrices $\{B_j\}$ will lead to deformations in the monodromy matrices $\{M_j\}$. The problem of finding \textit{isomonodromic deformations} asks: given a small deformation of the $\{p_j\}$, what deformations of $\{B_j\}$ are necessary to keep the monodromy data fixed? The following \textit{Schlesinger Equations}
\begin{equation} \label{eq:schlesinger_1}
\begin{split}
    \partial_{p_k} B_i = \frac{[B_k,B_i]}{p_k - p_i}, \,\, \text{ if } k \neq i \quad\quad \text{and} \quad\quad
    \partial_{p_k} B_k = -\sum_{\ell \neq k} \frac{[B_k,B_{\ell}]}{p_k - p_{\ell}}
\end{split}
\end{equation}
give sufficient equations for finding isomonodromic deformations. Note that the second equation above follows from the condition that $\sum_{j} B_j = 0$. We will find it convenient to write these more compactly as
\begin{equation} \label{eq:schlesinger_2}
    \partial_{p_k} B_i = \frac{[B_k,B_i]}{p_k - p_i} - \delta_{k,i}\sum_{\ell} \frac{[B_k,B_{\ell}]}{p_k - p_{\ell}}
\end{equation}
with the understanding that the `singular' terms ``$\frac{[B_j,B_j]}{p_j-p_j}$'' are set to zero.  

In~\cite{jmuTau1981} Jimbo, Miwa, Ueno constructed the differential form $\omega$
\begin{equation}
\begin{split}
    \omega := \frac{1}{2} \sum_{i \neq j} \frac{\tr(B_i B_j)}{p_i - p_j} d(p_i - p_j),
\end{split}
\end{equation}
on the parameter space of points $\{p_j\}$. From Eq.~\eqref{eq:schlesinger_1}, one finds $d\omega = 0$, so that there are \textit{local} solutions $d \log \tau = \omega$, or equivalently of the equations
\begin{equation} \label{eq:Jimbo_Miwa_Ueno_eqs}
    \partial_{p_i} \log \tau = \sum_{j \neq i} \frac{\tr(B_i B_j)}{p_i - p_j}
\end{equation}
This local function $\tau$ (defined up to multiplication by an overall scalar or antiholomorphic function) is the isomodromic tau function of Jimbo-Miwa-Ueno.

\subsection{Determinants of 2D Dirac operators, Counterterms, and Gauge Invariance} \label{app:tau_dirac}

Here we will review general series expansions for determinants of 2D Dirac operators, also known as partition functions of 2D free Dirac fermions, in the presence of a background gauge fields. The lesson is that the naive series expansion for Dirac determinants in a background gauge field is not actually gauge invariant, but one can add a local counterterm making a gauge-invariant partition function. Our presentation is motivated by notes of Nair~\cite{nair2005}, for whom perspective is equating this fermion determinant with the action of the WZW theory~\cite{wittenNonabelian1983} (see also~\cite{polyakovWiegmann1983}). Our discussion sidesteps WZW theory and derives things more directly.

Let $A$ be any $\gl(N,\C)$ gauge field with components $A_z, A_{\overline{z}}$. Throughout this section, we'll assume that $A$ decays quickly, at least quadratically approaching infinity. In two dimensions, the 2D Dirac operator in the presence of this gauge field is the operator
\begin{equation}
    \slashed{\partial} - \slashed{A} = 
    \begin{pmatrix}
      0 & \partial_{z} \\ \partial_{\overline{z}} & 0
    \end{pmatrix}
    -
    \begin{pmatrix}
      0 & A_z \\ A_{\overline{z}} & 0
    \end{pmatrix}
    =
    \begin{pmatrix}
      0 & \partial_{z} - A_z \\ \partial_{\overline{z}} - A_{\overline{z}} & 0
    \end{pmatrix}
\end{equation}
formed from the operators $\slashed{\partial}$ and $\slashed{A}$.
acting on `spinors', which for our purposes are pairs of $\C^N$-valued functions $\psi_L,\psi_R$ arranged into $\C^{2N}$-valued ones as $\begin{pmatrix} \psi_L \\ \psi_R \end{pmatrix}$. The components $\psi_L$ and $\psi_R$ are the `left' and `right' components of the spinor. 

Our goal is to make sense of the expression $\det \left( \slashed{\partial} - \slashed{A} \right)$. The fact that $\slashed{\partial} - \slashed{A}$ is an infinite-dimensional linear operator indicates that such expressions are likely singular and ill-defined, even in a physical sense. We'll have more luck if instead we consider the ratio of the above determinant with $\det\left(\slashed{\partial}\right)$. Formally, we can expand the above quantity's logarithm as
\begin{equation}
    \log \frac{\det \left( \slashed{\partial} - \slashed{A} \right)}{\det\left(\slashed{\partial}\right)}
    \stackrel{\textbf{form}}{=} 
    \log \det \left[ 1 - \frac{1}{\slashed{\partial}} \slashed{A} \right] = \tr \log  \left[ 1 - \frac{1}{\slashed{\partial}} \slashed{A} \right] = - \sum_{n=1}^{\infty}  \frac{1}{n} \tr \left[ \left( \frac{1}{\slashed{\partial}} \slashed{A} \right)^n \right],
\end{equation}
where $\stackrel{\textbf{form}}{=}$ refers to an equality in the sense of formal series.
In the above, $\frac{1}{\slashed{\partial}}$ refers to the Green's function
\begin{equation}
    \frac{1}{\slashed{\partial}}(x_1,x_2)
    =
    \begin{pmatrix}
      0 & \frac{1}{\partial_{\overline{z}}}(x_1,x_2) \\ 
      \frac{1}{\partial_{z}}(x_1,x_2) & 0
    \end{pmatrix}
    =
    \begin{pmatrix}
      0 & \frac{1}{\pi}\frac{1}{z_1-z_2} \\ \frac{1}{\pi}\frac{1}{\overline{z}_1-\overline{z}_2} & 0
    \end{pmatrix}
\end{equation}
which is formed by the individual Green's functions $\frac{1}{\partial_{z}}(x_1,x_2) = \frac{1}{\pi}\frac{1}{\overline{z}_1-\overline{z}_2}$ and $\frac{1}{\partial_{\overline{z}}}(x_1,x_2) = \frac{1}{\pi}\frac{1}{z_1-z_2}$. 

In the above, for each coordinate $x_j = (x_{j,1},x_{j,2})$, we denote $z_j = x_{j,1} + i x_{j,2}$ and $\overline{z}_j = x_{j,1} - i x_{j,2}$. Throughout this section, we will use this notation, where `$x$' refers to the coordinate in $\R^2$ and $z$ amd $\overline{z}$ refer to the corresponding complex coordinate in $\C$ and its conjugate.

Now, we should make sense of the `matrix multiplication' and traces above. The space we're acting on is $\C^{2N}$ tensored with some suitable function space (e.g. $L^2(\C), C^{\infty}(\C)$). As such, linear operators on this space live in $M_{2N \times 2N}(\C)$ tensored with linear operators on the function space. Given two operators $\mathcal{O}_1, \mathcal{O}_2$, we'll have $(\mathcal{O}_1 \mathcal{O}_2)(x_1,x_2) = \int dz' \mathcal{O}_1(x_1,x') \mathcal{O}_2(x',x_2)$, and the quantity $\mathcal{O}_1(x_1,x') \mathcal{O}_2(x',x_2)$ refers to matrix multiplication in $M_{2N \times 2N}(\C)$. In this sense, since $\slashed{A}$ is a diagonal matrix, we would express
\begin{equation}
    \slashed{A}(x_1,x_2) = \slashed{A}(x_1) \delta^2(x_1-x_2)
\end{equation}
where $\delta^2(z_1,z_2)$ is a Dirac delta function. This means that 
\begin{equation}
    \frac{1}{\slashed{\partial}}\slashed{A}(x_1,x_2)
    =
    \begin{pmatrix}
      \frac{1}{\pi}\frac{A_{\overline{z}}(z_2)}{z_1-z_2} & 0 \\ 0& \frac{1}{\pi}\frac{A_z(z_2)}{\overline{z}_1-\overline{z}_2}
    \end{pmatrix}.
\end{equation}

Since this is block-diagonal with respect to the left and right spinor components, we can write
\begin{equation}
\begin{split}
    \tr &\left[\left( \frac{1}{\slashed{\partial}} \slashed{A} \right)^n\right]
    \stackrel{\textbf{form}}{=} 
    \tr \left[\left( \frac{1}{\partial_{\overline{z}}} A_{\overline{z}} \right)^n\right]
    +
    \tr \left[\left( \frac{1}{\partial_z} A_z \right)^n\right]
    \\
    &\stackrel{\textbf{form}}{=}
    \frac{1}{\pi^n} \int d^2 x_1 \cdots \int d^2 x_n \left\{ 
    \frac{\tr\left[A_{\overline{z}}(x_1) A_{\overline{z}}(x_2) \cdots A_{\overline{z}}(x_n) \right]}
         {(z_1 - z_2) \cdots (z_{n-1} - z_n) (z_n - z_1)} 
    +
    \frac{\tr\left[A_{z}(x_1) A_{z}(x_2) \cdots A_{z}(x_n) \right]}
         {(\overline{z}_1 - \overline{z}_2) \cdots (\overline{z}_{n-1} - \overline{z}_n) (\overline{z}_n - \overline{z}_1)}
    \right\}
\end{split}
\end{equation}
In the above, we integrate over all $x_1, \cdots, x_n \in \R^2$.
The first term in the brackets comes from the `left' half of the spinor field and the second term comes from the `right' half.

A degenerate case is $n=1$ where formally, the expression is 
\begin{equation}
    \tr \left[ \frac{1}{\slashed{\partial}} \slashed{A} \right] 
    \stackrel{\textbf{form}}{=} 
    \tr \left[\frac{1}{\partial_{\overline{z}}} A_{\overline{z}}\right] + \tr \left[\frac{1}{\partial_z} A_z\right]
    \stackrel{\textbf{form}}{=} 
    \frac{1}{\pi} \int d^2 x_1 \left\{ 
    \frac{\tr\left[A_{\overline{z}}(x_1)\right]}{z_1 - z_1} +
    \frac{\tr\left[A_{z}(x_1)\right]]}{\overline{z}_1 - \overline{z}_1}
    \right\}.
\end{equation}
A priori this seems ill-defined. However, a `point-splitting regularization' that smooths out the Green's functions $\frac{1}{\partial_z}$,$\frac{1}{\partial_{\overline{z}}}$ by bump functions $\mathcal{B}_{\epsilon}(x,x')$ of shrinking width $\epsilon$ will create `regulated' Green's functions. We will choose symmetric, normalized bump functions. For example, a Gaussian kernel $\mathcal{B}_{\epsilon}(x,x') = \frac{1}{\pi \epsilon} e^{-|x-x'|^2 / \epsilon}$ serves our purpose. This gives a regulated Green's function
\begin{equation}
    \left( \frac{1}{\partial_{\overline{z}}} \right)_{\textbf{reg}}(x_1,x_2) = \lim_{\epsilon \to 0^+} \int d^2 x' \frac{1}{\partial_{\overline{z}}}(x_1,x') \mathcal{B}_{\epsilon}(x',x_2) 
    = 
    \begin{cases}
        \frac{1}{\pi} \frac{1}{z_1-z_2}, &\,\,\text{ if } z_1 \neq z_2 \\
        0,                               &\,\,\text{ if } z_1 = z_2 \\
    \end{cases}
\end{equation}
Similarly, $\left( \frac{1}{\partial_z} \right)_{\textbf{reg}}$ agrees with the unregulated version away from the diagonal and returns zero on the diagonal. This gives that the $n=1$ term vanishes.
\begin{equation}
    \tr \left[ \frac{1}{\slashed{\partial}} \slashed{A} \right] \stackrel{\textbf{reg}}{=} 
    \int d^2 x
    \left\{
        \tr\left[ A_{\overline{z}}(x) \right]  \left( \frac{1}{\partial_{\overline{z}}} \right)_{\textbf{reg}}(x,x) +
        \tr\left[ A_{z}(x) \right]  \left( \frac{1}{\partial_{z}} \right)_{\textbf{reg}}(x,x)
    \right\}
    = 0,
\end{equation}
where $\stackrel{\textbf{reg}}{=}$ refers to this quantity vanishing in this regulated scheme.
In total, this will give a series expansion
\begin{equation}
\begin{split}
    \log \frac{\det \left( \slashed{\partial} - \slashed{A} \right)}{\det\left(\slashed{\partial}\right)}
    &\stackrel{\textbf{form}}{=} 
    \log \frac{\det \left( \partial_{\overline{z}} - A_{\overline{z}} \right)}{\det \left( \partial_{\overline{z}} \right)}
    +
    \log \frac{\det \left( \partial_{z} - A_{z} \right)}{\det \left( \partial_{z} \right)}
    \\
    &\stackrel{\textbf{reg}}{=}
    -\sum_{n=2}^{\infty}
    \frac{1}{n}
    \frac{1}{\pi^n} \int d^2 x_1 \cdots \int d^2 x_n \left\{ 
    \frac{\tr\left[A_{\overline{z}}(x_1) A_{\overline{z}}(x_2) \cdots A_{\overline{z}}(x_n) \right]}
         {(z_1 - z_2) \cdots (z_n - z_1)} 
    +
    \frac{\tr\left[A_{z}(x_1) A_{z}(x_2) \cdots A_{z}(x_n) \right]}
         {(\overline{z}_1 - \overline{z}_2) \cdots (\overline{z}_n - \overline{z}_1)} 
    \right\}
\end{split}
\end{equation}

Before, we said that the above series expansion is in fact not gauge invariant. First, we can't expect either of the parts $\log \frac{\det \left( \partial_{z} - A_{z} \right)}{\det \left( \partial_{z} \right)}$ and $\log \frac{\det \left( \partial_{\overline{z}} - A_{\overline{z}} \right)}{\det \left( \partial_{\overline{z}} \right)}$ to be gauge invariant. This is because general gauge fields $A_{z}$ and $A_{\overline{z}}$ can be written as
\begin{equation}
    A_{z} = -M \partial_{z} M^{-1} 
    \quad,\quad 
    A_{\overline{z}} = -\overline{M} \partial_{\overline{z}} \overline{M}^{-1} 
\end{equation}
for some independent matrix-valued fields $M, \overline{M}$.
\footnote{
This follows because for a gauge field $H$, the operators $\partial_z - H_z$ and $\partial_{\overline{z}} - H_{\overline{z}}$ are invertible. For example, we have a series 
\begin{equation}
\begin{split}
    \frac{1}{\partial_{\overline{z}} - H_{\overline{z}}}(x,x') 
    &= 
    \frac{1}{\partial_{\overline{z}}}(x,x') 
    + \left( \frac{1}{\partial_{\overline{z}}} H_{\overline{z}} \frac{1}{\partial_{\overline{z}}} \right)(x,x') 
    + \left( \frac{1}{\partial_{\overline{z}}} H_{\overline{z}} \frac{1}{\partial_{\overline{z}}} H_{\overline{z}} \frac{1}{\partial_{\overline{z}}} \right)(x,x')
    + \cdots
    \\
    &= 
    \frac{1}{\pi} \frac{1}{z-z'} 
    + \frac{1}{\pi^2} \int d^2 x_1 \frac{\tr\left[H_{\overline{z}}(x_1)\right]}{(z-z_1)(z_1-z')}
    + \frac{1}{\pi^3} \int d^2 x_1 \int d^2 x_2 \frac{\tr\left[H_{\overline{z}}(x_1)H_{\overline{z}}(x_2)\right]}{(z-z_1)(z_1-z_2)(z_2-z')}
    + \cdots
\end{split}
\end{equation}
which satisfies $\left(\left(\partial_{\overline{z}} - H_{\overline{z}}\right) \frac{1}{\partial_{\overline{z}} - H_{\overline{z}}}\right)(x,x') = \delta(x,x')$. 
As such, the flatness condition $\partial_{\overline{z}} \mathcal{A}_z -  [A_{\overline{z}}, \mathcal{A}_{z}] = \partial_z A_{\overline{z}}$ can be solved for $\mathcal{A}_z$, since $\partial_{\overline{z}} -  [A_{\overline{z}}, \boldsymbol{\cdot}]$ is a connection in the adjoint representation. Since the gauge field $(\mathcal{A}_z, A_{\overline{z}})$ is flat on $\CP^1$, $\overline{M}$ can be expressed as the path-ordered exponential with respect to the gauge field. Parallel manipulations work to find the matrix $M$.
}
So, each individual component $A_z, A_{\overline{z}}$ can be written as a gauge transformation of the trivial field, even though the full field may not. As such, if $\log \frac{\det \left( \partial_{z} - A_{z} \right)}{\det \left( \partial_{z} \right)}$ and $\log \frac{\det \left( \partial_{\overline{z}} - A_{\overline{z}} \right)}{\det \left( \partial_{\overline{z}} \right)}$ were gauge invariant, they'd both be trivial. 

At this point, we want to evaluate the variation of each of the summands $\log \frac{\det \left( \partial_{z} - A_{z} \right)}{\det \left( \partial_{z} \right)}$ and $\log \frac{\det \left( \partial_{\overline{z}} - A_{\overline{z}} \right)}{\det \left( \partial_{\overline{z}} \right)}$ with respect to an infinitesimal gauge transformation $e^{\delta u(x)}$, where $\delta u$ is small. Under this transformation $A \to e^{\delta u} A e^{-\delta u} - e^{\delta u}\,d\,e^{-\delta u}$, we have that the gauge field varies as
\begin{equation}
    \delta A_{z} = [\delta u, A_{z}] + \partial_{z} \delta u 
    \quad,\quad 
    \delta A_{\overline{z}} = [\delta u, A_{\overline{z}}] + \partial_{\overline{z}} \delta u 
\end{equation}
at first order in $\delta u$. Let's first examine the variation in the expansion $\log \frac{\det \left( \partial_{\overline{z}} - A_{\overline{z}} \right)}{\det \left( \partial_{\overline{z}} \right)}$. At first order, we'll have
\begin{equation}
\begin{split}
    \delta & \log \frac{\det \left( \partial_{\overline{z}} - A_{\overline{z}} \right)}{\det \left( \partial_{\overline{z}} \right)}
    \stackrel{\textbf{reg}}{=}
    -\sum_{n=2}^{\infty}
    \frac{1}{\pi^n} \int d^2 x_1 \cdots \int d^2 x_n 
    \frac{\tr\left[ \left([\delta u(x_1), A_{\overline{z}}(x_1)] + \partial_{\overline{z}_1} \delta u (x_1) \right) A_{\overline{z}}(x_2) \cdots A_{\overline{z}}(x_n) \right]}
    {(z_1 - z_2) \cdots (z_n - z_1)}
    \\
    &=
    -\sum_{n=2}^{\infty} \frac{1}{\pi^n} \int d^2 x_1 \cdots \int d^2 x_n 
    \tr \left[
      \delta u(x_1) \left\{
      \Big[ A_z(x_1) \,,\, \frac{ A_{\overline{z}}(x_2) \cdots A_{\overline{z}}(x_n)}{(z_1 - z_2) \cdots (z_n - z_1)} \Big] 
      \, - \, 
      \partial_{\overline{z}_1} \left(\frac{ A_{\overline{z}}(x_2) \cdots A_{\overline{z}}(x_n)}{(z_1 - z_2) \cdots (z_n - z_1)}\right)\right\} 
    \right]
    \\
    &=
    -\sum_{n=2}^{\infty} \frac{1}{\pi^n} \int d^2 x_1 \cdots \int d^2 x_n 
    \tr \left[
      \delta u(x_1) \left\{
      \frac{ A_{\overline{z}}(x_1) A_{\overline{z}}(x_2) \cdots A_{\overline{z}}(x_n)}{(z_1 - z_2) \cdots (z_n - z_1)}
      -
      \frac{ A_{\overline{z}}(x_2) \cdots A_{\overline{z}}(x_n) A_{\overline{z}}(x_1) }{(z_1 - z_2) \cdots (z_n - z_1)}
      \right\} 
    \right]
    \\
    &\quad-
    \frac{1}{\pi^2} \int d^2 x_1 \int d^2 x_2 \,\, 
    \tr \left[ \delta u (x_1) \, \partial_{\overline{z}_1} \frac{A_{\overline{z}}(x_2)}{(z_1-z_2)^2} \right]
    \\
    &\quad+
    \sum_{n'=3}^{\infty} \frac{1}{\pi^{n'}} \pi \int d^2 x_2 \cdots \int d^2 x_{n'}
    \tr \left[
      \delta u(x_2)
      \frac{ A_{\overline{z}}(x_2) \cdots A_{\overline{z}}(x_{n'})}{(z_2 - z_3) \cdots (z_{n'} - z_2)}
      -
      \delta u(x_{n'})
      \frac{ A_{\overline{z}}(x_2) \cdots A_{\overline{z}}(x_{n'}) }{(z_{n'} - z_2)(z_2 - z_3) \cdots (z_{n'-1} - z_{n'})}
    \right]
    \\
    &= - \frac{1}{\pi} \int d^2 x \, \tr \left[ \left( \partial_{z} \delta u \right) A_{\overline{z}} \right]
\end{split}
\end{equation}
The first equality follows from the product rule and a cyclic reordering of each of the $n$ terms, which gets rid of the $\frac{1}{n}$ factor. The second equality is two steps. First, we rearrange the commutator using the identity $\tr([\mathcal{A},\mathcal{B}] \mathcal{C}) = \tr(\mathcal{A}[\mathcal{B},\mathcal{C}])$. 
Next, we integrate the other summand by parts, noting that $\int_{\mathcal{R}} d^2 x \partial_z(\cdots) = -2 i \int_{\partial \mathcal{R}} d \overline{z} (\cdots)$ is a boundary term, which drops away if the integrand decays fast enough. 
The third equality does some rearranging uses a partial fractions decomposition together with the equality $\partial_{\overline{z}_1} \frac{1}{z_1 - z'} = \pi \delta^2(x_1 - x')$ before integrating over $x_1$. The $n'=2$ term is a special case and will be handled separately next. 
In the fourth equality, notice that in the infinite sums above, pairs of terms with index-labels $n,n'$ with $n'=n+1$ cancel each other exactly, so only the $n'=2$ term remains. Then, we use the fact that $\partial_{\overline{z}_1} \frac{1}{(z_1 - z_2)^2} = -\partial_{\overline{z}_1} {\partial}_{z_1} \frac{1}{(z_1-z_2)} = -\pi \partial_{z_1} \delta^2(x_1 - x_2)$. Then, the equality follows from integration-by-parts and integrating over $x_1$. Similarly, we can express
\begin{equation}
    \delta \log \frac{\det \left( \partial_{z} - A_{z} \right)}{\det \left( \partial_{z} \right)}
    \stackrel{\textbf{reg}}{=} -\frac{1}{\pi}\int d^2 x \, \tr \left[ \left( \partial_{\overline{z}} \delta u \right) A_{z} \right]
\end{equation}
so that the full variation of the regulated series expansion
\begin{equation}
    \delta \log \frac{\det \left( \slashed{\partial} - \slashed{A} \right)}{\det\left(\slashed{\partial}\right)} 
    \stackrel{\textbf{reg}}{=}
    -\frac{1}{\pi}\int d^2 x 
    \left\{
      \tr \left[ \left( \partial_{z} \delta u \right) A_{\overline{z}} \right] +
      \tr \left[ \left( \partial_{\overline{z}} \delta u \right) A_{z} \right]
    \right\}
    = -\frac{1}{\pi}\delta \int d^2 x \, \tr \left[ A_{\overline{z}} A_z \right].
\end{equation}
As such, the regulated series expansion is \textit{not} gauge invariant which can be said to be a \textit{gauge anomaly}. However, the addition of a \textit{local counterterm} $\frac{1}{\pi} \int d^2 x \, \tr \left[ A_{\overline{z}} A_z \right]$ to the series expansion will indeed give us a series for the determinant that's invariant under infinitesimal gauge transformations. 

As such, the series should be the same for `small gauge transformations' that are connected to the identity. However, for `large gauge transformations' that aren't connected to the identity, we can't expect that these two determinants are the same. This behavior is also the case for the Jimbo-Miwa-Ueno tau function. 
For example, the connections $A_z = \alpha \left(\frac{1}{z} - \frac{1}{z-1}\right), A_{\overline{z}} = 0$ and $A'_z = (\alpha + 1) \left(\frac{1}{z} - \frac{1}{z-1}\right), A'_{\overline{z}} = 0$ give the same monodromy representation. But, their corresponding Jimbo-Miwa-Ueno equations Eq.~\eqref{eq:Jimbo_Miwa_Ueno_eqs} don't match. Note that the gauge transformation connecting these two connections looks like $g(x) = \frac{z}{1-z}$ which is zero or singular or zero at $z=0,1$. So, these large gauge transformations cannot be extended to the punctures of the sphere.

In total, adding this counterterm gives us our final expression for the series expansion
\begin{equation} \label{eq:diracDeterminant_full}
\begin{split}
    \log \frac{\det \left( \slashed{\partial} - \slashed{A} \right)}{\det\left(\slashed{\partial}\right)} 
    \stackrel{\textbf{corr}}{=}& 
    -\sum_{m=2}^{\infty}
    \frac{1}{n}
    \frac{1}{\pi^n} \int d^2 x_1 \cdots \int d^2 x_n \left\{ 
    \frac{\tr\left[A_{\overline{z}}(x_1) \cdots A_{\overline{z}}(x_n) \right]}
         {(z_1 - z_2) \cdots (z_n - z_1)} 
    +
    \frac{\tr\left[A_{z}(x_1) \cdots A_{z}(x_n) \right]}
         {(\overline{z}_1 - \overline{z}_2) \cdots (\overline{z}_n - \overline{z}_1)} 
    \right\}
    \\
    &+ \frac{1}{\pi} \int d^2 x \, \tr \left[ A_{\overline{z}} A_z \right]
\end{split}
\end{equation}
where $\stackrel{\textbf{corr}}{=}$ refers to a `corrected' determinant, which we henceforth take to be the definition of the Dirac determinant's series expansion. 

We note that this quadratic term $A_z A_{\overline{z}}$ mixes the holomorphic and antiholomoprhic sectors, like a mass term would. Indeed, one can find that this term can be derived from doing this whole story with the massive Dirac operator $\slashed{\partial}+m$ and investigating the limit $m \to 0$. 

\subsubsection{Dirac Determinants of Abelian connections} \label{app:dirac_det_abelian}

As an example where we can evaluate the above series, let's take $A$ to be an abelian gauge field, so that $[A_{\overline{z}}(x),A_{\overline{z}}(x')]=[A_{z}(x),A_{z}(x')]=[A_{\overline{z}}(x),A_{z}(x')]=0$ for all $x,x'$.
Since $A$ is abelian, we can symmetrize over the order of the variables in the trace, or equivalently make the replacement
\begin{equation*}
    \frac{\tr\left[A_{\overline{z}}(x_1) A_{\overline{z}}(x_2) \cdots A_{\overline{z}}(x_n) \right]}
         {(z_1 - z_2) \cdots (z_n - z_1)} 
    \to \frac{1}{n!} \tr\left[A_{\overline{z}}(x_1) A_{\overline{z}}(x_2) \cdots A_{\overline{z}}(x_n) \right] \sum_{\sigma \in \mathrm{Sym}(n)} \frac{1}{(z_{\sigma(1)} - z_{\sigma(2)}) (z_{\sigma(2)} - z_{\sigma(3)}) \cdots (z_{\sigma(n)} - z_{\sigma(1)})}
\end{equation*}
(and similarly for the other term $z \leftrightarrow \overline{z}$) in the expression where $\mathrm{Sym}(n)$ is the symmetric group of order $n$.
Then by the next lemma, it turns out that all terms in the series for $n \ge 3$ vanish. 
\begin{lem}
    Let be $n \ge 3$. Then for $z_1, \cdots, z_n$ distinct,
    \begin{equation} \label{eq:permSumLem}
        \sum_{\sigma \in \mathrm{Sym}(n)} \frac{1}{(z_{\sigma(1)} - z_{\sigma(2)}) (z_{\sigma(2)} - z_{\sigma(3)}) \cdots (z_{\sigma(n)} - z_{\sigma(1)})} = 0
    \end{equation}
\end{lem}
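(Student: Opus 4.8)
The plan is to fix all variables except $z_1$ and regard
\begin{equation*}
    S(z_1) := \sum_{\sigma \in \mathrm{Sym}(n)} \frac{1}{(z_{\sigma(1)} - z_{\sigma(2)}) \cdots (z_{\sigma(n)} - z_{\sigma(1)})}
\end{equation*}
as a rational function of the single complex variable $z_1$, and then to show that $S$ has no poles, is therefore a polynomial, and decays at infinity, so that $S \equiv 0$.

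First I would locate the poles. In each summand the variable $z_1$ occurs in exactly two denominator factors, namely $(z_{\text{prev}} - z_1)$ and $(z_1 - z_{\text{next}})$, associated with the two cyclic neighbours of the symbol $1$ in the word $(\sigma(1), \dots, \sigma(n))$. Hence each summand has at worst simple poles at $z_1 = z_j$ for $j \ne 1$, and it is precisely here that the hypothesis $n \ge 3$ enters: for $n \ge 3$ the two cyclic neighbours of $1$ are distinct symbols, so $z_1$ never appears squared in a single factor and every pole of $S$ in $z_1$ is simple. (For $n = 2$ the two neighbours coincide, producing a double pole, which is exactly why the identity fails there.)

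The heart of the argument is to show that the total residue of $S$ at each $z_1 = z_j$ vanishes. Only permutations in which $1$ and $j$ are cyclically adjacent contribute, and these split into two classes: those with $1$ immediately followed by $j$, giving a factor $(z_1 - z_j)$, and those with $j$ immediately followed by $1$, giving a factor $(z_j - z_1)$. I would use the involution that transposes the adjacent pair, sending a cyclic word $(\dots, a, 1, j, \dots)$ to $(\dots, a, j, 1, \dots)$; this is a bijection between the two classes. A direct computation of $\mathrm{Res}_{z_1 = z_j}$ for a paired couple shows that, after collapsing the factor $(z_1 - z_j)$ and setting $z_1 = z_j$ in the surviving factors, the two terms yield the same expression with opposite signs, the sign flip coming from $(z_j - z_1) = -(z_1 - z_j)$. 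Thus residues cancel in pairs and $\mathrm{Res}_{z_1 = z_j} S = 0$ for every $j$.

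Having removed all poles, $S$ is a rational function of $z_1$ with no poles, hence a polynomial. Finally I would observe that each summand is $O(z_1^{-2})$ as $z_1 \to \infty$ (two denominator factors grow linearly), so the finite sum $S$ tends to $0$; a polynomial vanishing at infinity is the zero polynomial, giving $S \equiv 0$. The main obstacle is purely bookkeeping: verifying cleanly that the transposition is a genuine bijection between the two adjacency classes and that the two collapsed residues agree up to sign, including the degenerate case $n = 3$ where no symbols lie strictly between $j$ and $a$ and the ``remaining chain'' reduces to the single factor $(z_j - z_a)$. It may be cleanest to phrase the cancellation as a node-merging step: the residue at $z_1 = z_j$ of a summand with $1, j$ adjacent equals $\pm$ the corresponding cyclic sum over the $n-1$ symbols obtained by merging $1$ and $j$ into a single node carrying value $z_j$, with the sign fixed by the relative order, so the two orders cancel term by term.
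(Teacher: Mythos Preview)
Your argument is correct and is genuinely different from the paper's.  The paper multiplies the sum $f$ by the Vandermonde determinant $V=\prod_{i<j}(z_i-z_j)$: since $f$ is symmetric and $V$ antisymmetric, $fV$ is an antisymmetric polynomial, and a degree count shows $\deg(fV)=\binom{n}{2}-n<\binom{n}{2}=\deg V$; as $V$ is the minimal-degree nonzero antisymmetric polynomial, $fV=0$ (with $n=3$ checked directly).  Your approach instead treats the sum as a rational function of $z_1$, kills all poles by a residue-cancelling involution, and finishes with a Liouville-type step.  The paper's proof is shorter and exploits the full $\mathrm{Sym}(n)$ symmetry at once, while yours makes the role of the hypothesis $n\ge 3$ completely transparent (simple versus double pole) and the pairing you use is exactly the mechanism that would generalise to related cyclic identities where a Vandermonde trick is unavailable.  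One small remark: your bookkeeping worry is unfounded---for $n\ge3$ the symbol $1$ has two \emph{distinct} cyclic neighbours, so ``$j$ is the successor of $1$'' and ``$j$ is the predecessor of $1$'' are mutually exclusive and the transposition is a well-defined involution on the contributing permutations, including the cyclic-wraparound case.
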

A clean proof is done similarly to a similar computation in~\cite{boutillierThesis2005}.
\begin{proof}
    Let $f(z_1,\cdots,z_n)$ be the left-hand-side of Eq.~\eqref{eq:permSumLem}. 
    Let $V(z_1,\cdots,z_n) = \prod_{1 \le i < j \le n} (z_i - z_j)$ be the Vandermonde determinant.
    Note that $f$ is manifestly a symmetric function in $z_1,\cdots,z_n$, whereas $V$ is antisymmetric. This means that $f \cdot V$ is an antisymmetric function. Moreover, every term in $f \cdot V$ is a polynomial in $z_1, \cdots, z_n$ of lower degree than $V$. Since $V$ is the lowest degree non-constant antisymmmetric polynomial in $z_1, \cdots, z_n$, this means that $f \cdot V$ must be a constant. For $n>3$, every term has degree greater than zero, so $f \cdot V$ must be exactly zero, implying that $f$ vanishes. The vanishing for $n=3$ can be verified directly.
\end{proof}

So for abelian gauge fields, we have the expression
\begin{equation} \label{eq:diracDet_abelian}
\begin{split}
    &\log \frac{\det \left( \slashed{\partial} - \slashed{A} \right)}{\det\left(\slashed{\partial}\right)} 
    \stackrel{\textbf{corr}}{=}
    -\frac{1}{2}
    \frac{1}{\pi^2} \int d^2 x_1 \int d^2 x_2 \left\{ 
    \frac{\tr\left[ A_{\overline{z}}(x_1) A_{\overline{z}}(x_2) \right]}
         {(z_1 - z_2) (z_2 - z_1)} 
    +
    \frac{\tr\left[ A_{z}(x_1) A_{z}(x_2) \right]}
         {(\overline{z}_1 - \overline{z}_2) (\overline{z}_2 - \overline{z}_1)} 
    \right\} + \frac{1}{\pi} \int d^2 x \, \tr \left[ A_{\overline{z}} A_z \right]
    \\
    &=
    +\frac{1}{2}
    \frac{1}{\pi^2} \int d^2 x_1 \int d^2 x_2 \Big\{ 
    \tr\left[ A_{\overline{z}}(x_1) A_{\overline{z}}(x_2) \right] \partial_{z_1}\partial_{z_2} \log(|x_1-x_2|^2)
    +
    \tr\left[ A_{z}(x_1) A_{z}(x_2) \right] \partial_{\overline{z}_1}\partial_{\overline{z}_2} \log(|x_1-x_2|^2)
    \\ &\quad\quad\quad\quad\quad\quad\quad\quad\quad\quad
     - \tr \left[ A_{\overline{z}}(x_1) A_z(x_2) \right] \partial_{z_1} \partial_{\overline{z_2}} \log(|x_1-x_2|^2) 
     - \tr \left[ A_z(x_1) A_{\overline{z}}(x_2) \right] \partial_{\overline{z_1}} \partial_{z_2} \log(|x_1-x_2|^2) 
     \Big\}
    \\
    &=
    -\frac{1}{8 \pi^2} \int d^2 x_1 \int d^2 x_2 \big\{ \log(|x_1-x_2|^2) \, \tr[ F(x_1) F(x_2)] \big\}
\end{split}
\end{equation}
where $F = \partial_{y} A_{x} - \partial_{x} A_{y} = -2 i (\partial_{\overline{z}} A_{z} - \partial_{z} A_{\overline{z}}) $ is the field strength of the abelian connection. 
In the second equality, we use the facts $\partial_z \log(|x-x'|^2) = \frac{1}{z-z'}$ so $\partial_{\overline{z}'}\partial_{z} \log(|x-x'|^2) = \pi \delta^2(x-x')$ so that 
\begin{equation*}
    \int d^2 x A_{\overline{z}} A_{z} = \int d^2 x_1 \int d^2 x_2 A_{\overline{z}}(x_1) A_{z}(x_2) \delta^2(x_1-x_2) = \frac{1}{\pi} \int d^2 x_1 \int d^2 x_2 A_{\overline{z}}(x_1) A_{z}(x_2) \partial_{\overline{z}_1}\partial_{z_2} \log(|x_1 - x_2|^2).
\end{equation*}
In the third equality, we do various integrations-by-parts and reorganize terms. For $A$ a $\C^{\times}$ connection, this final result is the same result one would get from coupling the Gaussian Free Field (i.e. free boson) to the field strength $F$, which can be expressed as
\begin{equation}
\begin{split}
    &\log \frac{\det \left( \slashed{\partial} - \slashed{A} \right)}{\det\left(\slashed{\partial}\right)} 
    =
    -\frac{1}{8 \pi^2} \int d^2 x_1 \int d^2 x_2 \big\{ \log(|x_1-x_2|^2) \,  F(x_1) F(x_2) \big\}
    =
    \log \frac{Z_{GFF}(F)}{Z_{GFF}(0)}
    \\
    &\text{where } \quad
    Z_{GFF}(F) = \int [d \phi] e^{ - \int d^2 x \left\{ {\pi}(\partial \phi)^2 + F(x) \phi(x) \right\} }.
\end{split}
\end{equation}
Such a result was known as early as 1962 by Schwinger~\cite{schwinger1962} and is an example of \textit{bosonization}, where in 2D, certain bosonic and fermion correlators can be equated.

\subsection{`Fuschian' expansion of $\log\tau$ and the Jimbo-Miwa-Ueno equations} \label{app:tau_fuschian}

Now, we apply our formalism above to compute the determinant with respect to Fuschian connections and show how the expansion satisfies the Jimbo-Miwa-Ueno equations.
Above, we had to manipulate various infinities to justify a gauge-invariant series expansion. Here, we will encounter another set of infinities that come from the singular nature of the connection. To handle these singularities, we will instead consider a `smeared out' determinant where the singular connection is smoothed by a kernel of width $\epsilon$ to become a regular connection and consider the expansion around $\epsilon \to 0$.
This will give functions that diverge logarithmically with $\epsilon$. 
We conjecture that after appropriately subtracting off the divergence, the limiting series expansion indeed converges to the modified tau function.

It will be instructive to start out using our previous results above to handle the case of abelian connections, many of whose lessons with the divergences can be transferred onto the nonabelian case. 

Consider a singular abelian connection $A_z = \sum_{j} \frac{b_j}{z-p_j}$, $A_{\overline{z}} = 0$ with each $b_j \in \C$ satisfying $\sum_{j} b_j = 0$. This connection has monodromy $e^{2 \pi i b_j}$ around the puncture $p_j$. This will give us a field strength of $F = 2 \pi i \sum_{j} b_j \delta^2(x-x_j)$. Proceeding naively, this will give an expression
\begin{equation} \label{eq:abelian_singular_dirac_det}
    \log \frac{\det \left( \slashed{\partial} - \slashed{A} \right)}{\det\left(\slashed{\partial}\right)} 
    =
    -\sum_{i,j} \frac{1}{8 \pi^2} (2 \pi i b_i) (2 \pi i b_j) \log(|p_i - p_j|^2)
    = 
    +\sum_{i<j} b_i b_j \log(|p_i - p_j|^2)  + \sum_{j} \frac{1}{2} b_{j}^2 \log(|0|^2).
\end{equation}
The infiniteness of the $\log(|0|^2)$ terms in this expression can be traced back to the fact that our gauge fields were singular. However, note that if we treat this infinity as a constant, the tau function agrees exactly with the Jimbo-Miwa-Ueno equations Eq.~\eqref{eq:Jimbo_Miwa_Ueno_eqs}. 

To actually get a finite expression, one thing we could do is to `smooth out' the connection by convolving it with a Gaussian $\mathcal{B}_{\epsilon}(x,x') = \frac{1}{\pi \epsilon} e^{-|x-x'|^2/\epsilon}$ and considering the expansion as $\epsilon \to 0$. In this limit, the terms $\sum_{i<j} b_i b_j \log(|p_i - p_j|^2)$ will be preserved up to lower-order terms in $\epsilon$. However, the $\log(|0|^2)$ turns into
\begin{equation} \label{eq:logZero_reg}
\begin{split}
\log(|0|^2) \to
    \int d^2 x_1 \int d^2 x_2  \log(|x_1-x_2|^2) \frac{e^{-\frac{x_1^2 + x_2^2}{\epsilon}}}{(\pi \epsilon)^2}
    = \left( \int d^2 \tilde{x}_1 \log(|\tilde{x}_1|^2)  \frac{e^{-\frac{\tilde{x}_1^2}{2 \epsilon}}}{\pi \epsilon} \right)
      \left( \int d^2 \tilde{x}_2 \frac{e^{-2\frac{\tilde{x}_2^2}{\epsilon}}}{\pi \epsilon} \right) 
    = \log(2 \epsilon) - \gamma_{\mathrm{Euler}}
\end{split}
\end{equation}
where $\tilde{x}_1 = x_1 - x_2$, $\tilde{x}_2 = \frac{1}{2}(x_1 + x_2)$. This still a constant with respect to the $\{p_1 \cdots p_k\}$ that diverges logarithmically with $\epsilon$. This gives a smeared determinant
\begin{equation}
    \left[
    \log \frac{\det \left( \slashed{\partial} - \slashed{A} \right)}{\det\left(\slashed{\partial}\right)} 
    \right]_{\textbf{smeared},\epsilon}
    =
    \sum_{i<j} b_i b_j \log(|p_i - p_j|^2)  + \sum_{j} \frac{1}{2} b_{j}^2 \left( \log(2 \epsilon) - \gamma_{\mathrm{Euler}} \right) + O(\epsilon)
\end{equation}
for abelian gauge fields. As such,
\begin{equation*}
    \lim_{\epsilon \to 0} \left\{ \left[ \log \frac{\det \left( \slashed{\partial} - \slashed{A} \right)}{\det\left(\slashed{\partial}\right)} \right]_{\textbf{smeared},\epsilon} - \sum_{j} \frac{1}{2} b_{j}^2 \log( \epsilon) \right\}
\end{equation*}
is well-defined and can be viewed as the tau function agreeing with the Eqs.~\eqref{eq:Jimbo_Miwa_Ueno_eqs} for abelian fields. In the next section, we argue that the series expansion for nonabelian gauge fields corresponds to the tau function. 

Let's generalize the calculation above to non-abelian Fuschian systems.
Here, $A_{\overline{z}}=0$ and $A_z = \sum_{j} \frac{B_j}{z-p_j}$ for matrices $\{B_j\}$ satisfying $\sum_{j} B_j = 0$, where the $\{B_j\}$ depend on the $\{p_j\}$ via the Schlesinger Equations~\eqref{eq:schlesinger_2}. In this case, our series expansion becomes
\begin{equation}
\begin{split}
    \log \frac{\det \left( \slashed{\partial} - \slashed{A} \right)}{\det\left(\slashed{\partial}\right)} 
    \stackrel{\textbf{corr}}{=}& 
    -\sum_{n=2}^{\infty}
    \frac{1}{n}
    \frac{1}{\pi^n} 
    \int d^2 x_1 \cdots \int d^2 x_n \left\{
    \frac{\tr\left[ 
        \left( \sum_{i} \frac{B_i}{z_1 - p_i} \right)
        \cdots
        \left( \sum_{i} \frac{B_i}{z_n - p_i} \right)
    \right]}
         {(\overline{z}_1 - \overline{z}_2) \cdots (\overline{z}_n - \overline{z}_1)} 
    \right\} 
    =: \sum_{n=2}^{\infty} \mathcal{I}_n
\end{split}
\end{equation}
where $\mathcal{I}_n$ is the $n$th term in the expansion. To compare to the Jimbo-Miwa-Ueno equations Eq.~\eqref{eq:Jimbo_Miwa_Ueno_eqs}, we take the derivative with respect the $p_j$. First, we look at the $n=2$ term. This calculation ends up working the same as the abelian case. In particular after a Gaussian smoothing by $\mathcal{B}_{\epsilon}$, we get
\begin{equation}
    \mathcal{I}_2 = \sum_{i<j} \tr\left[B_i B_j\right] \log(|p_i - p_j|^2)  + \sum_{j} \frac{1}{2} \tr\left[B_{j}^2\right] \left( \log(2 \epsilon) - \gamma_{\mathrm{Euler}} \right) + O(\epsilon)
\end{equation}
so that as $\epsilon \to 0$
\begin{equation}
\begin{split}
    \partial_{p_k} \mathcal{I}_2 
    &= 
    \sum_{j \neq k} \frac{\tr(B_k B_j)}{p_k - p_j}
    +
    \sum_{i \neq j} \tr\left[ \left( \frac{[B_k,B_i]}{p_k - p_i} - \delta_{k,i}\sum_{\ell} \frac{[B_k,B_{\ell}]}{p_k - p_{\ell}} \right) B_j\right] \log(|p_i - p_j|^2)
    \\
    &=
    \sum_{j \neq k} \frac{\tr(B_k B_j)}{p_k - p_j}
    -
    \int d^2 x_1 \int d^2 x_2 \,\,
    \frac{\tr\left[ 
        \left(
            \sum_{\ell} \frac{[B_k,B_{\ell}]}{p_k - p_{\ell}} \left(\frac{1}{z_1 - p_\ell} - \frac{1}{z_1 - p_k}\right)
        \right)
        \left( \sum_{i} \frac{B_i}{z_2 - p_i} \right)
    \right]}
         {(\overline{z}_1 - \overline{z}_2)(\overline{z}_2 - \overline{z}_1)}
    =:
    \sum_{j \neq k} \frac{\tr(B_k B_j)}{p_k - p_j} + \delta_{k}^{(2)} \mathcal{I}_{2}.
\end{split}
\end{equation}
In the first equality, the first term comes from the derivative of the `$\log(|p_i - p_j|^2)$' terms and the second comes from the Schlesinger equations and the variation of $\{B_j\}$, noting that $\tr\left[B_j^2\right]$ remains invariant. The second equality just rewrites the second term, which we define as $\delta_{k}^{(2)} \mathcal{I}_{2}$ in parallel with future notation. (See also the following derivation for an explanation for this rewriting of $\delta_{k}^{(2)} \mathcal{I}_{2}$.)

Note that the first term in the equality is what we want for the Jimbo-Miwa-Ueno equations. We will see that $\delta_{k}^{(2)} \mathcal{I}_{2}$ gets cancelled out by higher terms. Now, we write the derivative of $\mathcal{I}_n$ for $n>2$:
\begin{equation}
\begin{split}
    \partial_{p_k} \mathcal{I}_n
    &=
    -\frac{1}{\pi^n} 
    \int d^2 x_1 \cdots \int d^2 x_n
    \frac{\tr\left[ 
        \left( 
            \frac{B_k}{(z_1 - p_k)^2}
            +
            \sum_{i} \left( \frac{[B_k,B_i]}{p_k - p_i} - \delta_{k,i}\sum_{\ell} \frac{[B_k,B_{\ell}]}{p_k - p_{\ell}} \right) \frac{1}{z_1 - p_i}
        \right)
        \left( \sum_{i} \frac{B_i}{z_2 - p_i} \right)
        \cdots
        \left( \sum_{i} \frac{B_i}{z_n - p_i} \right)
    \right]}
         {(\overline{z}_1 - \overline{z}_2)(\overline{z}_2 - \overline{z}_3) \cdots (\overline{z}_n - \overline{z}_1)} 
    \\
    &=
    -\frac{1}{\pi^n} 
    \int d^2 x_1 \cdots \,\,
    \frac{\tr\left[ 
        \left( 
            -\partial_{z_1} \frac{B_k}{z_1 - p_k}
        \right)
        \left( \sum_{i} \frac{B_i}{z_2 - p_i} \right)
        \cdots
        \left( \sum_{i} \frac{B_i}{z_n - p_i} \right)
    \right]}
         {(\overline{z}_1 - \overline{z}_2)(\overline{z}_2 - \overline{z}_3) \cdots (\overline{z}_n - \overline{z}_1)} 
    \\
    &\quad\quad -
    \frac{1}{\pi^n} 
    \int d^2 x_1 \cdots \,\,
    \frac{\tr\left[ 
        \left(
            \sum_{i} \frac{[B_k,B_i]}{p_k - p_i} \frac{1}{z_1 - p_i} - \sum_{\ell} \frac{[B_k,B_{\ell}]}{p_k - p_{\ell}}  \frac{1}{z_1 - p_k}
        \right)
        \left( \sum_{i} \frac{B_i}{z_2 - p_i} \right)
        \cdots
        \left( \sum_{i} \frac{B_i}{z_n - p_i} \right)
    \right]}
         {(\overline{z}_1 - \overline{z}_2)(\overline{z}_2 - \overline{z}_3) \cdots (\overline{z}_n - \overline{z}_1)} 
    \\
    &=
    -\frac{1}{\pi^{n-1}} 
    \int d^2 x_2 \cdots \int d^2 x_n \,\,
    \left\{
    \frac{\tr\left[ 
        \left( 
            \frac{B_k}{z_2 - p_k}
        \right)
        \left( \sum_{i} \frac{B_i}{z_2 - p_i} \right)
        \cdots
        \left( \sum_{i} \frac{B_i}{z_n - p_i} \right)
    \right]}
         {(\overline{z}_2 - \overline{z}_3) \cdots (\overline{z}_{n-1} - \overline{z}_n)(\overline{z}_n - \overline{z}_2)} 
    -
    \frac{\tr\left[ 
        \left( 
            \frac{B_k}{z_n - p_k}
        \right)
        \left( \sum_{i} \frac{B_i}{z_2 - p_i} \right)
        \cdots
        \left( \sum_{i} \frac{B_i}{z_n - p_i} \right)
    \right]}
         {(\overline{z}_n - \overline{z}_2)(\overline{z}_2 - \overline{z}_3) \cdots (\overline{z}_{n-1} - \overline{z}_n)}
    \right\} 
    \\
    &\quad\quad -
    \frac{1}{\pi^{n}} 
    \int d^2 x_1 \cdots \int d^2 x_n \,\,
    \frac{\tr\left[ 
        \left(
            \sum_{\ell} \frac{[B_k,B_{\ell}]}{p_k - p_{\ell}} \left(\frac{1}{z_1 - p_\ell} - \frac{1}{z_1 - p_k}\right)
        \right)
        \left( \sum_{i} \frac{B_i}{z_2 - p_i} \right)
        \cdots
        \left( \sum_{i} \frac{B_i}{z_n - p_i} \right)
    \right]}
         {(\overline{z}_1 - \overline{z}_2)(\overline{z}_2 - \overline{z}_3) \cdots (\overline{z}_n - \overline{z}_1)}
    \\
    &=
    +\frac{1}{\pi^{n-1}} 
    \int d^2 x_2 \cdots \int d^2 x_n \,\,
    \frac{\tr\left[
        \left(
            \sum_{\ell} \frac{[B_k,B_{\ell}]}{p_k - p_{\ell}} \left(\frac{1}{z_2 - p_\ell} - \frac{1}{z_2 - p_k}\right)
        \right)
        \left( \sum_{i} \frac{B_i}{z_3 - p_i} \right)
        \cdots
        \left( \sum_{i} \frac{B_i}{z_n - p_i} \right)
    \right]}
         {(\overline{z}_2 - \overline{z}_3) \cdots (\overline{z}_{n-1} - \overline{z}_n)(\overline{z}_n - \overline{z}_2)} 
    \\
    &\quad\quad  -
    \frac{1}{\pi^{n}} 
    \int d^2 x_1 \cdots \int d^2 x_{n} \,\,
    \frac{\tr\left[ 
        \left(
            \sum_{\ell} \frac{[B_k,B_{\ell}]}{p_k - p_{\ell}} \left(\frac{1}{z_1 - p_\ell} - \frac{1}{z_1 - p_k}\right)
        \right)
        \left( \sum_{i} \frac{B_i}{z_2 - p_i} \right)
        \cdots
        \left( \sum_{i} \frac{B_i}{z_n - p_i} \right)
    \right]}
         {(\overline{z}_1 - \overline{z}_2)(\overline{z}_2 - \overline{z}_3) \cdots (\overline{z}_n - \overline{z}_1)}
    \\
    &=: \delta_{k}^{(1)} \mathcal{I}_{n} + \delta_{k}^{(2)} \mathcal{I}_{n},
\end{split}
\end{equation}
where $\delta_{k}^{(1)} \mathcal{I}_n$ and $\delta_{k}^{(2)} \mathcal{I}_n$ are the first and second terms of the second-to-last line above. In the first equality, the cyclicity of the trace gets rid of the $1/n$ factor and lets us put the derivative completely in the first $A_{\overline{z}}(x_1)$ factor. In the second equality, the two terms from the derivatives of the $\left\{\frac{1}{z_1 - p_i}\right\}$ factors and of the $\{B_j\}$ factors respectively are split up, and the first term is rewritten as a derivative with respect $z_1$. The third equality is a few steps. First, the second term is rewritten whose form matches for $n=2$ as well as $n>2$. Then for the first term, we integrate the $\partial_{z_1}$ by parts giving $\partial_{z_1} \frac{1}{(\overline{z}_1 - \overline{z}_2)(\overline{z}_2 - \overline{z}_3) \cdots (\overline{z}_n - \overline{z}_1)} = \pi \left( \delta^2(x_1-x_2) - \delta^2(x_1-x_n) \right)\frac{1}{(\overline{z}_2 - \overline{z}_3) \cdots (\overline{z}_n - \overline{z}_2)}$, which requires $n>2$. Then, integration over $x_1$ gives what is written. In the fourth line, the first term is rewritten by changing variables for one half of the integrand, expanding terms in partial fractions, and recollecting terms.

Note that for these terms, we have $\delta_k^{(2)} \mathcal{I}_n+\delta_k^{(1)} \mathcal{I}_{n+1}=0$ for all $n \ge 2$. As such, the derivative of the series expansion will have all terms telescope away except for $\sum_{j \neq k} \frac{\tr(B_k B_j)}{p_k - p_j}$. This is what we wanted to show.

Above, we didn't deal with the smearing procedure for the terms with $n>2$. In general, we conjecture that the only divergence comes from the $n=2$ term and that after subtracting away the divergence $\sum_{j} \frac{1}{2} \tr\left[B_{j}^2\right] \log(\epsilon)$, the series expansion above (potentially after some resummation procedure) has a well defined limit as $\epsilon \to 0$.

\subsection{`Zipper' expansion of $\log\tau$} \label{app:tau_zipper}

Now, let us use zipper representations of flat gauge fields to expand the Dirac determinant. In this section, we consider a gauge field localized on $s$ non-intersecting path segments $\gamma_1, \cdots, \gamma_s$ which may share endpoints and that's flat away from these endpoints. Then, after orienting the path segments, we assign a matrix of $U_j = e^{u_j} \in GL(N,\C)$ that gives the path-ordered exponential going right-to-left across $\gamma_j$. See Fig.~\ref{fig:zipp_collection} for an example.

\begin{figure}[h!]
  \centering
  \includegraphics[width=0.7\linewidth]{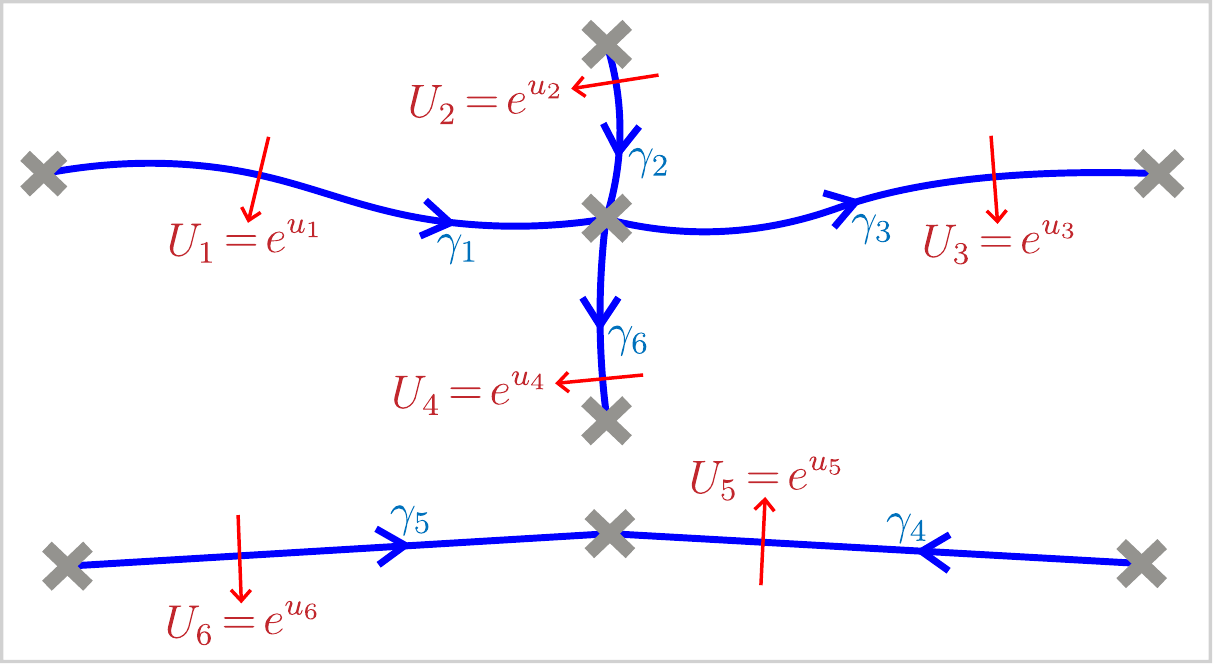}
  \caption{Example zipper representation of flat gauge field.}
  \label{fig:zipp_collection}
\end{figure}

This gauge field corresponds to a connection $d - A$ for a $\gl(N,\C)$-valued 1-form $A$ which we write as
\begin{equation} \label{eq:zipper_gaugeField}
    A = \sum_{j=1}^{s} A^{(j)} 
    \quad\text{where}\quad
    A^{(j)}:= u_j \, \delta^{\perp}(\gamma_j).
\end{equation}
Here, we define the singular 1-form-valued distribution $\delta^{\perp}(\gamma_j)$ whose fundamental property is 
\begin{equation} \label{eq:deltaPerp_defn}
    \int_{\R^2} \delta^{\perp}(\gamma_j) \wedge v := \int_{\gamma_j} v
\end{equation}
for 1-forms $v$. Soon, we'll give a concrete construction of this form.

Say that $\gamma_j$ goes between $q_j \to q'_j$. Then, we claim that the differential of this form acts as a difference of delta functions, that $(d \delta^{\perp}(\gamma_j))(x) = (\delta^2(x - q'_j) - \delta^2(x - q_j)) d^2 x$, where $d^2 x = d x \wedge dy$ is the volume form on $\R^2$. To see this, let $g \in C^1(\C)$ be a smooth test function. Then,
\begin{equation}
    \int_{\R^2} g \, d\delta^{\perp}(\gamma_j) 
    = 
    \int_{\R^2} d(g \, \delta^{\perp}(\gamma_j)) 
    +
    \int_{\R^2} \delta^{\perp}(\gamma_j) \wedge dg
    = 
    \int_{\gamma_j} dg 
    = 
    g(q'_j) - g(q_j) 
    = 
    \int g(x) (\delta^2(x - q'_j) - \delta^2(x - q_j))d^2 x.
\end{equation}
The first equality uses a differential forms identity. The second one uses that $g \, \delta^{\perp}(\gamma_j)$ vanishes at infinity and the definition of $\delta^{\perp}(\gamma_j)$. So, the monodromy of the restricted connection $d -  u_j \, \delta^{\perp}(\gamma_j)$ going counterclockwise around $q_j$ or clockwise around $p'_j$ is $e^{u_j}$. Since the connection is zero away from $\gamma_j$, the path-ordered exponential across $\gamma_j$ is also $e^{u_j}$. 

Now, let's describe the distribution $\delta^{\perp}(\gamma_j)$ as the $\epsilon \to 0$ limit of a collection of `regulated' 1-forms $\delta^{\perp}(\gamma_j)^{\epsilon}$. Let $\gamma_j^{L,\epsilon}$,$\gamma_j^{R,\epsilon}$ be two curves between $p_j \to p'_j$, such that $\gamma_j^{L,\epsilon}$ {\color{RoyalPurple} [resp. $\gamma_j^{R,\epsilon}$]} is on the left {\color{RoyalPurple} [resp. right]} side of $\gamma_j$, and that they're within a distance $\epsilon$ of $\gamma_j$. Then, $\mathcal{R}_j^{\epsilon}$ is a region bounded by $\overline{\gamma_j^{L,\epsilon}} \bigsqcup \gamma_j^{R,\epsilon}$ containing $\gamma_j$. Now, let $f^{\epsilon}$ be a smooth function on $\mathcal{R}^{\epsilon}_j - \{p_j, p'_j\}$ that takes $0 < f(x) < 1$ for all $x \in \mathrm{int}(\mathcal{R}_j^{\epsilon})$ and so that $f(x) = 0$ {\color{RoyalPurple} [resp. $f(x) = 1$]} on $\gamma_j^{L,\epsilon}$ {\color{RoyalPurple} [resp. $\gamma_j^{R,\epsilon}$]}. Then, we define our regulated 1-forms as 
\begin{equation} \label{eq:deltaPerp_reg_def}
    (\delta^{\perp}(\gamma_j)^{\epsilon})(x)
    =
    \begin{cases}
    (d f^{\epsilon})(x) & \text{ if } x \in \mathrm{int}(\mathcal{R}_j^{\epsilon}) \\
    0 & \text{ otherwise}
    \end{cases}.
\end{equation}
See Fig.~\ref{fig:smoothed_deltaPerp_form} for a depiction.

\begin{figure}[h!]
  \centering
  \includegraphics[width=0.9\linewidth]{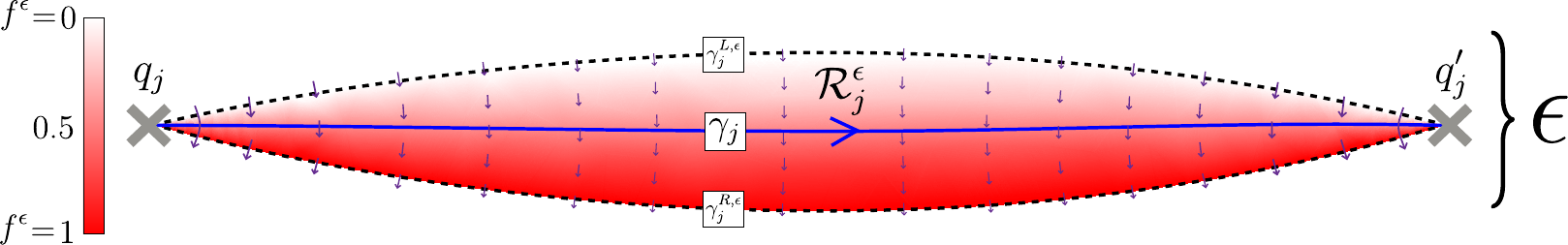}
  \caption{A depiction of $\delta^{\perp}(\gamma_j)^{\epsilon}$ (purple vector field) which is the gradient of the function $f^{\epsilon}$ (the white-red color gradient) as well as the auxiliary objects in the construction. The curves $\gamma_j^{L,\epsilon}$,$\gamma_j^{R,\epsilon}$ are within a distance $\epsilon$ of $\gamma_j$.}
  \label{fig:smoothed_deltaPerp_form}
\end{figure}

Now, we argue that the $\epsilon \to 0$ limit of these forms satisfies the defining equation Eq.~\eqref{eq:deltaPerp_defn}. Let $v$ be a sufficiently regular 1-form. Then,
\begin{equation} \label{eq:deltaPerp_reg_limit}
    \int_{\R^2} \delta^{\perp}(\gamma_j)^{\epsilon} \wedge v 
    = 
    \int_{\mathcal{R}^{\epsilon}_j} d f^{\epsilon} \wedge v
    = 
    -\int_{\mathcal{R}^{\epsilon}_j} f^{\epsilon} d v 
    +\int_{\partial \mathcal{R}^{\epsilon}_j} f \, v \xrightarrow{\epsilon \to 0} \int_{\gamma_j} v.
\end{equation}
The first equality uses that $\delta^{\perp}(\gamma_j)^{\epsilon} = 0$ outside of $\mathcal{R}^{\epsilon}_j$. The second equality uses a differential forms identity and Stokes' theorem. In last equality, first we use $\int_{\partial \mathcal{R}^{\epsilon}_j} f \, v = \int_{\gamma^{R, \epsilon}_j} v$ which tends to $\int_{\gamma_j} v$. Then, with mild regularity conditions on $v$, the term $\int_{\mathcal{R}^{\epsilon}_j} f^{\epsilon} d v$ vanishes as $\epsilon \to 0$. For our cases of interest, we will choose $v$ to be closed which would give an equality to $\int_{\gamma^{R, \epsilon}_j} v$ even for finite $\epsilon$.

To compare to the zipper representation as in Fig.~\ref{fig:flatGaugeFieldToZippers}, note that some of the endpoints $\{q_j, q'_j\}$ will correspond to punctures and an associated monodromy, while others won't be punctures and will just be places where the zippers meet. A subtle point is that there may be a puncture at a point $p$ even though the monodromy around $p$ is trivial. This will happen when the loop going around $p$ traces out a nontrivial loop in $\pi_1(GL(N,\C)) = \Z$. For example, this happens with the abelian connection $A'_{\overline{z}} = 0$ , $A'_{z} = \frac{1}{z} - \frac{1}{z-1}$ for which the monodromies around $p = 0,1$ are trivial but trace out nontrivial paths in $\C - \{0\}$.

Now we want to expand the Dirac determinant Eq.~\eqref{eq:diracDeterminant_full} with the correction Eq.~\eqref{eq:zipper_gaugeField}. First, recall that we can expand the determinant as
\begin{equation} \label{eq:diracExpansion_zipper}
\begin{split}
    &\log \frac{\det \left( \slashed{\partial} - \slashed{A} \right)}{\det\left(\slashed{\partial}\right)}
    \stackrel{\textbf{corr}}{=}
    \tr \log \left( \mathbbm{1} - \frac{1}{\slashed{\partial}} \sum_{j=1}^{s} \slashed{A^{(j)}} \right) 
    + \frac{1}{\pi} \int d^2 x  \left( \sum_{j=1}^{s} A^{(j)}_{\overline{z}} \right)  \left( \sum_{j=1}^{s} A^{(j)}_{z} \right)
    \\&=
    \sum_{j=1}^{s} \left(
        -\frac{1}{2}\frac{1}{\pi^2} \int d^2 x_1 \int d^2 x_2 \left\{ 
            \frac{\tr\left[ A^{(j)}_{\overline{z}}(x_1) A^{(j)}_{\overline{z}}(x_2) \right]}{(z_1 - z_2) (z_2 - z_1)} 
            +
            \frac{\tr\left[ A^{(j)}_{z}(x_1) A^{(j)}_{z}(x_2) \right]}{(\overline{z}_1 - \overline{z}_2) (\overline{z}_2 - \overline{z}_1)} 
        \right\}
        + \frac{1}{\pi} \int d^2 x \, A^{(j)}_{\overline{z}}(x) A^{(j)}_{z}(x)
    \right)
    \\
    &\quad -
    \sum_{n=2}^{\infty} \frac{1}{n} \sum_{\substack{j_1 \cdots j_n \in \{1 \cdots s\} \\ !(j_1 = \cdots = j_n)}}
    \frac{1}{\pi^n} \int d^2 x_1 \cdots \int d^2 x_n\left\{ 
    \frac{\tr\left[A^{(j_1)}_{\overline{z}}(x_1) \cdots A^{(j_n)}_{\overline{z}}(x_n) \right]}
         {(z_1 - z_2) \cdots (z_n - z_1)} 
    +
    \frac{\tr\left[A^{(j_1)}_{z}(x_1) \cdots A^{(j_n}_{z}(x_n) \right]}
         {(\overline{z}_1 - \overline{z}_2) \cdots (\overline{z}_n - \overline{z}_1)} 
    \right\}
\end{split}
\end{equation}
In the first equality we just split up $A$ into its zipper components. In the second equality, we use a few facts. 
First, $A^{(j)}_{\overline{z}}(x) A^{(j)}_{z}(x) = 0$ for $j \neq j'$ since the zippers are disjoint. Then from the same reasoning as the abelian connections, the terms in the expansion $-\frac{1}{n}\tr\left[\left(\frac{1}{\slashed{\partial}} \slashed{A^{(j)}}\right)^n\right]$ for $n>2$ disappear. The other terms are the expressions for $-\frac{1}{n}\tr\left[\frac{1}{\slashed{\partial}} \slashed{A^{(j_1)}} \cdots \frac{1}{\slashed{\partial}} \slashed{A^{(j_n)}}\right]$ for $j_1, \cdots, j_n$ not all equal. 

First, using Eq.~\eqref{eq:diracDet_abelian} and the fact that $d A^{(j)} = u_j (\delta^2(x - q_j) - \delta^2(x - q'_j))d^2 x$
\begin{equation} \label{eq:diracDet_sameZipTerms}
\begin{split}
    &-\frac{1}{2}\frac{1}{\pi^2} \int d^2 x_1 \int d^2 x_2 \left\{ 
            \frac{\tr\left[ A^{(j)}_{\overline{z}}(x_1) A^{(j)}_{\overline{z}}(x_2) \right]}{(z_1 - z_2) (z_2 - z_1)} 
            +
            \frac{\tr\left[ A^{(j)}_{z}(x_1) A^{(j)}_{z}(x_2) \right]}{(\overline{z}_1 - \overline{z}_2) (\overline{z}_2 - \overline{z}_1)} 
        \right\}
        + \frac{1}{\pi} \int d^2 x A^{(j)}_{\overline{z}} A^{(j)}_{z}
    \\
    &= \frac{1}{4 \pi^2} \tr\left[u_j^2\right] \left(\log(|q'_j-q_j|^2) - \log(|0|^2)\right)
\end{split}
\end{equation}
where $\log(|0|^2)$ is an infinite constant as in Eq.~\eqref{eq:logZero_reg}.

Now, we will expand and re-sum the rest of the terms of Eq.~\eqref{eq:diracExpansion_zipper}. First, we will consider the case of the partial sum over the terms of the form 
$\tr \left[ 
       \left( \frac{1}{\slashed{\partial}} \slashed{A^{(1)}} \right)^{k}
       \frac{1}{\slashed{\partial}} \slashed{A^{(2)}}
     \right]$
for $k \ge 1$ which will add up to a sum
\begin{equation*}
\begin{split}
    -\sum_{k=1}^{\infty}
    \frac{1}{\pi^{k + 1}} \int d^2 x_1 \cdots \int d^2 x_{k + 1} 
    \left\{ 
    \frac{\tr\left[A^{(1)}_{\overline{z}}(x_1) \cdots A^{(1)}_{\overline{z}}(x_{k}) A^{(2)}_{\overline{z}}(x_{k+1})\right]}
         {(z_1 - z_2) \cdots (z_{k+1} - z_1)} 
    +
    \frac{\tr\left[A^{(1)}_{z}(x_1) \cdots A^{(1)}_{z}(x_{k}) A^{(2)}_{z}(x_{k+1})\right]}
         {(\overline{z}_1 - \overline{z}_2) \cdots (\overline{z}_{k+1} - \overline{z}_1)} 
    \right\}.
\end{split}
\end{equation*}
Note we don't include the pre-factor $\frac{1}{k+1}$ because the cyclicity of the trace will give $k+1$ equal terms corresponding to $k$ above. 

The main point in analyzing the above integrals are the observations that 
\begin{equation}
    A^{(j)}_{\overline{z}} d^2 x = \frac{1}{2 i} A^{(j)} \wedge dz
    \quad,\quad
    A^{(j)}_{z} d^2 x = \frac{1}{-2 i} A^{(j)} \wedge d\overline{z}.
\end{equation}
This observation combined with the defining property Eq.~\eqref{eq:deltaPerp_defn} means that we can replace the integrals over $d^2 x$ with holomorphic and antiholomorphic integrals over the curves $\{\gamma_j\}$. 

For example, the $k=1$ term above looks like
\begin{equation}
\begin{split}
    &\tr \left[ \frac{1}{\slashed{\partial}} \slashed{A^{(1)}} \frac{1}{\slashed{\partial}} \slashed{A^{(2)}} \right]
    =
    -\frac{\tr[u_1 u_2]}{\pi^2} 
    \left\{
        \int_{\gamma_1} \frac{dz_1}{2i} \int_{\gamma_2} \frac{dz_2}{2i}  \frac{1}{(z_1-z_2)(z_2-z_1)} 
        +
        \int_{\gamma_1} \frac{d\overline{z}_1}{-2i} \int_{\gamma_2} \frac{d\overline{z}_2}{-2i} \frac{1}{(\overline{z}_1-\overline{z}_2)(\overline{z}_2-\overline{z}_1)} 
    \right\}
    \\
    &= +\frac{\tr[u_1 u_2]}{(2\pi)^2} \log \left| \frac{(q'_2-q'_1)(q_2-q_1)}{(q'_2-q_1)(q_2-q'_1)} \right|^2.
\end{split}
\end{equation}
This is not singular if $\gamma_1: q_1 \to q'_1$ and $\gamma_2: q_2 \to q'_2$ don't share any endpoints, but has a logarithmic divergence otherwise.

Applying this prescription naively with other terms $k > 1$ leads to the expression
\begin{equation*}
    -\tr \left[ 
       \left( \frac{1}{\slashed{\partial}} \slashed{A^{(1)}} \right)^{k}
       \frac{1}{\slashed{\partial}} \slashed{A^{(2)}}
    \right]
    =
    \text{``}
    -\tr[u_1^k u_2]
    \left\{
        \frac{1}{(-2 \pi i)^{k+1}} \int_{\gamma_1} dz_1 \cdots \int_{\gamma_1} dz_k \int_{\gamma_2} dz_{k+1}  \frac{1}{(z_1-z_2)\cdots(z_{k+1}-z_1)} 
        +
        \mathrm{c.c.}
    \right\}
    \text{''}
\end{equation*}
where $\mathrm{c.c.}$ refers to complex conjugation. This expression is singular for along the given domains. However, by the discussion around Eqs.~(\ref{eq:deltaPerp_reg_def},\ref{eq:deltaPerp_reg_limit}) about the smeared out $\delta^{\perp}(\gamma_1)^{\epsilon}$, together with the holomorphicity of the integrands $\frac{1}{(z_1-z_2)\cdots(z_{k+1}-z_1)}$ implies that the contour $\gamma_j$ gets split up in the following way. Let $\gamma_j^{(1)}, \cdots, \gamma_j^{(k)}$ be non-intersecting curves $q_j \to q'_j$ close to $\gamma_j$ so that $\gamma_j^{(k)} \to \cdots \to \gamma_j^{(1)}$ is the left-to-right order relative to the orientation of $\gamma_j$ (see Fig.~\ref{fig:splitContours_appendix}). Then, the correct expression is
\begin{equation}
    -\tr \left[ 
       \left( \frac{1}{\slashed{\partial}} \slashed{A^{(1)}} \right)^{k}
       \frac{1}{\slashed{\partial}} \slashed{A^{(2)}}
    \right]
    = 
    -\frac{\tr[u_1^k u_2]}{k!}
    \sum_{\sigma \in \mathrm{Sym}(k)}
    \left\{
        \frac{1}{(2 \pi i)^{k+1}} \int_{\gamma_1^{(\sigma(1))}} dz_1 \cdots \int_{\gamma_1^{(\sigma(k))}} dz_k \int_{\gamma_2} dz_{k+1}  \frac{1}{(z_1-z_2)\cdots(z_{k+1}-z_1)} 
        +
        \mathrm{c.c.}
    \right\}
\end{equation}
so that we split up the contours for each variable $z_1 \cdots z_k$ into disjoint non-intersecting contours in all possible ways with equal weight. 

\begin{figure}[h!]
  \centering
  \includegraphics[width=0.7\linewidth]{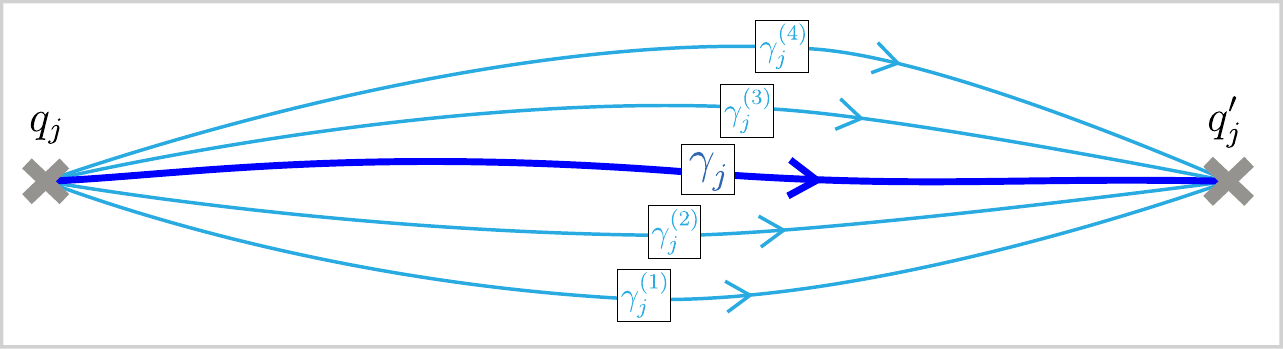}
  \caption{Splitting the contour $\gamma_j$ into a series of disjoint curves $\gamma_j^{(1)}, \cdots, \gamma_j^{(k)}$ with the left-to-right order being $\gamma_j^{(k)} \to \cdots \to \gamma_j^{(1)}$.}
  \label{fig:splitContours_appendix}
\end{figure}

Now for a permutation $\sigma \in \mathrm{Sym}(\ell)$ for any $\ell \ge 1$, we'll define
\begin{equation}
    \overline{[\sigma]} = \overline{[\sigma(1) \cdots \sigma(\ell)]} 
    := 
    \frac{1}{(2 \pi i)^{\ell+1}} \int_{\gamma_1^{(\sigma(1))}} dz_1 \cdots \int_{\gamma_1^{(\sigma(\ell))}} dz_{\ell} \int_{\gamma_2} dz_{\ell+1}  \frac{1}{(z_1-z_2)\cdots(z_{\ell+1}-z_1)} 
    +
    \mathrm{c.c.}
\end{equation}
These quantities have many relations between them. For example, by the residue theorem and some convenient factors of $\frac{1}{2 \pi i}$, we have $\overline{[2 1]} = \overline{[1 2]} - \overline{[1]}$. So, we can change the orders of the contours, sometimes at the cost of a residue which is a lower-order integral. In general, we have the relation
\begin{equation} \label{eq:contourInt_relation_app}
    \overline{[\sigma(1) \cdots \sigma(m),\sigma(m+1) \cdots \sigma(\ell)]} = 
    \begin{cases}
        \overline{[\sigma(1) \cdots \sigma(m+1),\sigma(m) \cdots \sigma(\ell)]}
        \mp
        \overline{[\sigma'(1) \cdots \sigma'(m) \cdots \sigma'(\ell-1)]}
        \quad &\text{if } \sigma(m) = \sigma(m+1) \pm 1 \\
        \overline{[\sigma(1) \cdots \sigma(m+1),\sigma(m) \cdots \sigma(\ell)]} \quad &\text{otherwise}
    \end{cases}
\end{equation}
Above, $\sigma' \in \mathrm{Sym}(\ell-1)$ is the permutation induced from $\sigma: \{1 \cdots \ell\} \backslash \{m\} \to \{1 \cdots \ell\} \backslash \{\sigma(m)\}$ by preserving the ordering. Define (c.f.~\cite{stanley2000enumerative})
\begin{equation}
    \mathrm{desc}(\sigma) := \# \{m \,|\, \sigma(m) > \sigma(m+1) \} 
    \quad,\quad 
    A(\ell, m) = \# \{\sigma \in \mathrm{Sym}(\ell) \,|\, \mathrm{desc}(\sigma) = m \}
\end{equation}
as the number of \textit{descents} of a permutation and the number of permutations with a fixed number of descents. These $A(k,m)$ are called \textit{Eulerian numbers}. Note that $A(k,m) = 0$ for $m > k-1$ and define $A(0,m) := 0$ \footnote{WARNING: This choice doesn't agree with the standard convention that $A(0,m) = \delta_{m,0}$.}. They satisfy the generating relation
\begin{equation} \label{eq:eulerian_gen_func}
    \sum_{k=0}^{\infty}\sum_{m=0}^{\infty} \frac{\eta^k}{k!} A(k,m) \xi^{m} = \frac{\xi-1}{\xi - e^{(\xi-1)\eta}}-1.
\end{equation}

For $\sigma \in \mathrm{Sym}(\ell)$, repeated applications of Eq.~\eqref{eq:contourInt_relation_app} gives
\begin{equation}
    \overline{[\sigma]} = \sum_{i=0}^{\ell} (-1)^i {\mathrm{desc}(\sigma) \choose i} \overline{[1, 2, \cdots, \ell - i]}
\end{equation}
so that our full sum becomes
\begin{equation}
    -\sum_{k=1}^{\infty} \tr \left[ \left( \frac{1}{\slashed{\partial}} \slashed{A^{(1)}} \right)^{k} \frac{1}{\slashed{\partial}} \slashed{A^{(2)}} \right]
    =
    -\sum_{k=1}^{\infty} \frac{\tr[u_1^k u_2]}{k!} \sum_{m=0}^{k} A(k, m) \sum_{i=0}^{m} (-1)^i {m \choose i} \overline{[1, 2, \cdots, k - i]}
\end{equation}
Now we want to ask what the coefficient of $\overline{[1, 2, \cdots, k - i]}$ is in the sum above. To do this, we use generating functions. 
For the function $C(t)$ formed by replaced $\overline{[1, 2, \cdots, \ell]} \mapsto t^{\ell}$ in the sum above, the coefficient of $t^{\ell}$ will give the coefficient of $\overline{[1, 2, \cdots, \ell]}$. We have
\begin{equation}
\begin{split}
    C(t) :=&
    -\sum_{k=1}^{\infty} \sum_{m=0}^{k} \frac{\tr[u_1^k u_2]}{k!} A(k, m) 
    \sum_{i=0}^{m} (-1)^i {m \choose i} t^{k-i} 
    =
    -\sum_{k=0}^{\infty} \sum_{m=0}^{\infty} \frac{\tr[u_1^k u_2]}{k!} A(k, m) t^{k}\left(1-\frac{1}{t}\right)^{m}
    \\
    =&
    -\tr \left[ 
        \left\{ \sum_{k=1}^{\infty} \sum_{m=0}^{\infty} \frac{(u_1 t)^k}{k!} A(k, m) \left(1-\frac{1}{t}\right)^{m} \right\} 
    u_2 \right]
    =
    -\tr \left[ 
        \left(\frac{1}{1 - t (1-e^{-u_1})} - 1\right) u_2 
    \right]
    = 
    -\sum_{\ell = 1}^{\infty} t^{\ell} \tr \left[ (1-e^{-u_1})^\ell u_2 \right].
\end{split}
\end{equation}
In the second-to-last equality, we simplify after using the generating relation Eq.~\eqref{eq:eulerian_gen_func}. This implies 
\begin{equation}
    -\sum_{k=1}^{\infty} \tr \left[ \left( \frac{1}{\slashed{\partial}} \slashed{A^{(1)}} \right)^{k} \frac{1}{\slashed{\partial}} \slashed{A^{(2)}} \right]
    =
    -\sum_{\ell=1}^{\infty} \tr \left[ (1-e^{-u_1})^\ell u_2 \right]
    \left\{
        \frac{1}{(2 \pi i)^{\ell+1}} \int_{\gamma_1^{(1)}} dz_1 \cdots \int_{\gamma_1^{(\ell)}} dz_{\ell} \int_{\gamma_2} dz_{\ell+1}  \frac{1}{(z_1-z_2)\cdots(z_{\ell+1}-z_1)}
        +
        \mathrm{c.c.}
    \right\}
\end{equation}

In general, the same exact steps above can re-sum many groups of terms in the full series $\log \frac{\det \left( \slashed{\partial} - \slashed{A} \right)}{\det\left(\slashed{\partial}\right)}$. In particular, for some $n > 1$, let $j_1 \neq j_2$, $j_2 \neq j_3$, $\cdots$, $j_{n} \neq j_1$, with each $j_{\cdots} \in \{1,\cdots,s\}$. Then, we have the equality
\begin{equation} \label{eq:dirac_app_traceSumGrouped}
\begin{split}
    &\sum_{k_1=1}^{\infty} \cdots \sum_{k_{n}=1}^{\infty} 
    \tr \left[ \left(\frac{1}{\slashed{\partial}} \slashed{A^{(j_1)}}\right)^{k_1} \cdots \left(\frac{1}{\slashed{\partial}} \slashed{A^{(j_{n})}}\right)^{k_{n}} \right]
    =
    \sum_{\ell_1=1}^{\infty} \cdots \sum_{\ell_{n}=1}^{\infty}
    \tr \left[ (1-e^{-u_{j_1}})^{\ell_1} \cdots (1-e^{-u_{j_n}})^{\ell_{n}} \right]
    \\
    &
    \times
    \Bigg\{ 
        \frac{1}{(2 \pi i)^{\ell_1 + \cdots + \ell_{n}}} 
        \int_{\gamma_{j_1}^{(1)}} dz_1 \cdots \int_{\gamma_{j_1}^{(\ell_1)}} dz_{\ell} 
        \cdots\cdots
        \int_{\gamma_{j_{n}}^{(1)}} dz_{\ell_{1}+\cdots+\ell_{n-1}+1} \cdots \int_{\gamma_{j_n}^{(\ell_n)}} dz_{\ell_{1}+\cdots+\ell_n}
        \frac{1}{(z_1-z_2)\cdots(z_{\ell_1+\cdots+\ell_n}-z_1)}
        +
        \mathrm{c.c.}
    \Bigg\}.
\end{split}
\end{equation}
Note that now, the series expansion is in terms of the $\{e^{-u_j} = U_j^{-1}\}$.

Now, we can rewrite the full partition function using some combinatorics. First, we have
\begin{equation} \label{eq:dirac_app_traceSum_ungrouped_to_grouped}
    \sum_{n'=2}^{\infty} \frac{1}{n'} \sum_{\substack{j'_1 \cdots j'_{n'} \in \{1 \cdots s\} \\ !(j'_1 = \cdots = j'_{n'})}}
    \tr \left[ \frac{1}{\slashed{\partial}} \slashed{A^{(j'_1)}} \cdots \frac{1}{\slashed{\partial}} \slashed{A^{(j'_{n'})}} \right]
    =
    \sum_{n=2}^{\infty} \frac{1}{n} 
    \sum_{\substack{j_1 \cdots j_n \in \{1 \cdots s\} \\ j_1 \neq j_2, \cdots, j_n \neq j_1}}
    \sum_{k_1=1}^{\infty} \cdots \sum_{k_n=1}^{\infty} 
    \tr \left[ \left(\frac{1}{\slashed{\partial}} \slashed{A^{(j_1)}}\right)^{k_1} \cdots \left(\frac{1}{\slashed{\partial}} \slashed{A^{(j_n)}}\right)^{k_n} \right].
\end{equation}
This follows from cyclicity of the trace and noting that the total weight of an equivalence class of the terms modulo cyclic symmetry is the same on both sides. 
Then we will combine Eqs.~(\ref{eq:diracExpansion_zipper},\ref{eq:diracDet_sameZipTerms},\ref{eq:dirac_app_traceSum_ungrouped_to_grouped},\ref{eq:dirac_app_traceSumGrouped}), which gives us (after an additional step of combinatorics which is the same as the reverse of the Eq.~\eqref{eq:dirac_app_traceSum_ungrouped_to_grouped})
\begin{equation} \label{eq:zipper_expansion_final}
\begin{split}
    &\log \frac{\det \left( \slashed{\partial} - \slashed{A} \right)}{\det\left(\slashed{\partial}\right)}
    \\
    &\stackrel{\textbf{corr}}{=}
    \sum_{j=1}^{N} \frac{1}{4 \pi^2} \tr\left[u_j^2\right] \log\left(\frac{|q'_j-q_j|^2}{|0|^2}\right)
    \\
    &\quad
    -\sum_{n=2}^{\infty} \frac{1}{n}
    \sum_{\substack{j_1 \cdots j_n \in \{1 \cdots s\} \\ !(j_1 = \cdots = j_n)}} 
    {\tr \left[ (1 - U_{j_1}^{-1}) \cdots (1 - U_{j_n}^{-1}) \right]}
    \left\{
    \frac{1}{(2 \pi i)^{n}}
    \int_{\gamma_{j_1}^{(c(1))}} dz_1 \cdots \int_{\gamma_{j_n}^{(c(n))}} dz_{\ell_{1}+\cdots+\ell_n}
    \frac{1}{(z_1-z_2)\cdots(z_n-z_1)}
    + \mathrm{c.c.}
    \right\}
\end{split}
\end{equation}
where the choice of $c(1) \cdots c(n)$ satisfy $c(k) = 1, c(k+1) = 2, \cdots, c_{k+\ell} = \ell+1$ for $j_{k-1} \neq j_{k} = \cdots = j_{k+\ell} \neq j_{k+\ell+1}$, the indices $k+n \sim k$ being defined cyclically. This is just a cyclic reordering of the variables in the integration variables of Eq.~\eqref{eq:dirac_app_traceSumGrouped}.

Note that this series is entirely in terms of the monodromy representation $\{U_j^{-1}\}$. 
However as we noted before, the $\tau$ function depends not only on the monodromy data but changes under an action of $\pi_1(GL(N,\C)) = \Z$ acting on a puncture that twists the flat connection. (See for example the dicussion before Eq.~\eqref{eq:diracDeterminant_full} about large gauge transformations and the result for the abelian case Eq.~\eqref{eq:diracDet_abelian}.)
As such, the series above should not be expected to converge to the desired result (even after subtracting off appropriate logarithmic divergences) for all choices of $\{U_j\}$, but should be viewed as a series whose analytic continuation corresponds to the $\tau$ function. 

Also, note that the the holomorphic and antiholomorphic integrals above evaluate to holomorphic and antiholomorphic functions in the punctures. After splitting up the logarithms into their holmorphic and antiholomorphic components, we find that the entire regularized determinant splits up into a holomorphic and antiholomorphic piece. 

The fact that the Fuschian representation of Sec.~\ref{app:tau_fuschian} satisfies the Jimbo-Miwa-Ueno equations leads us to identify the holomorphic piece with the original tau function of~\cite{jmuTau1981}.
The antiholomorphic piece should be the conjugate of the tau function corresponding to the conjugate monodromies $\overline{U}_1,\cdots,\overline{U}_s$. 

On this note, we conjecture that the regulation of divergent integrals as appeared in Sec.~\ref{sec:nonSeparatedZippers} for the dimer model asymptotically gives the Jimbo-Miwa-Ueno tau function.

\bibliography{bibliography.bib}

\end{document}